\definecolor{blue}{rgb}{0.1,0.2,0.5}
\definecolor{brown}{rgb}{0.6,0.6,0.2}
\newcommand*\patchAmsMathEnvironmentForLineno[1]{%
  \expandafter\let\csname old#1\expandafter\endcsname\csname #1\endcsname
  \expandafter\let\csname oldend#1\expandafter\endcsname\csname end#1\endcsname
  \renewenvironment{#1}%
     {\linenomath\csname old#1\endcsname}%
     {\csname oldend#1\endcsname\endlinenomath}}%
\newcommand*\patchBothAmsMathEnvironmentsForLineno[1]{%
  \patchAmsMathEnvironmentForLineno{#1}%
  \patchAmsMathEnvironmentForLineno{#1*}}%
\theoremstyle{plain}
\newtheorem{theorem}{Theorem}
\newcommand{\newtheoremwithcrefformat}[2]{%
  \newtheorem{#1}[theorem]{#2}%
  \crefformat{#1}{##2\MakeUppercase#1~##1##3}%
  \Crefformat{#1}{##2\MakeUppercase#1~##1##3}%
}
\newcommand{\newseptheoremwithcrefformat}[2]{%
  \newtheorem{#1}{#2}%
  \crefformat{#1}{##2\MakeUppercase#1~##1##3}%
  \Crefformat{#1}{##2\MakeUppercase#1~##1##3}%
}
\newcommand{\newclaimwithcrefformat}[2]{%
  \newtheorem{#1}{#2}[theorem]%
  \crefformat{#1}{##2\MakeUppercase#1~##1##3}%
  \Crefformat{#1}{##2\MakeUppercase#1~##1##3}%
}
\newenvironment{cthm}[1]
  {\cthmin}
  {\endcthmin}
\newenvironment{ccor}[1]
  {\ccorin}
  {\endccorin}
\theoremstyle{definition}
\newtheorem{definition}{Definition}
\theoremstyle{nonumberplain}
\newcommand{\qedhere}{\hfill$\square$}
\newcommand{\cA}{\mathcal{A}}
\newcommand{\cB}{\mathcal{B}}
\newcommand{\cC}{\mathcal{C}}
\newcommand{\cD}{\mathcal{D}}
\newcommand{\cE}{\mathcal{E}}
\newcommand{\cF}{\mathcal{F}}
\newcommand{\cH}{\mathcal{H}}
\newcommand{\cO}{\mathcal{O}}
\newcommand{\cP}{\mathcal{P}}
\newcommand{\cQ}{\mathcal{Q}}
\newcommand{\cR}{\mathcal{R}}
\newcommand{\cS}{\mathcal{S}}
\newcommand{\cT}{\mathcal{T}}
\newcommand{\cU}{\mathcal{U}}
\newcommand{\cV}{\mathcal{V}}
\newcommand{\cW}{\mathcal{W}}
\newcommand{\cX}{\mathcal{X}}
\newcommand{\cY}{\mathcal{Y}}
\newcommand{\cZ}{\mathcal{Z}}
\newcommand{\ovalp}{\alpha'}
\newcommand{\ovbet}{\beta'}
\newcommand{\nobeg}{^\dashv}
\newcommand{\noend}{^\vdash}
\newcommand{\rneq}{\mathrm{NEQ}}
\newcommand{\ork}[1]{\mathrm{OR}_{#1}}
\newcommand{\nand}[1]{\mathrm{NAND}_{#1}}
\DeclareMathOperator{\dist}{dist}
\newcommand{\N}{\mathbb{N}}
\newcommand{\Ob}{\mathbb{O}}
\newcommand{\R}{\mathbb{R}}
\renewcommand{\phi}{\varphi}
\renewcommand{\epsilon}{\varepsilon}
\newcommand{\Oh}{\mathcal{O}}
\newcommand{\DD}{\mathbb{D}}
\newcommand{\concomp}[1]{\operatorname{comp}({#1})}
\renewcommand{\leq}{\leqslant}
\renewcommand{\geq}{\geqslant}
\renewcommand{\ge}{\geqslant}
\renewcommand{\setminus}{-}
\newcommand{\homo}[1]{\textsc{Hom}(\ensuremath{#1})\xspace}
\newcommand{\lhomo}[1]{\textsc{LHom}(\ensuremath{#1})\xspace}
\newcommand{\coloring}[1]{\ensuremath{#1}-\textsc{Coloring}\xspace}
\newcommand{\yourfavouritecolor}{orange!70}
\newcommand{\cw}[1]{{\operatorname{cw}(#1)}}
\newcommand{\tw}[1]{{\operatorname{tw}(#1)}}
\newcommand{\pw}[1]{{\operatorname{pw}(#1)}}
\newenvironment{claimproof}{\noindent {\emph{Proof of Claim.}}}{\hfill$\blacksquare$\smallskip}
\declaretheorem[sibling=theorem]{lemma}
\begin{document}
\title{Full complexity classification of the list homomorphism\\ problem for bounded-treewidth graphs
\thanks{The first author was supported by the ERC Starting Grant CUTACOMBS (grant agreement No.~714704). The second and the third author were supported by the Polish National Science Centre grant no. 2018/31/D/ST6/00062.}}

\author{Karolina Okrasa\thanks{Warsaw University of Technology, Faculty of Mathematics and Information Science and University of Warsaw, Institute of Informatics,  \texttt{k.okrasa@mini.pw.edu.pl}} \and Marta Piecyk\thanks{Warsaw University of Technology, Faculty of Mathematics and Information Science, \texttt{m.piecyk@mini.pw.edu.pl}} \and Paweł Rzążewski\thanks{Warsaw University of Technology, Faculty of Mathematics and Information Science and University of Warsaw, Institute of Informatics, \texttt{p.rzazewski@mini.pw.edu.pl}}
}

\begin{titlepage}
\def\thepage{}
\thispagestyle{empty}
\maketitle

\begin{abstract}
A homomorphism from a graph $G$ to a graph $H$ is an edge-preserving mapping from $V(G)$ to $V(H)$.
Let $H$ be a fixed graph with possible loops. In the list homomorphism problem, denoted by \textsc{LHom}($H$), we are given 
a graph $G$, whose every vertex $v$ is assigned with a list $L(v)$ of vertices of $H$.
We ask whether there exists a homomorphism $h$ from $G$ to $H$, which respects lists $L$, i.e., for every $v \in V(G)$ it 
holds that $h(v) \in L(v)$.

The complexity dichotomy for  \textsc{LHom}($H$) was proven by Feder, Hell, and Huang [JGT 2003]. The authors showed that 
the problem is polynomial-time solvable if $H$ belongs to the class called \emph{bi-arc graphs},
and for all other graphs $H$ it is NP-complete.

We are interested in the complexity of the  \textsc{LHom}($H$)  problem, parameterized by the treewidth of the input graph.
This problem was investigated by Egri, Marx, and Rz\k{a}\.zewski [STACS 2018], who obtained tight complexity bounds for the special case of reflexive graphs $H$, i.e., if every vertex has a loop.

In this paper we extend and generalize their results for \emph{all} relevant graphs $H$, i.e., those, for which the \lhomo{H} problem is NP-hard.
For every such $H$ we find a constant $k = k(H)$, such that the \textsc{LHom}($H$) problem on instances with $n$ vertices and treewidth $t$
\begin{compactitem}
\item can be solved in time $k^{t} \cdot n^{\mathcal{O}(1)}$, provided that the input graph is given along with a tree decomposition of width $t$,
\item cannot be solved in time $(k-\varepsilon)^{t} \cdot n^{\mathcal{O}(1)}$, for any $\varepsilon >0$, unless the SETH fails.
\end{compactitem}
For some graphs $H$ the value of $k(H)$ is much smaller than the trivial upper bound, i.e., $|V(H)|$.

Obtaining matching upper and lower bounds shows that the set of algorithmic tools we have discovered cannot be extended in order to obtain faster algorithms for \textsc{LHom}($H$) in bounded-treewidth graphs. Furthermore, neither the algorithm, nor the proof of the lower bound, is very specific to treewidth. We believe that they can be used for other variants of the \textsc{LHom}($H$) problem, e.g. with different parameterizations.
\end{abstract}

\end{titlepage}

\section{Introduction}
A popular line of research in studying computationally hard problems is to consider restricted instances, in order to understand the boundary between easy and hard cases.
For example, most of natural problems can be efficiently solved on trees, using a bottom-up dynamic programming.
This observation led to the definition of \emph{treewidth}, which, informally speaking, measures how much a given graph resembles a tree. The notion of treewidth appears to be very natural and it was independently discovered by several authors in different contexts~\cite{DBLP:journals/dam/ArnborgP89,Bodlaender:2008:COG:2479371.2479375,10.1007/978-3-642-22993-0_47,cygan2015parameterized}.

For many problems, polynomial-time algorithms for graphs with bounded treewidth can be obtained by adapting the dynamic programming algorithms for trees. Most of these straightforward algorithms follow the same pattern, which was captured by the seminal meta-theorem by Courcelle~\cite{DBLP:journals/iandc/Courcelle90}: he proved that each problem expressible in \emph{monadic second order logic} (MSO$_2$) can be solved in time $f(\tw{G}) \cdot n^{\Oh(1)}$ on graphs $G$ with $n$ vertices and treewidth $\tw{G}$, where $f$ is some function, depending on the MSO$_2$ formula describing the particular problem.
As a consequence of this meta-theorem, in order to show that some problem $\Pi$ is \emph{fixed-parameter tractable} (FPT), parameterized by the treewidth, it is sufficient to show that $\Pi$ can be described in a certain way.

The main problem with using the meta-theorem as a black-box is that the function $f$ it produces is huge (non-elementary). We also know that this bound cannot be improved, if we want to keep the full generality of the statement~\cite{DBLP:journals/apal/FrickG04}.
Because of this, as the area of the so-called \emph{fine-grained complexity} gained popularity~\cite{DBLP:journals/dagstuhl-reports/KratschLMR14}, researchers turned back to studying particular problems, and asking about the best possible dependence on treewidth, i.e., the function $f$.
This led to many exciting algorithmic results which significantly improved the naive dynamic programming approach~\cite{DBLP:conf/esa/RooijBR09,DBLP:journals/iandc/BodlaenderCKN15,DBLP:journals/corr/KociumakaP17}.

There is a little caveat that applies to most of algorithms mentioned above: Usually we assume that the input graph is given along with its tree decomposition, and the running time is expressed in terms of the width of this decomposition.
This might be a serious drawback, since finding an optimal tree decomposition is NP-hard~\cite{10.1137/0608024}.
However, finding a tree decomposition of given width (if one exists) can be done in FPT time~\cite{DBLP:journals/siamcomp/Bodlaender96,DBLP:journals/siamcomp/BodlaenderDDFLP16}, which is often sufficient. Since we are interested in the complexity of certain problems, parameterized by the treewidth, we will not discuss the time needed to find a decomposition. Thus we will always assume that the input graph is given along with its tree decomposition.

In parallel to improving the algorithms, many lower bounds were also developed~\cite{DBLP:conf/soda/LokshtanovMS11a,10.1007/978-3-642-22993-0_47,cygan2015parameterized}.
Let us point out that the main assumption from the classical complexity theory, i.e., P $\neq$ NP, is too weak to provide any meaningful lower bounds in our setting. The most commonly used assumptions in the fine-grained complexity world, are the \emph{Exponential-Time Hypothesis} (ETH) and the \emph{Strong Exponential-Time Hypothesis} (SETH), both introduced by Impagliazzo and Paturi~\cite{DBLP:journals/jcss/ImpagliazzoP01,DBLP:journals/jcss/ImpagliazzoPZ01}. Informally speaking, the ETH implies that \textsc{3-Sat} with $n$ variables cannot be solved in subexponential time, i.e., in time $2^{o(n)}$, and the SETH implies that \textsc{CNF-Sat} with $n$ variables and $m$ clauses cannot be solved in time $(2-\varepsilon)^{n} \cdot m^{\Oh(1)}$, for any $\epsilon >0$.

For example the straightforward dynamic programming algorithm for $k \coloring$ works in time $k^{\tw{G}} \cdot n^{\Oh(1)}$. As one of the first SETH-based lower bounds for problems parameterized by the treewidth, Lokshtanov, Marx, and Saurabh~\cite{DBLP:conf/soda/LokshtanovMS11a} proved that this bound is tight.

\begin{theorem}[Lokshtanov, Marx, Saurabh~\cite{DBLP:conf/soda/LokshtanovMS11a}]\label{thm:LMS}
For any $k \geq 3$, there is no algorithm solving $k \coloring$ on a graph with $n$ vertices and treewidth $t$ in time $(k-\epsilon)^t \cdot n^{\Oh(1)}$, unless the SETH fails.
\end{theorem}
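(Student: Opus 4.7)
The plan is to reduce from CNF-SAT, the canonical SETH-hard problem. Given a CNF-SAT instance $\phi$ with $n$ variables and $m$ clauses, and a target $\eps > 0$, I would construct in polynomial time a graph $G$ together with a path decomposition of width $t$ such that $G$ is $k$-colorable if and only if $\phi$ is satisfiable, while ensuring that $t \cdot \log_2(k-\eps) < (1-\delta) n$ for some $\delta > 0$ depending only on $k$ and $\eps$. A hypothetical $(k-\eps)^t \cdot n^{\Oh(1)}$ algorithm for $k$-Coloring would then solve $\phi$ in time $2^{(1-\delta) n} \cdot (n+m)^{\Oh(1)}$, contradicting SETH.

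To encode truth assignments, I would partition the variables into blocks of size $q$, where $q$ is a parameter to be tuned in terms of $\eps$, and represent each block by a short tuple of $\lceil q/\log_2 k\rceil$ vertices colored from $[k]$. Since such a tuple admits $k^{\lceil q/\log_2 k\rceil}\ge 2^q$ distinct colorings, every assignment of the block can be encoded. As $q\to\infty$, the number of vertices per variable tends to $1/\log_2 k$, so the per-row budget approaches the information-theoretic optimum $n/\log_2 k$, which is exactly what the running-time calculation demands. The skeleton of $G$ is then an $(m+1)\times \lceil n/q\rceil$ grid of such block-tuples, one ``row'' per time step, with constant-size equality gadgets connecting consecutive rows of each column so that the encoded assignment persists globally.

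The main obstacle, and the technical heart of the argument, is the design of the clause gadget. Between rows $i-1$ and $i$ one inserts, for each clause $C_i$, a gadget that is $k$-colorable if and only if row $i-1$ encodes an assignment satisfying $C_i$. The strategy is: build a constant-size \emph{literal check} per literal of $C_i$, attached to the block-tuple containing its underlying variable, which is $k$-colorable precisely when that tuple's coloring encodes a truth value making the literal true; then combine these checks with an \emph{OR-gadget} that is $k$-colorable iff at least one check is. Realizing these pieces concretely, typically from cliques $K_{k-1}$ together with a few auxiliary vertices, so that they compute exactly the intended Boolean function \emph{and} add only $\Oh(1)$ vertices to any bag, is the delicate step; one must also verify that the ``propagation'' of the assignment from row $i-1$ to row $i$ is never blocked by a correctly satisfied clause.

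Finally, the pathwidth bound is witnessed by the natural sliding-window decomposition of the grid: as one sweeps column by column through the $i$-th clause gadget, each bag carries the vertices of row $i-1$ up to the current column, the corresponding vertices of row $i$ past the current column, and the $\Oh(1)$ gadget vertices under consideration, for a total of $\lceil n/q\rceil\cdot\lceil q/\log_2 k\rceil + \Oh(1)$ vertices. Choosing $q$ large enough in terms of $\eps$ drives this below $(1-\delta)\,n/\log_2(k-\eps)$, closing the calculation and yielding the SETH-based lower bound.
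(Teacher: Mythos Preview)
The paper does not prove this statement: \cref{thm:LMS} is quoted as a known result of Lokshtanov, Marx, and Saurabh and is used only as a black box (the paper invokes it, via its pathwidth variant \cref{thm:LMS-pw}, at the end of the proof of \cref{thm:hardness-bipartite}). So there is no ``paper's own proof'' to compare against.

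That said, your sketch is a faithful high-level outline of the original LMS reduction: grouping variables into blocks of size $q$ to pack $\log_2 k$ bits per vertex, laying out an $m$-row grid whose columns carry the assignment, inserting per-clause gadgets between consecutive rows, and using a sliding-window path decomposition whose width is $n/\log_2 k + O(1)$ once $q$ is chosen large enough relative to~$\eps$. The only places where real work is hidden are exactly the ones you flag --- the design of the literal-check and OR gadgets using $K_{k-1}$'s plus auxiliary vertices so that each adds only $O(1)$ to any bag --- but these are the standard LMS gadgets and your description is accurate as a plan.
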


\paragraph{Graph homomorphisms.}
In this paper we are interested in extending  \cref{thm:LMS} for one of possible generalization of the $k \coloring$ problem.
For graph $G$ and $H$ (both with possible loops on vertices), a \emph{homomorphism} from $G$ to $H$ is a mapping $h \colon V(G) \to V(H)$, which preserves edges, i.e., for every edge $xy$ of $G$ it holds that $h(x)h(y) \in E(H)$.
The graph $H$ is called a \emph{target}.
If $h$ is a homomorphism from $G$ to $H$, we denote it by writing $h : G \to H$.
We also write $G \to H$ to indicate that some homomorphism from $G$ to $H$ exists.

By \homo{H} we denote the computational problem of deciding whether an instance graph $G$ admits a homomorphism to $H$ (usually we consider $H$ a fixed graph, but we might also treat it as a part of the input).
Observe that if $H = K_k$, then \homo{H} is equivalent to the $k \coloring$ problem.
Because of that, homomorphisms to $H$ are often called \emph{$H$-colorings}.
We will also refer to vertices of $H$ as \emph{colors}.

Let us briefly survey the some results concerning the complexity of variants of the \homo{H} problem.
For more information, we refer the reader to the comprehensive monograph by Hell and Ne\v{s}et\v{r}il~\cite{hell2004graphs}.
The complexity dichotomy for \homo{H} was shown by Hell and Ne\v{s}et\v{r}il~\cite{DBLP:journals/jct/HellN90}: the problem is polynomial-time-solvable if $H$ contains a vertex with a loop or is bipartite, and NP-complete for all other graphs $H$.
Since then, many interesting results concerning the complexity of graph homomorphisms have appeared~\cite{DBLP:journals/mst/FominHK07,DBLP:journals/mst/Wahlstrom11,DBLP:journals/ipl/Rzazewski14,DBLP:journals/jacm/CyganFGKMPS17,DBLP:journals/mst/EgriKLT12,DBLP:journals/algorithmica/ChitnisEM17}.
The fine-grained complexity of the \homo{H} problem, parameterized by the treewidth of the input graph, was very recently studied by Okrasa and Rz\k{a}\.zewski~\cite{OkrasaSODA}.
They were able to find tight SETH-bounds, conditioned on two conjectures from algebraic graph theory from early 2000s.
As these conjectures remain wide open, we know no graph, for which the bounds from~\cite{OkrasaSODA} do not apply.

A natural and interesting extension of the \homo{H} problem is its \emph{list} version.
In the \emph{list homomorphism problem}, denoted by \lhomo{H}, the input consists of a graph $G$ and a \emph{$H$-lists} $L$, which means that $L$ is a function which assigns to each vertex of $G$ a subset of vertices of $H$. We ask whether there is a homomorphism $h$ from $G$ to $H$, which respects lists $L$, i.e., for each $x \in V(G)$ it holds that $h(x) \in L(x)$.
If $h$ is such a list homomorphism, we denote it by $h: (G,L) \to H$.
We also write $(G,L) \to H$ to indicate that some homomorphism $h: (G,L) \to H$ exists.

The complexity of the \lhomo{H} problem was shown in three steps. First, Feder and Hell~\cite{FEDER1998236} provided a classification for the case that $H$ is \emph{reflexive}, i.e., every vertex has a loop. They proved that if $H$ is an interval graph, then the problem is polynomial-time solvable, and otherwise it is NP-complete.
The next step was showing the complexity dichotomy for \emph{irreflexive graphs} (i.e., with no loops).
Feder, Hell, and Huang~\cite{DBLP:journals/combinatorica/FederHH99} proved that if $H$ is bipartite and its complement is a circular-arc graph, then the problem is polynomial-time solvable, and otherwise it is NP-complete. Interestingly, bipartite graphs whose complement is circular-arc were studied independently by Trotter and Moore~\cite{TROTTER1976361} in the context of some poset problems.
Finally, Feder, Hell, and Huang~\cite{DBLP:journals/jgt/FederHH03} provided the full classification for general graph $H$: the polynomial cases appear to be \emph{bi-arc graphs}, which are also defined in terms of some geometric representation. Let us now skip the exact definition of bi-arc graphs, and we will get back to it in~\cref{sec:associated}.

Let us point out that in all three papers mentioned above, the polynomial-time algorithms for \lhomo{H} exploited the geometric representation of $H$. On the other hand, all hardness proofs followed the same pattern.
First, for each ``easy'' class $\cC$ (i.e., interval graphs, bipartite co-circular-arc graphs, and bi-arc graphs), the authors provided an alternative characterization in terms of forbidden subgraphs. In other words, they defined a (non-necessarily finite) family $\cF$ of graphs, such that $H \in \cC$ if and only if $H$ does not contain any $F \in \cF$ as an induced subgraph. Then, for each $F \in \cF$, the authors showed that \lhomo{F} is NP-complete.
Note that this is sufficient, as every ``hard'' graph $H$ contains some $F \in \cF$, and every instance of \lhomo{F} is also an instance of \lhomo{H}, where no vertex from $V(H) \setminus V(F)$ appears in any list.

If the input graph $G$  is given with a tree decomposition of width $\tw{G}$, then the straightforward dynamic programming solves the \lhomo{H} problem in time  $|V(H)|^{\tw{G}} \cdot |V(G)|^{\Oh(1)}$. 
The study of the fine-grained complexity of the \lhomo{H} problem, parameterized by the treewidth of the input graph, was initiated by Egri, Marx, and Rz\k{a}\.zewski~\cite{DBLP:conf/stacs/EgriMR18}.
They were able to provide the full complexity classification for the case of reflexive graphs $H$, i.e., corresponding to the first step of the above-mentioned complexity dichotomy.

The authors defined a new and simple graph invariant, denoted by $i^*$, which is based on \emph{incomparable sets} and the existence of a certain decompositions in $H$, and proved the following tight bounds.

\begin{theorem}[Egri, Marx, Rz\k{a}\.zewski~\cite{DBLP:conf/stacs/EgriMR18}]\label{thm:EMRreflexive}
Let $H$ be a fixed connected reflexive non-interval graph, and let $k=i^*(H)$.
The \lhomo{H} problem on instances $(G,L)$ with $n$ vertices,
\begin{compactenum}[(a)]
\item can be solved in time $k^{\tw{G}} \cdot n^{\Oh(1)}$, provided that an optimal tree decomposition of $G$ is given,
\item cannot be solved in time $(k-\epsilon)^{\tw{G}} \cdot n^{\Oh(1)}$ for any $\epsilon > 0$, unless the SETH fails.
\end{compactenum}
\end{theorem}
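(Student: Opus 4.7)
The plan is to split the theorem into two parts: an algorithm establishing (a), and a SETH-based lower bound establishing (b), both calibrated by the same constant $k=i^*(H)$. The invariant $i^*(H)$, being built from incomparable sets together with a structural decomposition of $H$, should simultaneously control the number of states one must keep in an improved dynamic program and the branching factor available in a gadget-based reduction.

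For the upper bound, I would process a nice tree decomposition of $G$ in a bottom-up manner, maintaining at each bag $B$ a compressed table of partial list homomorphisms. The naive table has $|V(H)|^{|B|}$ entries; the target $k^{|B|}\cdot n^{\Oh(1)}$ is achieved by using the decomposition of $H$ witnessing $i^*(H)=k$ to partition each list $L(v)$ into at most $k$ classes of vertices that are interchangeable for the purposes of extending a partial solution. The central lemma to prove is that, for a reflexive non-interval $H$, storing for every bag vertex only its class is enough: the incomparability of distinct classes prevents two of them from being simultaneously realizable at a join, while the interval-like internal structure of each class lets a single canonical representative stand in for the whole class. The join-node step is where the main technical difficulty lies, since naively intersecting two compressed tables can blow up the exponent; one must argue that the compression given by $i^*(H)$ is stable under intersection, possibly using a subset-convolution-style trick or a careful ordering of classes consistent with the decomposition.

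For the lower bound, I would adapt the Lokshtanov--Marx--Saurabh framework from \cref{thm:LMS}. Starting from a \textsc{CNF-Sat} instance with $n$ variables, partition the variables into $t=\lceil n/\log_2 k\rceil$ groups of size $\log_2 k$ and build an \lhomo{H} instance whose tree decomposition has one ``slot'' vertex per group in each bag, with list contained in a fixed incomparable set $I\subseteq V(H)$ of size $k=i^*(H)$. A long path of bags threads the construction, and clauses are enforced by a clause gadget that forbids exactly one tuple in $I^r$ for the relevant constant $r$; such a gadget exists because $H$ is reflexive and not an interval graph, so it contains one of the known obstructions (an asteroidal triple, a large induced cycle, or a similar configuration) from which propagation and disequality gadgets can be assembled. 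A hypothetical $(k-\varepsilon)^{\tw{G}}\cdot n^{\Oh(1)}$ algorithm then yields a $(2-\varepsilon')^{n}$ algorithm for \textsc{CNF-Sat}, contradicting the SETH.

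The main obstacle I expect is the matching between the two sides: the upper-bound compression and the lower-bound gadgetry must both be driven by the same quantity $i^*(H)$. Concretely, the hard parts are (i) showing on the algorithmic side that no compression better than one class per incomparable piece is possible at a join, so that $k$ and not some smaller integer governs the running time, and (ii) on the complexity side, constructing clause gadgets of bounded pathwidth that can realize an arbitrary forbidden tuple over $I$, which is exactly where the ``incomparable'' strengthening of ``independent'' in the definition of $i^*(H)$ is needed. Once both parts agree on $k=i^*(H)$, they combine into the claimed tight dichotomy.
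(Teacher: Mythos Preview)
First, note that this theorem is \emph{cited} from Egri, Marx, Rz\k{a}\.zewski and not proved in the present paper; the paper instead proves the generalizations \cref{thm:main-bipartite} and \cref{thm:main}, whose proofs specialize to give the reflexive case. So the right comparison is with those proofs.

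\textbf{Upper bound.} Your mechanism is off. There is no ``compression of each list into $k$ interchangeable classes'' and no subset-convolution trick at join nodes. The actual argument has two ingredients, and neither touches the dynamic programming on the tree decomposition in any non-trivial way. First, one preprocesses every list to be an \emph{incomparable} set (if $N(u)\subseteq N(v)$ and both lie in $L(x)$, drop $u$); this already bounds $|L(x)|\le i(H)$. Second, if $H$ is decomposable with factors $H_1,H_2$, one shows (the Decomposition Lemma) that the instance $(G,L)$ splits into independent subinstances of \lhomo{H_1} and a single instance of \lhomo{H_2}, each on an induced subgraph of $G$; one recurses on the factors. At the leaves of this recursion $H'$ is undecomposable, so $i(H')\le i^*(H)=k$, and the \emph{plain} DP on the given tree decomposition runs in $k^t\cdot n^{\Oh(1)}$. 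Your worry about join nodes is a red herring: once lists have size at most $k$, standard DP already gives the bound.

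\textbf{Lower bound.} Your plan to reduce directly from \textsc{CNF-Sat} with group encoding and clause gadgets is doing more work than needed and is not how the paper proceeds. The paper reduces from $k$-\textsc{Coloring} and invokes \cref{thm:LMS} as a black box. The only gadget required is a binary $\rneq(S)$-gadget for a fixed incomparable set $S$ of size $k$ inside an undecomposable non-easy induced subgraph $H'$ of $H$: one then replaces every edge of the $k$-coloring instance by a copy of this gadget, increasing pathwidth by $O_H(1)$. Building the $\rneq(S)$-gadget is the real content: it uses an obstruction in $H'$ (an induced $C_6$, $C_8$, or asteroidal structure), from which one assembles $\ork{k}$- and $\nand{2}$-gadgets and, crucially, \emph{distinguisher} gadgets $D_{a/b}$ whose existence is proved via a careful theory of \emph{avoiding walks}. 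Your sketch gestures at ``disequality gadgets from obstructions'' but does not identify this avoiding-walk machinery, which is where essentially all the difficulty lies and where undecomposability of $H'$ is actually used.
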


In this paper we continue this line of research and provide the full complexity classification for all graphs $H$.
Our results heavily extend the framework of Egri, Marx, Rz\k{a}\.zewski~\cite{DBLP:conf/stacs/EgriMR18} and generalize~\cref{thm:EMRreflexive}.

\subsection{Our results.} 
In this paper we provide a full complexity classification of \lhomo{H}, parameterized by the treewidth of an instance graph.
Our results heavily extend the ones of Egri, Marx, Rz\k{a}\.zewski~\cite{DBLP:conf/stacs/EgriMR18} and generalize~\cref{thm:EMRreflexive} to all relevant graphs $H$.
Let us point out that instead of designing ad-hoc algorithms and reductions that are fine-tailored for a particular problem,
we rather build a general framework that allows us to provide tight bounds for a natural and important family of problems.

We prove the complexity classification for \lhomo{H} in two steps. 
First, we consider the case that the target graph $H$ is bipartite.
Then we extend the results to  general graphs $H$, with loops allowed.

\paragraph{Bipartite graphs $H$.} We first deal with the case that $H$ is bipartite (in particular, irreflexive), with bipartition classes $X$ and $Y$. Recall that we are interested in graphs $H$, for which the \lhomo{H} problem is NP-hard, i.e., graphs that are not co-circular-arc graphs. Moreover, we consider only connected graphs $H$ (as otherwise we can reduce to this case in polynomial time).

Let us present the high-level idea behind our algorithm for \lhomo{H}. Consider an instance $(G,L)$, such that $G$ is connected, and let $n=|V(G)|$. 
We may assume that $G$ is bipartite, as otherwise $(G,L)$ is clearly a no-instance. Furthermore, in any homomorphism from $G$ to $H$, each bipartition class of $G$ is mapped to a different bipartition class of $H$. We can assume that this is already reflected in the lists (we might have to solve two independent instances).

The algorithm is based on two main ideas.
First, observe that if $H$ contains two vertices $u,v$, which are in the same bipartition class, and each neighbor of $u$ is a neighbor of $v$, then the only thing preventing us from using $v$ instead $u$ is the fact that $v$ might not appear in some list containing $u$. Thus we might always assume that each list is an \emph{incomparable set}, i.e., it does not contain two vertices $u,v$ as above. By $i(H)$ we denote the size of a largest incomparable set contained in one bipartition class.

The second idea is related to a certain decomposition of $H$.
By a \emph{bipartite decomposition} we mean a partition of the vertex set of $H$ into three subsets $D,N,R$, such that: 
\begin{compactitem}
\item at least one of sets $(D \cap X) $ and $(D \cap Y)$ has at least 2 elements,
\item $N$ is non-empty and induces a biclique in $H$,
\item the sets $(D \cap X) \cup (N \cap Y)$ and $(D \cap Y) \cup (N \cap X)$ induce bicliques in $H$,
\item $N$ is a $D$-$R$-separator.
\end{compactitem}
We show that if $H$ has a bipartite decomposition, then we can reduce solving an instance $(G,L)$ of \lhomo{H} to solving several instances of \lhomo{H_1} and \lhomo{H_2}, where $H_1$ is the subgraph of $H$ induced by $D$, and $H_2$ is obtained from $H_2$ by collapsing $D \cap X$ and $D \cap Y$ to single vertices. 

This leads to the definition of $i^*(H)$ as the maximum value of $i(H')$ over all connected undecomposable induced subgraphs $H'$ of $H$, which are not complements of a circular-arc graph (a graph is undecomposable if it has no bipartite decomposition).
As our first result, we show that the algorithm exploiting decompositions recursively runs in time $i^*(H)^{\tw{G}} \cdot n^{\Oh(1)}$.

One might wonder whether some additional observations could be used to improve the algorithm.
As our second result, we show that this is not possible, assuming the SETH. This means that unless something unexpected happens in complexity theory, our algorithmic toolbox allows to solve \lhomo{H}, parameterized by the treewidth, as fast as possible.
More formally, we show the following theorem, which fully classifies the complexity of \lhomo{H} for bipartite graphs $H$.

\begin{theorem} \label{thm:main-bipartite}
Let $H$ be a connected bipartite graph, whose complement is not a circular-arc graph, and let $k=i^*(H)$.
Let $G$ be a bipartite graph with $n$ vertices and treewidth  $\tw{G}$.
\begin{compactenum}[(a)]
\item Even if $H$ is given as an input, the $\lhomo{H}$ problem with instance $(G,L)$ can be solved in time $k^{\tw{G}}\cdot(n\cdot |H|)^{\Oh(1)}$ for any lists $L$, provided that $G$ is given with an optimal tree decomposition.
\item Even if $H$ is fixed, there is no algorithm that solves  $\lhomo{H}$  for every $G$ and $L$ in time $(k-\epsilon)^{\tw{G}} \cdot n^{\Oh(1)}$ for any $\epsilon > 0$, unless the SETH fails.
\end{compactenum}
\end{theorem}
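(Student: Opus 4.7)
The plan is to handle parts (a) and (b) separately, leveraging the two ingredients emphasised in the introduction: list cleaning via incomparability and recursive bipartite decomposition.

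For the algorithm in part (a), I would begin with a list-cleaning preprocessing: as long as some list $L(v)$ contains vertices $u, u'$ in the same bipartition class of $H$ with $N_H(u) \subseteq N_H(u')$, remove $u$ from $L(v)$. This step is safe because any list homomorphism $h$ with $h(v) = u$ can be rerouted to map $v$ to $u'$, since every neighbor $y$ of $v$ already satisfies $h(y) \in N_H(u) \subseteq N_H(u')$. After exhaustive cleaning, every list is an incomparable set in its bipartition class, so its size is at most $i(H')$, where $H'$ is the induced subgraph of $H$ spanned by the colors still appearing in lists. I would then run a recursive procedure on $H'$: if $H'$ admits a bipartite decomposition $(D, N, R)$, the separator property of $N$ together with the biclique conditions making all $D$--$N$ edges automatically satisfied allow me to split the problem into two list-homomorphism subproblems, one to the smaller target $H_1 = H'[D]$ (for vertices of $G$ whose image lands in $D$) and one to the quotient $H_2$ obtained by collapsing $D \cap X$ and $D \cap Y$. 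When the current reduct is undecomposable, I fall back on the standard tree-decomposition dynamic program whose per-bag state is simply a choice of cleaned list entry per bag vertex; each vertex then contributes a factor of at most $i(H') \le i^*(H) = k$, yielding the claimed $k^{\tw{G}} \cdot (n|H|)^{\Oh(1)}$ bound once the recursion on targets is interleaved with the DP on $G$ so that branching does not inflate the running time.

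For the SETH lower bound in part (b), I would follow the template of Lokshtanov--Marx--Saurabh~\cite{DBLP:conf/soda/LokshtanovMS11a}, as refined for list homomorphisms in~\cite{DBLP:conf/stacs/EgriMR18}. Fix a connected, undecomposable, non-co-circular-arc induced subgraph $H' \subseteq H$ realising $i(H') = k$, and an incomparable set $I \subseteq V(H')$ of size $k$ inside one bipartition class. Given a CNF-SAT instance on $n$ variables and $m$ clauses, partition the variables into blocks of size $\lfloor \log_2 k \rfloor$, each encoded by a row vertex with list $I$. Exploiting the two structural hypotheses on $H'$, I would construct constant-size gadgets realising (i) a propagation gadget that copies a value of $I$ along a path, (ii) a projection gadget extracting one bit of an encoded block, and (iii) a clause gadget enforcing a single CNF clause. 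Assembling these gadgets on an $(n/\log_2 k) \times m$ grid-like backbone produces an instance $(G, L)$ with $\tw{G} \le n/\log_2 k + \Oh(1)$ equivalent to the SAT instance, so a hypothetical $(k - \varepsilon)^{\tw{G}} \cdot n^{\Oh(1)}$ algorithm for \lhomo{H} would solve CNF-SAT in time $(2 - \varepsilon')^n \cdot (n+m)^{\Oh(1)}$, contradicting SETH.

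I expect the main technical obstacle in (a) to be showing that the recursive decomposition is truly cost-free, that is, that $i^*(H_1), i^*(H_2) \le i^*(H)$ --- in particular that the collapsing operation defining $H_2$ does not create any new undecomposable non-co-circular-arc induced subgraph with a larger incomparable set --- and that the branching over $(D,N,R)$ can be threaded through a single tree-decomposition DP on $G$ without multiplying states. The main difficulty in (b) is gadget construction under only the two qualitative assumptions that $H'$ is undecomposable and not co-circular-arc; unlike the reflexive setting of~\cite{DBLP:conf/stacs/EgriMR18}, we cannot lean on an explicit geometric representation. I anticipate addressing this via a finite case analysis over the Feder--Hell--Huang forbidden subgraphs~\cite{DBLP:journals/combinatorica/FederHH99}, producing an \emph{inequality} and a \emph{disjunction} gadget on the incomparable set $I$ for each minimal obstruction and verifying that these compose consistently inside $H'$.
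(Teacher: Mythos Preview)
Your plan for part~(a) matches the paper's strategy closely: clean lists to incomparable sets, recursively decompose $H$, and bottom out in a standard DP over the tree decomposition. Your main stated worry, however, is a non-issue: the paper observes that the second factor $H_2$ is in fact an \emph{induced subgraph} of $H$ (one can equivalently obtain it by deleting all but one vertex from $D_X$ and all but one from $D_Y$, keeping an adjacent pair if possible), so $i^*(H_2)\le i^*(H)$ is immediate and no new obstructions can be created by the collapse. The paper also does not attempt to ``thread'' the target recursion through a single DP on $G$; instead it proves a clean recurrence (the Bipartite Decomposition Lemma): solving \lhomo{H} on $(G,L)$ reduces to solving \lhomo{H_1} on the connected components of $G\setminus Q$ (where $Q$ is the set of vertices whose lists meet $N$) and then solving one instance of \lhomo{H_2} on all of $G$. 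Superadditivity of $n^d$ makes the costs add up without blow-up.

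For part~(b) your proposed route diverges from the paper's and carries a genuine gap. The paper does \emph{not} reduce directly from CNF-SAT; it reduces from $k$-\textsc{Coloring} and invokes the Lokshtanov--Marx--Saurabh lower bound. The only gadget required is a $\rneq(S)$-gadget on the incomparable set $S$: substitute one copy for each edge of the $k$-colouring instance. This sidesteps the propagation, bit-projection, and clause gadgets your SAT reduction would need. More importantly, your plan to obtain the inequality/disjunction gadgets by a ``finite case analysis over the Feder--Hell--Huang forbidden subgraphs'' underestimates the real obstacle: the incomparable set $S$ realising $i(H')=k$ need not lie inside, or even near, the obstruction $\Ob$. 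The paper's technical core (its Section~3.5) is a sequence of lemmas that construct, for arbitrary incomparable vertices, pairs of \emph{avoiding walks} from $S$ to designated vertices $\alpha,\beta$ of $\Ob$; this is precisely where undecomposability of $H'$ is used, via repeated minimal-separator arguments. Only once these walks exist can one build the distinguisher gadgets, compose them with $\ork{k}$ and $\nand{2}$ gadgets (built from $\Ob$ alone), and assemble the $\rneq(S)$-gadget. A per-obstruction case analysis handles the $\ork{k}$/$\nand{2}$ part but not the connection between $S$ and $\Ob$, which is the heart of the proof.
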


Note that for~\cref{thm:main-bipartite} a), if $H$ is not considered to be a constant, $n \cdot |H|$ is a natural measure of the size of an instance, as it is an upper estimate on the sum of sizes of all lists.

The main tool used in the proof of~\cref{thm:main-bipartite} b) is the following technical lemma.

\begin{restatable}[Constructing a $\rneq(S)$-gadget]{lemma}{inequality}%
\label{lem:edge-gadget}
Let $H$ be a connected, bipartite, undecomposable graph, whose complement is not a circular-arc graph.
Let $S$ be an incomparable set of $k\geq 2$ vertices of $H$, contained in one bipartition class.
Then there exists a $\rneq(S)$-gadget, i.e., a graph $F$ with $H$-lists $L$ and two special vertices $x,x' \in V(F)$, such that $L(x)=L(x')=S$ and
\begin{compactitem}
\item for any list homomorphism $h:(F,L)\to H$, it holds that $h(x) \neq  h(x')$,
\item for any distinct $s,s' \in S$ there is a list homomorphism $h:(F,L)\to H$, such that $h(x)=s$ and $h(x')=s'$.
\end{compactitem}
\end{restatable}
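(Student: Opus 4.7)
The plan is to construct the $\rneq(S)$-gadget in two stages: first produce a ``seed'' inequality gadget on some pair of vertices in $H$ using the NP-hardness structure, then transport and combine such seed gadgets to obtain the full $\rneq(S)$-gadget using undecomposability.

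For the seed, I would use the hypothesis that the complement of $H$ is not a circular-arc graph, which places $H$ outside the class of bipartite graphs for which \lhomo{H} is polynomial-time solvable. The classical NP-hardness proofs of Feder, Hell and Huang, when applied to such $H$, proceed by locating a concrete hardness substructure in $H$ and building from it a gadget enforcing a nontrivial binary constraint between two terminal vertices. I would isolate this construction and, by restricting the terminal lists to a pair of incomparable vertices $\{a,b\}$ lying in a common bipartition class of $H$, convert it into a $\rneq(\{a,b\})$-gadget. This pair need not be contained in $S$; it only needs to exist somewhere in $H$.

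For the lifting step, the undecomposability of $H$ is essential. Given the seed $\rneq(\{a,b\})$-gadget, I would build $\rneq(\{s,s'\})$-gadgets for every pair of distinct $s,s' \in S$ by attaching the seed to carefully chosen walks in $H$: a walk of appropriate length and parity lets one propagate list membership along it and thereby transport the binary constraint from $\{a,b\}$ to $\{s,s'\}$. The hypothesis that $H$ admits no bipartite decomposition $(D,N,R)$ is used precisely to rule out obstructions to this transport, since any such obstruction would manifest as a biclique separator of $H$ producing exactly the decomposition we have excluded. The incomparability of $S$ inside a single bipartition class guarantees that each pair $s,s' \in S$ has some neighborhood difference to anchor the transported gadget. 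Identifying the two terminals across all $\binom{k}{2}$ pair gadgets by parallel composition then yields a single gadget with terminals $x,x'$ and $L(x)=L(x')=S$, which simultaneously forbids every diagonal assignment $h(x)=h(x')=s$ and, by inspecting the component gadgets, still realizes every off-diagonal assignment $(h(x),h(x'))=(s,s')$.

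The main obstacle will be the lifting. Undecomposability is a global structural property of $H$, whereas the walk-based transport is a purely local construction; translating the former into the latter, that is, showing that any pair $\{s,s'\} \subseteq S$ can in fact be reached from the seed pair $\{a,b\}$ by a transport scheme that respects bipartite parity and list constraints, will require a careful combinatorial analysis of $H$. The seed configurations coming from the non-circular-arc complement hypothesis are varied, so one needs a sufficiently uniform propagation scheme. Finally, verifying that parallel composition does not accidentally over-constrain the resulting gadget and destroy realizability of some off-diagonal pair $(s,s')$ will require checking the interaction between the lifted pair gadgets, again leaning on the incomparability of $S$ to produce the necessary distinguishing neighbors at each step.
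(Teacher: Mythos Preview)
Your parallel composition step does not work as stated. A $\rneq(\{s,s'\})$-gadget has terminal lists $\{s,s'\}$, not $S$; identifying the terminals of $\binom{k}{2}$ such gadgets intersects the lists and yields empty lists for $k\ge 3$. If instead you enlarge each terminal list to $S$ before composing, you lose control of the relation: for $t\in S\setminus\{s,s'\}$ the internal structure of the pair gadget may admit no extension at all (destroying realizability of off-diagonal pairs involving $t$), or may admit one that also allows $(t,t)$. Either failure mode breaks the construction, and nothing in your transport scheme rules them out.

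The paper avoids this by never building pairwise inequality gadgets on $S$. Instead it fixes a single pair $\{\alpha,\beta\}$ inside an obstruction of $H$ (an induced $C_6$, $C_8$, or asteroidal subgraph) and treats $\{\alpha,\beta\}$ as boolean values. The central object is a \emph{distinguisher} $D_{a/b}$: a gadget with input list $S$ and output list $\{\alpha,\beta\}$ such that input $a$ forces output $\alpha$, input $b$ permits output $\beta$, and every other $c\in S$ permits some output. Combining the $k-1$ distinguishers $D_{v_i/v_j}$ (over all $j\neq i$) through an $\ork{k}$-gadget produces a gadget whose output is forced to $\beta$ precisely when the input is $v_i$; two copies joined by a $\nand{2}$ on the outputs then forbid both terminals being $v_i$ simultaneously while leaving every other assignment free. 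Doing this for each $i$ and identifying the inputs gives the $\rneq(S)$-gadget. Because all the logic happens over the fixed pair $\{\alpha,\beta\}$, no list-intersection issue ever arises.

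Your intuition that undecomposability enables a walk-based transport is correct, and this is exactly where the technical weight sits: the distinguisher is built from a family of walks $\{\cD_v\}_{v\in S}$, each from $v$ to $\alpha$ or to $\beta$, satisfying a one-sided avoidance condition (\cref{lem:walks-from-uncomp}). Constructing these walks occupies most of \cref{sec:constructions} and is where undecomposability is actually invoked; your sketch does not yet engage with this.
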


Let us point out that the graph constructed in \cref{lem:edge-gadget} can be seen as a \emph{primitive-positive definition} of the inequality relation on $S$ (see e.g. Bulatov~\cite[Section 2.1]{10.1007/978-3-319-77313-1_1}).
However, we prefer to present our results using purely combinatorial terms.

The proof of \cref{lem:edge-gadget} is technically involved, but as soon as we have it, the proof of~\cref{thm:main-bipartite} b) is straightforward. Consider an instance $G$ of $k \coloring$, where $k = i^*(H)$.
Let $H'$ be a connected, undecomposable, induced subgraph of $H$, whose complement is not a circular-arc graph, and contains an incomparable set $S$ of size $k$.
We construct a graph $G^*$ by replacing each edge $uv$ of $G$ with a copy of the $\rneq(S)$-gadget, given by \cref{lem:edge-gadget} (invoked for $H'$ and $S$), so that $u$ is identified with $x$ and $v$ is identified with $x'$.
By the properties of the gadget, we observe that $G^*$ has a list homomorphism to $H$ if and only if $G$ is a yes-instance of $k \coloring$. Furthermore, the construction of the $\rneq(S)$-gadget depends on $H$ only, and $H$ is assumed fixed, so we conclude that $\tw{G^*} = \tw{G} + \Oh(1)$. Therefore the statement of~\cref{thm:main-bipartite} b) follows from \cref{thm:LMS}.

\paragraph{General graphs $H$.} Next, we move to the general case. We aim to reduce the problem to the bipartite case.
The main idea comes from Feder, Hell, and Huang~\cite{DBLP:journals/jgt/FederHH03} who showed a close connection between the \lhomo{H} problem and the \lhomo{H^*} problem, where $H^*$ is the \emph{associated bipartite graph} of $H$, i.e., the bipartite graph with bipartition classes $\{v' \colon v \in V(H) \}$ and $\{ v'' \colon v \in V(H)\}$, where $u'v'' \in E(H^*)$ if and only if $uv \in E(H)$. Let us point out that we can equivalently define $H^*$ as a categorical (direct) product of $H$ and $K_2$~\cite{MR2817074}.

We extend the definition of $i^*$ to non-bipartite graphs by setting $i^*(H):=i^*(H^*)$. Let us point out that this definition is consistent with the definition for bipartite graphs, and with the definition of $i^*$ for reflexive graphs, introduced by Egri, Marx, and Rz\k{a}\.zewski~\cite{DBLP:conf/stacs/EgriMR18}.
We show the following theorem, fully classifying the complexity of the \lhomo{H} problem, parameterized by the treewidth of the instance graph.

\begin{theorem} \label{thm:main}
Let $H$ be a connected non-bi-arc graph (with possible loops), and let $k=i^*(H)$. Let $G$ be a graph with $n$ vertices and treewidth  $\tw{G}$.
\begin{compactenum}[(a)]
\item Even if $H$ is given as an input, the $\lhomo{H}$ problem with instance $(G,L)$ can be solved in time $k^{\tw{G}} \cdot (n\cdot |H|)^{\Oh(1)}$ for any lists $L$, provided that $G$ is given with an optimal tree decomposition.
\item  Even if $H$ is fixed, there is no algorithm that solves  $\lhomo{H}$  for every $G$ and $L$ in time $(k-\epsilon)^{\tw{G}} \cdot n^{\Oh(1)}$ for any $\epsilon > 0$, unless the SETH fails.
\end{compactenum}
\end{theorem}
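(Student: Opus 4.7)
Our plan is to derive both parts of \cref{thm:main} from \cref{thm:main-bipartite} applied to the associated bipartite graph $H^*$. Since $i^*(H) = i^*(H^*) = k$ by definition, and by the Feder--Hell--Huang theory (mentioned in the introduction) $H$ is a non-bi-arc graph if and only if $H^*$ is a connected bipartite graph whose complement is not a circular-arc graph, \cref{thm:main-bipartite} applies to $H^*$ with the same constant $k$.

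\textbf{Part (b) -- lower bound.} I would reduce $\lhomo{H^*}$ (hard by \cref{thm:main-bipartite}(b)) to $\lhomo{H}$ with exact treewidth preservation. Given an instance $(G,L)$ of $\lhomo{H^*}$, we may assume $G$ is bipartite (otherwise the answer is trivially \emph{no}) and, by solving the two possible orientations separately, that its bipartition classes already sit on the primed and double-primed sides of $H^*$. Setting $G':=G$ and $L'(v):=\{u\in V(H): u'\in L(v)\}$ on the primed side (symmetrically on the other), the definition of $E(H^*)$ directly gives $(G,L)\to H^*$ iff $(G',L')\to H$. Treewidth is preserved exactly, so a $(k-\eps)^{\tw{G}}\cdot n^{\Oh(1)}$ algorithm for $\lhomo{H}$ would yield one of the same time for $\lhomo{H^*}$, contradicting \cref{thm:main-bipartite}(b).

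\textbf{Part (a) -- upper bound.} Here the direction is reversed: I would reduce an instance $(G,L)$ of $\lhomo{H}$ to an equivalent bipartite instance $(G^*, L^*)$ of $\lhomo{H^*}$, equipped with a tree decomposition of $G^*$ of width $\tw{G}+\Oh(1)$ computable in polynomial time from the one of $G$. Feeding $(G^*, L^*)$ into \cref{thm:main-bipartite}(a) then yields the stated $k^{\tw{G}}\cdot(n|H|)^{\Oh(1)}$ bound. The starting point for $G^*$ is the bipartite double cover $V(G^*)=V(G)\times\{0,1\}$, $E(G^*)=\{(u,0)(v,1):uv\in E(G)\}$, with $L^*((v,0))=\{u': u\in L(v)\}$ and $L^*((v,1))=\{u'': u\in L(v)\}$: any $h:(G,L)\to H$ lifts via $(v,0)\mapsto h(v)'$, $(v,1)\mapsto h(v)''$. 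The converse is the sticking point: a generic $h^*:(G^*,L^*)\to H^*$ may pick $h^*((v,0))=u'$ and $h^*((v,1))=w''$ with $u\ne w$ and hence fail to descend. To force $u=w$ I would attach, at each $v\in V(G)$, a constant-size pairing gadget built from $H^*$-list primitives in the spirit of \cref{lem:edge-gadget}.

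\textbf{Main obstacle.} The proof hinges on producing the pairing gadgets so that simultaneously (i) they exist for every non-bi-arc $H$, and (ii) attaching them, together with passing from the decomposition of $G$ to that of $G^*$, increases the treewidth by only an additive constant. Existence of the gadgets (i) is where the non-bi-arc condition enters, since for bi-arc $H$ the whole $\lhomo{H}$ problem is polynomial and no such forcing can be implemented. Controlling treewidth (ii) is the delicate algorithmic step: a naive construction in which each bag $B$ of $G$'s decomposition is replaced by $B\times\{0,1\}$ already doubles the width, so one needs a more refined construction -- for instance, interleaving $0$- and $1$-copies along the tree decomposition and absorbing the $\Oh(|H|)$ gadget-vertices locally, so that the two ``sides'' of $G^*$ never need to coexist in a single bag of the final decomposition.
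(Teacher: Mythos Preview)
Your approach to part (b) is essentially the same as the paper's: the paper proves exactly the reduction you describe (its \cref{prop:bipartite-associted}), keeping the instance graph $G$ unchanged and only rewriting the lists, so pathwidth and treewidth are preserved exactly.

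For part (a), however, your plan has a genuine gap that the paper avoids by taking a different route. The problem is the treewidth of $G^*$. Passing from a width-$t$ decomposition of $G$ to one of the bipartite double cover $G^*$ costs a factor of two in width, not an additive constant: every edge $uv\in E(G)$ produces both $(u,0)(v,1)$ and $(u,1)(v,0)$ in $G^*$, and on top of that your pairing gadgets must tie $(v,0)$ to $(v,1)$. Your proposed ``interleaving'' cannot make the two sides disappear from common bags, because for a bag $B$ of $G$ you need all of $B\times\{0\}$ together (to cover the $(u,0)(v,1)$ edges with $u,v\in B$), all of $B\times\{1\}$ together, and each pair $(v,0),(v,1)$ connected through the gadget; the connectivity condition of a tree decomposition then forces some bag to contain essentially $B\times\{0,1\}$. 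Feeding a width-$2t$ decomposition into \cref{thm:main-bipartite}(a) only yields $k^{2t}$, which is useless.

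The paper handles part (a) very differently and does \emph{not} black-box \cref{thm:main-bipartite}(a). It defines three types of decompositions of a general $H$ ($F$-, $BP$-, and $B$-decompositions) and proves that they correspond exactly to bipartite decompositions of $H^*$ (\cref{lem:decompositions-equivalent}). Each comes with its own decomposition lemma (\cref{lem:f-decomposition}, \cref{lem:bp-decomposition}, \cref{lem:b-decomposition}) that works directly on the instance $(G,L)$ of \lhomo{H}, without ever forming $G^*$. Only at the leaves of the recursion tree, when the current target $H'$ is undecomposable, does the paper pass to the associated instance $(G^*,L^*)$ of \lhomo{H'^*}; and there it does \emph{not} invoke \cref{thm:main-bipartite}(a), but runs a custom dynamic program (\cref{lem:leaves-general}) on the width-$2t$ decomposition of $G^*$ in which only \emph{clean} partial homomorphisms are stored. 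Since the two copies $x',x''$ always sit in the same bag and must receive twin colours, the number of states per bag is $i(H'^*)^t$, not $i(H'^*)^{2t}$. This clean-homomorphism observation is precisely the missing idea in your proposal: it cannot be recovered by attaching gadgets and then calling the bipartite algorithm as a black box, because the black box does not know about the twin structure.

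A minor additional inaccuracy: $H^*$ need not be connected (it is two copies of $H$ when $H$ is bipartite), so the hypothesis of \cref{thm:main-bipartite} must be applied to a connected component of $H^*$, not to $H^*$ itself.
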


As we mentioned before, both statements of~\cref{thm:main} follow from the corresponding statements in~\cref{thm:main-bipartite}.
For the algorithmic part, we define certain decompositions of general graphs $H$ and show that they coincide with bipartite decompositions $H^*$. This lets us reduce solving an instance $(G,L)$ of \lhomo{H} to solving some instances of \lhomo{H^*}.

On the complexity side, the reduction is even more direct: we show that an algorithm solving the \lhomo{H} problem on instances with treewidth $t$ in time $(i^*(H)-\epsilon)^t \cdot n^{\Oh(1)}$  could be used to solve the \lhomo{H^*} problem on instances with treewidth $t$ in time $(i^*(H^*)-\epsilon)^t \cdot n^{\Oh(1)}$, thus contradicting ~\cref{thm:main-bipartite}~b). 

In the conclusion of the paper we discuss how the decompositions defined for general graphs $H$ behave in two natural special cases: if $H$ is either reflexive of irreflexive. Recall that they correspond to the first two steps of the complexity dichotomy for \lhomo{H}~\cite{FEDER1998236,DBLP:journals/combinatorica/FederHH99}. We also analyze the complexity for \emph{typical} graphs $H$, and prove the following.

\begin{corollary}
For almost all graphs $H$ with possible loops the following holds.
Even if $H$ is fixed, there is no algorithm that solves  $\lhomo{H}$  for every instance $(G,L)$ in time $\Oh\left((|V(H)|-\epsilon)^{\tw{G}} \cdot n^{\Oh(1)}\right)$ for any $\epsilon > 0$, unless the SETH fails.
\end{corollary}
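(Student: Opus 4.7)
The plan is to show that for a uniformly random graph $H$ on $n$ vertices (where each potential edge, including loops, is independently present with probability $1/2$), with probability $1-o(1)$ the invariant $i^*(H)$ equals $|V(H)|=n$, and that \cref{thm:main} applies to $H$. Since by definition $i^*(H)=i^*(H^*)$, and the latter is the maximum of $i(H')$ over connected, undecomposable induced subgraphs $H'$ of $H^*$ whose complement is not circular-arc, it suffices to show that with high probability $H^*$ itself is a valid choice of $H'$ and satisfies $i(H^*)=n$.

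The crucial observation is about neighborhoods in $H$. For fixed distinct $u,v\in V(H)$, the probability that $N_H(u)\subseteq N_H(v)$ is at most $(3/4)^{n-2}$: for each of the other $n-2$ vertices $w$, we must exclude the event that the edge $uw$ is present while $vw$ is absent, which has probability $1/4$. A union bound over the $n(n-1)$ ordered pairs yields that with probability $1-o(1)$ no two distinct vertices of $H$ have comparable neighborhoods. Translated to $H^*$, this means that each of its two bipartition classes forms an incomparable set of size $n$, hence $i(H^*)=n$.

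Next I would verify that $H^*$ genuinely qualifies as a witness subgraph. A random $H$ is connected and non-bipartite with probability $1-o(1)$, so its bipartite double cover $H^*$ is connected by a classical argument. The class of circular-arc graphs (and hence of their complements) on any fixed vertex count has only a single-exponential number of labelled representatives, whereas the number of labelled bipartite graphs on $2n$ vertices is $2^{n^2}$, so the complement of $H^*$ is almost surely not circular-arc. For undecomposability, a hypothetical bipartite decomposition $(D,N,R)$ of $H^*$ would prescribe several biclique patterns among $D\cap X$, $D\cap Y$, $N\cap X$, $N\cap Y$ together with the absence of all edges between $D$ and $R$; each such pattern involves $\Omega(n)$ independent random edge decisions, and a union bound over the at most $4^{2n}$ ordered partitions $(D,N,R)$ of $V(H^*)$ shows that with probability $1-o(1)$ no such decomposition exists. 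Finally, since a random $H$ contains every fixed small graph as an induced subgraph, it in particular contains some minimal forbidden induced subgraph for the class of bi-arc graphs, so $H$ is not bi-arc and \cref{thm:main} applies.

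The main obstacle will be the undecomposability step, especially the case of small $N$ (e.g.\ $|N|=1$), where the biclique conditions degenerate; however, the separator condition still forces $\Omega(|D|\cdot|R|)=\Omega(n)$ prescribed non-edges between $D$ and $R$, which supplies enough randomness for a careful union bound. Putting everything together, with probability $1-o(1)$ we obtain $i^*(H)\geq i(H^*)=n$, and combining with the trivial bound $i^*(H)\leq |V(H)|$ pins down $i^*(H)=|V(H)|$. The corollary then follows directly from \cref{thm:main}(b) with $k=|V(H)|$.
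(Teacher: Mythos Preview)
Your outline is correct but takes a substantially harder route than the paper. Both you and the paper establish the same probabilistic fact: with probability $1-o(1)$, no two distinct vertices of $H$ have comparable neighborhoods. The paper then observes that this single fact already does all of the remaining work. Any general decomposition of $H$ (an $F$-, $BP$-, or $B$-decomposition), and hence by \cref{lem:decompositions-equivalent} any bipartite decomposition of $H^*$, forces a comparable pair in $H$; for instance, in an $F$-decomposition every $v\in F$ and $s\in K$ satisfy $N(v)\subseteq N(s)$. The same reasoning shows $H$ is not a strong split graph. Thus $H$ is automatically undecomposable, and \cref{obs:i-star-undecomp} immediately gives $i^*(H)=|V(H)|$. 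Your separate union-bound argument for the undecomposability of $H^*$ is therefore superfluous; it is also more delicate than you suggest, since the edges of $H^*$ are not mutually independent (each edge of $H$ determines two edges of $H^*$), and a crude bound over all $3^{2n}$ partitions against only $\Theta(n)$ constraints when $|D|$ and $|N|$ are small does not close without first stratifying by the sizes of $D,N,R$. The paper also handles the ``$H$ is non-bi-arc'' check more simply, by noting that almost every graph contains an induced $C_6$, rather than by counting circular-arc representations. In short: your plan can be made to work with additional care, but you are re-proving via probability what already follows deterministically from the incomparability fact you established.
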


Finally, we show how to generalize our approach of reducing instances of \lhomo{H} to instances of \lhomo{H'}, where $H'$ is undecomposable.
We believe that this idea could be exploited to study the complexity of \lhomo{H} in various regimes, e.g., for different parameterizations of input instances.

\subsection{Comparison to the previous work.}

Let us briefly discuss similarities and differences between our work and previous, closely related results by Egri, Marx, and Rz\k{a}\.zewski~\cite{DBLP:conf/stacs/EgriMR18} (about the complexity of the \lhomo{H} problem for reflexive $H$), and by Okrasa and Rz\k{a}\.zewski~\cite{OkrasaSODA} (about the complexity of the \homo{H} problem).

At the high level, we follow the direction used by Egri \emph{et al.}~\cite{DBLP:conf/stacs/EgriMR18}, but since we generalize their result to \emph{all} relevant graphs $H$, the techniques become much more involved. The crucial idea was to reduce the problem to the bipartite case,
and to define decompositions of general graphs that correspond to the decompositions of $H^*$.
On the contrary, the case of reflexive graphs $H$ is much more straightforward.
In particular, there is just one type of decomposition that could be exploited algorithmically.
Also, the structure of ``hard'' subgraphs is much simpler in this case, so the necessary gadgets are significantly easier to construct.

On the other hand, in order to prove hardness for the \homo{H} problem, Okrasa and Rz\k{a}\.zewski~\cite{OkrasaSODA} used mostly \emph{algebraic tools} that are able to capture the global structure of a graph.
In contrast, our proofs are purely combinatorial. Furthermore, we are able to provide the full complexity classification for all graphs $H$, while the results of \cite{OkrasaSODA} are conditioned on two twenty-year-old conjectures.

\subsection{Notation.}

Let $H$ be a graph. By $\concomp{H}$ we denote the set of connected components of $H$. 
For a vertex $v$, by $N(v)$ we denote the \emph{neighborhood} of $v$, i.e., the set of vertices adjacent to $v$ (note that $v \in N(v)$ if and only if $v$ has a loop). For a set $U \subseteq V(H)$, we define $N(U):= \bigcup_{u \in U} N(u) \setminus U$ and $N[U]:= \bigcup_{u \in U} N(u) \cup U$. If $U=\{u_1, \ldots, u_k\}$, we omit one pair of brackets and write $N(u_1, \ldots, u_k)$ (respectively $N[u_1, \ldots, u_k]$) instead of $N(\{u_1, \ldots, u_k\})$ (respectively $N[\{u_1, \ldots, u_k\}]$).

We say that two vertices $x,y$ are \emph{comparable} if $N(y) \subseteq N(x)$ or $N(x) \subseteq N(y)$.
If two vertices are not comparable, we say that they are \emph{incomparable}. A set of vertices is \emph{incomparable} if all vertices are pairwise incomparable.

We say that a set $A \subseteq V(H)$ is \emph{complete} to a set $B$ if for every $a \in A$ and $b \in B$ the edge $ab$ exists. On the other hand, $A$ is \emph{non-adjacent} to $B$ if there are no edges with one endvertex in $A$ and the other in $B$.

Let $H$ be a bipartite graph, whose bipartition classes are denoted by $X$ and $Y$. For a set $S \subseteq V(H)$ and $Z \in \{X,Y\}$, by $S_Z$ we denote $S \cap Z$. For $A,B \subseteq V(H)$, we say that $A$ is \emph{bipartite-complete} to $B$ if $A_X$ is complete to $B_Y$ and $A_Y$ is complete to $B_X$.

\newpage
\section{Algorithm for bipartite target graphs}
Observe that we might always assume that $H$ is connected, as otherwise we can solve the problem for each connected component of $H$ separately. Furthermore, without losing the generality we may assume certain properties of instances of \lhomo{H} that we need to solve.

\begin{observation} \label{prop:lists-bipartite}
Let $(G,L)$ be an instance of \lhomo{H}, where $H$ is connected and bipartite with bipartition classes $X,Y$. Without loss of generality, we might assume the following.
\begin{compactenum}
\item The graph $G$ is connected and bipartite, with bipartition classes $X_G$ and $Y_G$,
\item $\bigcup_{x \in X_G} L(x) \subseteq X$ and $\bigcup_{y \in Y_G} L(y) \subseteq Y$,
\item for each $x \in V(G)$, the set $L(x)$ is incomparable.
\end{compactenum}
\end{observation}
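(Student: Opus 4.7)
The plan is to verify each of the three properties by a standard preprocessing argument, showing that either the instance can be decomposed or modified without changing the ``yes/no'' answer, and crucially without increasing the treewidth of the instance graph.

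First I would deal with item (1). If $G$ is disconnected, then $(G,L)$ is a yes-instance of \lhomo{H} if and only if each component of $G$ (with the restricted lists) is a yes-instance, so we can solve the problem component-by-component; since the treewidth of the union of components equals the maximum treewidth of a component, the running time bound is unaffected. Hence we may assume $G$ is connected. Next, since $H$ is bipartite and loopless, any graph that admits a homomorphism to $H$ is bipartite; therefore if $G$ contains an odd cycle we can immediately reject, and otherwise $G$ is bipartite and, being connected, has a unique bipartition $(X_G,Y_G)$.

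For item (2), observe that since $G$ is connected and bipartite and $H$ is bipartite, every homomorphism $h \colon G \to H$ satisfies either $h(X_G) \subseteq X, h(Y_G) \subseteq Y$ or $h(X_G) \subseteq Y, h(Y_G) \subseteq X$ (once the image of one vertex is fixed in $X$ or $Y$, the images of all other vertices are forced by connectivity). Thus we split the problem into two instances $(G,L_1)$ and $(G,L_2)$: in $L_1$ we set $L_1(x) := L(x) \cap X$ for $x \in X_G$ and $L_1(y) := L(y) \cap Y$ for $y \in Y_G$, and in $L_2$ we swap the roles of $X$ and $Y$. The original instance is a yes-instance if and only if at least one of $(G,L_1)$ and $(G,L_2)$ is, and both satisfy property (2). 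Again the graph $G$ is unchanged, so the treewidth is preserved.

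For item (3) I would use the standard ``domination'' reduction: if some list $L(x)$ contains two comparable vertices $u,v$ with $N(u) \subseteq N(v)$, then we can safely delete $u$ from $L(x)$. Indeed, given any list homomorphism $h \colon (G,L) \to H$ with $h(x) = u$, the map $h'$ obtained from $h$ by changing only $h(x) := v$ is still a list homomorphism, because $v \in L(x)$ and for every neighbor $w$ of $x$ in $G$ we have $h(w) \in N(u) \subseteq N(v)$, so the edge $xw$ is still preserved. Note that by item (2) any two vertices of $L(x)$ lie in the same bipartition class of $H$, so this cleanup produces lists of incomparable vertices in a common bipartition class. We iterate the reduction, and since each step strictly shrinks some list it terminates in polynomial time; $G$ itself never changes, so the tree decomposition (and its width) carry over unchanged.

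There is essentially no hard step here; the only thing to be careful about is that all three preprocessing operations must preserve both the input graph structure (so that the given tree decomposition remains valid and of the same width) and the equivalence between the original instance and the collection of produced instances. Both are immediate from the arguments above, so the three assumptions in \cref{prop:lists-bipartite} can be made without loss of generality.
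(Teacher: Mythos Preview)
The proposal is correct and follows essentially the same approach as the paper: handle connectivity and bipartiteness componentwise and by rejection, split into two instances according to which side of $G$ maps to which side of $H$, and prune dominated list elements via the recoloring argument. Your added remarks about treewidth preservation are a welcome extra detail (the paper leaves this implicit), but otherwise the arguments coincide.
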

\begin{proof}
\begin{compactenum}
\item If $G$ is not connected, we need to solve the problem separately for each connected component. If $G$ is not bipartite, then we can immediately report a no-instance.
\item Observe that in any homomorphism $f \colon (G,L) \to H$, either $f(X_G) \subseteq X$ and $f(Y_G) \subseteq Y$, or $f(X_G) \subseteq Y$ and $f(Y_G) \subseteq X$. Thus in order to solve $(G,L)$, we can separately solve two instances $(G,L')$ and $(G,L'')$ of \lhomo{H}, defined as follows. For each $x \in X_G$ we define $L'(x) := L(x) \cap X$ and $L''(x) := L(x) \cap Y$, and for each $y \in Y_G$ we define $L'(y) := L(y) \cap Y$ and $L''(y) := L(y) \cap X$. Then $(G,L)$ is a yes-instance of \lhomo{H} if and only if at least one of $(G,L')$ and $(G,L'')$ is a yes-instance of \lhomo{H}. Thus if we can solve each of instances $(G,L')$ and $(G,L'')$ in time $T(G)$, then we can solve the instance $(G,L)$ in time $2T(G)$.
\item If $N(u) \subseteq N(v)$ and both $u$ and $v$ appear on a list of some $x \in V(G)$, then in any homomorphism $f \colon (G,L) \to H$ with $f(x)=u$ we can always recolor $x$ to the color $v$. The obtained mapping is still a list homomorphism from $(G,L)$ to $H$, so we can safely remove $u$ from $L(x)$.
\end{compactenum}
\end{proof}

An instance of \lhomo{H} that respects conditions in \cref{prop:lists-bipartite} is called \emph{consistent}. From now on  we will restrict ourselves to consistent instances. Let us introduce a graph parameter, which will play a crucial role in our investigations.

\begin{definition}[$i(H)$]
For a bipartite graph $H$, by $i(H)$ we denote the maximum size of an incomparable set in $H$, which is fully contained in one bipartition class.
\end{definition}
 
Clearly for every $H$ we have $i(H) \leq |H|$. Note that by \cref{prop:lists-bipartite} we obtain the following.

\begin{corollary} \label{cor:listsize}
Let $(G,L)$ be a consistent instance of \lhomo{H}, where $H$ is bipartite. Then $\max\limits_{v \in V(G)} |L(v)|~\leq~i(H)$.
\end{corollary}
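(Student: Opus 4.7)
The statement is essentially a direct consequence of the properties packaged into the definition of a consistent instance combined with the definition of $i(H)$, so my plan is to simply unpack these two ingredients and observe that they match.

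First I would fix an arbitrary vertex $v \in V(G)$ and invoke the two relevant clauses of \cref{prop:lists-bipartite}. Clause (2) gives that $L(v) \subseteq X$ if $v \in X_G$ and $L(v) \subseteq Y$ if $v \in Y_G$; in either case $L(v)$ lies entirely within a single bipartition class of $H$. Clause (3) gives that $L(v)$ is an incomparable set. Combining these two facts, $L(v)$ is an incomparable subset of $V(H)$ that is contained in one of the bipartition classes $X$ or $Y$.

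Next I would appeal directly to the definition of $i(H)$, which is the maximum size of an incomparable set of $H$ contained entirely in one bipartition class. Since $L(v)$ is precisely such a set, we get $|L(v)| \leq i(H)$. Taking the maximum over $v \in V(G)$ yields the claimed bound. No obstacle is really expected here: the only care needed is to notice that the two conditions from the observation are exactly the hypotheses built into the definition of $i(H)$, so the corollary is immediate.
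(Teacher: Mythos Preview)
Your proposal is correct and matches the paper's approach: the paper simply states that the corollary follows from \cref{prop:lists-bipartite} without spelling out the argument, and what you wrote is exactly the intended one-line unpacking of clauses (2) and (3) against the definition of $i(H)$.
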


\subsection{Decomposition of bipartite graphs}\label{sec:decomposition}

Throughout this section we assume that the target graph $H$ is bipartite with bipartition classes $X$ and $Y$. In particular, it has no loops. 
Our algorithm for \lhomo{H} is based on the existence of a certain decomposition of $H$.

\begin{definition}[Bipartite decomposition]\label{def:bipartite-decomposition}
A partition of $V(H)$ into an ordered triple of sets $(D,N,R)$ is a \emph{bipartite decomposition} if the following conditions are satisfied.
\begin{compactenum}
\item $N$ is non-empty and separates $D$ and $R$, \label{it:bipdecomp-separator}
\item $|D_X| \geq 2$ or $|D_Y| \geq 2$, \label{it:bipdecomp-geq2}
\item $N$ induces a biclique in $H$, \label{it:bipdecomp-biclique}
\item $D$ is bipartite-complete to $N$. \label{it:bipdecomp-complete}
\end{compactenum}
\end{definition}

Since so far we only consider bipartite decompositions, we will just call them decompositions. Later on we will introduce other types of decompositions and then the distinction will be important.
If $H$ admits a decomposition, then it is \emph{decomposable}, otherwise it is \emph{undecomposable}.

For a graph $H$ with a decomposition $(D,N,R)$, the \emph{factors of the decomposition} are two graphs $H_1,H_2$ defined as follows. The graph $H_1$ is the subgraph of $H$ induced by the set $D$. The graph $H_2$ is obtained in the following way. 
For $Z \in \{X,Y\}$, if $D_Z$ is non-empty, then we contract it to a vertex $d_Z$. If there is at least one edge between the sets $D_X$ and $D_Y$, we add the edge $d_Xd_Y$. 

Note that both $H_1$ and $H_2$ are proper induced subgraphs of $H$. For $H_1$ is follows directly from the definition and $H_2$ can be equivalently defined as a graph obtained from $H$ by removing all but one vertex from $D_X$ (if $D_X \neq \emptyset$) and all but one vertex from $D_Y$ (if $D_Y \neq \emptyset$). We leave the vertices that are joined by an edge, provided that such a pair exists.

Now let us demonstrate how the bipartite decomposition can be used algorithmically.
Let $T(H,n,t)$ denote an upper bound for the complexity of \lhomo{H} on instances with $n$ vertices, given along a tree decomposition of width $t$. In the following lemma we do not assume that $|H|$ is a constant.

\begin{lemma}[Bipartite decomposition lemma]
\label{lem:decomposition}
Let $H$ be a bipartite graph with bipartition classes $X$ and $Y$, whose complement is not a circular-arc graph,
and suppose $H$ has a bipartite decomposition with factors $H_1,H_2$.
Assume that there are constants $\alpha \geq 1$, $c \geq 1$, and $d > 2$, such that $T(H_1,n,t) \leq \alpha \cdot c^t \cdot (n \cdot |H_1|)^d$ and $T(H_2,n,t) \leq \alpha \cdot c^t \cdot (n \cdot |H_2|)^d$.
Then $T(H,n,t)  \leq \alpha \cdot c^t \cdot (n \cdot |H|)^d$, if $n$ is sufficiently large.
\end{lemma}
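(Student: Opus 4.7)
The plan is to route an instance $(G,L)$ of $\lhomo{H}$ of treewidth $t$ through sub-instances of $\lhomo{H_1}$ and $\lhomo{H_2}$, exploiting the bipartite decomposition $(D,N,R)$ of $H$.

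First, I would extract a \emph{decoupling} property from the decomposition. For any list homomorphism $h\colon (G,L) \to H$, the preimages $V_D := h^{-1}(D)$, $V_N := h^{-1}(N)$, $V_R := h^{-1}(R)$ have two crucial features. Since $N$ separates $D$ from $R$, no $G$-edge joins $V_D$ to $V_R$. Since $D$ is bipartite-complete to $N$, every $D$-labeling of a $V_D$-vertex is automatically compatible with every $N$-labeling of its $V_N$-neighbors on the opposite side. It follows that $h|_{V_D}$ is a list homomorphism $(G[V_D], L|_{V_D}\cap D) \to H_1$, and the ``collapsed'' map $h_2$ --- obtained by sending each $v \in V_D$ to $d_{\mathrm{side}(v)}$ and leaving $h$ unchanged elsewhere --- is a list homomorphism to $H_2$ with naturally induced lists $L_2$. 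Conversely, any pair $(h_1,h_2)$ agreeing on which bag vertices are designated ``in $V_D$'' reassembles into a valid $h$, precisely because the boundary edges between $V_D$ and $V_N$ are automatically realized.

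Based on this decoupling, I would set up a combined dynamic program on the given tree decomposition of $G$. At each bag $B$ the DP state records (a) a subset $A \subseteq B$ of bag vertices designated to lie in $V_D$, (b) a partial $H_2$-assignment of $B \setminus A$, and (c) a partial $H_1$-assignment of $A$. By the decoupling, checks involving edges that cross $A$ and $B \setminus A$ are either automatic (if the non-$A$ endpoint lies in $N$) or forbidden (if it lies in $R$), so the $H_1$- and $H_2$-components of the DP do not interact beyond sharing the set $A$. Consequently, processing each bag reduces to one step of a $\lhomo{H_2}$ tree DP on $(G, L_2)$ glued to one step of a $\lhomo{H_1}$ tree DP on the $V_D$-subinstance, and the combined algorithm runs in time at most $T(H_1,n,t) + T(H_2,n,t)$ up to polynomial overhead in $n$ and $|H|$.

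The hard part will be the fine-grained accounting needed to fit this bound into $\alpha c^t (n|H|)^d$. Since $|D| \geq 2$ and $N$ is non-empty, both $|H_1| = |D| \leq |H|-1$ and $|H_2| \leq |H| - |D| + 2 \leq |H| - 1$. Therefore the slack
\[
(n|H|)^d - (n|H_1|)^d - (n|H_2|)^d
\]
is of order $n^d \cdot |H|^{d-1}$ by the mean value theorem combined with the hypothesis $d > 2$ and the constraint $|H_1|+|H_2| \leq |H|+2$, and it grows polynomially in $n$. For $n$ sufficiently large this slack dominates the polynomial overhead introduced by the reduction, yielding $T(H,n,t) \leq \alpha c^t (n|H|)^d$ as required.
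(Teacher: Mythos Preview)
Your decoupling observation is correct: once you fix $V_D = h^{-1}(D)$, the restriction $h|_{V_D}$ is a list homomorphism to $H_1$ and the collapsed map is a list homomorphism to $H_2$. The gap is in the algorithm. Your combined DP records at each bag a subset $A$ designated to lie in $V_D$, together with an $H_1$-partial coloring on $A$ and an $H_2$-partial coloring on $B\setminus A$. The number of such states is not additive in the two sub-DPs; it is essentially $\prod_{v\in B}\bigl(|L(v)\cap D| + |L(v)\setminus D|\bigr)$, which is just the naive $H$-DP. The claim that this runs in time $T(H_1,n,t)+T(H_2,n,t)$ is unjustified: the $H_1$- and $H_2$-components do interact, precisely through the unknown set $V_D$, which you are guessing bag by bag.

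The paper avoids this guessing entirely via a structural observation you are missing. After reducing to consistent instances (each list incomparable), every vertex of $N$ dominates every same-side vertex of $D$, so a list containing an $N$-vertex contains \emph{no} $D$-vertex. Let $Q=\{x: L(x)\cap N\neq\emptyset\}$; this is determined by the lists alone, before any DP. Every connected component $C$ of $G\setminus Q$ must then map entirely into $D$ or entirely into $R$. For each such $C$ you solve one $\lhomo{H_1}$ instance on $(C,L\cap D)$ to test whether $C$ \emph{can} go to $D$; by superadditivity of $n^d$, the total cost over all components is at most $\alpha c^t(n|H_1|)^d$. Then you define $L_2$ on all of $G$ (adding $d_X,d_Y$ to lists of vertices in successful components, stripping $D$ elsewhere) and solve a \emph{single} $\lhomo{H_2}$ instance on $(G,L_2)$. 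The equivalence $(G,L)\to H \iff (G,L_2)\to H_2$ is proved directly; no per-bag guess of $V_D$ is ever made. This is what makes the running time genuinely additive rather than multiplicative.
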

\begin{proof}
Consider an instance $(G,L)$ of \lhomo{H}, recall that without loss of generality we may assume that it is consistent.
Let the bipartition classes of $G$ be $X_G$ and $Y_G$ and assume that $\bigcup_{x \in X_{G}} L(x) \subseteq X$ and $\bigcup_{y \in Y_{G}} L(y) \subseteq Y$.

Let $(D,N,R)$ be a bipartite decomposition of $H$.
We observe that for $Z \in \{X,Y\}$, and any two vertices $v \in D_Z, s \in N_Z$, we have $N(v) \subseteq N(s)$. Thus we may assume that no list contains both $s$ and $v$. Let $Q$ be the set of vertices of $G$ which have at least one vertex from $N$ in their lists.

\begin{claim} \label{clm:inside}
If there exists a list homomorphism $h \colon (G,L) \to H$, the image of each $C \in \concomp{G \setminus Q}$ is entirely contained either in $D$ or in $R$.
\end{claim}
\begin{claimproof}
By the definition of $\concomp{G \setminus Q}$, the image of $C$ is disjoint with $N$.  Suppose there exist $a,b \in C$, such that $h(a) = u \in D$ and $h(b) = r \in R$. Since $C$ is connected, there exists an $a$-$b$-path $P$ in $C$. The image of $P$ is an $u$-$r$-walk in $H$. But since $N$ separates $D$ and $R$ in $H$, there is a vertex of $P$, which is mapped to a vertex of $N$, a contradiction.
\end{claimproof}

Let us define lists $L_1$ as $L_1(x):=L(x) \cap D$, for every $x \in V(G) \setminus Q$.
For each $C \in \concomp{G \setminus Q}$, we check if there exists a homomorphism $h_C \colon (C,L_1) \to H_1$. Let $\mathbb{C}_1$ be the set of those $C \in \concomp{G \setminus Q}$, for which $h_C$ exists. By \cref{clm:inside}, we observe that if $C \notin \mathbb{C}_1$, then we can safely remove all vertices from $D$ from the lists of vertices of $C$.

Now consider the graph $H_2$. Let $Z \in \{X,Y\}$ and let $d_Z$ be the vertex to which the set $D_Z$ is collapsed (if it exists).
Let us define an instance $(G,L_2)$ of the \lhomo{H_2} problem, where the lists $L_2$ are as follows.
If $v \in Z_G$ is a vertex from some component of $\mathbb{C}_1$, then $L_2(v) := L(v) \setminus D_Z \cup \{d_Z\}$ (note that in this case $d_Z$ must exist).
If $v$ is a vertex from some component of $\concomp{G \setminus Q} \setminus \mathbb{C}_1$, then $L_2(v) := L(v) \setminus D_Z$. Finally, if $v \in Q$, then $L_2(v) := L(v)$. Note that the image of each list is contained in $V(H_2)$. Moreover, note that 
$\bigcup_{x \in X_G} L_2(x) \subseteq R_X \cup N_X \cup \{d_X\}$ and $\bigcup_{y \in Y_G} L_2(y) \subseteq R_Y \cup N_Y \cup \{d_Y\}$.

\begin{claim}\label{clm:equiv}
There is a list homomorphism $h : (G,L) \to H$ if an only if there is a list homomorphism $h' : (G,L_2) \to H_2$.
\end{claim}
\begin{claimproof}
First, assume that $h: (G,L) \to H$ exists. Define $h':V(G) \to V(H_2)$ in the following way:
\[
h'(v) = 
\begin{cases}
d_X & \text{ if } h(v) \in D_X,\\
d_Y & \text{ if } h(v) \in D_Y,\\
h(v) & \text{ otherwise.}
\end{cases}
\]
Clearly $h'$ is a homomorphism from $G$ to $H_2$, we need to show that it also respects lists $L_2$. Suppose otherwise and let $v$ be a vertex of $G$, such that $h'(v) \notin L_2(v)$. By symmetry, assume that $v \in X_G$, and thus $h'(v) \in X$. If $h'(v) \neq d_X$, then $h'(v) = h(v) \in (L(v) \setminus D_X) \subseteq L_2(v)$.
So suppose $h'(v) =d_X$ (and thus $h(v) \in D_X$) and $d_X \notin L_2(v)$. Observe that $v$ cannot be a vertex from $Q$, since $h$ maps $v$ to a vertex of $D_X$ and vertices from $Q$ do not have any vertices of $D$ in their lists. So the only case left is that $v$ belongs to some connected component $C$ of $G-Q$, which cannot be mapped to $H_1$. But then, by  \cref{clm:inside}, no vertex of $C$ is mapped to any vertex of $D$, so $h(v) \notin D_X$, a contradiction.

Now suppose there exists a list homomorphism $h' : (G,L_2) \to H_2$.  Define the following mapping $h$ from $V(G)$ to $V(H)$. If $h'(v) \notin \{d_X,d_Y\}$, then we set $h(v) := h'(v)$. Otherwise, if $h'(v) \in \{d_X,d_Y\}$, then $v$ is a vertex of some connected component $C \in \mathbb{C}_1$, and we define $h(v):=h_C(v)$.
Clearly $h$ preserves lists $L$: if $h(v) \notin D$, then $h(v) = h'(v) \in L_2(v) \setminus \{d_X,d_Y\} \subseteq L(v)$; otherwise we use $h_C$, which preserves lists $L$ by the definition.

Now suppose $h$ does not preserve edges, so there are vertices $u \in X_G$ and $v \in Y_G$, such that $uv$ is an edge of $G$ and $h(u)h(v)$ is not an edge of $H$. If $h'(u)= d_X, h'(v)=d_Y$, or $h'(u)\neq d_X, h'(v)\neq d_Y$, then $h(u)h(v)$ must be an edge of $H$, otherwise we get a contradiction by the definitions of $h_C$ (as since $x$ and $y$ are neighbors, they belong to the same $C$) and $h'$, respectively.
So, by symmetry, suppose $h'(u) =d_X$ and $h'(v) \neq d_Y$. But then we observe that $h(u) \in D_X$ and $h(v)=h'(v) \in N_Y$, since $h'$ is a homomorphism. And because $N_Y$ is complete to $D_X$, so $h(u)h(v)$ is an edge -- a contradiction.
\end{claimproof}

Computing $\concomp{G \setminus Q}$ can be done in time $\Oh(n \cdot |H| + n^2) = \Oh((n \cdot |H|)^2)$. Note that given a tree decomposition of $G$ of width at most $t$, we can easily obtain a tree decomposition of each $C \in \concomp{G \setminus Q}$ of width at most $t$.
Computing $h_C$ for all $C \in \concomp{G \setminus Q}$ requires time at most
\[
\sum_{C \in \concomp{G \setminus Q}} T(H_1,|C|,t) \leq \sum_{C \in \concomp{G \setminus Q}} \alpha \cdot c^t \cdot (|H_1|\cdot |C|)^d \leq \alpha \cdot c^t \cdot (|H_1|\cdot n)^d.
\]
The estimation follows from the facts that $\sum_{C \in \concomp{G \setminus Q}} |C| \leq n$, and $n^d$ is superadditive with respect to $n$, i.e., $n_1^d + n_2^d \leq (n_1+n_2)^d$.
Computing lists $L_2$ can be performed in time $\Oh(|H| \cdot n)$. Finally, computing $h'$ requires time $T(H_2,n,t) \leq \alpha \cdot c^t \cdot (|H_2|\cdot n)^d$.
The total running time is therefore bounded by:
\[
\Oh\left((n \cdot |H|)^2\right) + \alpha \cdot c^t \cdot (|H_1|\cdot n)^d + \Oh(|H| \cdot n) + \alpha \cdot c^t \cdot (|H_2|\cdot n)^d.
\]
With a careful analysis one can verify that the above expression is bounded by $\alpha \cdot c^t \cdot (|H|\cdot n)^d$, provided that $n$ is sufficiently large.
\end{proof}

\subsection{Solving \lhomo{H} for bipartite targets}

Let us define the main combinatorial invariant of the paper, $i^*(H)$:

\begin{definition}[$i^*(H)$ for bipartite $H$]\label{def:i_star}
Let $H$ be a connected bipartite graph, whose complement is not a circular-arc graph. Define
\begin{align*}
i^*(H) := \max \{ & i(H') \colon H' \text{ is an undecomposable, connected, induced}\\
& \text{ subgraph of }H, \text{ whose complement is not a circular-arc graph}\}.
\end{align*}
\end{definition}

Observe that if $H'$ is an induced subgraph of $H$, then $i(H') \leq i(H)$ and $i^*(H') \leq i^*(H)$, and thus $i^*(H) = i(H)$ for undecomposable $H$.  Furthermore, we always have $i^*(H) \leq i(H)$, which in turn is upper-bounded by the size of the larger bipartition class of $H$, i.e., the natural bound on the size of the lists in an instance of \lhomo{H}.

Let us point out that using \cref{def:bipartite-decomposition}  it is easy to create graphs $H$, for which $i^*(H)$ is arbitrarily smaller than this natural upper bound. For example, let $k \geq 3$ and let $H_0$ be obtained from $K_{k,k}$ by removing a perfect matching. 
Then $H_0$ is not the complement of a circular-arc graph, as it contains an induced $C_6$.
Now, for $j \geq 1$, the graph $H_j$ is obtained from $H_{j-1}$ by introducing another copy of $H_0$ and making all vertices from one bipartition class of $H_{j-1}$ complete to a vertex of $H_0$.
It is straightforward to verify that for every $j \geq 1$ it holds that  $i^*(H_j) = k+1$, while each bipartition class of $H_j$ has $j \cdot k$ vertices.

Now we are ready to present an algorithm solving \lhomo{H}, note that again we do not assume that $|H|$ is a constant.
We present the following, slightly more general variant of \cref{thm:main-bipartite} a), where we also do not assume that the tree decomposition of $G$ is optimal.

\begin{cthm}{3' a)} \label{thm:main-bipartite-algo} 
Let $H$ be a connected bipartite graph (given as an input) and let $(G,L)$ be an instance of \lhomo{H}, where $G$ has $n$ vertices and is given along a tree decomposition of width $t$.
Then there is an algorithm which decides whether $(G,L) \to H$ in time $\Oh\left(i^*(H)^t \cdot (n \cdot |H|)^{\Oh(1)}\right)$.
\end{cthm}

\begin{proof}
Clearly we can assume that $n$ is sufficiently large, as otherwise we can solve the problem in polynomial time by brute-force.

Observe that with $H$ we can associate a {\em recursion tree} $\cR$, whose nodes are labeled with induced subgraphs of $H$.
The root, denoted by $node(H)$ corresponds to the whole graph $H$.
If $H$ is undecomposable or is a complement of a circular-arc graph, then the recursion tree has just one node.
Otherwise $H$ has a decomposition with factors $H_1$ and $H_2$, and then $node(H)$ has two children, $node(H_1)$ and $node(H_2)$, respectively.
Recall that each factor has strictly fewer vertices than $H$, so we can construct $\cR$ recursively.
Clearly, each leaf of $\cR$ is either the complement of a circular-arc graph (and thus the corresponding problem is polynomial-time solvable), or is an undecomposable induced subgraph of $H$.
Note that a recursion tree may not be unique, as a graph may have more than one decomposition.
However, the number of leaves is bounded by $\Oh(|H|)$ (actually, with a careful analysis we can show that it is at most $|H|-2$), so the total number of nodes is $\Oh(|H|)$.
Furthermore,  it can be shown that in time polynomial in $H$ we can check if $H$ is undecomposable, or find a decomposition (we will prove it in \cref{lem:walks-s-v}).
Since recognizing circular-arc graphs (and therefore of course their complements) is also polynomial-time solvable, we conclude that $\cR$ can be constructed in time polynomial in $|H|$.

If $H$ is the complement of a circular-arc graph, then we solve the problem in polynomial time~\cite{DBLP:journals/combinatorica/FederHH99}.
If $H$ is undecomposable, we run a standard dynamic programming algorithm on a tree decomposition of $G$ (see \cite{Bodlaender:2008:COG:2479371.2479375,DBLP:conf/iwpec/BodlaenderBL13}). For each bag of the tree decomposition, and every partial list homomorphism from the graph induced by this bag to $H$ we indicate whether this particular partial homomorphism can be extended to a list homomorphism of the graph induced by the subtree rooted at this bag.
By \cref{cor:listsize}, the size of each list $L(x)$ for $x\in V(G)$ is at most $i(H)$, thus the complexity of the algorithm is bounded by $\alpha \cdot i(H)^t \cdot (n \cdot |H|)^d$ for some constants  $\alpha,d$. We can assume that $d > 2$, as otherwise we can always increase it.

So suppose $H$ is decomposable and let us show that we can solve the problem in time  $\alpha \cdot i^*(H)^t \cdot (n \cdot |H|)^d$. 
Let $\cR$ be a recursion tree of $H$ and recall that its every leaf corresponds to an induced subgraph of $H$ with strictly fewer vertices. Therefore, for any leaf node of $\cR$, corresponding to the subgraph $H'$ of $H$, we can solve every instance of \lhomo{H'} with $n$ vertices and a tree decomposition of width at most $t$ in time $\alpha \cdot i(H')^t \cdot (n \cdot |H'|)^d \leq \alpha \cdot i^*(H)^t \cdot (n \cdot |H'|)^d$.
Now, applying \cref{lem:decomposition} in a bottom-up fashion, we conclude that we can solve \lhomo{H} in time $\alpha \cdot i^*(H)^t \cdot (n \cdot |H|)^d = \Oh(i^*(H)^t \cdot (n\cdot |H|)^{\Oh(1)})$.
\end{proof}

\newpage
\section{Hardness for bipartite target graphs} \label{sec:hardness-prelim}
In this section we consider bipartite graphs $H$, which are not complements of circular-arc graphs.
Let us discuss two equivalent definitions of this class of graphs.

Let $X,Y$ be the bipartition classes of $H$ and consider a circle $C$ with two specified, distinct points $p$ and $q$.
A \emph{co-circular-arc representation} of $H$ is mapping of $V(H)$ to arcs of the circle $C$, such that (i) each arc corresponding to a vertex from $X$ contains $p$ but not $q$, (ii) each arc corresponding to a vertex from $Y$ contains $q$ but not $p$, and (iii) the vertices are adjacent if and only if their corresponding arc are disjoint.
It is known that the complement of $H$ is a circular-arc graph if and only if $H$ admits a co-circular-arc representation~\cite{DBLP:journals/jct/Spinrad88,DBLP:journals/gc/HellH97}.

However, for us it will be much more convenient to work with an equivalent definition of this class of graphs, discovered by Feder, Hell~\cite{DBLP:journals/combinatorica/FederHH99}.

\begin{definition}[Special edge asteroid] \label{def:asteroid}
Let $k\geq 1$ and consider $A_X=\{u_0,\ldots,u_{2k}\} \subseteq X$ and $A_Y=\{v_0,\ldots,v_{2k}\} \subseteq Y$.
The pair $(A_X,A_Y)$ is a \emph{special edge asteroid}  if, for each $i \in \{0,\ldots,2k\}$, the vertices $u_i$ and $v_i$ are adjacent, and there exists a $u_i$-$u_{i+1}$-path $\cW_{i,i+1}$ (indices are computed modulo $2k+1$), such that
\begin{compactenum}[(a)]
\item there are no edges between $\{u_i,v_i\}$ and $\{v_{i+k},v_{i+k+1}\}\cup \cW_{i+k,i+k+1}$ and
\item there are no edges between $\{u_0,v_0\}$ and $\{v_1,\ldots,v_{2k}\} \cup \bigcup_{i=1}^{2k-1} \cW_{i,i+1}$.
\end{compactenum}
\end{definition}

Feder, Hell, and Huang~\cite{DBLP:journals/combinatorica/FederHH99} proved the following characterization of bipartite graphs, whose complements are circular-arc graphs.

\begin{theorem}[Feder, Hell, and Huang~\cite{DBLP:journals/combinatorica/FederHH99}]\label{thm:cocircular}
A bipartite graph $H$ is the complement of a circular-arc graph if and only if $H$ does not contain an induced cycle with at least 6 vertices or a special edge asteroid.
\end{theorem}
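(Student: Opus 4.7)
My plan is to prove the two directions separately, with the forward direction (necessity of forbidding these substructures) being a direct geometric argument, and the backward direction (sufficiency) requiring substantial structural analysis that I expect to be the main obstacle.

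For necessity, I would fix a co-circular-arc representation of $H$, mapping each vertex of $X$ to an arc through $p$ not containing $q$, and each vertex of $Y$ to an arc through $q$ not containing $p$. Adjacency corresponds to disjoint arcs. For an induced $C_{2k}$ with $k \geq 3$, label the vertices cyclically as $a_1, b_1, \ldots, a_k, b_k$. Since each $X$-arc is determined by two endpoints on the $q$-free side of the circle, I would parametrize its position by a pair of points on an open arc $I_X$; similarly for $Y$-arcs on $I_Y$. The adjacency/non-adjacency pattern along the cycle then translates into a system of strict inequalities on these endpoint positions. Pushing the cyclic nature of $C_{2k}$ through this system should yield a contradiction once $k \geq 3$ (the case $k=2$ is $C_4$, which can be realized). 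A more elaborate version of the same analysis handles the special edge asteroid: the non-adjacencies in \cref{def:asteroid}(a),(b) force the arcs associated with $u_0, v_0, \ldots, u_{2k}, v_{2k}$ and the connecting walks to ``wind around'' the circle in a way that violates the linear constraints arising from arcs being simply connected.

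For sufficiency, I would argue by induction on $|V(H)|$, or equivalently take a minimal counterexample $H$ that is bipartite, contains neither forbidden structure, but admits no co-circular-arc representation. The plan is to identify a structural decomposition of $H$ (for instance a homogeneous pair in the bipartite sense, a ``bipartite split'', or a cut-vertex/cut-edge in an auxiliary graph capturing the cyclic constraints) from which representations of the smaller pieces can be assembled into one for $H$. If no such decomposition exists, $H$ is ``prime'' in some sense, and the absence of induced cycles of length at least $6$ should already force $H$ to be an interval bigraph on a line; then I would attempt to close this line into a circle, with the absence of a special edge asteroid providing exactly the global cyclic-consistency condition needed to do so without creating conflicts.

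The main obstacle is, of course, the sufficiency direction. Excluding finitely-described local substructures does not automatically yield a global geometric representation, so the core difficulty is to show that the special edge asteroid captures \emph{every} possible cyclic obstruction beyond the induced long cycles. I expect this to require a careful structural dichotomy together with a nontrivial invocation of the classical characterization of (co-)circular-arc graphs (e.g., Tucker-style arguments applied to the complement), and to rely on the fact that a bipartite graph without induced $C_{\geq 6}$ is already very structured on any fixed ``line-like'' part of the would-be representation, so the only way sufficiency can fail is via a closing-up obstruction of exactly the special edge asteroid type.
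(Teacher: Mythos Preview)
The paper does not prove this theorem at all: it is stated as a result of Feder, Hell, and Huang~\cite{DBLP:journals/combinatorica/FederHH99} and used as a black box throughout. There is therefore no proof in the paper to compare your proposal against.

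That said, your proposal is a reasonable outline of how such a characterization is typically established. The necessity direction is indeed a geometric/ordering argument of the flavor you describe. For sufficiency, your honest assessment is accurate: the hard content is precisely showing that long induced cycles and special edge asteroids exhaust all obstructions, and this is the substance of the cited paper rather than anything you could recover from the present one. Your sketch of ``build an interval bigraph on a line and close it up, with the asteroid being the only possible cyclic obstruction'' is the right intuition, but as written it is a plan rather than a proof; the actual argument in~\cite{DBLP:journals/combinatorica/FederHH99} requires a concrete ordering procedure and a case analysis pinning down exactly where a failed closing-up produces an asteroid. If you intend to reproduce that proof you should consult the original source, since nothing in the present paper supplies the missing structural lemmas.
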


We aim to prove \cref{thm:main-bipartite} b). Actually we will show a version, which gives the lower bound parameterized by the pathwidth of $G$. Clearly such statement will be stronger, as $\pw{G} \geq \tw{G}$. This corresponds to the pathwidth variant of \cref{thm:LMS}, also shown by Lokshtanov, Marx, Saurabh~\cite{DBLP:conf/soda/LokshtanovMS11a}.

\begin{cthm}{1'}[Lokshtanov, Marx, Saurabh~\cite{DBLP:conf/soda/LokshtanovMS11a}] \label{thm:LMS-pw}
For any $k \geq 3$, there is no algorithm solving $k \coloring$ on a graph with $n$ vertices and pathwidth $t$ in time $(k-\epsilon)^t \cdot n^{\Oh(1)}$, unless the SETH fails.
\end{cthm}

Thus we show the following strengthening of \cref{thm:main-bipartite} b).

\begin{cthm}{3' b)}\label{thm:main-hard-bipartite-pw}
Let $H$ be a fixed bipartite graph, whose complement is not a circular-arc graph.
Unless the SETH fails, there is no algorithm that solves the \lhomo{H} problem on instances with $n$ vertices and pathwidth $t$ in time $(i^*(H)-\epsilon)^t \cdot n^{\Oh(1)}$, for any $\epsilon >0$.
\end{cthm}

In order to prove \cref{thm:main-hard-bipartite-pw}, it is sufficient to show the following.

\begin{theorem} \label{thm:hardness-bipartite}
Let $H$ be a fixed connected bipartite undecomposable graph, whose complement is not a circular-arc graph.
Unless the SETH fails, there is no algorithm that solves the \lhomo{H} problem on instances with $n$ vertices and pathwidth $t$ in time $(i(H)-\epsilon)^t \cdot n^{\Oh(1)}$, for any $\epsilon >0$.
\end{theorem}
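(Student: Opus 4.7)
The plan is to reduce $k$-\textsc{Coloring} with $k = i(H) \geq 2$ to \lhomo{H}, by replacing every edge of the input graph with a copy of the $\rneq(S)$-gadget supplied by~\cref{lem:edge-gadget}.

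First I would verify that $k \geq 2$: since $H$ is not the complement of a circular-arc graph, \cref{thm:cocircular} yields either an induced even cycle of length at least $6$ or a special edge asteroid in $H$, and in either configuration a short case analysis produces two incomparable vertices of $H$ lying in one bipartition class (e.g., on a special edge asteroid, conditions (a) and (b) already make $u_0$ and $u_k$ incomparable via $v_0$ and $v_k$). I then fix an incomparable set $S \subseteq V(H)$ of size $k$ contained in one bipartition class and invoke~\cref{lem:edge-gadget} to obtain a $\rneq(S)$-gadget $(F, L_F, x, x')$; since $H$ is fixed, $F$ has constant size.

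Given an instance $G$ of $k$-\textsc{Coloring} on $n$ vertices of pathwidth $t$, I identify the $k$ colors with the elements of $S$ and build an instance $(G^*, L^*)$ of \lhomo{H} by starting from $V(G)$ with $L^*(v) := S$ for every vertex and, for each edge $uv \in E(G)$, attaching a fresh vertex-disjoint copy $F_{uv}$ of $F$ in which $x$ is identified with $u$ and $x'$ with $v$ (using $L_F$ on the internal vertices). The two guarantees of the $\rneq(S)$-gadget then deliver the equivalence: every proper $k$-coloring of $G$ extends to a list homomorphism $(G^*, L^*) \to H$ by filling in each gadget via the ``every unequal pair is realizable'' clause, and conversely any list homomorphism $(G^*, L^*) \to H$ restricts on $V(G)$ to a proper $k$-coloring by the ``endpoints must disagree'' clause. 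Note that $G^*$ is bipartite even when $G$ is not, because every edge of $G$ is replaced by an even-length gadget.

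To bound $\pw{G^*}$, let $c := |V(F)|$, which is a constant depending only on the fixed $H$. Starting from a path decomposition of $G$ of width $t$, for each edge $uv \in E(G)$ I locate a bag containing both $u$ and $v$ and insert immediately after it a short sequence of bags, each obtained from that bag by adding some subset of the (at most $c - 2$) internal vertices of $F_{uv}$; since distinct gadgets share no internal vertices, this yields a path decomposition of $G^*$ of width at most $t + c - 2 = t + \Oh(1)$. Hence a hypothetical $(i(H) - \epsilon)^{\pw{G^*}} \cdot |V(G^*)|^{\Oh(1)}$ algorithm for \lhomo{H} would solve $k$-\textsc{Coloring} in time $(k - \epsilon)^t \cdot n^{\Oh(1)}$ (absorbing the constant $(k-\epsilon)^{c-2}$ factor into the polynomial), contradicting~\cref{thm:LMS-pw}. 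The main obstacle is of course~\cref{lem:edge-gadget} itself, where the combinatorial heart of the hardness argument lies; once that gadget is in hand, the present reduction is a clean edge-replacement template.
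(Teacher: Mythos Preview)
Your reduction is the same edge-replacement template the paper uses, and the equivalence and pathwidth arguments are fine. There is, however, one genuine gap: you only check $k = i(H) \geq 2$, but \cref{thm:LMS-pw} requires $k \geq 3$ (for $k=2$, $k$-\textsc{Coloring} is polynomial and there is no SETH lower bound). The paper fills this by observing that any obstruction already contains an incomparable set of size at least three in one bipartition class: in an induced $C_6$ or $C_8$ all vertices of one side are pairwise incomparable, and in a special edge asteroid the three independent edges $u_0v_0, u_1v_1, u_{k+1}v_{k+1}$ give three pairwise incomparable vertices on each side. Once you upgrade your ``$k \geq 2$'' claim to ``$k \geq 3$'' along these lines, the proof goes through exactly as you wrote it.

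The remark that $G^*$ is bipartite is correct (the gadget admits a list homomorphism to the bipartite $H$, hence is bipartite, and its two interface vertices have lists in the same class of $H$) but is not needed for this particular statement, which places no bipartiteness restriction on the instance.
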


Let us show that \cref{thm:main-hard-bipartite-pw} and \cref{thm:hardness-bipartite} are equivalent.

\paragraph{(\cref{thm:hardness-bipartite} $\to$ \cref{thm:main-hard-bipartite-pw})} Assume the SETH and
suppose that \cref{thm:hardness-bipartite} holds and \cref{thm:main-hard-bipartite-pw} fails. So there is a bipartite graph $H$, whose complement is not a circular-arc graph, and an algorithm $A$ that solves $\lhomo{H}$ in time $(i^*(H) - \epsilon)^{\pw{G}} \cdot n^{\Oh(1)}$ for every input $(G,L)$, assuming that $G$ is given along with its optimal path decomposition.

Let $H'$ be an undecomposable connected induced subgraph of $H$, whose complement is not a circular-arc graph, and $i(H') = i^*(H)$. Let $(G,L')$ be an arbitrary instance of \lhomo{H'}.
Since $H'$ is an induced subgraph of $H$, the instance $(G,L')$ can be seen as an instance of \lhomo{H}, where no vertex from $V(H) \setminus V(H')$ appears in any list.
The algorithm $A$ solves this instance in time $(i^*(H) - \epsilon)^{\pw{G}} \cdot n^{\Oh(1)}=(i(H') - \epsilon)^{\pw{G}} \cdot n^{\Oh(1)}$, contradicting \cref{thm:hardness-bipartite}.

\paragraph{(\cref{thm:main-hard-bipartite-pw} $\to$ \cref{thm:hardness-bipartite}) }
Assume the SETH and suppose \cref{thm:main-hard-bipartite-pw} holds and \cref{thm:hardness-bipartite} fails. So there is a connected bipartite undecomposable graph $H$, whose complement is not a circular-arc graph, and an algorithm $A$ that solves $\lhomo{H}$ in time $(i(H) - \epsilon)^{\pw{G}} \cdot n^{\Oh(1)}$ for every input $(G,L)$.  But since $H$ is connected, bipartite, and undecomposable, we have $i^*(H) = i(H)$, so algorithm $A$ contradicts \cref{thm:main-hard-bipartite-pw}.

\subsection{Walks and obstructions: definitions and basic properties}
Let us start this section with introducing some definitions and properties that will be heavily used in our hardness proof.

We call a set of edges \emph{independent} if they form an induced matching.
A \emph{walk} $\cP$ is a sequence $\cP =p_1,\ldots, p_\ell$ of vertices of $H$, such that $p_ip_{i+1} \in E(H)$, for every $i \in [\ell-1]$. We call $|\cP|=\ell-1$ the \emph{length} of the walk $\cP$. By $\overline{\cP}$ we denote the walk $\cP$  reversed, i.e., $\overline{\cP} = p_\ell,\ldots, p_1$. If $|\cP| >0$, by $\cP\nobeg$ and $\cP\noend$, respectively, we denote the walks $p_2,\ldots, p_\ell$ and $p_1,\ldots, p_{\ell-1}$. If vertices $a$ and $b$ are, respectively, the first and the last vertex of a walk $\cP$, we say that $\cP$ is an $a$-$b$-walk and denote it by $\cP: a\to b$.  

Now let us define the relation of walks, which will be crucial in our hardness reductions.
\begin{definition}[Avoiding]
For walks $\cP = p_1,\ldots, p_\ell$ and $\cQ= q_1,\ldots, q_\ell$ of equal length, such that $p_1$ is in the same bipartition class as $q_1$, we say $\cP$ \emph{avoids $\cQ$} if $p_1 \neq q_1$ and for every $i \in [\ell-1]$ it holds that $p_iq_{i+1} \not\in E(H)$. 
\end{definition}

The following observation is immediate.

\begin{observation}\label{obs:priv-neighbours}
If $\cP= p_1,p_2\ldots,p_\ell$ avoids $\cQ=q_1,q_2,\ldots,q_\ell$ and $\ell \geq 2$, then $q_2 \in N(q_1) \setminus N(p_1)$ and  $p_{\ell-1} \in N(p_\ell) \setminus N(q_\ell)$. In particular,  $N(q_1) \not\subseteq N(p_1)$ and $N(p_\ell) \not\subseteq N(q_\ell)$. \qedhere
\end{observation}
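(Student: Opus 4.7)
The plan is to unpack the definitions directly; the observation is essentially a restatement of two instances of the non-edge condition built into the avoiding relation, combined with the fact that consecutive vertices of a walk are adjacent. I would separate the argument into the two symmetric claims about the ``front'' and the ``back'' of the walks, and then note that the two ``in particular'' statements are instantaneous consequences.

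First I would look at the index $i = 1$ in the avoiding condition, which gives $p_1 q_2 \notin E(H)$, so $q_2 \notin N(p_1)$. On the other hand, $\cQ$ is a walk, so $q_1 q_2 \in E(H)$, which means $q_2 \in N(q_1)$. Combining the two yields $q_2 \in N(q_1) \setminus N(p_1)$, which is the first assertion. Next, taking $i = \ell - 1$ in the avoiding condition gives $p_{\ell-1} q_\ell \notin E(H)$, so $p_{\ell-1} \notin N(q_\ell)$, while $\cP$ being a walk gives $p_{\ell-1} p_\ell \in E(H)$, hence $p_{\ell-1} \in N(p_\ell)$. Together these produce $p_{\ell-1} \in N(p_\ell) \setminus N(q_\ell)$, the second assertion. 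The role of the hypothesis $\ell \ge 2$ is exactly to guarantee that the indices $1$ and $\ell - 1$ are legal in the avoiding condition, and that the vertices $q_2$ and $p_{\ell-1}$ exist as distinct elements of the walks.

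The ``in particular'' part then requires no further work: the vertex $q_2$ lies in $N(q_1)$ but not in $N(p_1)$, which directly witnesses $N(q_1) \not\subseteq N(p_1)$, and analogously $p_{\ell-1}$ witnesses $N(p_\ell) \not\subseteq N(q_\ell)$. I do not anticipate any obstacle; this is a purely formal unfolding. The significance of the observation, I expect, is downstream: it says that whenever two equal-length walks avoid one another, their starting vertices (and symmetrically their ending vertices) are \emph{incomparable} in the sense introduced earlier, which is exactly the property the paper will need to locate large incomparable sets and thereby build $\rneq(S)$-gadgets in the hardness reduction.
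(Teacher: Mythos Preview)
Your argument is correct and is exactly the unfolding the paper has in mind; the paper states the observation as immediate and gives no proof at all, so your write-up simply spells out what the paper leaves implicit. One tiny caveat on your closing remark: a single avoiding relation only gives one direction of non-containment at each end (e.g.\ $N(q_1)\not\subseteq N(p_1)$ but not necessarily $N(p_1)\not\subseteq N(q_1)$), so the endpoints are not automatically \emph{incomparable} from this observation alone --- but this does not affect the proof itself.
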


For walks $\cP = p_1,\ldots ,p_\ell$ and $\cQ=q_1,\ldots ,q_k$ such that $p_\ell = q_1$, we define $\cP \circ \cQ:= p_1,\ldots, p_\ell,q_2, \ldots, q_k$ (in particular, $p_1,\ldots,p_\ell \circ p_\ell=p_1,\ldots,p_\ell$).
We observe the following.

\begin{observation}\label{obs:walks-composition}
If $\cP: x \to y$ avoids $\cQ:p \to q$ and $\cP': y \to z$ avoids $\cQ':q \to r$, then $\cP \circ \cP'$ avoids~$\cQ~\circ~\cQ'$. \qedhere
\end{observation}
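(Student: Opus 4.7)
The plan is to verify directly the three conditions in the definition of ``avoiding'' for the concatenated walks $\cP \circ \cP'$ and $\cQ \circ \cQ'$. Write $\cP = p_1,\ldots,p_\ell$, $\cP' = p'_1,\ldots,p'_k$, $\cQ = q_1,\ldots,q_\ell$, $\cQ' = q'_1,\ldots,q'_k$, so that $p_\ell = p'_1 = y$ and $q_\ell = q'_1 = q$.

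First I would dispose of the easy bookkeeping. Lengths match, since $|\cP \circ \cP'| = |\cP| + |\cP'| = |\cQ| + |\cQ'| = |\cQ \circ \cQ'|$. The first vertex of $\cP \circ \cP'$ is $p_1 = x$, and the first vertex of $\cQ \circ \cQ'$ is $q_1 = p$; by the hypothesis that $\cP$ avoids $\cQ$, these are in the same bipartition class and satisfy $x \neq p$.

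The substantive step is to check the non-edge condition $(\cP \circ \cP')_i \, (\cQ \circ \cQ')_{i+1} \notin E(H)$ for each relevant index $i$. I would split into three cases based on where the pair $(i,i+1)$ lies. If both indices fall within the first walk (i.e. $i+1 \le \ell$), the non-edge follows from $\cP$ avoiding $\cQ$. If both indices lie within the second walk (i.e. $i \ge \ell$), then after reindexing the non-edge follows from $\cP'$ avoiding $\cQ'$. The only transitional case is $i = \ell$, where we need $p_\ell \, q'_2 \notin E(H)$; but $p_\ell = p'_1$ and $q'_2$ is the second vertex of $\cQ'$, so this is exactly the $i=1$ instance of the avoidance of $\cQ'$ by $\cP'$.

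There is no real obstacle here; the only place that requires a second of thought is the junction index $i = \ell$, and resolving it is just a matter of rewriting $p_\ell$ as $p'_1$ and invoking the second avoidance hypothesis. All other conditions transfer immediately from the two given avoidances.
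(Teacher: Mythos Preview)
Your proposal is correct and is exactly the direct verification the paper has in mind; the paper gives no proof at all (the observation is stated with \qedhere), so your unpacking of the three cases is precisely the routine check being left to the reader. The only cosmetic point is that your ``transitional'' case $i=\ell$ is already subsumed by your second case $i\ge\ell$ once you rewrite $p_\ell=p'_1$, so the case split is slightly redundant---but this does not affect correctness.
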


We say that two disjoint sets $S_1$ and $S_2$ are \emph{non-adjacent} if there is no edge with one vertex in $S_1$ and the other in $S_2$.
If it does not lead to confusion, we will sometimes identify a walk $\cP$ with the set of vertices in $\cP$, so in particular we will say that walks $\cP$ and $\cQ$ are non-adjacent.
Clearly, if two walks of the same length are non-adjacent, they avoid each other.
For a walk $\cP= p_1,p_2\ldots,p_\ell$ and some $1\leq i\leq j \leq \ell$, we define the \emph{$p_i$-$p_j$-subwalk} of $\cP$, denoted by $\cP[p_i, p_j]$, as $\cP[p_i, p_j]:=p_i,p_{i+1},\ldots,p_{j-1},p_j$. (Note that the vertices $p_i$ and $p_j$ might appear on $\cP$ more than once, then $\cP[p_i, p_j]$ can be chosen arbitrarily.)  
For a set $\mathbb{P}$ of walks of equal length, we define the set $\mathbb{P}^{(i)} =\{v : \textrm{$v$ is the $i$-th vertex of some $\cP \in \mathbb{P}$}\}$.


\subsubsection{Special edge asteroids.}
Let us consider a special edge asteroid $(A_X,A_Y)$, where  $A_X=\{u_0,\ldots,u_{2k}\} \subseteq X$ and $A_Y=\{v_0,\ldots,v_{2k}\} \subseteq Y$. We use the notation from \cref{def:asteroid}.

Observe that  $\{u_0v_0,u_1v_1,u_{k+1}v_{k+1}\}$ is an independent set of edges.
Note that for $i \in \{0,\ldots,2k\}$, there might be many ways to choose $\cW_{i,i+1}$. We will select the actual path in the following way. First, among all possible paths that satisfy the conditions in the definition, we will only consider shortest ones (note that in particular they are induced). 
Second, if only possible, we will restrict our considerations to paths containing  $v_i$.
Finally, among all paths that are still in consideration, we will select one that contains $v_{i+1}$, if possible. 

In the following observation we show that edge asteroid is symmetric with respect to bipartition of $H$.

\begin{lemma}\label{obs:asteroid-dual}
Consider $A_X=\{u_0,\ldots,u_{2k}\} \subseteq X$ and $A_Y=\{v_0,\ldots,v_{2k}\} \subseteq Y$, such that $(A_X,A_Y)$ is a special edge asteroid. Let paths $\{\cW_{i,i+1}\}_{i=0}^{2k}$ be chosen as described above. 
Then for each $i \in \{0, \ldots, 2k\}$ there exists an induced $v_i$-$v_{i+1}$-path $\cW'_{i,i+1}$, such that
\begin{compactenum}[(a)]
\item there are no edges between $\{u_i,v_i\}$ and $\{u_{i+k},u_{i+k+1}\}\cup \cW'_{i+k,i+k+1}$,
\item there are no edges between $\{u_0,v_0\}$ and $\{u_1,\ldots,u_{2k}\} \cup \bigcup_{i=1}^{2k-1} \cW'_{i,i+1}$.
\end{compactenum}
Furthermore $A_X \cup A_Y \cup \bigcup_{i=0}^{2k} \cW_{i,i+1} = A_X \cup A_Y \cup \bigcup_{i=0}^{2k} \cW'_{i,i+1}$.
\end{lemma}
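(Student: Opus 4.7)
The lemma is a symmetry/duality statement: although the definition of a special edge asteroid is phrased asymmetrically in the $u_i$'s and $v_i$'s (paths connect consecutive $u_i$'s), the underlying structure is in fact symmetric, and an analogous family of paths connecting consecutive $v_i$'s can be extracted from the paths $\cW_{i,i+1}$. My plan is to construct each $\cW'_{i,i+1}$ directly from $\cW_{i,i+1}$ by retaining its interior vertices and shifting the endpoints from $u_i, u_{i+1}$ to $v_i, v_{i+1}$.

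Concretely, write $\cW_{i,i+1} = u_i, w_1, w_2, \ldots, w_{\ell-1}, u_{i+1}$. The central structural claim is that the selection rule (shortest; then preferring paths containing $v_i$; then preferring $v_{i+1}$) forces $w_1 = v_i$ and $w_{\ell-1} = v_{i+1}$. Since a shortest path is induced and $v_i$ is adjacent to the starting vertex $u_i$, the vertex $v_i$, if present in $\cW_{i,i+1}$, must occupy the second position (any later occurrence would create a chord with $u_i$); the analogous observation holds for $v_{i+1}$. So it suffices to produce, for each $i$, a shortest valid $u_i$-$u_{i+1}$ path containing $v_i$, and then to refine among these to obtain one also containing $v_{i+1}$. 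Once this is in place, define $\cW'_{i,i+1} := v_i, w_2, \ldots, w_{\ell-2}, v_{i+1}$, which is an induced $v_i$-$v_{i+1}$ path as a subpath of an induced path, and whose vertex set is $\bigl(V(\cW_{i,i+1}) \setminus \{u_i,u_{i+1}\}\bigr) \cup \{v_i, v_{i+1}\}$.

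Verifying the dual conditions is then essentially bookkeeping based on the original (a), (b). For (a), non-adjacency between $u_i$ and $u_{i+k}, u_{i+k+1}$ is bipartite triviality; for $v_i u_{i+k}$, applying the original condition (a) at index $i+k$ gives $\{u_{i+k}, v_{i+k}\}$ non-adjacent to $\{v_{i-1}, v_i\}$ (indices mod $2k+1$), so $u_{i+k}v_i \notin E(H)$, and the index $i+k+1$ handles $u_{i+k+1}v_i$; non-adjacency of $\{u_i,v_i\}$ to the interior of $\cW'_{i+k,i+k+1}$ coincides with the original statement about $\cW_{i+k,i+k+1}$, and to the endpoints $v_{i+k}, v_{i+k+1}$ by the original (a) itself. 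For (b), one observes (as was essentially needed already to verify that $\{u_0v_0, u_1v_1, u_{k+1}v_{k+1}\}$ is an independent edge set) that every $u_j$ with $j \in \{1, \ldots, 2k\}$ lies on some $\cW_{i, i+1}$ with $i \in \{1, \ldots, 2k-1\}$, so the original (b) yields $v_0 u_j \notin E(H)$, while non-adjacency to the interiors and to $v_i, v_{i+1}$ comes directly from the original (b). The vertex union identity is automatic, since the endpoints swapped lie in $A_X \cup A_Y$ and become invisible after taking the union with $A_X \cup A_Y$.

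The main obstacle is justifying the structural claim that a shortest valid path through both $v_i$ and $v_{i+1}$ exists, so the selection rule really pins down $w_1 = v_i$ and $w_{\ell-1} = v_{i+1}$. The natural first attempt is to take an arbitrary shortest valid $\cW$ and replace $w_1$ by $v_i$, which works provided $v_iw_2 \in E(H)$; handling the case when this edge is missing will require a more delicate structural argument that exploits conditions (a)--(b) of the asteroid together with shortest-path minimality to deduce sufficient adjacencies of $v_i$ to later vertices of $\cW$ (or, alternatively, to rule out such a $\cW$ as a genuine shortest valid path). I expect this step, rather than the verification of the dual conditions, to carry essentially all of the technical weight of the proof.
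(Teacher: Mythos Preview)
Your central structural claim --- that the selection rule forces $w_1 = v_i$ and $w_{\ell-1} = v_{i+1}$ --- is false in general, so the ``delicate structural argument'' you anticipate cannot succeed. For a concrete obstruction, nothing in the definition of a special edge asteroid prevents $v_i$ from having $u_i$ as its \emph{only} neighbour in $H$; in that case no $u_i$--$u_{i+1}$ path whatsoever can pass through $v_i$, so the chosen $\cW_{i,i+1}$ certainly does not. Even more simply, if some shortest valid $\cW_{i,i+1}$ has length $2$, then $w_1 = w_{\ell-1}$ is a single vertex and cannot equal both $v_i$ and $v_{i+1}$.

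The paper's construction avoids this trap by \emph{not} insisting that $v_i$ appears in $\cW_{i,i+1}$. Instead it does a two--case endpoint surgery: if $v_i$ is already on $\cW_{i,i+1}$ (necessarily as the second vertex, since the path is induced and $u_iv_i\in E(H)$), delete $u_i$; otherwise \emph{prepend} $v_i$, obtaining $v_i,u_i,w_1,\ldots$; and symmetrically for $v_{i+1}$ at the other end. The resulting $\cW'_{i,i+1}$ has vertex set contained in $\cW_{i,i+1}\cup\{v_i,v_{i+1}\}$, so your verification of (a), (b) and of the vertex--union identity goes through essentially unchanged. The only thing left is to check that prepending keeps the path induced, and \emph{this} is where the selection rule does its work: if $v_i\notin\cW_{i,i+1}$ were adjacent to some $w_j$ with $j\ge 2$ (or to $u_{i+1}$), then $u_i,v_i,w_j,\ldots,u_{i+1}$ would be a valid path that is shorter, or of equal length but containing $v_i$, contradicting the choice of $\cW_{i,i+1}$. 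So the selection rule is a tool for proving inducedness after the surgery, not a guarantee that $v_i$ was there to begin with.
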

\begin{proof}
Let $i \in \{0, \ldots, 2k\}$. We obtain $\cW'_{i,i+1}$ by modifying the endpoints of $\cW_{i,i+1}$ as follows. 
If $v_i$ (respectively $v_{i+1}$) already belongs to $\cW_{i,i+1}$, then it must be the second (resp. last but one) vertex of $\cW_{i,i+1}$, as this path is induced. Thus we remove $u_i$ (resp. $u_{i+1}$).
In the other case, if $v_i$ (resp. $v_{i+1}$) does not appear on $\cW_{i,i+1}$, we append it as the first (resp. last) vertex. Clearly such constructed $\cW'_{i,i+1}$ is a $v_i$-$v_{i+1}$-path. Moreover, since $\cW'_{i,i+1} \subseteq \cW_{i,i+1} \cup \{v_i,v_{i+1}\}$, conditions (a) and (b) are satisfied by the definition of $\cW_{i,i+1}$. Moreover, note that $A_X \cup A_Y \cup \bigcup_{i=0}^{2k} \cW_{i,i+1} = A_X \cup A_Y \cup \bigcup_{i=0}^{2k} \cW'_{i,i+1}$.

So we only need to argue that $\cW'_{i,i+1}$ is an induced path. Clearly, if we only removed vertices $u_i$ or $u_{i+1}$ from $\cW_{i,i+1}$, then $\cW'_{i,i+1}$ is induced since $\cW_{i,i+1}$ is. So suppose that $v_i$ does not appear on $\cW_{i,i+1}$, but is adjacent to some vertex of $\cW_{i,i+1}$, other than $u_i$. It is straightforward to verify that 
in this case we could have chosen a shorter path $\cW_{i,i+1}$, or one that is equally long, but contains $v_i$, which contradicts the choice of $\cW_{i,i+1}$. The case if $v_{i+1}$ does not appear in $\cW_{i,i+1}$ is analogous.
\end{proof}

Note that \cref{obs:asteroid-dual} implies that if $(A_X,A_Y)$ is a special edge asteroid, then so is $(A_Y,A_X)$.
For $(A_X , A_Y)$ and paths $\{\cW_{i,i+1}\}_{i=0}^{2k}$ defined as above, we define the \emph{asteroidal subgraph} $\Ob$ to be the subgraph of $H$ induced by $A_X \cup A_Y \cup \bigcup_{i=0}^{2k} \cW_{i,i+1}$. To restore the symmetry of $(A_X,A_Y)$ and $(A_Y,A_X)$, we will only require that walks $\{\cW_{i,i+1}\}_{i=0}^{2k}$ and $\{\cW'_{i,i+1}\}_{i=0}^{2k}$ are induced. Observe that by \cref{obs:asteroid-dual}, if $\Ob$ is the asteroidal subgraph for $(A_X,A_Y)$, it is also the asteroidal subgraph for $(A_Y,A_X)$. Therefore, whenever we find in $H$ an asteroidal subgraph, we can freely choose the appropriate special edge asteroid, usually we will do it implicitly. Also, we will always use the notation introduced above.
In an asteroidal subgraph, in natural way we will treat paths $\cW_{i,i+1}$ and $\cW'_{i,i+1}$ as walks. Moreover, for each valid $i$ we define  $\cW_{i+1,i}:=\overline{\cW_{i,i+1}}$.

\subsubsection{Obstructions.}
An induced subgraph $\Ob$ of $H$ is an \emph{obstruction} if it is isomorphic to $C_6$ or $C_8$, or it is an asteroidal subgraph.
Note that each induced cycle with at least 10 vertices contains a special edge asteroid. Therefore we can restate~\cref{thm:cocircular} in the following way, which will be more useful for us.

\begin{corollary}
Let $H$ be a bipartite graph, which is not the complement of a circular-arc graph. Then $H$ contains an obstruction as an induced subgraph.
\end{corollary}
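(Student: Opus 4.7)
The plan is to derive the corollary directly from \cref{thm:cocircular} together with the remark immediately preceding the corollary, namely that every induced cycle of length at least $10$ contains a special edge asteroid. First I would apply \cref{thm:cocircular} to $H$: since the complement of $H$ is not a circular-arc graph, $H$ must contain either a special edge asteroid or an induced cycle of length at least $6$. In the former case, the associated asteroidal subgraph is by definition an obstruction, and we are done. In the latter case, if the induced cycle has length $6$ or $8$, it is itself an induced copy of $C_6$ or $C_8$, and hence an obstruction.

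The remaining case is an induced cycle $C$ of length $2n \geq 10$, and here my plan is to exhibit a special edge asteroid whose vertex set lies entirely in $V(C)$; then the asteroidal subgraph is an induced subgraph of $H$ (in fact, all of $C$), which is an obstruction. Label the vertices of $C$ cyclically as $w_0, w_1, \ldots, w_{2n-1}$ so that the even-indexed ones lie in the bipartition class $X$. I would try $k = 1$ with the three matching edges
\[
u_0 v_0 = w_0 w_1, \qquad u_1 v_1 = w_4 w_5, \qquad u_2 v_2 = w_{2n-2} w_{2n-3},
\]
together with the three cyclic arcs
\[
\cW_{0,1} = w_0 w_1 w_2 w_3 w_4, \quad \cW_{1,2} = w_4 w_5 \cdots w_{2n-2}, \quad \cW_{2,0} = w_{2n-2} w_{2n-1} w_0,
\]
all of which are induced paths because $C$ is an induced cycle. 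For $n \geq 5$ the three selected edges are pairwise at cycle-distance at least $2$, hence they form an induced matching in $H$.

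To verify conditions (a) and (b) of \cref{def:asteroid}, the key observation is that, because $C$ is induced in $H$, the only neighbors in $H$ of any $w_j \in \{u_i, v_i\}$ that lie in $V(C)$ are its two cycle neighbors $w_{j \pm 1}$. Thus each non-adjacency check reduces to verifying that $w_{j\pm 1}$ does not belong to the relevant forbidden set $\{v_{i+1}, v_{i+2}\} \cup \cW_{i+1,i+2}$ (for condition (a)) or $\{v_1, v_2\} \cup \cW_{1,2}$ (for condition (b)); each such check is an index-arithmetic computation modulo $2n$ that uses only the hypothesis $n \geq 5$. Consequently both conditions hold and $C$ itself is the sought asteroidal subgraph. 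The main (and in fact only) technical step is this routine non-adjacency verification; no deeper obstacle is anticipated, since the induced nature of $C$ together with the generous spacing afforded by $n \geq 5$ essentially forces the conditions to be satisfied.
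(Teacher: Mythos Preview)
Your proposal is correct and follows exactly the approach the paper intends: the paper presents this corollary as a restatement of \cref{thm:cocircular} together with the one-line remark that every induced cycle on at least ten vertices contains a special edge asteroid, without giving any further details. Your explicit construction with $k=1$ and the three edges $w_0w_1$, $w_4w_5$, $w_{2n-2}w_{2n-3}$ along the cycle supplies precisely the routine verification the paper leaves to the reader, and your index checks go through for $n\ge 5$.
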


Let us show basic properties of obstructions. We denote the consecutive vertices of the cycle $C_k$ by $w_1,w_2, \ldots, w_k$ (with $w_1w_k \in E(C_k)$). For special edge asteroids, we use the notation from~\cref{def:asteroid}.

\begin{definition}[$C(\Ob)$]\label{def:corners}
For an obstruction $\Ob$, we define 
\begin{equation*}
C(\Ob):=\begin{cases}
\{(w_1,w_5), (w_2,w_4)\} & \text{if }\Ob\text{ is isomorphic to }C_6, \\
\{(w_1,w_5), (w_2,w_6)\} & \text{if }\Ob\text{ is isomorphic to }C_8, \\
\{(u_0,u_1), (v_0,v_1)\} & \text{if }\Ob\text{ is an asteroidal subgraph}.
\end{cases}
\end{equation*}
\end{definition}

For a pair $(\alpha, \beta) \in C(\Ob)$ we define $(\ovalp, \ovbet)$ to be the other element of $C(\Ob)$. Due to the symmetry of cycles and \cref{obs:asteroid-dual}, we will be able to assume symmetry between the elements of $C(\Ob)$, i.e., the properties of $(\alpha,\beta)$ that we will require will also be satisfied for $(\alpha',\beta')$. Thus in proofs it will usually be sufficient to consider one pair in $C(\Ob)$, as the proof for the other one will be analogous.

In the following two observations we show the existence of some useful walks in $\Ob$.

\begin{observation}\label{obs:asteroid-subwalks}
Let $\Ob$ be an obstruction in $H$ and let $C(\Ob)=\{(\alpha,\beta),(\ovalp, \ovbet)\}$. Then for every pair of vertices $x,y$ of $\Ob$, such that $x,y \not\in N[\alpha,\ovalp]$, there exists an $x$-$y$-walk $\cW[x,y]$, using only vertices of $\Ob$, which is non-adjacent to $\{\alpha,\ovalp\}$.
\end{observation}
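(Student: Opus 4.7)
The plan is to reduce the statement to showing that the induced subgraph $\Ob[V(\Ob) \setminus N[\alpha, \alpha']]$ is connected. In each of the three obstruction types, $\{\alpha, \alpha'\}$ is actually an edge of $\Ob$, so a walk that uses only vertices of $\Ob$ and is non-adjacent to $\{\alpha, \alpha'\}$ is exactly a walk whose vertex set lies in $V(\Ob) \setminus N[\alpha, \alpha']$; once this subgraph is shown to be connected, $\cW[x,y]$ can be taken to be any $x$-$y$-path in it. The cyclic cases are immediate: for $\Ob = C_6$ with $\{\alpha, \alpha'\} = \{w_1, w_2\}$ the leftover is $\{w_4, w_5\}$ with $w_4w_5 \in E(\Ob)$, and for $\Ob = C_8$ with $\{\alpha, \alpha'\} = \{w_1, w_2\}$ the leftover $\{w_4, w_5, w_6, w_7\}$ induces a subpath of $C_8$.

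The bulk of the work is the asteroidal case, where $\{\alpha, \alpha'\} = \{u_0, v_0\}$. I first introduce a ``backbone''
\[
T := \{u_i, v_i \colon 1 \leq i \leq 2k\} \cup \bigcup_{i=1}^{2k-1} \cW_{i,i+1} \cup \bigcup_{i=1}^{2k-1} \cW'_{i,i+1}.
\]
Condition~(b) of \cref{def:asteroid} and its dual from \cref{obs:asteroid-dual} yield $T \subseteq V(\Ob) \setminus N[u_0, v_0]$; moreover $T$ is connected since the walks $\cW_{i,i+1}$ link $u_1, \dots, u_{2k}$ into one chain, the walks $\cW'_{i,i+1}$ link $v_1, \dots, v_{2k}$ into another, and the edges $u_iv_i$ glue the two chains. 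It remains to show that every $x \in V(\Ob) \setminus N[u_0, v_0]$ lying outside $T$ can be joined to $T$ by a walk inside $V(\Ob) \setminus N[u_0, v_0]$. Such an $x$ must be an internal vertex of one of the four boundary walks $\cW_{0,1}, \cW_{2k,0}, \cW'_{0,1}, \cW'_{2k,0}$, and by symmetry it suffices to handle $\cW_{0,1} = u_0, a_1, \dots, a_m, u_1$. Induced-ness of $\cW_{0,1}$ forces $x = a_\ell$ with $\ell \geq 2$ and rules out $a_\ell \in N(u_0)$; to control adjacency to $v_0$, I split on whether $v_0$ lies on $\cW_{0,1}$. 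If $v_0 \in \cW_{0,1}$, then $v_0$ is consecutive to $u_0$, so $v_0 = a_1$, and induced-ness of $\cW_{0,1}$ forces $\ell \geq 3$; the tail subwalk $a_\ell, \dots, a_m, u_1$ stays inside $V(\Ob) \setminus N[u_0, v_0]$ and reaches $u_1 \in T$. If $v_0 \notin \cW_{0,1}$, I argue that no $a_\ell$ with $\ell \geq 2$ is adjacent to $v_0$: otherwise, taking $\ell$ to be the largest such index, the path $u_0, v_0, a_\ell, \dots, a_m, u_1$ is an induced $u_0$-$u_1$-path either strictly shorter than $\cW_{0,1}$ (if $\ell > 2$) or of the same length but containing $v_0$ (if $\ell = 2$), contradicting the selection rule for $\cW_{0,1}$; the same tail subwalk then reaches $T$.

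The main obstacle is verifying that the shortcut $u_0, v_0, a_\ell, \dots, a_m, u_1$ would in fact have been an admissible choice for $\cW_{0,1}$, i.e., that inserting $v_0$ does not create a forbidden adjacency within the asteroid. The only asteroid condition constraining $\cW_{0,1}$ is (a) of \cref{def:asteroid} at $i = k+1$, which forbids edges from $\cW_{0,1}$ to $\{u_{k+1}, v_{k+1}\}$; the vertices $a_\ell, \dots, a_m$ already respect this in the original $\cW_{0,1}$, so the only new check is that $v_0$ is non-adjacent to $\{u_{k+1}, v_{k+1}\}$, which follows from condition (a) at $i = 0$ combined with its dual. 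An entirely analogous analysis handles the other three boundary walks $\cW_{2k,0}, \cW'_{0,1}, \cW'_{2k,0}$ (reaching $u_{2k}, v_1, v_{2k} \in T$ respectively), completing the proof.
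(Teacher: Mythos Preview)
Your proof is correct and follows the same approach as the paper: both reduce to showing that $\Ob \setminus N[\alpha,\alpha']$ is connected. The paper's proof is essentially a one-line assertion (``it is straightforward to observe that the graph $\Ob \setminus N[\alpha,\alpha']$ is still connected'') together with an explicit backbone walk $\cW$; your version fills in the detail the paper leaves implicit, namely that the tail of $\cW_{0,1}$ past its initial $N[u_0,v_0]$-segment cannot re-enter $N[u_0,v_0]$, via the shortest-then-prefer-$v_i$ selection rule. One minor simplification: since $\cW'_{i,i+1} \subseteq \cW_{i,i+1} \cup \{v_i,v_{i+1}\}$ by the construction in \cref{obs:asteroid-dual}, you do not actually need to treat $\cW'_{0,1}$ and $\cW'_{2k,0}$ separately --- every vertex of $\Ob$ outside $T \cup N[u_0,v_0]$ already lies on $\cW_{0,1}$ or $\cW_{2k,0}$.
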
 
\begin{proof}
It is straightforward to observe that the graph $\Ob \setminus N[\alpha,\ovalp]$ is still connected, so the existence of claimed walks follows easily. The notation is justified by interpreting these walks in the case when $\Ob$ is an asteroidal subgraph as subwalks of 
\[
\cW:= \cW_{0,1}[p,u_1] \circ u_1,v_1,u_1 \circ \cW_{1,2} \circ u_2,v_2,u_2 \circ \cW_{2,3} \circ \ldots 
\circ \cW_{2k-1,2k} \circ u_{2k},v_{2k},u_{2k} \circ \cW_{2k,0}[u_{2k},q],
\]
where $p$ and $q$ are, respectively, the first vertex of $\cW_{0,1}$, which is not in $N[u_0,v_0]$, and the last vertex of $\cW_{2k,0}$, which is not in $N[u_0,v_0]$.
\end{proof}

\begin{observation}\label{obs:walks-between-corners}
Let $H$ be a bipartite graph with an obstruction $\Ob$, let $(\alpha, \beta) \in C(\Ob)$.
\begin{compactenum}[(a)]
\item There exist walks $\cX,\cX':\alpha \to \beta$ and $\cY,\cY':\beta \to \alpha$, such that $\cX$ avoids $\cY$ and $\cY'$ avoids $\cX'$.
\item If $\Ob$ is an asteroidal subgraph, let $\gamma$ be the vertex in $\{u_{k+1},v_{k+1}\}$ in the same bipartition class as $\alpha,\beta$, and let $c \in \{\alpha,\beta,\gamma\}$.
Then for any $a,b$, such that $\{a,b,c\} = \{\alpha,\beta,\gamma\}$, there exist walks $\cX_c:\alpha \to a \textrm{ and } \cY_c: \alpha \to b,$ and $\cZ_c: \beta \to c$, such that $\cX_c,\cY_c$ avoid $\cZ_c$ and $\cZ_c$ avoids $\cX_c,\cY_c$.
\end{compactenum}
All walks use only vertices of $\Ob$.
\end{observation}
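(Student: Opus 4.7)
My plan is to prove the observation by case analysis on the type of the obstruction $\Ob$, which by definition is $C_6$, $C_8$, or an asteroidal subgraph. In each case I will explicitly construct the required walks. The symmetry between the two elements of $C(\Ob)$ is provided by the dihedral symmetry of the cycles and by \cref{obs:asteroid-dual} in the asteroidal case (which exchanges the roles of $(u_0,u_1)$ and $(v_0,v_1)$), so it suffices to treat one representative $(\alpha,\beta)$, which I take to be $(w_1,w_5)$ in the cycle cases and $(u_0,u_1)$ in the asteroidal case; the conclusions for $(\ovalp,\ovbet)$ then follow by applying the same argument to the dual.

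For part (a) in the cycle cases, short explicit constructions suffice. In $C_6$ I take $\cX = w_1,w_2,w_3,w_4,w_5$ and $\cY = w_5,w_4,w_5,w_6,w_1$, and verify the four conditions $p_i q_{i+1} \notin E(C_6)$ by inspection (the relevant pairs are $w_1w_4, w_2w_5, w_3w_6, w_4w_1$, none of which are edges); the pair $(\cY',\cX')$ is obtained by reflecting. In $C_8$ I take $\cX = w_1,w_2,w_3,w_4,w_5$ and $\cY = w_5,w_6,w_7,w_8,w_1$, so that the two walks traverse the cycle in opposite directions. Since edges of $C_8$ join vertices whose indices differ by $\pm 1 \pmod 8$, and in every pair $p_i q_{i+1}$ the indices differ by an amount incongruent to $\pm 1 \pmod 8$, the avoidance is immediate, and $(\cY',\cX')$ is again obtained by reflection.

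For part (a) in the asteroidal case, I will exploit the two ``arcs'' of the asteroid from $u_0$ to $u_1$: the short arc $\cW_{0,1}$ and the long arc $\cW_{1,2} \circ \cW_{2,3} \circ \cdots \circ \cW_{2k,0}$. My walk $\cX$ will be based on $\cW_{0,1}$, extended by bounces between $u_0$ and $v_0$ at the start and between $u_1$ and $v_1$ at the end (both walks have even length since $u_0,u_1$ lie in the same bipartition class, so parity is not an issue), while $\cY$ will traverse the long arc, also padded by bounces at $\{u_1,v_1\}$ and $\{u_0,v_0\}$ if needed. The non-adjacency guarantees needed for the avoidance come from condition (b) of the special edge asteroid (which separates $\{u_0,v_0\}$ from the middle walks $\cW_{1,2},\ldots,\cW_{2k-1,2k}$ and from $\{v_1,\ldots,v_{2k}\}$) and from condition (a) applied with index $i=1$ (which separates $\{u_1,v_1\}$ from $\cW_{k+1,k+2}$ and from $\{v_{k+1},v_{k+2}\}$). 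The key is to choose the padding so that the bouncing phases of $\cX$ are in temporal correspondence with the appropriate sub-walks of $\cY$, and to arrange that when $\cX$ traverses $\cW_{0,1}$ itself, $\cY$ is simultaneously on $\cW_{k+1,k+2}$ (so that the endpoints of $\cW_{0,1}$ are handled by condition (a) with $i=1$, while for the interior vertices we traverse the two disjoint paths in opposite directions).

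For part (b), which only arises in the asteroidal case, I split the long arc at $u_{k+1}$ (or $v_{k+1}$, depending on the bipartition class of $\alpha,\beta$) into its ``upper half'' $\cW_{1,2} \circ \cdots \circ \cW_{k,k+1}$ and ``lower half'' $\cW_{k+1,k+2} \circ \cdots \circ \cW_{2k,0}$. For each of the three sub-cases $c \in \{\alpha,\beta,\gamma\}$, I pick for each of $\cX_c,\cY_c,\cZ_c$ an appropriate concatenation of $\cW_{0,1}$, one of the two halves, and bounces at the corner pairs $\{u_0,v_0\}, \{u_1,v_1\}, \{u_{k+1},v_{k+1}\}$. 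The mutual avoidance is again underwritten by conditions (a) and (b) of the asteroid, exactly as in part (a). The main obstacle I anticipate is length alignment combined with the doubled avoidance bookkeeping: all three walks must have a common length, and each of the four avoidance relations ($\cX_c$ with $\cZ_c$ in both directions, and $\cY_c$ with $\cZ_c$ in both directions) must hold simultaneously. I expect to carry out a separate case-by-case verification for each of the three choices of $c$, in each case describing the bounces at all three corner pairs explicitly.
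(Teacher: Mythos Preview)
Your cycle constructions are fine (your $C_6$ walks differ from the paper's but work; the $C_8$ walks coincide). The gap is in the asteroidal case.

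You propose to align the short arc $\cW_{0,1}$ in $\cX$ with a path segment $\cW_{k+1,k+2}$ in $\cY$, handling the endpoints via condition~(a) of \cref{def:asteroid} and handling the interiors by ``traversing the two disjoint paths in opposite directions''. This does not work: the definition of a special edge asteroid gives you non-adjacency only between a \emph{corner pair} $\{u_i,v_i\}$ and the opposite path $\cW_{i+k,i+k+1}$ (and between $\{u_0,v_0\}$ and the middle paths, via condition~(b)). It gives you \emph{no} information about edges between two path segments. Since $\Ob$ is an induced subgraph of $H$, an interior vertex of $\cW_{0,1}$ may perfectly well be adjacent to an interior vertex of $\cW_{k+1,k+2}$, so $\cX$ need not avoid $\cY$ on that stretch. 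Traversal direction is irrelevant to the avoidance relation, which is about edges $p_iq_{i+1}$.

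The paper's construction sidesteps this by never aligning two path segments: it interleaves bounces $\cU_i = u_i,v_i,\ldots,u_i$ on \emph{both} walks so that whenever one walk is on a path $\cW_{j,j+1}$, the other is bouncing at the appropriate corner pair (whose non-adjacency to that path is guaranteed by condition~(a) or~(b)). Concretely, the paper builds auxiliary walks $\cA,\cB,\cC$ in which $\cB$ descends the lower arc $\cW_{0,2k},\cW_{2k,2k-1},\ldots$ with a bounce $\cU_j$ between consecutive path pieces, while $\cC$ simultaneously traverses the upper arc with matching bounces; the lengths of the $\cU_i$'s are chosen so that each path piece of one walk lines up with a bounce of the other. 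Your plan for part~(b) inherits the same problem. To fix your argument you would need the same interleaving idea, at which point the short arc $\cW_{0,1}$ alone for $\cX$ becomes inadequate (you run out of corner pairs on $\cX$'s side to bounce at while $\cY$ traverses the far portions of the long arc).
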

\begin{proof}
Let us start with proving (a) in the case if $\Ob$ is either isomorphic to $C_6$ or to $C_8$, recall that consecutive vertices of such a cycle are denoted by $w_1,w_2,\ldots$. We can also assume that $\alpha = w_1$ and $\beta = w_5$, as the other case is symmetric.

If $\Ob \simeq C_6$, we set
\begin{alignat*}{12}
\cX&:=w_1,w_6,w_5,w_4,w_5 && \quad \cX'&:=w_1,w_2,w_3,w_4,w_5, \\
\cY&:=w_5,w_4,w_3,w_2,w_1 && \quad \cY'&:=w_5,w_6,w_1,w_2,w_1.
\end{alignat*}

If $\Ob \simeq C_8$, we set
\begin{alignat*}{12}
\cX=\cX'&:=w_1,w_2,w_3,w_4,w_5, \\
\cY=\cY'&:=w_5,w_6,w_7,w_8,w_1.
\end{alignat*}
It is straightforward to verify that these walks satisfy the statement (a).


So from now on let us consider the case that $\Ob$ is an asteroidal subgraph, and, by symmetry, assume that $\alpha = u_0$ and $\beta = u_1$. Then $\gamma = u_{k+1}$. Before we proceed to the proof of (a), let us define three auxiliary walks $\cA : u_0 \to u_0$, $\cB : u_0 \to u_{k+1}$, and $\cC : u_1 \to u_1$, as follows.
\begin{alignat*}{14}
\cA&:=\cU_0 &&~\circ~&&\cU_0 &&~\circ~&& \cU_0 &&~\circ~&& \cU_0 &&~\circ~&& \ldots &&~\circ~&& \cU_0 &&~\circ~&& \cU_0,\\
\cB&:=\cU_0 &&~\circ~&& \cW_{0,2k} &&~\circ~&& \cU_{2k} &&~\circ~&& \cW_{2k,2k-1} &&~\circ~&& \ldots &&~\circ~&& \cU_{k+2} &&~\circ~&& \cW_{k+2,k+1},\\
\cC&:=\cW[u_1,u_k] &&~\circ~&& \cU_k &&~\circ~&& \cW_{k,k-1} &&~\circ~&& \cU_{k-1} &&~\circ~&& \ldots &&~\circ~&& \cW_{2,1} &&~\circ~&& \cU_1,
\end{alignat*}
where $\cW[u_1,u_k]$ is obtained from \cref{obs:asteroid-subwalks} and by $\cU_i$ we mean a walk $u_i,v_i,\ldots,v_i,u_i$ of the appropriate length. Note that here we abuse the notation slightly, as the length of each $\cU_i$ might be different. By \cref{def:asteroid}, we have that $\cU_i$ and $\cW_{i+k,i+k+1}$ are non-adjacent, and so are sets $\{u_0,v_0\}$ and $\{v_1,\ldots,v_{2k}\} \cup \bigcup_{i=1}^{2k-1} \cW_{i,i+1}$. So it is straightforward to observe that $\cA,\cB$ avoid $\cC$ and $\cC$ avoids $\cA,\cB$.

In order to prove (a), we define
\begin{alignat*}{12}
\cX=\cX'&:=\cB &&~\circ~&& \cU_{k+1} &&~\circ~&&\cW[u_{k+1},u_1], \\
\cY=\cY'&:=\cC &&~\circ~&& \cW_{1,0}					  &&~\circ~&& \cU_0,
\end{alignat*}
where $\cW[u_{k+1},u_1]$ is given by \cref{obs:asteroid-subwalks}, and $\cU_i$ is defined as previously.
Observe that these walks avoid each other because $\cB$ and $\cC$ avoid each other, the walk $\cW_{1,0}$ is non-adjacent to $\{u_{k+1},v_{k+1}\}$, and $\cW[u_{k+1},u_1]$ is non-adjacent to $\{u_0,v_0\}$ by \cref{obs:asteroid-subwalks}.

Now let us show (b). Consider three cases.
If $c=u_0$, then by symmetry assume that $a=u_1,b=u_{k+1}$, and define 
\begin{alignat*}{12}
\cX_c&:= \cX &&~\circ~&& \cU_1, \\
\cY_c&:= \cX &&~\circ~&& \cW[u_{1},u_{k+1}], \\
\cZ_c&:= \cY &&~\circ~&&  \cU_0,
\end{alignat*}
where walks $\cX,\cY$ are given by statement (a) and $\cW[u_{1},u_{k+1}]$ is given by  \cref{obs:asteroid-subwalks}. Note that in case of asteroidal subgraph we have $\cX=\cX'$ and $\cY=\cY'$, and therefore $\cX,\cY$ avoid each other.

If $c=u_1$, then assume that $a=u_0,b=u_{k+1}$ and can observe that our auxiliary walks $\cA,\cB,\cC$ already satisfy the statement of the lemma. Indeed, is it sufficient to set
\begin{alignat*}{12}
\cX_c := \cA, \quad \cY_c := \cB, \quad \cZ_c := \cC.
\end{alignat*}
Finally, for $c=u_{k+1}$ and $a=u_0, b=u_1$, we can define
\begin{alignat*}{12}
\cX_c&:=  \cU_0	&&~\circ~&& \cU_0, \\
\cY_c&:=  \cU_0	&&~\circ~&& \cW_{0,1}, \\
\cZ_c&:= \cW[u_1,u_{k+1}]		&&~\circ~&& \cU_{k+1},
\end{alignat*}
where $\cW[u_1,u_{k+1}]$ is given by  \cref{obs:asteroid-subwalks}. This completes the proof.
\end{proof}
\subsection{Main ingredients}
In order to prove \cref{thm:hardness-bipartite} we will use several gadgets. 

\subsubsection{Building gadgets from walks.} First, let us show how we will use walks to create gadgets.
For a set $\DD = \{\cD_i\}_{i=1}^k$ of walks of equal length $\ell \geq 1$, let $P(\DD)$ be a path with $\ell+1$ vertices $p_1,\ldots,p_{l+1}$, such that the list of $p_i$ is $\DD^{(i)}$, i.e., the set of $i$-th vertices of walks in $\DD$.
The vertex $p_1$ will be called the \emph{input vertex} and $p_{\ell+1}$ will be called the \emph{output vertex}.

\begin{lemma}\label{lem:gadget-of-walks}
Let $\DD = \{\cD_i\}_{i=1}^k$ be a set of walks of equal length $\ell \geq 1$, such that $\cD_i$ is an $s_i$-$t_i$-walk.
Let $A,B$ be a partition of $\DD$ into two non-empty sets. Moreover, for $C \in \{A,B\}$, define $S(C) = \{s_i \colon \cD_i \in C\}$ and $T(C) = \{t_i \colon \cD_i \in C\}$. Suppose that $S(A) \cap S(B) = \emptyset$ and $T(A) \cap T(B) = \emptyset$, and every walk in $A$ avoids every walk in $B$. Then $P(\DD)$ with the input vertex $x$ and the output vertex $y$ has the following properties:
\begin{compactenum}[(a)]
\item $L(x) = S(A) \cup S(B)$ and $L(y) = T(A) \cup T(B)$,
\item for every $i \in [k]$ there is a list homomorphism $f_i \colon P(\DD) \to H$, such that $f_i(x) = s_i$ and $f_i(y)=t_i$,
\item for every $f \colon P(\DD) \to H$, if $f(x)\in S(A)$, then $f(y) \notin T(B)$.
\end{compactenum}
Furthermore, if every walk in $B$ avoids every walk in $A$, we additionally have 
\begin{compactenum}[(d)]
\item for every $f \colon P(\DD) \to H$, if $f(x)\in S(B)$, then $f(y) \notin T(A)$.
\end{compactenum}
\end{lemma}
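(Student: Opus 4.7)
The plan is to verify the four claims in order; (a) and (b) are essentially definitional, and the whole content sits in (c), with (d) following by symmetry.

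For (a) I would simply unfold definitions. The vertex $x=p_1$ has list $\DD^{(1)}$, which by the definition of $S(\cdot)$ equals $\{s_i \colon i\in[k]\}=S(A)\cup S(B)$; similarly $L(y) = \DD^{(\ell+1)} = T(A)\cup T(B)$.

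For (b) I would take $f_i(p_j)$ to be the $j$-th vertex of $\cD_i$. Lists are respected because this vertex lies in $\DD^{(j)} = L(p_j)$, and edges are preserved because consecutive vertices of a walk are adjacent in $H$.

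For (c) my plan is to argue by contradiction. Assume $f\colon P(\DD)\to H$ is a list homomorphism with $f(x)\in S(A)$ and $f(y)\in T(B)$. For every $j\in\{1,\ldots,\ell+1\}$ I would introduce the index sets
\[
A_j := \{i \colon \cD_i\in A \text{ and the $j$-th vertex of } \cD_i \text{ equals } f(p_j)\},
\]
and $B_j$ analogously. Since $f(p_j)\in L(p_j)=\DD^{(j)}$, at least one of $A_j,B_j$ is nonempty. The disjointness $S(A)\cap S(B)=\emptyset$ combined with $f(x)\in S(A)$ forces $A_1\neq\emptyset$ and $B_1=\emptyset$; symmetrically $A_{\ell+1}=\emptyset$ and $B_{\ell+1}\neq\emptyset$. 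Hence there exists a smallest $j^*\in\{2,\ldots,\ell+1\}$ with $A_{j^*}=\emptyset$; then necessarily $B_{j^*}\neq\emptyset$ and, by minimality, $A_{j^*-1}\neq\emptyset$. Choosing $a\in A_{j^*-1}$ and $b\in B_{j^*}$, the hypothesis that $\cD_a$ avoids $\cD_b$ applied at index $j^*-1$ gives that the $(j^*-1)$-th vertex of $\cD_a$ is non-adjacent to the $j^*$-th vertex of $\cD_b$ in $H$. But these two vertices are exactly $f(p_{j^*-1})$ and $f(p_{j^*})$, which must be adjacent since $f$ is a homomorphism and $p_{j^*-1}p_{j^*}$ is an edge of $P(\DD)$, yielding the contradiction.

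Part (d) follows by the same argument with the roles of $A$ and $B$ interchanged, using the additional assumption that every walk in $B$ avoids every walk in $A$. I do not foresee any serious obstacle; the main thing to be careful about is pinning down the ``switch'' index $j^*$ and matching it to the correct edge of $P(\DD)$ when invoking the avoiding condition. The disjointness hypotheses on $S(A),S(B)$ and on $T(A),T(B)$ are indispensable here: without them there would be no unambiguous ``side'' at the two endpoints, and the existence of $j^*$ would not be forced.
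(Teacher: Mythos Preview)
Your proposal is correct and follows essentially the same approach as the paper's proof. The only cosmetic difference is that the paper tracks the minimum index $i$ at which some walk in $B$ hits $f(p_i)$, whereas you track the minimum $j^*$ at which no walk in $A$ hits $f(p_{j^*})$; both choices pin down the same ``switch edge'' $p_{j^*-1}p_{j^*}$ and lead to the identical contradiction with the avoiding hypothesis.
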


\begin{proof}
For $i \in [k]$, let $d^i_j$ denote the $j$-th vertex of $\cD_i$, and denote consecutive vertices of $P(\DD)$ by $x=p_1,p_2,\ldots,p_{\ell+1}=y$.
The statement (a) follows directly from the definition of lists in $P(\DD)$.
For (b), consider any $i \in [k]$ and define $f_i(p_j):=d^i_j$ for $j\in [\ell+1]$. Observe that $f_i$ is indeed a list homomorphism $P(\DD) \to H$ since for every edge $p_jp_{j+1}$ of $P(\DD)$ it holds that $f_i(p_j)$ and $f_i(p_{j+1})$ are consecutive vertices $d^i_j$ and $d^i_{j+1}$ of the walk $\cD_i$ and thus they are adjacent in $H$. Moreover, clearly $f_i(x)=s_i$ and $f_i(y)=t_i$.

To show (c), suppose there exists a list homomorphism $f: P(\DD) \to H$ such that $f(p_1) \in S(A)$ and $f(p_{\ell+1}) \in T(B)$. Let $i \in [\ell+1]$ be the minimum integer such that there exists a walk $\cD_r$ in $B$ with $f(p_i)=d^r_i$. Note that it exists since $f(p_{\ell+1}) \in T(B)$. Moreover, $i>1$ since $f(p_1) \in S(A)$ and $S(A) \cap S(B) = \emptyset$. By minimality of $i$ we have that $f(p_{i-1})=d^s_{i-1}$ for some $\cD_s \in A$. Thus there is a walk $\cD_s \in A$ and a walk $\cD_r \in B$, such that $\cD_s$ does not avoid $\cD_r$, a contradiction. Similarly, we can show the property (d) by switching the roles of $A$ and $B$.
\end{proof}

For two gadgets $P = P(\DD)$ and $P'=P(\DD')$, such that the list of the output vertex of $P$ is the same as the list of the input vertex of $P'$, the \emph{composition} of $P$ and $P'$ is gadget $P''$ obtained by identifying the output vertex of $P$ and the input vertex of $P'$. The input and the output vertex of $P''$ are, respectively, the input vertex of $P$ and the output vertex of $P'$.

\subsubsection{Expressing basic relations.}
Let $H$ be an undecomposable, bipartite graph with an obstruction $\Ob$.
We will show that the structure of $\Ob$ is sufficiently rich to express some basic relations.

For a $k$-ary relation $\textrm{R}_k \subseteq V(H)^k$, by an $\textrm{R}_k$-gadget we mean a graph $F$ with $H$-lists $L$ and $k$ specified vertices $v_1,v_2, \ldots,v_k$, called \emph{interface}, such that for every $i \in [k]$ it holds that
\[
\{f(v_1)f(v_2)\ldots f(v_k) ~|~  f: (F,L) \to H \}  =\textrm{R}_k.
\]
In other words, the set of all possible colorings of the interface vertices that can be extended to a list $H$-coloring of the whole gadget is precisely the relation we are expressing. In the definition of an  $\textrm{R}_k$-gadget we do not insist that interface vertices are pairwise different.

Let $(\alpha, \beta) \in C(\Ob)$. To make the definitions more intuitive, let us assign logic values to vertices $\alpha,\beta$ in the following way: $\alpha$ will be interpreted as false, and $\beta$ will be interpreted as true.
First, we need to express a $k$-ary relation $\ork{k}=\{\alpha,\beta\}^k \setminus \{\alpha^k\}$ and a binary relation $\nand{2}=\{\alpha\alpha,\alpha\beta,\beta\alpha\}$.
Note that the relations $\ork{k}$ and $\nand{2}$ are symmetric with respect to interface vertices.

\begin{lemma}\label{lem:relationgadgets}
Let $H$ be a bipartite graph with an obstruction $\mathbb{O}$ and let $(\alpha,\beta) \in C(\Ob)$. For every $k \geq 2$ there exists an $\ork{k}$-gadget, and a $\nand{2}$-gadget.
\end{lemma}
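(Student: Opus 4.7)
The plan is to assemble both gadgets from the avoiding walks provided by \cref{obs:walks-between-corners}(a): fix walks $\cX, \cX'\colon \alpha\to\beta$ and $\cY, \cY'\colon \beta\to\alpha$ of a common length $L$, with $\cX$ avoiding $\cY$ and $\cY'$ avoiding $\cX'$. Each pair of avoiding walks naturally yields an implication gadget through \cref{lem:gadget-of-walks}: the path $P_1:=P(\{\cX,\cY\})$, with input $x$ and output $y$, satisfies $f(x)=\alpha\Rightarrow f(y)=\beta$ (apply part~(c) with $A=\{\cX\}$ and $B=\{\cY\}$), while the path $P_2:=P(\{\cX',\cY'\})$ analogously satisfies $f(x)=\beta\Rightarrow f(y)=\alpha$. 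By part~(b) the pairs $(\alpha,\beta)$ and $(\beta,\alpha)$ are realizable inside each such path.

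For the $\nand{2}$-gadget I take $P_2$ as the core and modify it so as to additionally realize the tuple $(\alpha,\alpha)$, which is the only tuple of $\nand{2}$ not covered by part~(b). Since we must still enforce $f(x)=\beta\Rightarrow f(y)=\alpha$, I enlarge $\DD$ by an extra walk $\cZ\colon \alpha \to \alpha$, obtained (after padding to a common length) from a composition such as $\cX'\circ\cY'$ and justified by \cref{obs:walks-composition}. I then partition the enlarged $\DD$ so that $\cZ$ and $\cX'$ lie on one side and $\cY'$ on the other, and verify the hypotheses of \cref{lem:gadget-of-walks}: the disjointness $S(A)\cap S(B)=\emptyset$, $T(A)\cap T(B)=\emptyset$ and the cross-avoidance of every walk in $A$ from every walk in $B$. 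The resulting path realizes precisely $\nand{2}$ on its two endpoints.

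For the $\ork{k}$-gadget, the base case $k=2$ is symmetric to $\nand{2}$: swapping the roles of $\alpha$ and $\beta$ (which is legitimate since the corner pair is unordered and, in the asteroidal case, by \cref{obs:asteroid-dual}) turns the construction into a gadget forbidding exactly $(\alpha,\alpha)$. For $k\ge 3$ I proceed by induction using the Boolean identity $\ork{k}(x_1,\dots,x_k)\equiv\ork{k-1}(x_1,\dots,x_{k-1})\vee x_k$: I glue an $\ork{k-1}$-gadget and an $\ork{2}$-gadget through an auxiliary interface vertex $z$ whose color is forced to coincide with the ``output'' of the inductive sub-gadget. The equality relation needed for the gluing is obtained by chaining $P_1$ with $P_2$, whose composition has the property that $f(x)=\alpha$ forces $f(y)=\alpha$ and $f(x)=\beta$ forces $f(y)=\beta$, i.e. $f(x)=f(y)$.

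The main technical obstacle will be verifying, when enlarging $\DD$ with the extra walk $\cZ$, that the hypotheses of \cref{lem:gadget-of-walks} still hold. Disjointness of $S(A)$ from $S(B)$ and of $T(A)$ from $T(B)$ is easily arranged, since $\cZ$ starts and ends at $\alpha$ and can be placed on the same side as the $\alpha$-terminated walks; but ensuring that every walk in $A$ avoids every walk in $B$ requires a careful construction of $\cZ$. For the cycle obstructions $C_6$ and $C_8$ this can be done by direct inspection of the small number of vertices. In the asteroidal case, one invokes the enriched walk structure from \cref{obs:walks-between-corners}(b) together with the non-adjacent subwalks granted by \cref{obs:asteroid-subwalks} to route $\cZ$ through vertices of $\Ob\setminus N[\alpha,\ovalp]$, which are guaranteed to preserve cross-avoidance with the walks of $B$.
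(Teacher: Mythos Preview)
Your inductive step for $\ork{k}$ cannot work, and this is the essential gap. You propose to build $\ork{k}$ from $\ork{k-1}$ and $\ork{2}$ (plus an equality gadget obtained by composing $P_1$ and $P_2$), with base case $\ork{2}$. But $\ork{3}$ is \emph{not} pp-definable from any collection of binary relations on $\{\alpha,\beta\}$: all binary Boolean relations are preserved by the ternary majority polymorphism, whereas $\ork{3}$ is not (apply majority coordinatewise to $(\beta,\alpha,\alpha),(\alpha,\beta,\alpha),(\alpha,\alpha,\beta)$ to get $(\alpha,\alpha,\alpha)\notin\ork{3}$). Since your gadgets $P_1,P_2$, their composition, $\ork{2}$, $\nand{2}$, $\rneq$, and equality are all binary, no amount of gluing them yields $\ork{3}$. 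This is precisely why the paper's proof introduces a \emph{third} value $\gamma$ (namely $w_3$ for cycles, $u_{k+1}$ for asteroids) and constructs $\ork{3}$ directly via the $R(c)$-gadgets, using \cref{obs:walks-between-corners}(b) in the asteroidal case; part~(a) alone is not enough.

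There are also two smaller issues. First, your $\nand{2}$ construction violates the hypotheses of \cref{lem:gadget-of-walks}: with $A=\{\cZ,\cX'\}$ and $B=\{\cY'\}$ you have $T(A)=\{\alpha,\beta\}$ and $T(B)=\{\alpha\}$, which are not disjoint; moreover the avoidance you are given is $\cY'$ avoids $\cX'$, not $\cX'$ avoids $\cY'$, so the cross-avoidance required for your partition fails. Second, your $\ork{2}$ argument by ``swapping $\alpha$ and $\beta$'' is not justified in the asteroidal case: \cref{obs:asteroid-dual} lets you swap the pair $(\alpha,\beta)$ with $(\alpha',\beta')$, but the asteroid is \emph{not} symmetric in $u_0$ and $u_1$ (only $u_0$ is non-adjacent to all the walks $\cW_{i,i+1}$ for $1\le i\le 2k-1$), so $(\beta,\alpha)\notin C(\Ob)$ in general. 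The paper instead obtains $\ork{2}$ from $\ork{3}$ by identifying two interface vertices, and $\nand{2}$ as the composition $\rneq\circ\ork{2}\circ\rneq$.
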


\begin{proof}
First, notice that for any $k \ge 3$, the relation $\ork k$ can be expressed using a binary relation $\rneq=\{\alpha \beta, \beta \alpha \}$ and ternary $\ork 3$.
Indeed, note that an $\ork{k}$-gadget with interface vertices $x_1,x_2, \ldots,x_k$ can be easily constructed by introducing an $\ork{k-1}$-gadget with interface vertices $x_1,x_2,\ldots,x_{k-2},y$, an $\ork{3}$-gadget with interface vertices $y',x_{k-1},x_k$, and a $\rneq$-gadget with interface vertices $y,y'$ (this corresponds to the textbook NP-hardness reduction from \textsc{Sat} to \textsc{3-Sat}~\cite[Sec. 3.1.1.]{Garey:1979:CIG:578533}).

Let $\cX,\cY,\cX',\cY'$ be the walks given by \cref{obs:walks-between-corners}.
Let $P(\{\cX,\cY\})$ be the path obtained by applying \cref{lem:gadget-of-walks} for $A = \{\cX\}$ and $B=\{\cY\}$, and 
let $P(\{\cX',\cY'\})$ be the path obtained by applying \cref{lem:gadget-of-walks} for $A = \{\cY'\}$ and $B=\{\cX'\}$.
It is straightforward to observe that the graph obtained by identifying the first vertex of $P(\{\cX,\cY\})$ with the first vertex of $P(\{\cX',\cY'\})$, and the last vertex of $P(\{\cX,\cY\})$ with the last vertex of $P(\{\cX',\cY'\})$, is a $\rneq$-gadget (whose interface are the identified vertices).
So, in order to prove the lemma, we need to build gadgets for binary relations $\ork{2}$, $\nand{2}$, and the ternary relation $\ork{3}$.

Observe that an $\ork{2}$-gadget can be obtained by identifying two of three interface vertices of an $\ork{3}$-gadget.
Thus it is sufficient to show how to construct an $\ork{3}$-gadget and a $\nand{2}$-gadget. Let us first focus on constructing $\ork{3}$.

By symmetry of $\Ob$, we can assume that $\alpha =w_1, \beta = w_5$ if $\Ob$ is either an induced 6-cycle or an induced 8-cycle, or $\alpha=u_0,\beta=u_1$ if $\Ob$ is an asteroidal subgraph. Set $\gamma=w_3$ in the former case, or $\gamma = u_{k+1}$ in the latter one.

The high-level idea is to construct, for every triple $a,b,c$, such that $\{a,b,c\} =  \{\alpha,\beta,\gamma\}$, an $R(c)$-gadget for the relation $R(c)=\{\alpha a, \alpha b, \beta \alpha, \beta \beta, \beta \gamma \}$. 
This gadget will be called $P_{c}$, and its interface vertices will be $x$ and $y$, where $L(x)=\{\alpha,\beta\}$ and  $L(y)=\{\alpha,\beta,\gamma\}$.

Suppose we have constructed graphs $P_{\alpha},P_{\beta}$, and $P_{\gamma}$. Let $x_{1}$, $x_{2}$, $x_{3}$, respectively, be their $x$-vertices.
Consider a graph $G$ obtained by identifying the $y$-vertices of $P_{\alpha}, P_{\beta}$, and $P_{\gamma}$ to a single vertex $y$.
The lists of vertices remain unchanged, note that all $y$-vertices have the same list $\{\alpha,\beta,\gamma\}$, so identifying the vertices does not cause any conflict here.
Obviously we have $L(x_1)=L(x_2)=L(x_3)=\{\alpha,\beta\}$.

Observe it is not possible to map all $x_1,x_2,x_3$ to $\alpha$. However, every triple of colors from $\{\alpha,\beta\}^3 \setminus \{\alpha\alpha\alpha\}$ might appear on vertices $x_1,x_2,x_3$ in some list homomorphism from $G$ to $H$. Thus the graph $G$ satisfies the definition of an $\ork{3}$-gadget, whose interface vertices are $x_1,x_2,x_3$.
So let us show how to construct $P_{c}$ for every $c \in \{\alpha,\beta,\gamma\}$.

If $\Ob$ is isomorphic to $C_6$ or $C_8$, then the construction is straightforward and it is shown on \cref{fig:cycle-ors} (the picture shows  $\ork{3}$-gadgets, with $y$-vertices of $P_{c}$'s already identified).
For the case if $\Ob$ is an asteroidal subgraph, the construction will be quite similar, but a bit more involved.

\begin{figure}
\centering{\begin{tikzpicture}[every node/.style={draw,circle,fill=white,inner sep=0pt,minimum size=8pt},every loop/.style={}]

\node[fill=black!20,label=right:\footnotesize{$\{w_1,w_5\}$}] (a) at (0,0) {};
\node[label=right:\footnotesize{$\{w_4,w_6\}$}] (b) at (0,1) {};
\node[label=right:\footnotesize{$\{w_1,w_3\}$}] (c) at (0,2) {};
\node[label=right:\footnotesize{$\{w_2,w_4\}$}] (d) at (0,3) {};
\node[fill=black!20,label=right:\footnotesize{$\{w_1,w_5\}$}] (e) at (2,0) {};
\node[label=right:\footnotesize{$\{w_4,w_6\}$}] (f) at (2,1) {};
\node[label=right:\footnotesize{$\{w_1,w_3,w_5\}$}] (g) at (2,4) {};
\node[label=right:\footnotesize{$\{w_2,w_4\}$}] (h) at (4,3) {};
\node[label=right:\footnotesize{$\{w_3,w_5\}$}] (i) at (4,2) {};
\node[label=right:\footnotesize{$\{w_4,w_6\}$}] (j) at (4,1) {};
\node[fill=black!20,label=right:\footnotesize{$\{w_1,w_5\}$}] (k) at (4,0) {};
\draw (a) -- (b) -- (c) -- (d) -- (g) -- (f) -- (e);
\draw (g) -- (h) -- (i) -- (j) -- (k);
\end{tikzpicture}\hskip 2cm
\begin{tikzpicture}[every node/.style={draw,circle,fill=white,inner sep=0pt,minimum size=8pt},every loop/.style={}]

\node[fill=black!20,label=right:\footnotesize{$\{w_1,w_5\}$}] (a) at (4,0) {};
\node[label=right:\footnotesize{$\{w_6,w_8\}$}] (b) at (4,1) {};
\node[label=right:\footnotesize{$\{w_5,w_7\}$}] (c) at (4,2) {};
\node[label=right:\footnotesize{$\{w_4,w_6,w_8\}$}] (d) at (4,3) {};
\node[fill=black!20,label=left:\footnotesize{$\{w_1,w_5\}$}] (e) at (2,0) {};
\node[label=left:\footnotesize{$\{w_4,w_8\}$}] (f) at (2,1) {};
\node[label=left:\footnotesize{$\{w_1,w_3\}$}] (g) at (2,2) {};
\node[label=left:\footnotesize{$\{w_2,w_4\}$}] (h) at (2,3) {};
\node[label=right:\footnotesize{$\{w_1,w_3,w_5\}$}] (i) at (2,6) {};
\node[fill=black!20,label=left:\footnotesize{$\{w_1,w_5\}$}] (j) at (0,0) {};
\node[label=left:\footnotesize{$\{w_4,w_8\}$}] (k) at (0,1) {};
\node[label=left:\footnotesize{$\{w_5,w_7\}$}] (l) at (0,2) {};
\node[label=left:\footnotesize{$\{w_4,w_6\}$}] (m) at (0,3) {};
\node[label=left:\footnotesize{$\{w_3,w_5\}$}] (n) at (0,4) {};
\node[label=left:\footnotesize{$\{w_2,w_4\}$}] (o) at (0,5) {};
\draw (a) -- (b) -- (c) -- (d) -- (i) -- (h) -- (g) -- (f) -- (e);
\draw (i) -- (o) -- (n) -- (m) -- (l) -- (k) -- (j);
\end{tikzpicture}}
\caption{An $\ork 3$-gadget for $\Ob \simeq C_6$ (left) and for $\Ob \simeq C_8$ (right). Recall that the consecutive vertices of $\Ob$ are denoted by $(w_1,w_2,\ldots)$. The sets next to vertices indicate lists. Interface vertices are marked gray.}\label{fig:cycle-ors}
\end{figure}
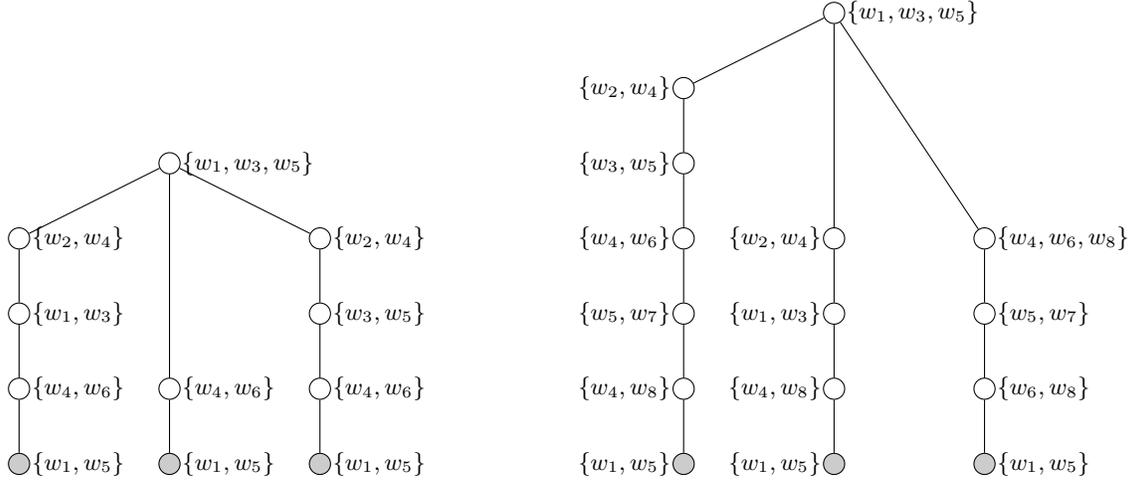

First, let us build an auxiliary gadget. Recall walks $\cW[u_1,u_{k+1}]$ and $\cW[u_{k+1},u_1]$ given by \cref{obs:asteroid-subwalks}, and consider the following walks of equal length:
\begin{alignat*}{12}
\cA&:= u_0,v_0,\ldots, u_0	&&~\circ~&& u_0,v_0,\ldots, u_0, \\
\cB_1&:= \cW[u_1,u_{k+1}]		&&~\circ~&& \cW[u_{k+1},u_1],    \\
\cB_2&:= \cW[u_1,u_{k+1}]		&&~\circ~&& u_{k+1},v_{k+1},\ldots,u_{k+1}.
\end{alignat*}
Note that $\cA$ is non-adjacent to $\cB_1,\cB_2$. Moreover, define 
\begin{alignat*}{12}
\cA_1&:=  u_0,v_0,\ldots, u_0, &&~\circ~&& u_0,v_0,\ldots, u_0, \\
\cA_2&:=  \cW_{1,0} &&~\circ~&& u_0,v_0,\ldots, u_0,\\
\cB&:=  u_{k+1},v_{k+1},\ldots,u_{k+1} &&~\circ~&& \cW[u_{k+1},u_1],
\end{alignat*}
and note that $\cA_1,\cA_2$ are non-adjacent to $\cB$. Thus, by \cref{lem:gadget-of-walks} we obtain that the graph $F$, which is a composition of $P(\{\cA,\cB_1,\cB_2\})$ and $P(\{\cA_1,\cA_2,\cB\})$, is a $\{u_0u_0,u_1u_0,u_1u_1 \}$-gadget.

Now, suppose we are given $a,b,c$, such that $\{a,b,c\} = \{u_0,u_1,u_{k+1}\}$.
Let $\cX_{c}: u_0 \to a$, $\cY_{c}: u_0 \to b$, and $\cZ_{c} : u_1 \to c$ be the walks given by \cref{obs:walks-between-corners}, recall that $\cX_c,\cY_c$ avoid $\cZ_c$ and $\cZ_c$ avoids $\cX_c,\cY_c$. Consider the graph $P(\{\cX_{c},\cY_{c},\cZ_{c}\})$, obtained by applying \cref{lem:gadget-of-walks} for $A = \{\cX_c,\cY_c\}$ and $B=\{\cZ_c\}$ and recall that it is a $\{u_0a,u_0b,u_1c\}$-gadget.
Now it is straightforward to observe that the composition of $F$ and $P(\{\cX_{c},\cY_{c},\cZ_{c}\})$ is an $R(c)$-gadget $P_c$ (see \cref{fig:r(c)-gadget}).
This concludes the construction of the $\ork{3}$-gadget.

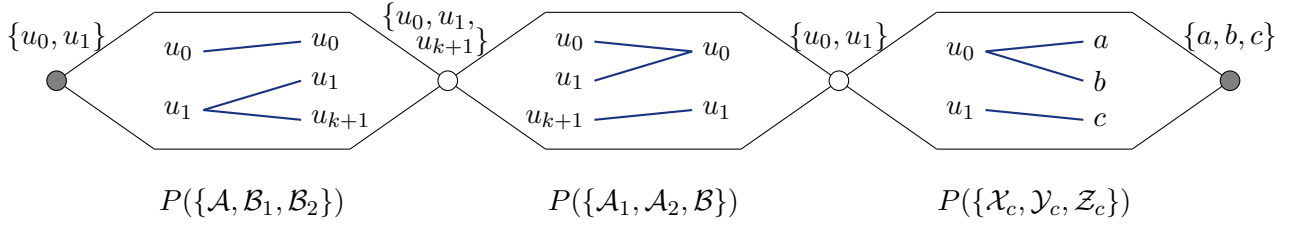
\begin{figure}
\centering{\begin{tikzpicture}[scale=1.3]
\foreach \i in {0,4,8}
{
\draw (\i,0)--(\i+1,0.7);
\draw (\i,0)--(\i+1,-0.7);
\draw (\i+1,0.7)--(\i+3,0.7);
\draw (\i+1,-0.7)--(\i+3,-0.7);
}
\foreach \i in {4,8,12}
{
\draw (\i,0)--(\i-1,0.7);
\draw (\i,0)--(\i-1,-0.7);
}
\draw (2,-1) node[below] {$P(\{\cA,\cB_1,\cB_2\})$};
\draw (6,-1) node[below] {$P(\{\cA_1,\cA_2,\cB\})$};
\draw (10,-1) node[below] {$P(\{\cX_c,\cY_c,\cZ_c\})$};
\draw[thick,color=blue] (1.5,0.3)--(2.5,0.4);
\draw[thick,color=blue] (1.5,-0.3)--(2.5,0);
\draw[thick,color=blue] (1.5,-0.3)--(2.5,-0.4);
\draw (1.5,0.3) node[left] {$u_0$};
\draw (1.5,-0.3) node[left] {$u_1$};
\draw (2.5,0.4) node[right] {$u_0$};
\draw (2.5,0) node[right] {$u_1$};
\draw (2.5,-0.4) node[right] {$u_{k+1}$};
\draw[thick,color=blue] (5.5,0.4)--(6.5,0.3);
\draw[thick,color=blue] (5.5,0)--(6.5,0.3);
\draw[thick,color=blue] (5.5,-0.4)--(6.5,-0.3);
\draw (5.5,0.4) node[left] {$u_0$};
\draw (5.5,0) node[left] {$u_1$};
\draw (5.5,-0.4) node[left] {$u_{k+1}$};
\draw (6.5,0.3) node[right] {$u_0$};
\draw (6.5,-0.3) node[right] {$u_1$};
\draw[thick,color=blue] (9.5,-0.3)--(10.5,-0.4);
\draw[thick,color=blue] (9.5,0.3)--(10.5,0);
\draw[thick,color=blue] (9.5,0.3)--(10.5,0.4);
\draw (9.5,-0.3) node[left] {$u_1$};
\draw (9.5,0.3) node[left] {$u_0$};
\draw (10.5,0.4) node[right] {$a$};
\draw (10.5,0) node[right] {$b$};
\draw (10.5,-0.4) node[right] {$c$};
\draw (0,0.2) node[above] {$\{u_0,u_1\}$};
\draw (3.8,0.4) node[above] {$\{u_0,u_1,$};
\draw (4.05,0.15) node[above] {$u_{k+1}\}$};
\draw (8,0.2) node[above] {$\{u_0,u_1\}$};
\draw (12,0.2) node[above] {$\{a,b,c\}$};
\draw[fill=white] (4,0) circle (0.1);
\draw[fill=white] (8,0) circle (0.1);
\draw[fill=gray] (0,0) circle (0.1);
\draw[fill=gray] (12,0) circle (0.1);
\end{tikzpicture}}
\caption{The $R(c)$-gadget $P_c$ as a composition of gadgets $P(\{ \cA,\cB_1,\cB_2\}), P(\{ \cA_1,\cA_2,\cB \})$, and $P(\{ \cX_c,\cY_c,\cZ_c\})$. Blue lines denote possible mappings of a pairs of colors that might appear in the input and the output vertex of each gadget.}
\label{fig:r(c)-gadget}
\end{figure}

Finally, observe that $\nand{2}$ can be obtained by composition of a $\rneq$-gadget, an $\ork{2}$-gadget (with one interface vertex as input and the other one as output), and a $\rneq$-gadget.
\end{proof}

\subsubsection{Distinguisher gadget.}
Now let us introduce another gadget, which will be the main tool used in our hardness proof.

\begin{definition}[Distinguisher]\label{def:distinguisher}
Let $S$ be an incomparable set in $H$ and let $(\alpha, \beta) \in \cC(\Ob)$, such that $\{\alpha,\beta\} \cup S$ is contained in one bipartition class of $H$.
Let $a,b \in S$.
A \emph{distinguisher} gadget is a graph $D_{a/b}$ with two specified vertices $x$ (called \emph{input}) and $y$ (called \emph{output}),
and $H$-lists $L$ such that:
\begin{compactenum}[(D1.)]
\item $L(x)=S$ and $L(y)=\{\alpha,\beta\}$,
\item there is a list homomorphism $\phi_a: (D_{a/b},L) \to H$, such that $\phi_a(x)=a$ and $\phi_a(y)=\alpha$,
\item there is a list homomorphism $\phi_b: (D_{a/b},L) \to H$, such that $\phi_b(x)=b$ and $\phi_b(y)=\beta$,
\item for any $c \in \cS \setminus \{a,b\}$ there is $\phi_c: (D_{a/b},L) \to H$, such that $\phi_c(x)=c$ and $\phi_c(y) \in \{\alpha,\beta\}$,
\item there is no list homomorphism $\phi: (D_{a/b},L) \to H$, such that $\phi(x)=a$ and $\phi(y)=\beta$.
\end{compactenum} 
\end{definition}

Distinguisher gadgets will be constructed using the following lemma, whose proof is postponed to \cref{sec:constructions}.

\begin{restatable*}{lemma}{lemWalksFromIncomp} \label{lem:walks-from-uncomp}
Let $H$ be a connected, bipartite, undecomposable graph, let $\Ob$ be an obstruction in $H$ and let $(\alpha,\beta) \in C(\Ob)$. Let $S$ be an incomparable set of $k\geq 2$ vertices of $H$, contained in the same bipartition class as $\alpha,\beta$.
Let $a$ and $b$ be two distinct vertices of $S$. For each $v \in S$ there exists a walk $\cD_v$, of length at least two, satisfying the following properties:
\begin{compactenum}[(1)]
\item for each $v \in S$, $\cD_v$ is a $v$-$\alpha$-walk or a $v$-$\beta$-walk, \label{prop:walks-lemma1}
\item $\cD_a$ is an $a$-$\alpha$-walk and $\cD_b$ is a $b$-$\beta$-walk, \label{prop:walks-lemma2}
\item for each $u,v \in S$, such that $\cD_u: u \to \alpha$ and $\cD_v: v \to \beta$, we have that $\cD_u$ avoids $\cD_v$.\label{prop:walks-lemma3}
\end{compactenum}
In particular, all walks have equal length.
\end{restatable*}

By \cref{lem:gadget-of-walks}, the existence of distinguisher gadgets follows directly from \cref{lem:walks-from-uncomp} applied for $A $ being the set of walks terminating at $\alpha$, and $B$ being the set of walks terminating at $\beta$.

\begin{corollary}\label{lem:distinguisher}
Let $H$ be an undecomposable bipartite graph with an obstruction $\Ob$ and let $(\alpha,\beta) \in C(\Ob)$. Let $S$ be an  incomparable set in $H$ contained in the same bipartition class as $\alpha,\beta$, such that $|S| \geq 2$. Then for every pair $(a,b)$ of distinct elements of $S$ there exists a distinguisher $D_{a/b}$. \qedhere
\end{corollary}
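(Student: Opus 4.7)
\medskip

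\noindent\textbf{Proof proposal.} The plan is to obtain the distinguisher $D_{a/b}$ by pasting together, via \cref{lem:gadget-of-walks}, the family of walks produced by \cref{lem:walks-from-uncomp}. First I would apply \cref{lem:walks-from-uncomp} to the given data $(H, \Ob, \alpha, \beta, S, a, b)$; this yields, for every $v \in S$, a walk $\cD_v$ of equal length $\ell \geq 2$, starting at $v$ and ending in $\{\alpha,\beta\}$, with $\cD_a : a \to \alpha$ and $\cD_b : b \to \beta$, and with the crucial avoidance property that whenever $\cD_u$ terminates at $\alpha$ and $\cD_v$ terminates at $\beta$, then $\cD_u$ avoids $\cD_v$.

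Next I would set $\DD := \{\cD_v : v \in S\}$ and split it into $A := \{\cD_v \in \DD : \cD_v \text{ ends at } \alpha\}$ and $B := \{\cD_v \in \DD : \cD_v \text{ ends at } \beta\}$. Since $\cD_a \in A$ and $\cD_b \in B$, both sets are nonempty. The sets of start-vertices $S(A)$ and $S(B)$ are disjoint because each $v \in S$ is the start of exactly one walk in $\DD$, and $T(A) = \{\alpha\}$, $T(B) = \{\beta\}$ are disjoint since $\alpha \neq \beta$. Property (3) of \cref{lem:walks-from-uncomp} is exactly the hypothesis that every walk in $A$ avoids every walk in $B$. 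Hence \cref{lem:gadget-of-walks} applies to the pair $(A, B)$.

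Let $D_{a/b} := P(\DD)$, with input $x$ and output $y$. Part (a) of \cref{lem:gadget-of-walks} gives $L(x) = S(A) \cup S(B) = S$ and $L(y) = T(A) \cup T(B) = \{\alpha,\beta\}$, verifying (D1). Part (b) yields, for each $v \in S$, a list homomorphism sending $x$ to $v$ and $y$ to the terminus of $\cD_v$; specialising to $v = a$ and $v = b$ gives $\phi_a$ and $\phi_b$ and thus (D2) and (D3), while for $v \in S \setminus \{a,b\}$ we obtain $\phi_v$ mapping $y$ into $\{\alpha,\beta\}$, which is (D4). Finally, part (c) says that any list homomorphism $\phi$ with $\phi(x) \in S(A)$ must have $\phi(y) \notin T(B) = \{\beta\}$; since $a \in S(A)$, this gives (D5).

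I expect no serious obstacle here, because all the combinatorial work is hidden inside \cref{lem:walks-from-uncomp}. The only thing to be careful about is bookkeeping: checking that the partition $\{A,B\}$ satisfies the disjointness hypotheses on starts and termini of \cref{lem:gadget-of-walks}, and matching the five distinguisher conditions of \cref{def:distinguisher} one-to-one with the four conclusions of \cref{lem:gadget-of-walks}. The genuinely hard content, namely the construction of the avoiding walks $\cD_v$, lies in \cref{lem:walks-from-uncomp} and is not addressed here.
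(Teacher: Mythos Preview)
Your proposal is correct and matches the paper's approach exactly: the paper states just before the corollary that ``the existence of distinguisher gadgets follows directly from \cref{lem:walks-from-uncomp} applied for $A$ being the set of walks terminating at $\alpha$, and $B$ being the set of walks terminating at $\beta$,'' and your argument is precisely a careful unpacking of that sentence. The bookkeeping you supply (disjointness of $S(A),S(B)$ and of $T(A),T(B)$, and the one-to-one matching of (D1)--(D5) with parts (a)--(c) of \cref{lem:gadget-of-walks}) is exactly what is needed and is correct.
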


\subsection{Hardness reduction}\label{sec:hardness-bipartite}
Suppose we are able to construct all gadgets mentioned above.
The following lemma shows how to construct the main gadget, which will encode the inequality relation on an incomparable set $S$.
Formally, by $\rneq(S) \subseteq S^2$ we denote the relation on $S^2$, such that $ss' \in \rneq(S)$ if and only if $s \neq s'$.


\inequality*

\begin{proof}
Let $\Ob$ be an obstruction in $H$ and consider $(\alpha,\beta) \in C(\Ob)$, such that $\alpha, \beta$ and the elements of $S$ are in the same bipartition class of $H$. We denote the vertices of $S$ by $v_1,v_2,\ldots,v_k$.
We will construct our gadget in three steps.

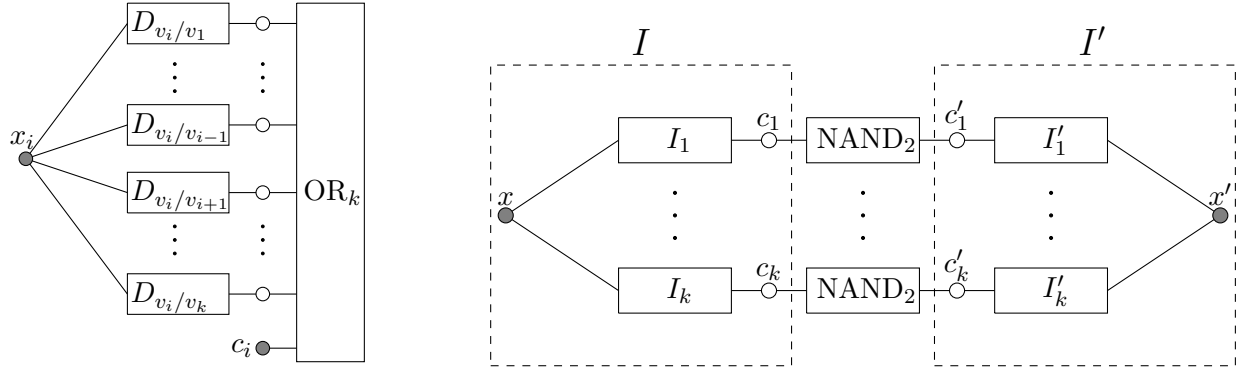
\begin{figure}
\centering{\begin{tikzpicture}[scale=0.9]
\draw (0,0)--(1.5,2);
\draw (0,0)--(1.5,-2);
\draw (0,0)--(1.5,0.5);
\draw (0,0)--(1.5,-0.5);
\draw[fill=gray] (0,0) circle (0.1);
\draw (-0.05,0) node[above] {$x_i$};
\draw (1.5,2.3)--(3,2.3)--(3,1.7)--(1.5,1.7)--(1.5,2.3);
\draw (1.5,-2.3)--(3,-2.3)--(3,-1.7)--(1.5,-1.7)--(1.5,-2.3);
\draw (1.5,-0.8)--(3,-0.8)--(3,-0.2)--(1.5,-0.2)--(1.5,-0.8);
\draw (1.5,0.8)--(3,0.8)--(3,0.2)--(1.5,0.2)--(1.5,0.8);
\draw (3,2)--(4,2);
\draw (3,-2)--(4,-2);
\draw (3,0.5)--(4,0.5);
\draw (3,-0.5)--(4,-0.5);
\draw[fill=white] (3.5,2) circle (0.1);
\draw[fill=white] (3.5,-2) circle (0.1);
\draw[fill=white] (3.5,0.5) circle (0.1);
\draw[fill=white] (3.5,-0.5) circle (0.1);
\draw (4,2.3)--(5,2.3)--(5,-3)--(4,-3)--(4,2.3);
\draw (3.5,-2.8)--(4,-2.8);
\draw[fill=gray] (3.5,-2.8) circle (0.1);
\draw (3.5,-2.8) node[left] {$c_i$};
\draw (1.4,1.95) node[right] {$D_{v_i/v_{1}}$};
\draw (1.4,-2.05) node[right] {$D_{v_i/v_{k}}$};
\draw (1.4,0.45) node[right] {$D_{v_i/v_{i-1}}$};
\draw (1.4,-0.55) node[right] {$D_{v_i/v_{i+1}}$};
\draw (3.95,-0.5) node[right] {$\ork k$};
\foreach \i in {1.4,1.2,1,-1,-1.2,-1.4}
{
\draw [fill=black] (2.2,\i) circle (0.02);
\draw [fill=black] (3.5,\i) circle (0.02);
}
\end{tikzpicture}} \hfill
\begin{tikzpicture}[scale=1]
\draw (0,0)--(1.5,1);
\draw (0,0)--(1.5,-1);
\draw (3,1)--(4,1);
\draw (5.5,1)--(6.5,1);
\draw (8,1)--(9.5,0);
\draw (3,-1)--(4,-1);
\draw (5.5,-1)--(6.5,-1);
\draw (8,-1)--(9.5,0);
\draw[fill=gray] (0,0) circle (0.1);
\draw[fill=gray] (9.5,0) circle (0.1);
\draw[fill=white] (3.5,1) circle (0.1);
\draw[fill=white] (3.5,-1) circle (0.1);
\draw[fill=white] (6,1) circle (0.1);
\draw[fill=white] (6,-1) circle (0.1);
\foreach \i in {0.3,0,-0.3}
{
\foreach \j in {2.25,4.75,7.25}
\draw [fill=black] (\j,\i) circle (0.02);
}
\draw (0,0) node[above] {$x$};
\draw (9.5,0) node[above] {$x'$};
\draw (3.5,1) node[above] {$c_1$};
\draw (6,1) node[above] {$c_1'$};
\draw (3.5,-1) node[above] {$c_k$};
\draw (6,-1) node[above] {$c_k'$};
\draw (1.8,2) node[above] {\Large{$I$}};
\draw (7.8,2) node[above] {\Large{$I'$}};
\draw (1.5,1.3)--(3,1.3)--(3,0.7)--(1.5,0.7)--(1.5,1.3);
\draw (1.5,-1.3)--(3,-1.3)--(3,-0.7)--(1.5,-0.7)--(1.5,-1.3);
\draw (4,1.3)--(5.5,1.3)--(5.5,0.7)--(4,0.7)--(4,1.3);
\draw (4,-1.3)--(5.5,-1.3)--(5.5,-0.7)--(4,-0.7)--(4,-1.3);
\draw (6.5,1.3)--(8,1.3)--(8,0.7)--(6.5,0.7)--(6.5,1.3);
\draw (6.5,-1.3)--(8,-1.3)--(8,-0.7)--(6.5,-0.7)--(6.5,-1.3);
\draw[dashed] (-0.2,-2) --++ (0,4) --++ (4,0) --++ (0,-4) --cycle;
\draw[dashed] (5.7,-2) --++ (0,4) --++ (4,0) --++ (0,-4) --cycle;
\draw (2,1) node[right] {$I_1$};
\draw (2,-1) node[right] {$I_k$};
\draw (7,1) node[right] {$I'_1$};
\draw (7,-1) node[right] {$I'_k$};
\draw (4,1) node[right] {$\nand 2$};
\draw (4,-1) node[right] {$\nand 2$};
\end{tikzpicture}
\caption{The graph $I_i$ with special vertices $x_i$ and $c_i$ (left) and $\rneq(S)$-gadget with interface vertices $x,x'$ (right).}
\label{fig:neq-gadget}
\end{figure}

\paragraph*{Step I.} In this step we will show that for every $i \in [k]$ we can construct a graph $I_i$ with two special vertices $x_i$ and $c_i$ and $H$-lists $L$, satisfying the following properties. 
\begin{compactitem}
\item $L(x_i) = S$ and $L(c_i)=\{\alpha,\beta\}$,
\item for every list homomorphism $\phi: (I_i,L) \to H$, if $\phi(x_i)=v_i$, then $\phi(c_i)=\beta$,
\item for every $j \neq i$ there exists a list homomorphism $\phi_j: (I_i,L) \to H$ such that $\phi_j(x_i)=v_j$ and $\phi_j(c_i)=\alpha$.
\end{compactitem}

Let us fix any $i \in [k]$.
For every $j \in [k] \setminus\{i\}$ we call \cref{lem:distinguisher} for $S,(\alpha,\beta$), and $a=v_i,b=v_j$ to construct a distinguisher gadget $D_{v_i/v_j}$ with the input $x_{i,j}$ and the output $y_{i,j}$.
We identify the vertices $x_{i,j}$, for all feasible $j$, to a single vertex $x_i$, and  introduce a new vertex $c_i$. Then we use \cref{lem:relationgadgets} to construct an $\ork k$ gadget and identify its $k$ interface vertices with distinct elements of $\{c_i\} \cup \{y_{i,j}\}_{j \neq i}$. This completes the construction of $I_i$ (see \cref{fig:neq-gadget}, left).

Recall that the properties of the $\ork{k}$-gadget imply that every list homomorphism from $I_i$ to $H$ maps at least one of vertices in $\{c_i\} \cup \{y_{i,j}\}_{j \neq i}$ to $\beta$.
By the property (D5.) in \cref{def:distinguisher},  for any $\phi : (I_i,L) \to H$ with  $\phi(x_i)=v_i$, and for every $j\neq i$ it holds that $\phi(y_{i,j})=\alpha$. This in turn forces $\phi(c_i)=\beta$.

On the other hand, by properties (D2.), (D3.), and (D4.), for any $j \neq i$ there is a homomorphism $\phi_j : (I_i,L) \to H$, such that $\phi_j(x_i)=v_j$ and $\phi(y_{i,j})=\beta$, which allows us to set $\phi_j(c_i)=\alpha$. So $I_i$ satisfies all desired properties.

\paragraph{Step II.} In this step we will construct a graph $I$ with $H$-lists $L$ and special vertices $x,c_1,\ldots,c_k$, satisfying the following properties.
\begin{compactitem}
\item $L(x)=S$ and $L(c_i)=\{\alpha,\beta\}$ for every $i\in[k]$,
\item for every list homomorphism $\phi: (I,L) \to H$, if $\phi(x)=v_i$, then $\phi(c_i)=\beta$.
\item for every $i \in [k]$ there exists a list homomorphism $\phi_i : (I,L) \to H$, such that $\phi_i(x)=v_i$ and $\phi_i(c_i)=\beta$, and $\phi_i(c_j)=\alpha$ for every $j \in [k] \setminus \{i\}$.
\end{compactitem}

The graph $I$ is constructed by introducing $k$ gadgets $I_1,\ldots,I_k$, and identifying the vertices $x_1,\ldots,x_k$ into a single vertex $x$ (see \cref{fig:neq-gadget}, right). The desired properties of $I$ follow directly from properties of $I_i$'s.

\paragraph{Step III.}
Finally, we can construct a $\rneq(S)$-gadget.
We introduce two copies of the gadget from the previous step, call them $I$ and $I'$ (we will use primes to denote the appropriate vertices of $I'$). For each $i \in [k]$, we introduce a $\nand{2}$-gadget and identify its interface vertices with vertices $c_i$ and $c'_i$ (see \cref{fig:neq-gadget}, right). Let us call such constructed graph $F$. We claim that $F$ is a $\rneq(S)$-gadget, whose interface vertices are $x$ and $x'$.

Clearly, $L(x)=L(x')=S$. Suppose that $\phi : (F,L) \to H$ is a list homomorphism such that $\phi(x)=\phi(x')=v_i$. Then, by definition of $I$, we have $\phi(c_i)=\phi(c_i')=\beta$, but this is impossible due to the properties of the $\nand{2}$-gadget joining $c_i$ and $c'_i$.

On the other hand, let us choose any distinct $v_i,v_j \in S$. We can color $I$ according to the homomorphism $\phi_i$, and $I'$ according to the homomorphism $\phi_j$ (both $\phi_i$ and $\phi_j$ are defined in Step II). In particular this means that $x$ is mapped to $v_i$, $x'$ is mapped to $v_j$,  $c_i$ and $c'_j$ are mapped to $\beta$, and all other vertices in $\{c_1,\ldots,c_k\} \cup \{c'_1,\ldots,c'_k\}$ are mapped to $\alpha$. Since $i \neq j$, by the definition of a $\nand{2}$-gadget, we can extend such defined mapping to all vertices of $F$. This completes the proof of the lemma.
\end{proof}

Now, equipped with \cref{lem:edge-gadget}, we can easily prove \cref{thm:hardness-bipartite}.

\begin{proof}[Proof of \cref{thm:hardness-bipartite}]
Recall that $H$ is an undecomposable bipartite graph, whose complement is not a circular-arc graph.
Let $S$ be the largest incomparable set in $H$, contained in one bipartition class. Let $k = |S|$, i.e., $k = i(H)$. 

Observe that since the complement of $H$ is not a circular-arc graph, we have that $k \ge 3$.
Indeed, recall that $H$ contains an obstruction, which is either an induced $C_6$, an induced $C_8$, or an asteroidal subgraph. Observe that all vertices from one bipartition class of $C_6$ or $C_8$ form an incomparable set of size at least 3. On the other hand, recall that a special edge asteroid contains at least three independent edges, so their appropriate endvertices form the desired incomparable set.

We reduce from $k \coloring$, let $G$ be an instance. Clearly we can assume that $G$ is connected and has at least 3 vertices. We will construct a graph $G^*$ with $H$-lists $L$ and the following properties:
\begin{compactenum}[(a)]
\item $(G^*,L) \to H$ if and only if $G$ is $k$-colorable,
\item the number of vertices of $G^*$ is at most $g(H) \cdot |E(G)|$ for some function $g$ of $H$,
\item the pathwidth of $G^*$ is at most $g(H)+\pw{G}$,
\item $G^*$ can be constructed in time $(|V(G)|)^{\Oh(1)} \cdot g'(H)$ for some function $g'$.
\end{compactenum}

Observe that this will be sufficient to prove the theorem. Indeed, suppose that for some $\epsilon >0$ we can solve $\lhomo{H}$ in time $\Oh^*((k-\epsilon)^t)$ on instances of pathwidth $t$.
Let us observe that applying this algorithm to $G^*$ gives an algorithm solving the $k \coloring$ problem on $G$ in time
\[
(k-\epsilon)^{\pw{G^*}} \cdot |V(G^*)|^{\Oh(1)} \leq (k-\epsilon)^{\pw{G} + g(H)} \cdot \left( g(H) \cdot |E(G)| \right)^{\Oh(1)} = (k-\epsilon)^{\pw{G}} \cdot |V(G)|^{\Oh(1)},
\]
where the last step follows since $|H|$ is a constant. Recall that by \cref{thm:LMS-pw}, the existence of such an algorithm for $k \coloring$ contradicts the SETH.

We start the construction of $G^*$ with the vertex set of $G$. The lists of these vertices are set to $S$.
Then, for each edge $uv$ of $G$, we introduce a copy $F_{uv}$ of the $\rneq(S)$-gadget introduced in \cref{lem:edge-gadget}. We identify the interface vertices of this gadget with $u$ and $v$, respectively. This completes the construction of $G^*$. Let us show that it satisfies the properties (a)--(d).

Note that (a) follows directly from \cref{lem:edge-gadget}. Indeed, consider an edge $uv$ of $G$.
On one hand, for every list homomorphism $f : (G^*,L) \to H$ we have that $f(u) \neq f(v)$.
On the other hand, mapping $u$ and $v$ to any distinct vertices from $S$ can be extended to a homomorphism of the whole graph $F_{uv}$.

To show (b), recall that the number of vertices of each $F_{uv}$ depends only on $H$, let it be $g(H)$.
Every original vertex of $G$ belongs to some gadget in $G^*$, so $G^*$ contains at most $g(H) \cdot |E(G)|$ vertices.

Next, consider a path decomposition $\cT$ of $G$ of width $\pw{G}$, let the consecutive bags be $X_1,X_2,\ldots,X_\ell$.
We extend $\cT$ to a path decomposition $\cT^*$ of $G^*$ as follows: for every edge $uv$ in $G$ we choose one bag $X_{i}$ such that $u,v \in X_{i}$, and we add a new bag $X'_{i} = X_{i} \cup V(F)$, which becomes the immediate successor of $X_{i}$. We repeat this for every edge, making sure that for $X_{i}$ we can only choose the original bags coming from $\cT$.
Note that it might happen that we will insert several new bags in a row, if the same $X_i$ was chosen for different edges, but this is not a problem.
Is it straightforward to observe that $\cT^*$ is a path decomposition of $G^*$, and the width of $\cT^*$ is at most $g(H) + \pw{G}$. This proves (c).

Finally, it is straightforward to observe that the construction of $G^*$ was performed in time polynomial in $G$ (recall that we treat $H$ as a constant-size graph).
\end{proof}

\subsection{Technical details of \cref{lem:walks-from-uncomp}}\label{sec:constructions}
In this section we show how to construct distinguisher gadgets, which were the main technical tool used in the proof of \cref{thm:hardness-bipartite}.

\subsubsection{Constructing avoiding walks from incomparable vertices to a specified vertex.}

Intuitively, we aim to show that for any two incomparable vertices $s,v$, and a third vertex $t$, we can either construct walks from $s$ to $t$ and  from $v$ to $v$, or walks from $v$ to $t$  and from $s$ to $s$, which satisfy certain avoiding conditions. However, due to some corner cases, the actual statement of the lemma is much more complicated.

\begin{lemma}\label{lem:two-walks-near}
Let $H$ be a bipartite graph with bipartition classes $X$ and $Y$. Let $s,v \in X$ be incomparable vertices and let $s' \in N(s) \setminus N(v)$, $v' \in N(v) \setminus N(s)$. Let $S \subseteq N(s,s',v,v')$ be non-empty. Let $U$ be the set of vertices reachable from $\{s,v\}$ in $H - S$. We also assume that $H$ has no decomposition $(D,N,R)$ such that $\{s,v,s',v'\} \subseteq D$ and $S \subseteq N$. Then there is $y \in S$ and $x \in U$ (both in the same bipartition class) and two pairs of walks of length at least one: 
\begin{compactenum}
\item $\cA, \cA': s \to y$ and $\cB, \cB': v \to x$, or \label{two-walks-near-statement1}
\item $\cA, \cA': v \to y$ and $\cB, \cB': s \to x$,\label{two-walks-near-statement2}
\end{compactenum}
such that $\cA$ avoids $\cB$, $\cB'$ avoids $\cA'$. Moreover, for every $i$ it holds that $\{\cA,\cB\}^{(i)} \not\subseteq S$ and all four walks are entirely contained in $S \cup U$.

Furthermore one of the following holds:
\begin{compactenum}[a)]
\item $x \in \{v,v'\}$ in case  (\ref{two-walks-near-statement1}.) and $x \in \{s,s'\}$ in case (\ref{two-walks-near-statement2}.), or \label{two-walks-near-additional-a}
\item walks $\cB, \cB'$ are entirely contained in $U$.\label{two-walks-near-additional-b}
\end{compactenum}
\end{lemma}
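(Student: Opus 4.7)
The plan is to exploit the non-existence of a bipartite decomposition with $\{s,v,s',v'\}\subseteq D$ and $S\subseteq N$. The natural candidate is $(D_0,N_0,R_0) := (U,\,S,\,V(H)\setminus(U\cup S))$: by the definition of $U$ as the set of vertices reachable from $\{s,v\}$ in $H-S$, the set $N_0$ separates $D_0$ from $R_0$, so condition~(\ref{it:bipdecomp-separator}) of \Cref{def:bipartite-decomposition} holds, and since $s,v\in D_0\cap X$, condition~(\ref{it:bipdecomp-geq2}) also holds. Provided $s',v'\in D_0$, the hypothesis then forces either that $S$ fails to induce a biclique (condition~(\ref{it:bipdecomp-biclique})) or that $U$ fails to be bipartite-complete to $S$ (condition~(\ref{it:bipdecomp-complete})), and one of these structural defects will be the source of the required walks.

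I would first dispose of the boundary situation when $s'\in S$ or $v'\in S$. In either case one may pick $y\in\{s',v'\}\cap S$ and produce walks of length one directly: if $s'\in S$, set $\cA=\cA'=s,s'$ and $\cB=\cB'=v,v'$, giving $x=v'\in U$, which lands in alternative~(a) of the lemma; the required avoidance conditions reduce exactly to $s'\notin N(v)$ and $v'\notin N(s)$, the defining properties of $s'$ and $v'$. The symmetric case $v'\in S$ is handled by swapping the roles of $s$ and $v$ and landing in option~(\ref{two-walks-near-statement2}.). From here on I may assume $\{s,s',v,v'\}\subseteq D_0$, so the dichotomy above applies.

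Case~A (no biclique): pick non-adjacent $y\in S_Y$ and $z\in S_X$ (or the analogous pair). Since $S\subseteq N(s,s',v,v')$, the vertex $y$ is adjacent to some element of $\{s,v\}$ and $z$ is adjacent to some element of $\{s',v'\}$, and a short case analysis on the patterns of adjacency (using the incomparability witnessed by $s',v'$) yields length-two walks $\cA,\cA':s\to y$ through $s'$ (or directly via $sy\in E(H)$) and $\cB,\cB':v\to x$ through $z$ and then to a neighbor of $z$ in $U$, or staying entirely in $U$ via $v'$; the non-edge $yz$ provides the avoidance between $\cA$ and $\cB$ (and symmetrically for the primed pair), while $\{\cA,\cB\}^{(i)}\not\subseteq S$ follows because the $v$-walk is routed through $U$ whenever the $s$-walk is inside $S$. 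Case~B (no bipartite-completeness): pick $u\in U$, $s_0\in S$ in opposite classes with $us_0\notin E(H)$, take a shortest walk inside $H[U]$ from $\{s,v\}$ to $u$ and append a final edge to $s_0$, obtaining the $S$-terminating walk; the companion walk from the other of $s,v$ is built by staying inside $U$ for the same number of steps, using the short closed walks $s,s',s$ and $v,v',v$ to equalize lengths without leaving $U$, which automatically gives alternative~(b).

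The main obstacle I anticipate is the bookkeeping needed to simultaneously guarantee equal lengths of the four walks, the symmetric avoidance conditions ($\cA$ avoiding $\cB$ \emph{and} $\cB'$ avoiding $\cA'$), the containment $\{\cA,\cB\}^{(i)}\not\subseteq S$ for every index $i$, and one of the additional alternatives (a) or (b). I expect to resolve this by routing the companion walk through length-two ``stalls'' $s,s',s$ and $v,v',v$, which stay in $U$, absorb any parity or length discrepancy, and cannot simultaneously coincide with the $S$-bound walk at any intermediate position.
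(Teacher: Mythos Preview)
Your plan has a genuine gap in Case~B, and it is the heart of the lemma.

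First a minor point: your ``boundary situation'' $s'\in S$ or $v'\in S$ is vacuous. In the paper's notation $N(s,s',v,v')$ means $N(\{s,s',v,v'\})\setminus\{s,s',v,v'\}$, so $S\subseteq N(s,s',v,v')$ already forces $s',v'\notin S$ and hence $s',v'\in U$.

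Now the real issue. In Case~B you write: ``pick $u\in U$, $s_0\in S$ in opposite classes with $us_0\notin E(H)$, take a shortest walk inside $H[U]$ from $\{s,v\}$ to $u$ and append a final edge to $s_0$.'' But $u$ and $s_0$ are \emph{non}-adjacent, so there is no such final edge; the walk you describe is not a walk. Even under a charitable reinterpretation (say, route one walk through $U$ to a neighbour of some $y\in S$ and let the companion walk stall on $v,v',v,\ldots$), the avoidance conditions fail: a path through $U$ from $s$ to a vertex far from $\{s,s',v,v'\}$ will in general pass through vertices adjacent to $v$ or $v'$, so $\cA$ need not avoid $\cB$ and $\cB'$ need not avoid $\cA'$. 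The stalls $s,s',s$ and $v,v',v$ help with length and parity, but they give no control over adjacencies between the wandering walk and $\{v,v'\}$.

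This is exactly where the paper's proof does something substantially harder. When $S$ is bipartite-complete to $N[s,s',v,v']$ (which subsumes your Case~A and the easy part of your Case~B), the paper argues by a minimal-counterexample induction on $|V(H)|$: it finds a \emph{new} minimal separator $\widetilde{S}\subseteq N(s,s',v,v')\cap U$ between $\{s,s',v,v'\}$ and the set $K$ of vertices of $U$ witnessing the failure of bipartite-completeness to $S$, verifies that $H[U]$ with $\widetilde{S}$ again satisfies the hypotheses of the lemma, applies the lemma inductively to get walks to $\widetilde{y}\in\widetilde{S}$ and $\widetilde{x}\in\widetilde{U}$, and then carefully extends these walks along a path from $\widetilde{y}$ through $U\setminus(\widetilde{S}\cup\widetilde{U})$ to a vertex $w\in K$ and finally to $s_w\in S$. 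The avoidance at every step is guaranteed precisely because the extension stays outside $\widetilde{U}$ while the companion walk stays inside it. Your direct construction does not reproduce this separation, and I do not see how to repair it without reintroducing the recursion.
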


\begin{proof} We split the proof into three cases.


\paragraph{Case 1: There exists $a \in N(s,v)$  and $b \in S_X$ such that $ab \notin E(H)$.} By symmetry assume that $a \in N(s)$. If $b \in N(v')$, we set $x=s, y=b$, and define $\cA=\cA'=v,v',b$ and $\cB=s,s',s, \cB'=s,a,s$, obtaining walks as in the statement (\ref{two-walks-near-statement2}.\ref{two-walks-near-additional-a}). On the other hand, if $b \notin N(v')$, then necessarily $b \in N(s')$. Then we take $x=v, y=b$, and define $\cA=\cA'=s,s',b,$ and $\cB=\cB'=v,v',v$, which satisfy statement (\ref{two-walks-near-statement1}.\ref{two-walks-near-additional-a}). 


\paragraph{Case 2: There exists $a \in S_Y$ and $b \in N(s',v')$ such that $ab \notin E(H)$.}
By symmetry assume that $b \in N(s')$. If $a \in N(v)$, then we set $x=s', y=a$, and define $\cA=v,a$, $\cB = s,s'$, $\cA' = v,v',v,a$, and $\cB' = s,s',b,s'$, obtaining statement (\ref{two-walks-near-statement2}.\ref{two-walks-near-additional-a}).
On the other hand, if $a \notin N(v)$, then $a \in N(s)$. In this case we take $x=v', y=a$, and define $\cA=\cA'=s,a$ and $\cB=\cB'=v,v'$, obtaining statement (\ref{two-walks-near-statement1}.\ref{two-walks-near-additional-a}).


\paragraph{Case 3: $S$ is bipartite-complete to $N[s,s',v,v']$.} Note that this implies that $H[S]$ is a biclique.

Let $K_{Y}$ (resp. $K_X$) be the set of all vertices $w$ in $U_{Y}$ (resp. $U_X$) such that there exists $s_w \in S_{X}$ (resp. $S_Y$) such that $ws_w \notin E(H)$. Since $H$ has no decomposition $(D,N,R)$ such that $s,v,s',v' \in D$ and $S \subseteq N$, we have that $K := K_{X} \cup K_{Y} \neq \emptyset$.
Otherwise, $U$ is bipartite-complete to $S$ and there are no edges between $U$ and $H-(S\cup U)$, so $\bigl(U,S, V(H) \setminus (S \cup U)\bigr)$ is a decomposition of $H$, a contradiction. Moreover, since we are in Case 3, we observe that $K \cap N[s,v,s',v'] = \emptyset$. Let $\widetilde{S}$ be a minimal $\{s,s',v,v'\}$-$K$-separator in $H[U]$, contained in $N(s,v,s',v')$, and let $\widetilde{U}$ be the set of vertices reachable from $\{s,s',v,v'\}$ in $H[U]-\widetilde{S}$. Note that $\widetilde{U} \cup \widetilde{S}$ is bipartite-complete to $S$.

For contradiction, suppose that the lemma does not hold and $H$ is a counterexample with the minimum number of vertices, i.e., for any graph $H'$ with strictly fewer vertices and any choice of four vertices and a set, satisfying the assumptions of the lemma, the desired walks exist.

\begin{claim}
The graph $H[U]$, vertices $s,s',v,v'$, and the set $\widetilde{S}$ satisfy the assumptions of the lemma (where the role of $S$ is played by $\widetilde{S}$).
\end{claim}
\begin{claimproof}
Note that $s,s',v,v' \in U$, which in particular implies that the vertices $s,v$ are incomparable in $H[U]$.
Also, the set $\widetilde{S}$ is contained in $N(s,s',v,v') \cap U$. 

Suppose that $H[U]$ has a decomposition $(D,N,R)$, where $\{s,s',v,v'\} \subseteq D$ and $\widetilde{S} \subseteq N$.
Since $N$ is disjoint with $K$, we know that $N$ is bipartite-complete to $S$, implying that $S \cup N$ is a biclique.

Hence, $D$ is bipartite-complete to $S \cup N$ and there are no more vertices in $H$ adjacent to $D$, so we obtained a decomposition $(D,N \cup S, V(H) \setminus (D \cup N \cup S))$ of $H$, such that $s,s',v,v' \in D$ and $S \subseteq (N \cup S)$, a contradiction. 
\end{claimproof}

Since $S \neq \emptyset$ and $S \subseteq V(H) - U$, we obtain that $H[U]$ has strictly fewer vertices than $H$. So, by minimality of $H$, there exist $\widetilde{x} \in \widetilde{U}, \widetilde{y} \in \widetilde{S}$  and one of the following quadruples of walks:
\begin{alignat*}{4}
\cC,\cC': s \to \widetilde{y} & \quad \text{ and } \quad & \cD,\cD': v \to \widetilde{x} & \quad \text{ (statement (\ref{two-walks-near-statement1}.)), \quad or}\\
\cC,\cC': v \to \widetilde{y} & \quad \text{ and } \quad & \cD,\cD': s \to \widetilde{x} & \quad \text{ (statement (\ref{two-walks-near-statement2}.))},
\end{alignat*}
where $\cC$ avoids $\cD$ and $\cD'$ avoids $\cC'$. 

Let us consider the first situation (statement (\ref{two-walks-near-statement1}.)), i.e., that we obtained walks $\cC,\cC': s \to \widetilde{y}$ and $\cD,\cD': v \to \widetilde{x}$ (the other one is symmetric).

By minimality of $\widetilde{S}$ there exists a walk $\cP: \widetilde{y} \to w$, where $w \in K$ and no vertex from $\cP$ except $\widetilde{y}$ is in $\widetilde{S}$. This means that no vertex from $\cP$, except for the first vertex, is adjacent to any vertex from $\widetilde{U}$, so, in particular, $w$ is non-adjacent to $\widetilde{x}$. 
Observe that since $K \cap \widetilde{S} = \emptyset$ we must have $|\cP| \ge 1$.
Let $w^\bullet$ be the last vertex of $\mathcal{P}\noend$ (recall that by $\cP\noend$ we denote the walk $\cP$ with the last vertex removed); note that we may have $w^{\bullet}=\widetilde{y}$.

Let $\widetilde{x}^\bullet$ be the last vertex of $(\cD')\noend$. By \cref{obs:priv-neighbours} applied to $\cD',\cC'$ we know that $\widetilde{x}^\bullet$ is adjacent to $\widetilde{x}$ and non-adjacent to $\widetilde{y}$. Moreover, since all vertices of $\cD'$ are in $\widetilde{U} \cup \widetilde{S}$, we note that $\widetilde{x}^\bullet$ is adjacent to all vertices from the appropriate bipartition class of $S$.

Now we will separately consider two subcases, depending whether the call for $H[U]$ resulted in the statement (\ref{two-walks-near-statement1}.\ref{two-walks-near-additional-a}.) or (\ref{two-walks-near-statement1}.\ref{two-walks-near-additional-b}.)

\paragraph*{Case 3~a): the call for $H[U]$ resulted in the statement (\ref{two-walks-near-statement1}.\ref{two-walks-near-additional-a}.).} This means that $\widetilde{x} \in \{v,v'\}$ and $\{\cC,\cD\}^{(i)} \not\subseteq S$ for every $i$.

If $|\cP|=1$, i.e., $\cP = \widetilde{y},w$, then we set $x = \widetilde{y}$, $y = s_w$, and define: 
\begin{equation}\label{eq:case3-three}
\begin{alignedat}{6}
\mathcal{A}&=\mathcal{D'} &&\circ \widetilde{x},\widetilde{x}^\bullet,s_w &\quad\quad&  \mathcal{A'}&&=\mathcal{D}&&\circ \widetilde{x},\widetilde{x}^\bullet,s_w\\
\mathcal{B}&=\mathcal{C'} &&\circ \widetilde{y},w,\widetilde{y} &\quad\quad&  \mathcal{B'}&&=\mathcal{C}&&\circ \widetilde{y},w,\widetilde{y}.
\end{alignedat}
\end{equation}
Note that $s_w \in S$ is adjacent to $\widetilde{x}^\bullet \in \widetilde{U} \cup \widetilde{S}$, as in this case $\widetilde{y},\widetilde{x},s_w$ are in one bipartition class, while $\widetilde{x}^\bullet$ is in the other one, and $\widetilde{U} \cup \widetilde{S}$ is bipartite-complete to $S$. Observe that the walks satisfy the condition in the statement (\ref{two-walks-near-statement2}.\ref{two-walks-near-additional-b}).

So let us assume that $|\mathcal{P}|>1$, i.e., $w^\bullet \neq \widetilde{y}$. Define $\widetilde{x}'$ in a way that $\{\widetilde{x},\widetilde{x}'\} = \{v,v'\}$. We define walks
\begin{alignat*}{5}
&& \mathcal{R'}=~&& \widetilde{x},\widetilde{x}^\bullet,\widetilde{x} &~\circ~&& \widetilde{x}, \widetilde{x}',\widetilde{x},\ldots,\widetilde{x}'' \\
&& \mathcal{R}=~&& \widetilde{x},\widetilde{x}',\widetilde{x} &~\circ~&& \widetilde{x}, \widetilde{x}',\widetilde{x},\ldots,\widetilde{x}'',
\end{alignat*}
such that $|\cP|=|\cR|=|\cR'|$, where $\widetilde{x}''$ is either $\widetilde{x}$ or $\widetilde{x}'$ (i.e., either $v$ or $v'$), depending on the parity of $\cP$. Note that $\widetilde{x}''$ is in the same bipartition class as $w$.
We observe that $\cP$ avoids $\cR'$: recall that $\widetilde{y}$ is non-adjacent to $\widetilde{x}^\bullet$, and all other vertices of $\cP$ are in $U - (\widetilde{S} \cup \widetilde{U})$, and thus they are non-adjacent to $\{v,v'\}=\{\widetilde{x},\widetilde{x}'\}$.
Moreover, the latter implies that $\cR$ avoids $\cP$. We set $x=w^\bullet$, $y=s_w$ and:

\begin{equation}\label{eq:case3-two}
\begin{alignedat}{6}
\mathcal{A}&=\mathcal{D'} &&\circ \mathcal{R} \circ \widetilde{x}'',s_w & \quad\quad &  \mathcal{A'}=&&\mathcal{D} &&\circ \mathcal{R'} &&\circ \widetilde{x}'',s_w\\
\mathcal{B}&=\mathcal{C'} &&\circ \mathcal{P} \circ w,w^\bullet & \quad\quad  & \mathcal{B'}=&&\mathcal{C} &&\circ \mathcal{P} &&\circ w,w^\bullet.
\end{alignedat}
\end{equation}
Recall that $\widetilde{x}'' \in \{v,v'\} \subseteq \widetilde{U}$ and as $S$ is bipartite-complete to $\widetilde{U}$, we have that $\widetilde{x}''$ is adjacent to $s_w \in S$. Moreover, we observe that $\cA$ avoids $\cB$ and $\cB'$ avoids $\cA'$, as $w^\bullet$ is in $U \setminus (\widetilde{S} \cup \widetilde{U})$ and thus is non-adjacent to $\widetilde{x}'' \in \widetilde{U}$.
It is straightforward to verify that the constructed walks satisfy the statement (\ref{two-walks-near-statement2}.\ref{two-walks-near-additional-b}).

\paragraph*{Case 3~b): the call for $H[U]$ resulted in the statement (\ref{two-walks-near-statement1}.\ref{two-walks-near-additional-b}.).} 

If $\widetilde{x} \notin \{v,v'\}$, then $\mathcal{D,D'}$ are entirely contained in $\widetilde{U}$, in particular $\widetilde{x},\widetilde{x}^\bullet \in \widetilde{U}$. We define $\mathcal{R}=\widetilde{x},\widetilde{x}^\bullet,\widetilde{x},\ldots,\widetilde{x}''$ so that $|\mathcal{R}|=|\mathcal{P}|$, where $\widetilde{x}''$ is either $\widetilde{x}$ or $\widetilde{x}^\bullet$, depending on the parity of $|\cP|$ . Note that $\cP$ and $\cR$ avoid each other, as $\widetilde{y}$ is non-adjacent to $\widetilde{x}^\bullet$, and no vertex from $\cP$, except for $\widetilde{y}$, is adjacent to any vertex of $\widetilde{U}$.

We set $x=w^\bullet$, $y=s_w$ and:
\begin{equation}\label{eq:case3-one}
\begin{alignedat}{6}
\mathcal{A}&=\mathcal{D'}&& \circ \mathcal{R} &&\circ \widetilde{x}'', s_w, && \quad \mathcal{A'}=&&\mathcal{D} &&\circ \mathcal{R} \circ \widetilde{x}'',s_w, \\
\mathcal{B}&=\mathcal{C'}&& \circ \mathcal{P} &&\circ w, w^\bullet, && \quad \mathcal{B'}=&&\mathcal{C} &&\circ \mathcal{P} \circ w,w^\bullet.
\end{alignedat}
\end{equation}
Similarly to the case of walks constructed in \eqref{eq:case3-two}, we can observe that $\cA$ avoids $\cB$ and $\cB'$ avoids $\cA'$.
Note that this works also for the case $w^\bullet=\widetilde{y}$, that is, if $|\cP|=1$. Then the walks from \cref{eq:case3-one} are as follows:

\begin{equation*}
\begin{alignedat}{6}
\mathcal{A}&=\mathcal{D'}&& \circ \widetilde{x},\widetilde{x}^\bullet,s_w, && \quad \mathcal{A'}=&&\mathcal{D} \circ \widetilde{x},\widetilde{x}^\bullet,s_w,\\
\mathcal{B}&=\mathcal{C'}&& \circ \widetilde{y},w,\widetilde{y}, && \quad \mathcal{B'}=&&\mathcal{C} \circ \widetilde{y},w,\widetilde{y},
\end{alignedat}
\end{equation*}

Finally, observe that in \cref{eq:case3-one} we used only vertices from $U$ (except $s_w$) and thus walks $\cB,\cB'$ are entirely contained in $U$, and for every $i$ it holds that $\{\cA,\cB\}^{(i)} \not\subseteq S$. Thus we obtain the statement (\ref{two-walks-near-statement2}.\ref{two-walks-near-additional-b}). of the lemma.
\end{proof}

\begin{lemma}\label{lem:two-walks-incomp-set}
Let $H$ be an undecomposable, bipartite graph with bipartition classes $X,Y$ and let $s,v \in X$ be incomparable vertices. Let  $T \subseteq X$ be a nonempty set of some vertices reachable from at least one of $s,v$, and incomparable with both $s,v$. Then there exist $t \in T$ and two pairs of walks:
\begin{compactenum}
\item $\mathcal{A,A'}: s \to t$ and $\mathcal{B,B'}: v \to v$, or \label{two-walks-incomp-statement1}
\item $\mathcal{A,A'}: s \to s$ and $\mathcal{B,B'}: v \to t$, \label{two-walks-incomp-statement2}
\end{compactenum} 
such that $\mathcal{A}$ avoids $\mathcal{B}$ and $\mathcal{B'}$ avoids $\mathcal{A'}$. 
\end{lemma}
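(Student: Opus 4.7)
The plan is to reduce to \cref{lem:two-walks-near} via a minimal separator argument, and then extend the resulting short walks along a path into $T$.

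\textbf{Setup.} Since $s, v$ are incomparable, fix $s' \in N(s) \setminus N(v)$ and $v' \in N(v) \setminus N(s)$. Because $T \subseteq X$ is nonempty, reachable from $\{s, v\}$, and disjoint from $N(s, s', v, v') \subseteq Y$ (bipartition classes differ), there is a nonempty minimal $\{s, s', v, v'\}$-$T$-separator $S \subseteq N(s, s', v, v')$ in $H$. Let $U$ be the component of $\{s, s', v, v'\}$ in $H - S$. Since $H$ is undecomposable, the non-decomposability hypothesis of \cref{lem:two-walks-near} holds vacuously, and the lemma gives us a vertex $y \in S$, a vertex $x \in U$, and two pairs of walks $\cA, \cA', \cB, \cB'$ with the stated avoiding properties. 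By symmetry we assume $\cA, \cA': s \to y$ and $\cB, \cB': v \to x$, and that additionally either (a) $x \in \{v, v'\}$ or (b) $\cB, \cB'$ are entirely contained in $U$.

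\textbf{Extending to $T$.} By minimality of $S$, there is an induced path $\cQ : y \to t$ for some $t \in T$ whose vertices other than $y$ lie in the component of $H - S$ that contains $t$; these vertices are therefore non-adjacent to $U \cup \{v, v'\}$. Appending $\cQ$ to $\cA$ and to $\cA'$ yields walks $s \to t$ of common length $|\cA| + |\cQ|$. In parallel we build extensions $\cR, \cR'$ of length $|\cQ|$ that bring $\cB, \cB'$ from $x$ back to $v$: in case (a) we alternate inside $\{v, v'\}$ and arrive at $v$ with the correct parity, and in case (b) we use a walk inside $U$ from $x$ to $v$ (both are possible because $U$ is connected and contains $v$). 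The resulting pairs $\cA \circ \cQ, \cA' \circ \cQ$ and $\cB \circ \cR, \cB' \circ \cR'$ have equal length and the right endpoints for statement~(\ref{two-walks-incomp-statement1}).

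\textbf{Verifying avoiding, and the main obstacle.} On the ``old'' segments the avoiding property is inherited directly from \cref{lem:two-walks-near}, and on the ``new'' segments it is automatic because the internal vertices of $\cQ$ are non-adjacent to every vertex of $U \cup \{v, v'\}$ (they are separated from $U$ by $S$). The sole subtle point, and the main obstacle, is the gluing position $i = |\cA|+1$: we need the second vertex of $\cR$ (resp.\ $\cR'$) to be non-adjacent to $y$. We arrange this by exploiting undecomposability once more: if every candidate for such a second vertex were adjacent to $y$, the pair $(U, S)$ together with this adjacency pattern would yield a bipartite decomposition of $H$, contradicting the undecomposability assumption. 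As a backup, one can proceed by induction on $|V(H)|$ in the spirit of Case~3 of \cref{lem:two-walks-near}, replacing $H$ by an appropriate induced subgraph when the direct gluing fails; the hypotheses (incomparability of $s, v$; reachability and incomparability of $T$) are preserved under this reduction.
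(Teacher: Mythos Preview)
There are two genuine gaps.

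\textbf{The separator need not exist.} Your claim that $T$ is disjoint from $N(s,s',v,v')$ ``because bipartition classes differ'' is false: $s',v'\in Y$, so $N(s',v')\subseteq X$, and vertices of $T$ may well lie in $N(s',v')$. If some $t\in T$ is adjacent to $s'$, no separator inside $N(s,s',v,v')$ can separate $\{s,s',v,v'\}$ from $t$. The paper deals with this by a preliminary case split: Case~1 handles directly any $t\in T$ possessing a neighbour in $N(t)\setminus N(v)$ adjacent to $s$ (or symmetrically), producing the walks by a two-line construction. Only in Case~2, where every neighbour of every $t\in T$ is adjacent to both or neither of $s,v$, does one get $T\cap N(s',v')=\emptyset$, and only then is the separator argument sound.

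\textbf{The gluing is not fixed by a decomposition argument.} Your resolution of the ``main obstacle'' is a hand-wave; no decomposition is in sight, and the suggested induction has no base case or reduction step spelled out. The paper's actual mechanism is concrete and comes for free from \cref{lem:two-walks-near}: by \cref{obs:priv-neighbours} the penultimate vertex $x^\bullet$ of $\cB'$ is adjacent to $x$ and \emph{non-adjacent to $y$}. In case~(a) the extension $\cR$ begins $x,x^\bullet,x,x',\ldots$, so its second vertex is exactly $x^\bullet$. In case~(b) the return walk is not ``some walk in $U$'' but specifically $\overline{\cB'}$ (padded with $v,v'$ at the end), whose second vertex is again $x^\bullet$; an arbitrary walk in $U$ would not give the required non-adjacency to $y$. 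There is also a corner case you miss entirely: when $|\cQ|=1$ in case~(a), one needs a private neighbour $t'\in N(t)\setminus N(s,v)$ to pad on the $t$-side, and the existence of such a $t'$ uses the Case~2 hypothesis that you never set up.
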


\begin{proof}
Since $s$ and $v$ are incomparable, there are $v' \in N(v) \setminus N(s)$ and $s' \in N(s) \setminus N(v)$.

\paragraph*{Case 1: there exists $t \in T$ and a vertex $t_v' \in N(t) \setminus N(v)$, which is adjacent to $s$.}
Let $v'_t \in N(v) - N(t)$, it exists since $v$ and $t$ are incomparable. Then we can set: $\cA = \cA'  =s,t'_v,t, \cB=v,v',v$ and $\mathcal{B'}=v,v'_t,v$. It is straightforward to verify that $\mathcal{A}$ avoids $\mathcal{B}$ and $\mathcal{B'}$ avoids $\mathcal{A'}$.  
The case that there is $t \in T$ and $t_s' \in N(t) \setminus N(s)$, such that $t_s'v \in E(H)$, is symmetric.

\paragraph*{Case 2: for every $t \in T$, every neighbor of $t$ is either adjacent to both $s,v$ or non-adjacent to both.}
Observe that this implies that $T \cap N(s',v') = \emptyset$: otherwise, if there is any $t \in T$ adjacent to one of $s',v'$, say $s'$, then $s'$ must be adjacent to both $s,v$, which contradicts its definition.
Let $S$ be a minimal $\{s,v,s',v'\}$-$T$-separator contained in $N(s,v,s',v')$ and let $U$ be the set of vertices reachable from $\{s,s',v,v'\}$ in $H-S$.
Observe that the graph $H$, vertices $s,s',v,v'$, and the set $S$ satisfy the assumptions of \cref{lem:two-walks-near}. So, by \cref{lem:two-walks-near}, there are $y \in S$ and $x \in U$, and walks $\cC,\cC': s \to y, \cD,\cD': v \to x$ (statement (\ref{two-walks-near-statement1}.)) or $\mathcal{C,C'}: v \to y, \mathcal{D,D'}: s \to x$ (statement (\ref{two-walks-near-statement2}.)), such that $\mathcal{C}$ avoids $\mathcal{D}$ and $\mathcal{D'}$ avoids $\mathcal{C'}$.

Suppose that calling \cref{lem:two-walks-near} resulted in statement (\ref{two-walks-near-statement1}.), i.e.,  the obtained walks are $\mathcal{C,C'}: s \to y$ and $\mathcal{D,D'}: v \to x$ (the other case is symmetric). Let $x^\bullet$ be the last vertex on $(\mathcal{D'})\noend$, note that it is adjacent to $x$ and, by \cref{obs:priv-neighbours}, non-adjacent to $y$.
By minimality of $S$ there is $t \in T$ and a $y$-$t$-path $\mathcal{P}$ of length at least 1, whose every vertex, except for $y$, is in $V(H) \setminus (S \cup U)$. This implies that $y$ is the only vertex of $\cP$ with a neighbor in $U$.
Let us consider two subcases.

\paragraph*{Case 2a: calling \cref{lem:two-walks-near} resulted in statement (\ref{two-walks-near-statement1}.\ref{two-walks-near-additional-a}).}
This means that $x \in \{v,v'\}$. Let us define $x'$, such that $\{x,x'\}=\{v,v'\}$.
If $|\cP|>1$, we define $\mathcal{R}=x,x^\bullet,x,x',x,\ldots,v$ and $\mathcal{R'}=x,x',\ldots,v$, such that $|\cR|= |\cR'|=|\cP|$. Recall that $v$ and $t$ are in the same bipartition class, so the last vertex of both walks is indeed $v$.  Observe that $\cP$ avoids $\cR$ and $\cR'$ avoids $\cP$, as $x^\bullet$ is non-adjacent to $y$, and every vertex of $\cP$, except $y$, is non-adjacent to every vertex of $U$, so, in particular, to $\{x,x'\}=\{v,v'\}$.
We set:
\begin{alignat*}{4}
\mathcal{A}= &\mathcal{C} \circ \mathcal{P}, & \quad \mathcal{A'} &=\mathcal{C'} \circ \mathcal{P},\\
\mathcal{B}= & \mathcal{D} \circ \mathcal{R}, & \quad \mathcal{B'} &=\mathcal{D'} \circ \mathcal{R'},
\end{alignat*}
which are walks as in the statement (\ref{two-walks-incomp-statement1}.).
And if $|\cP|=1$, i.e. $\cP=y,t$, then $x=v'$ since $x,y$ are in the same bipartition class. We can define $\cR=v',x^\bullet,v',v$, $\cR'=v',v,v',v$ and $\cP'=y,t,t',t$, where $t'$ is a vertex in $N(t)-N(s,v)$ (note that it exists, since $t$ is incomparable with $s,v$ and every vertex in $N(t)\cap N(s,v)$ is adjacent to both $s,v$ by the definition of Case 2). Note that $\cR'$ avoids $\cP'$ and $\cP'$ avoids $\cR$.
We set:
\begin{alignat*}{4}
\mathcal{A}= &\mathcal{C} \circ \mathcal{P'}, & \quad \mathcal{A'} &=\mathcal{C'} \circ \mathcal{P'},\\
\mathcal{B}= & \mathcal{D} \circ \mathcal{R}, & \quad \mathcal{B'} &=\mathcal{D'} \circ \mathcal{R'},
\end{alignat*}
and obtain walks as in the statement (\ref{two-walks-incomp-statement1}.).

\paragraph*{Case 2b: calling \cref{lem:two-walks-near} resulted in statement (\ref{two-walks-near-statement1}.\ref{two-walks-near-additional-b}).}
This means that walks $\cD,\cD'$ are entirely contained in $U$.
Let $t'$ be a neighbor of $t$ such that $t' \notin S_Y$, again it exists in this case (recall that $S_Y \subseteq N(s,v)$). We set:
\begin{alignat*}{4}
\mathcal{A}=&\mathcal{C} \circ \mathcal{P}^*, & \quad \mathcal{A'}=&\mathcal{C'} \circ \mathcal{P}^*,\\
\mathcal{B}=&\mathcal{D} \circ \mathcal{D}^*, & \quad \mathcal{B'}=&\mathcal{D'} \circ \mathcal{D}^*,
\end{alignat*}
where $\mathcal{P}^*:= \mathcal{P} \circ t,t',\ldots,t$ and $\mathcal{D}^*:=\overline{\mathcal{D'}} \circ v,v',\ldots,v$ are defined in a way that $|\mathcal{P}^*|=|\mathcal{D}^*|:=\max(|\mathcal{P}|, |\mathcal{D'}|)$. Basically these walks play the same role as $\cP$ and $\overline{\cD}'$, but extra padding added to one of them ensures they have the same length.
Observe that $\cD^*$ is entirely contained in $U$ and thus it avoids $\cP^*$ whose only vertex adjacent to $U$ is the first one. Moreover $\cP^*$ avoids $\cD^*$, since the first vertex of $\cP$ is $y$ and it is non-adjacent to $x^\bullet$, which is the second vertex on $\cD^*$. Thus $\cA$ avoids $\cB$ and $\cB'$ avoids $\cA'$ and we have obtained statement (\ref{two-walks-incomp-statement1}.).
\end{proof}

\begin{lemma}\label{lem:walks-s-v}
Let $H$ be a bipartite, undecomposable graph with bipartition classes $X,Y$, let $s,v \in X$ be incomparable vertices and let $t \in X$ be a vertex reachable from at least one of $s,v$. Then there exist a vertex $q \in X$ and two pairs of walks: \begin{compactenum}
\item $\mathcal{P,P'}: s \to t$ and $\mathcal{Q,Q'}: v \to q$, or
\item $\mathcal{P,P'}: s \to q$ and $\mathcal{Q,Q'}: v \to t$,
\end{compactenum}
such that $\mathcal{P}$ avoids $\mathcal{Q}$ and $\mathcal{Q'}$ avoids $\mathcal{P'}$. Moreover, if $t$ is incomparable with at least one of $s,v$ then $q=v$ in the first case and $q=s$ in the other.

Finally, given a bipartite graph $H$, in time polynomial in $|H|$ we can either find the desired walks or a decomposition of $H$.
\end{lemma}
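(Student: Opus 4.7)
The plan is a case analysis on the comparability of $t$ with $s$ and $v$. The natural split comes from the moreover-clause: either $t$ is incomparable with at least one of $s, v$ (and then $q$ is forced to be $v$ or $s$), or $t$ is comparable with both (and then $q$ is free). Within the first alternative I further split into whether $t$ is incomparable with both of $s, v$ or with exactly one of them.

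The case where $t$ is incomparable with both reduces directly to \cref{lem:two-walks-incomp-set} applied with $T=\{t\}$, which returns the two pairs of walks in the required form with $q\in\{s,v\}$. Next, when $t$ is incomparable with exactly one of $s,v$ (WLOG with $v$, so $q=v$), I would construct length-$2$ walks $\cP=\cP'=s,\star,t$ and $\cQ=\cQ'=v,\star,v$ directly. The choice of middle vertices depends on the direction of comparability with $s$: if $N(s)\subseteq N(t)$, take $s'\in N(s)\setminus N(v)$ (which automatically lies in $N(t)$) and a $v'\in N(v)\setminus(N(s)\cup N(t))$, which is nonempty because otherwise $N(v)\subseteq N(t)$, contradicting $v,t$ incomparability; if instead $N(t)\subseteq N(s)$, take a vertex in $N(t)\setminus N(v)$ as the middle of $\cP$ and any $v'\in N(v)\setminus N(s)$, the latter being automatically non-adjacent to $t$. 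Mutual avoidance is a short verification in each sub-case. The base cases $t=s$ and $t=v$ fall into this framework as the trivial subcase $N(t)\subseteq N(s)$ (or the symmetric one).

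The remaining case, and the main technical hurdle, is when $t$ is comparable with both $s$ and $v$. Here the $s,v$ incomparability rules out mixed comparabilities, so either $N(t)\subseteq N(s)\cap N(v)$ or $N(s)\cup N(v)\subseteq N(t)$. In both subcases I would invoke \cref{lem:two-walks-near} with $s'\in N(s)\setminus N(v)$, $v'\in N(v)\setminus N(s)$, and $S$ a minimal $\{s,v,s',v'\}$-$\{t\}$-separator contained in $N(s,s',v,v')\cap N(t)$. The lemma returns walks terminating at some $y\in S\subseteq N(t)$ and $x\in U$; I then extend each walk ending at $y$ by appending $t$, while padding the matching walk on the $x$-side by stepping back to an earlier vertex which becomes $q$. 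Avoidance is preserved because $xt\notin E(H)$ holds automatically by the separator property, and the remaining boundary edge condition is ensured by a careful choice of the padding vertex. The algorithmic statement follows by tracing through the constructive steps: comparability tests are polynomial, \cref{lem:two-walks-incomp-set} and \cref{lem:two-walks-near} are already constructive, and any obstruction to the above construction produces an explicit bipartite decomposition of $H$. The hardest part is making the extension in the last case preserve both avoidance conditions simultaneously, which is why $S$ must sit inside $N(t)$ and why the undecomposability of $H$ is essential to guarantee that suitable separators exist.
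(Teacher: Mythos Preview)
Your case split and the handling of the first two cases (where $t$ is incomparable with both, or with exactly one, of $s,v$) match the paper's proof essentially verbatim.

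There is, however, a genuine gap in the third case. When $t$ is comparable with both $s$ and $v$ and the comparability goes the way $N(s)\cup N(v)\subseteq N(t)$, every private neighbour $s'\in N(s)\setminus N(v)$ and $v'\in N(v)\setminus N(s)$ is adjacent to $t$. Consequently no $\{s,v,s',v'\}$--$\{t\}$ separator contained in $N(s,s',v,v')\cap N(t)$ can exist: to disconnect the edge $s't$ you would have to put $s'$ or $t$ into $S$, but $s'\notin N(s,s',v,v')$ by definition and $t\notin N(t)$ since $H$ is bipartite. So your single invocation of \cref{lem:two-walks-near} with $S\subseteq N(t)$ is not available here, and the ``append $y,t$'' step has nothing to work with.

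The paper resolves this by splitting the third case in two. If $t$ is adjacent to some private neighbour (equivalently, $N(s)\cup N(v)\subseteq N(t)$), it sets $S:=\{t\}$ itself, which lies in $N(s,s',v,v')$ though not in $N(t)$; then \cref{lem:two-walks-near} already returns walks terminating at $y=t$, and no extension is needed. Only in the remaining sub-case $N(t)\subseteq N(s)\cap N(v)$ does the paper take $S=N(t)$ and perform the ``append $y,t$ and step back on the $x$-side'' extension you describe; here your argument is essentially correct, with $q$ chosen as the penultimate vertex of $\cB'$ so that it is adjacent to $x$ and non-adjacent to $y$ by \cref{obs:priv-neighbours}.
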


\begin{proof}
Observe that if $t$ is incomparable with both $s,v$, we can obtain the desired walks by applying \cref{lem:two-walks-incomp-set} for $s,v$, and $T=\{t\}$.

Now let us assume that $t$ is incomparable with exactly one of $s,v$, say $v$ (the other case is symmetric) and let $t_v'$ be a vertex in $N(t) \setminus N(v)$ and let $v'_t$ be a vertex in $N(v) \setminus N(t)$. Since $s$ and $v$ are incomparable, there are $v_s' \in N(v) \setminus N(s)$ and $s_v' \in N(s) \setminus N(v)$.
Since $t$ is comparable with $s$, we either have $N(t) \subseteq N(s)$ or $N(s) \subseteq N(t)$.
If $N(t) \subseteq N(s)$, we observe that $t_v'$ must be adjacent to $s$, and $v_s'$ must be non-adjacent to $t$ since $v_s' \notin N(s)$.
Then we set $q=v$, $\cP = \cP' = s,t_v',t$, and $\mathcal{Q=Q'}=v,v_s',v$.
On the other hand, if $N(s) \subseteq N(t)$, we observe that $s_v'$ must be adjacent to $t$ and $v_t'$ must be non-adjacent to $s$. Then we set $q=v$, $\cP = \cP' = s,s_v',t$, and $\mathcal{Q=Q'}=v,v_t',v$. 

So from now we can assume that $t$ is comparable with both $s,v$. Observe that it implies that either $N(s,v) \subseteq N(t)$, or $N(t) \subseteq N(s) \cap N(v)$. Note that the other cases are not possible: if we have, say, $N(v) \subseteq N(t)$ and $N(t) \subseteq N(s)$, then $N(v) \subseteq N(s)$, a contradiction.
Observe that if there is $s' \in N(s) \setminus N(v)$ and $v' \in N(v) \setminus N(s)$, such that $t$ is adjacent to at least one of $s',v'$, then we can set $S :=\{t\} \subseteq N(s,s',v,v')$ and, by the assumption of $H$, it has no decomposition. Thus we can call \cref{lem:two-walks-near} for vertices $s,s',v,v'$ and the set $S=\{t\}$. Since $t$ is the only vertex of $S$, we obtain a vertex $q$, reachable from $\{s,v\}$ in $H-\{t\}$, and  walks $\mathcal{P,P'}: s \to t$ and $\mathcal{Q,Q'}: v \to q$, or $\mathcal{Q,Q'}: v \to t$ and $\mathcal{P,P'}: s \to q $, such that $\cP$ avoids $\cQ$ and $\cQ'$ avoids $\cP'$.

So it only remains to consider the case in which $t$ is non-adjacent to every vertex in $\bigl(N(s) \setminus N(v)\bigr) \cup \bigl(N(v) \setminus N(s)\bigr)$.  Choose arbitrary $s' \in N(s) \setminus N(v)$ and $v' \in N(v) \setminus N(s)$. Observe that by the case we are considering, we have $N(t) \subseteq N(s) \cap N(v)$, which implies that $N(t) \subseteq N(s,s',v,v')$.
We call \cref{lem:two-walks-near} for $H$, vertices $s,s',v,v'$, and the set $S=N(t)$. We obtain vertices $y \in N(t)$ and $x \in U$, where $U$ is the set of vertices reachable from $\{ s,v \}$  in $H-S$, and walks $\mathcal{A,A'}: s \to y, \mathcal{B,B'}: v \to x$ or $\mathcal{A,A'}: v \to y, \mathcal{B,B'}: s \to x$ of length at least one, such that $\cA$ avoids $\cB$ and $\cB'$ avoids $\cA'$. Let $q$ be the last vertex  on $(\mathcal{B'})\noend$ -- it is adjacent to $x$ and non-adjacent to $y$. We also have that $x$ is non-adjacent to $t$, since all neighbors of $t$ are in $S$. If we obtain walks $\mathcal{A,A'}: s \to y, \mathcal{B,B'}: v \to x$  then we set:
\begin{align*}
\mathcal{P}=\mathcal{A} \circ y,t, & \quad \mathcal{P'}=\mathcal{A'} \circ y,t, \\
\mathcal{Q}=\mathcal{B} \circ x,q, & \quad \mathcal{Q'}=\mathcal{B'} \circ x,q,
\end{align*}
and if we obtain walks $\mathcal{A,A'}: v \to y, \mathcal{B,B'}: s \to x$, then we set:
\begin{align*}
\mathcal{P}=\mathcal{B'} \circ x,q, & \quad \mathcal{P'}=\mathcal{B} \circ x,q, \\
\mathcal{Q}=\mathcal{A'} \circ y,t, & \quad \mathcal{Q'}=\mathcal{A} \circ y,t.
\end{align*}
It is straightforward to verify that in both cases $\mathcal{P}$ avoids $\mathcal{Q}$ and $\mathcal{Q'}$ avoids $\mathcal{P'}$.
The proof, as well as the proofs of \cref{lem:two-walks-near} and \cref{lem:two-walks-incomp-set}, are clearly constructive, so the claim about the polynomial-time algorithm follows easily.
\end{proof}
\subsubsection{Constructing walks from incomparable vertices to an obstruction.}
Now we will introduce several lemmas about how our walks reach the obstruction. 

\begin{lemma}\label{lem:two-walks-to-obs}
Let $H$ be an undecomposable, bipartite graph with an obstruction $\Ob$. Let $(\alpha, \beta) \in C(\Ob)$, and let $s$ be a vertex reachable from $\Ob$, which belongs to the same bipartition class as $\beta$ but is incomparable with it. Then there exist walks $\cA,\cA':s \to \alpha$ and $\cB,\cB':\beta \to \beta$, such that $\cA$ avoids $\cB$ and $\cB'$ avoids $\cA'$.
\end{lemma}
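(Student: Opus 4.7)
The plan is to reduce this lemma to \cref{lem:walks-s-v} combined with the obstruction walks supplied by \cref{obs:walks-between-corners}(a). First, I would verify that the corners $\alpha,\beta \in C(\Ob)$ (lying in the same bipartition class as $s$) are incomparable in $H$: in each of the three shapes of $\Ob$ (induced $C_6$, induced $C_8$, or an asteroidal subgraph) one can immediately exhibit a neighbor of $\alpha$ inside $\Ob$ that is not a neighbor of $\beta$ and symmetrically, and since $\Ob$ is an induced subgraph these non-adjacencies persist in $H$ (for example, for $\Ob \simeq C_6$ with $(\alpha,\beta)=(w_1,w_5)$, the vertices $w_2$ and $w_4$ witness this; the asteroid case uses conditions (a),(b) of \cref{def:asteroid} together with \cref{obs:asteroid-dual}).

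Next I would apply \cref{lem:walks-s-v} to the incomparable pair $(s,\beta)$ with target $\alpha$; note that $\alpha$ is reachable from $\beta$ through $\Ob$ and is incomparable with $\beta$, so the ``moreover'' clause of \cref{lem:walks-s-v} pins $q$ to be $\beta$ in case~1 and $s$ in case~2. Case~1 directly hands us walks $\cP,\cP' \colon s \to \alpha$ and $\cQ,\cQ' \colon \beta \to \beta$ with $\cP$ avoiding $\cQ$ and $\cQ'$ avoiding $\cP'$, and we are done by setting $\cA := \cP$, $\cA' := \cP'$, $\cB := \cQ$, $\cB' := \cQ'$.

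The substantive work is case~2, in which we instead obtain walks $\cP,\cP' \colon s \to s$ and $\cQ,\cQ' \colon \beta \to \alpha$. Here I would first apply \cref{obs:priv-neighbours} to $\cP$ avoiding $\cQ$ and to $\cQ'$ avoiding $\cP'$, extracting a vertex in $N(s) \setminus N(\alpha)$ (the second-to-last vertex of $\cP$) and a vertex in $N(\alpha) \setminus N(s)$ (the second-to-last vertex of $\cQ'$), which together establish that $s$ and $\alpha$ are incomparable in $H$. This newly-obtained incomparability, together with the incomparability of $\alpha$ and $\beta$, licenses a second invocation of \cref{lem:walks-s-v}, now on the pair $(s,\alpha)$ with target $\beta$. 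The walks returned by this second invocation are then stitched together with the walks $\cX,\cX',\cY,\cY'$ from \cref{obs:walks-between-corners}(a) through composition: obstruction walks supply the segments $\alpha \to \beta$ and $\beta \to \alpha$ needed to turn ``wrong-endpoint'' walks into walks with the correct endpoints $s \to \alpha$ and $\beta \to \beta$, and if needed one of $\cA,\cB$ is padded by going back and forth along an edge at an endpoint to equalize lengths.

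The main obstacle will be verifying the two avoidance conditions ($\cA$ avoiding $\cB$ and $\cB'$ avoiding $\cA'$) at the \emph{junction} indices where the walks from \cref{lem:walks-s-v} are concatenated with the obstruction walks. At each junction the needed non-adjacencies must be obtained from \cref{obs:priv-neighbours} (which pinpoints the vertex appearing immediately before the boundary and guarantees it is non-adjacent to the appropriate other endpoint), combined with the fact that $\cX$ avoids $\cY$ and $\cY'$ avoids $\cX'$ inside $\Ob$; the padding at endpoints must also be chosen carefully (using neighbors in $N(s)\setminus N(\beta)$ or $N(\beta)\setminus N(s)$) so that no junction creates a forbidden adjacency.
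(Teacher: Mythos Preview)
Your plan is close to the paper's and uses the right toolkit (\cref{lem:walks-s-v} together with \cref{obs:walks-between-corners}), but there is a genuine gap in your case~2. After your first call on $(s,\beta)$ with target $\alpha$ you obtain $\cP,\cP':s\to s$ and $\cQ,\cQ':\beta\to\alpha$. Your second call on $(s,\alpha)$ with target $\beta$ then returns either $\cR,\cR':s\to\beta$, $\cS,\cS':\alpha\to\alpha$ (subcase~1) or $\cR,\cR':s\to s$, $\cS,\cS':\alpha\to\beta$ (subcase~2). In subcase~2 every walk at your disposal that starts at $s$ also ends at $s$; the obstruction walks $\cX,\cX',\cY,\cY'$ only move between $\alpha$ and $\beta$, so no composition of these ingredients can produce a walk $s\to\alpha$. ``Padding by going back and forth at an endpoint'' cannot help here either: padding at $s$ keeps you at $s$. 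The difficulty is not the junction avoidance you flag, but the more basic fact that you have no bridge from $s$ into $\{\alpha,\beta\}$.

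The paper avoids this by choosing one of the two calls to have \emph{target} $s$: it first applies \cref{lem:walks-s-v} to the incomparable pair $(\alpha,\beta)$ with $t=s$, so that in the bad case it obtains $\cQ,\cQ':\beta\to s$; reversed, this is the needed bridge $s\to\beta$. The second call is exactly your first call, $(s,\beta)$ with target $\alpha$. When both calls land in their ``bad'' case, the paper composes $\cA=\cR\circ\overline{\cQ}\circ\cY'$ and $\cB=\cS\circ\overline{\cP}\circ\cX'$ (and the primed analogues), with all junction avoidances handled cleanly by \cref{obs:walks-composition}. If you replace your second invocation by one on the pair $(\alpha,\beta)$ with target $s$, your argument goes through; the extra step of proving $s$ and $\alpha$ incomparable is then unnecessary.
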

\begin{proof}
We use \cref{lem:walks-s-v} for $s=\alpha$, $v=\beta$, and $t=s$ to obtain walks $\cP, \cP': \alpha \to s$ and $\cQ,\cQ':\beta \to \beta$ or $\cP, \cP': \alpha \to \alpha$ and $\cQ,\cQ':\beta \to s$, such that $\cP$ avoids $\cQ$ and $\cQ'$ avoids $\cP'$. In the first case, it is enough to take $\cA=\overline{\cP'}$, $\cB=\overline{\cQ'}$, $\cA'=\overline{\cP}$, $\cB'=\overline{\cQ}$, and we are done.
In the second case, we use \cref{lem:walks-s-v} again, now for $s=s$, $v=\beta$ and $t=\alpha$ and obtain walks $\cR, \cR': s \to \alpha$ and $\cS,\cS':\beta \to \beta$ or $\cR, \cR': s \to s$ and $\cS,\cS':\beta \to \alpha$, such that $\cR$ avoids $\cS$ and $\cS'$ avoids $\cR'$. 
In the first case, we are done, as these are the walks we are looking for. In the second case we set
\begin{equation*}
\begin{alignedat}{7}
\cA =& \cR &&\circ \overline{\cQ} &&\circ \cY', \quad &&\cA' = \cR' &&\circ \overline{\cQ'} &&\circ \cY, \\
\cB =& \cS &&\circ \overline{\cP} &&\circ \cX', \quad &&\cB' = \cS' &&\circ \overline{\cP'} &&\circ \cX,
\end{alignedat}
\end{equation*} 
where $\cX,\cX' : \alpha \to \beta$ and $\cY,\cY' : \beta \to \alpha$ are obtained by \cref{obs:walks-between-corners}.
\end{proof} 

In the next proof some constructions will depend on the bipartition class of a particular vertex. To avoid considering cases separately, for two non-necessarily distinct vertices $u$ and $u'$, such that $u' \in N[u]$, we introduce a walk
\begin{equation*}
\cE[u,u']:=\begin{cases}
u,u' & $ if $uu' \in E(H), \\
u & $ if $u=u'.
\end{cases}
\end{equation*}

Note that if sets $\{u,u'\}$ and $\{v,v'\}$ are non-adjacent (again, it might be that $u=u'$ or $v=v'$), then the walks $\cE[u,u']$ and $\cE[v,v']$ are clearly non-adjacent.

\begin{lemma}\label{lem:walks-to-corners-near}
Let $H$ be a bipartite graph with an obstruction $\Ob$. Let $(\alpha, \beta) \in C(\Ob)$ and let $s \in V(H)$ be a vertex reachable from $\Ob$, which belongs to the same bipartition class as $\alpha,\beta$. Assume that $N(s) \setminus N(\beta) \neq \emptyset$ and $\dist(N[s] \setminus N(\beta),\Ob)\leq~1$. Then one of the following exists:
\begin{compactenum}
\item walks $\cP: \alpha \to \alpha, \cQ: s \to \alpha$ and $\cR: \beta \to \beta$, such that $\cP, \cQ$ avoid $\cR$, \label{lem17-statement1}
\item walks $\cP: \alpha \to \alpha, \cQ: s \to \beta$ and $\cR: \beta \to \beta$, such that $\cP$ avoids $\cQ, \cR$.
\label{lem17-statement2}
\end{compactenum}
\end{lemma}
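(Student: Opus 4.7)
The plan is to construct the three walks by gluing a short walk from $s$ into the obstruction to the structured walks inside $\Ob$ provided by \cref{obs:walks-between-corners} and \cref{obs:asteroid-subwalks}. First I would fix $s'\in N(s)\setminus N(\beta)$ and a witness $s^{\ast}\in N[s]\setminus N(\beta)$ with $\dist(s^{\ast},\Ob)\leq 1$; let $w\in V(\Ob)$ satisfy $s^{\ast}=w$ or $s^{\ast}w\in E(H)$. A short argument excludes $w=\beta$: otherwise $s^{\ast}\in\{\beta\}\cup N(\beta)$, but $s^{\ast}\notin N(\beta)$ by choice, and $s^{\ast}=\beta$ together with $s^{\ast}\in N[s]$ would force $s=\beta$ (since $s,\beta$ lie in the same bipartition class), contradicting $N(s)\setminus N(\beta)\neq\emptyset$. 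This yields a walk $\cV\colon s\to w$ of length at most $3$ whose vertices up to but not including $w$ all lie in $N[s]\setminus N(\beta)$, a property I shall exploit for avoidance.

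Next, from $w\in V(\Ob)\setminus\{\beta\}$ I would route $\cQ$ inside $\Ob$ to either $\alpha$ (aiming for conclusion (1)) or $\beta$ (aiming for conclusion (2)), using an in-obstruction walk provided by \cref{obs:asteroid-subwalks}. Meanwhile, the companion walks $\cP\colon\alpha\to\alpha$ and $\cR\colon\beta\to\beta$ are built by concatenating the avoiding pair $(\cX\circ\cY',\,\cY\circ\cX')$ from \cref{obs:walks-between-corners}: by \cref{obs:walks-composition} these avoid each other in the required direction, and they can be padded with short back-and-forth walks through a chosen neighbor $t\in N(\beta)\cap V(\Ob)$ (for $\cR$) and a chosen neighbor of $\alpha$ in $\Ob$ (for $\cP$) to match the length of $\cQ$. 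The choice between conclusions (1) and (2) is dictated by which route from $w$ to $\alpha$ or to $\beta$ is compatible with the avoidance on the $\cV$-prefix; if routing $\cQ$ to $\alpha$ makes $\cQ$ avoid $\cR$, we take (1), and otherwise we align $\cQ$ with $\cR$'s tail by routing it to $\beta$, so that only $\cP$ must avoid $\cQ$ and $\cR$, giving (2).

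The main obstacle is the avoidance verification on the $\cV$-prefix of $\cQ$. For conclusion (1), each $\cQ^{(i)}\in N[s]\setminus N(\beta)$ must be non-adjacent to $\cR^{(i+1)}$: the case $\cR^{(i+1)}=\beta$ is immediate from $\cQ^{(i)}\notin N(\beta)$, whereas the case $\cR^{(i+1)}\in N(\beta)$ requires us to pick the neighbor $t$ of $\beta$ used in $\cR$'s padding so that $t$ is non-adjacent to the early vertices $s,s'$ (and any intermediate vertex of $\cV$). Exhausting the few possibilities for the location of $w$ in $\Ob$ --- using the symmetry of $C_6$ and $C_8$ and, for an asteroidal subgraph, the separation conditions in \cref{def:asteroid} between $\{u_i,v_i\}$ and the far path $\cW_{i+k,i+k+1}$ --- shows that such a $t$ can always be found on one of the two sides of $\Ob$; when such a choice exists only for the route to $\beta$ and not the route to $\alpha$, we obtain conclusion (2) by symmetric construction.
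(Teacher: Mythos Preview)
Your outline has the right spirit---walk from $s$ into $\Ob$ and then route inside the obstruction---but there are two concrete gaps that the paper's proof actually has to work around, and your plan does not.

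First, using the long walks $\cX\circ\cY'$ and $\cY\circ\cX'$ from \cref{obs:walks-between-corners} for $\cP$ and $\cR$ is the wrong move. These walks traverse many vertices of $\Ob$, so the avoidance condition ``$\cQ^{(i)}\cR^{(i+1)}\notin E(H)$'' must be checked against all of those, not just against $\beta$ and a single neighbor $t$. Your verification paragraph only considers $\cR^{(i+1)}\in\{\beta\}\cup N(\beta)$, which is not what $\cY\circ\cX'$ looks like. The paper instead takes $\cP=\alpha,\alpha',\ldots,\alpha$ and $\cR=\beta,\beta',\ldots,\beta$ almost everywhere, precisely so that the only vertices to check against are $\{\alpha,\alpha'\}$ and $\{\beta,\beta'\}$; \cref{obs:walks-between-corners} is never invoked in this lemma. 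Relatedly, your freedom to ``choose $t\in N(\beta)\cap V(\Ob)$'' is illusory: any $t\neq\beta'$ may be adjacent to $\alpha$ or $\alpha'$ (e.g.\ in $C_6$ the only other neighbor of $\beta$ is adjacent to $\alpha$), which destroys $\cP$ avoiding $\cR$ on the padding.

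Second, your use of \cref{obs:asteroid-subwalks} to route from $w$ to $\alpha$ or $\beta$ only applies when $w\notin N[\alpha,\alpha']$. The case $w\in N(\alpha,\alpha')$ is exactly the hard one (the paper's Case~2): there the routing has to go \emph{through} $\{\alpha,\alpha'\}$, and the avoidance then depends on whether $w$ is also in $N(\beta,\beta')$. For $C_6$ this forces a bespoke construction using the unique ``opposite'' vertex $r$, and for asteroidal subgraphs it requires sending $\cR$ away along $\cW[\beta,\kappa]$ to $\{u_{k+1},v_{k+1}\}$ (or $\{u_k,v_k\}$) so that the specific path $\cW_{0,i}$ containing $w$ is non-adjacent to it. None of this falls out of your ``exhausting the few possibilities'' clause; it is the actual content of the proof. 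You should drop the $\cX,\cY$ walks, use the simple back-and-forth $\alpha,\alpha',\ldots$ and $\beta,\beta',\ldots$ for $\cP,\cR$, and then do the honest case split on whether $N[s,s']\cap V(\Ob)$ meets $N(\alpha,\alpha')$.
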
 
\begin{proof}
Let $\alpha',\beta'$ be the pair of vertices, such that $C(\Ob)=\{(\alpha, \beta) ,(\alpha',\beta')\}$.

By \cref{def:corners} recall that edges $\alpha\ovalp$ and $\beta\ovbet$ are independent. First, we consider some simple special cases separately. If $s\ovbet\in E(H)$, we obtain the statement (\ref{lem17-statement2}.) by setting
\begin{alignat*}{4}
\cP=&\alpha,\ovalp \alpha, \\
\cQ=&s,\ovbet,\beta, \\
\cR=&\beta, \ovbet, \beta.
\end{alignat*}
If now $s\ovbet \not \in E(H)$ and there exists $s' \in N(s) \setminus N(\beta)$ such that $\alpha s' \in E(H)$ (in particular when $s$ is adjacent to $\ovalp$), then we obtain the statement (\ref{lem17-statement1}.) by setting
\begin{alignat*}{2}
\cP=& \alpha,\ovalp,\alpha,\\
\cQ=&s,s',\alpha,\\
\cR=&\beta,\ovbet,\beta.
\end{alignat*}
So assume that the cases above do not apply. Let us fix any $s' \in N(s) \setminus N(\beta)$ for which $\dist(\{s,s'\},\Ob)\leq 1$. Note that the edges $\alpha\ovalp, \beta\ovbet$, and $ss'$ are independent.

\paragraph{Case 1: there exists $p \in N[s,s'] \cap V(\Ob)$ such that $p \not\in N(\alpha,\alpha')$.} 
Clearly $p \not\in\{\alpha,\ovalp,\beta,\ovbet\}$, as otherwise edges $\alpha\ovalp, \beta\ovbet, ss'$ would not be independent. Note that this implies that this case cannot occur if $\Ob$ is isomorphic to $C_6$, see \cref{fig:cycle-corners} (left).
Let $\overline{s}$ be an element of $\{s,s'\}$ which is a neighbor of $p$. Recall that since $p,\beta \not\in N[\alpha,\ovalp]$, by \cref{obs:asteroid-subwalks}, there exists a $p$-$\beta$-walk $\cW[p,\beta]$, which is non-adjacent to $\{\alpha,\ovalp\}$. We define:
\begin{align*}
\cP&:= \alpha,\ovalp,\ldots,\alpha, \\
\cQ&:= \cE[s,\overline{s}] \circ \overline{s},p \circ \cW[p,\beta], \\
\cR&:= \beta,\ovbet,\ldots,\beta,
\end{align*}
in a way that $|\cP|=|\cR|= |\cQ|$. Clearly $\cP$ avoids $\cR$. Moreover, $\cP$ avoids $\cQ$ since $s, s', p \not \in N[\alpha,\ovalp]$ and $\cW[p,\beta]$ is non-adjacent to $\{\alpha,\ovalp\}$ . Thus we obtain the statement (\ref{lem17-statement2}.).

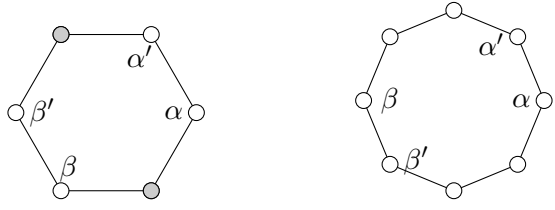
\begin{figure}[t]
\centering{\begin{tikzpicture}[every node/.style={draw,circle,fill=white,inner sep=0pt,minimum size=6pt},every loop/.style={}]
\def\n{6}
\foreach \i in {1,...,\n}
	\draw (360/\n*\i-360/\n:1.2) -- (360/\n*\i:1.2);
\foreach \i in {1,...,\n}
	\node at (360/\n*\i:1.2) {};
\node[fill=none,draw=none] at (0.9,0) {$\alpha$};
\node[fill=none,draw=none] at (-0.85,0) {$\beta'$};
\node[fill=none,draw=none] at (0.45,0.75) {$\alpha'$};
\node[fill=none,draw=none] at (-0.5,-0.75) {$\beta$};
\node[fill=black!20] at (360/\n*2:1.2) {};
\node[fill=black!20] at (360/\n*5:1.2) {};
\end{tikzpicture}\hskip 2cm
\begin{tikzpicture}[every node/.style={draw,circle,fill=white,inner sep=0pt,minimum size=6pt},every loop/.style={}]
\def\n{8}
\foreach \i in {1,...,\n}
	\draw (360/\n*\i-360/\n:1.2) -- (360/\n*\i:1.2);
\foreach \i in {1,...,\n}
	\node at (360/\n*\i:1.2) {};
	\node[fill=none,draw=none] at (0.9,0) {$\alpha$};
\node[fill=none,draw=none] at (-0.85,0) {$\beta$};
\node[fill=none,draw=none] at (0.5,0.75) {$\alpha'$};
\node[fill=none,draw=none] at (-0.5,-0.8) {$\beta'$};
\end{tikzpicture}}
\caption{The position of elements of $C(\Ob)$ when $\Ob$ is isomorphic to $C_6$ (left) of $C_8$ (right). Gray vertices indicate the possible position of vertices $p$ and $r$ in the Case 2 of the proof of \cref{lem:walks-to-corners-near}.} \label{fig:cycle-corners}
\end{figure}

\paragraph{Case 2: $N[s,s'] \cap V(\Ob) \subseteq N(\alpha,\ovalp)$.}
It implies that $\dist(\{s,s'\},\Ob)= 1$, as if $\{s,s'\} \cap V(\Ob) \neq \emptyset$, then $\{s,s'\}$ and $\{\alpha,\ovalp\}$ would be adjacent.
So there exists $p \in N(s,s') \cap V(\Ob) \cap N(\alpha,\ovalp)$. Again, let $\overline{s}$ be an element of $\{s,s'\}$ which is a neighbor of $p$ and let $\{\overline{\alpha},\overline{\alpha}'\}:=\{\alpha,\ovalp\}$ such that $\overline{\alpha}p \in E(H)$. 

Note that if $p \not\in N(\beta,\ovbet)$, we can set
\begin{align*}
\cP&:=\alpha,\ovalp,\ldots,\alpha, \\
\cQ&:=\cE[s,\overline{s}] \circ \overline{s},p,\overline{\alpha} \circ \cE[\overline{\alpha}, \alpha] \\
\cR&:=\beta,\ovbet,\ldots,\beta.
\end{align*}
in a way that $|\cP|=|\cR|= |\cQ|$. Then $\cP, \cQ$ avoid $\cR$ and we get the statement (\ref{lem17-statement2}.).
Observe that if $\Ob$ is isomorphic to $C_8$, then the above case applies; see \cref{fig:cycle-corners} (right).

So we can assume that $p \in N(s,s') \cap N(\alpha,\ovalp) \cap N(\beta,\ovbet)$.  Let $\{\overline{\beta},\overline{\beta}'\}:=\{\beta, \ovbet\}$, such that $\overline{\beta}p \in E(H)$. If $\Ob$ is isomorphic to $C_6$, then let $r$ be the other vertex of $\Ob$ which belongs to $N(\alpha,\ovalp) \cap N(\beta,\ovbet)$ ($r$ is uniquely determined, see \cref{fig:cycle-corners} (left)).
As $pr \not\in E(H)$, we define the walks that satisfy the statement (\ref{lem17-statement1}.):
\begin{equation*}
\begin{alignedat}{10}
\cP&:=\cE[\alpha,\overline{\alpha}] &&\circ \overline{\alpha},p,\overline{\alpha}, \overline{\alpha}' &&\circ \cE[\overline{\alpha}',\alpha] \\ 
\cQ&:=\cE[s,\overline{s}] &&\circ \overline{s},p,\overline{\alpha}, \overline{\alpha}' &&\circ \cE[\overline{\alpha}',\alpha], \\
\cR&:=\cE[\beta,\overline{\beta}] &&\circ \overline{\beta},\overline{\beta}',r,\overline{\beta}' &&\circ \cE[\overline{\beta}',\beta].
\end{alignedat}
\end{equation*}

So finally consider the case that $\Ob$ is an asteroidal subgraph. We use the notation introduced in \cref{def:asteroid}.
Then $p \in \cW_{0,i}$ for some $i \in \{1,2k\}$, as all other vertices of $\Ob$ are non-adjacent to $\alpha$. Let $\{\kappa,\kappa'\}  =\{u_{k+1},v_{k+1}\}$ if $i=1$ or $\{\kappa,\kappa'\}  =\{u_{k},v_{k}\}$ if $i=2k$, where $\kappa'$ and $p$ are in the same bipartition class.
Note that by the definition of a special edge asteroid we know that the walk $\cW_{0,i}$, and in particular $p$, is non-adjacent to $\{\kappa,\kappa'\}$. Recall that the walk $\cW[\beta,\kappa]$ given by \cref{obs:asteroid-subwalks} is non-adjacent to $\{\alpha,\alpha'\}=\{u_0,v_0\}$. So, since we are in Case 2, we observe that $s$ and $s'$ are non-adjacent to $\{\kappa,\kappa'\}$ and to $\cW[\beta,\kappa]$, as otherwise we would have chosen $p \in \{\kappa,\kappa'\}\cup \cW[\beta,\kappa]$, such that $p \in N[s,s'] \cap V(\Ob)$ and $p \notin N(\alpha,\alpha')$, ending up in Case 1. We set 
\begin{alignat*}{10}
\cP &:=\alpha,\ovalp,\ldots,\alpha \circ \cE[\alpha,\overline{\alpha}] &&~\circ~&& \overline{\alpha}, \overline{\alpha}', \overline{\alpha} &&~\circ~&& \cE[\overline{\alpha},\alpha] \circ \alpha , \ovalp, \ldots,\alpha, \\
\cQ &:=s,s', \ldots,s \ \circ \cE[s,\overline{s}] &&~\circ~&& \ \overline{s},p, \overline{\alpha} &&~\circ~&& \cE[\overline{\alpha},\alpha] \circ  \alpha, \ovalp, \ldots, \alpha, \\
\cR &:=\cW[\beta,\kappa] &&~\circ~&&   \kappa, \kappa',\kappa &&~\circ~&& \cW[\kappa,\beta],
\end{alignat*}
so that the lengths of the subwalks in each aligned column are equal.
Clearly, $\cR$ is non-adjacent with both $\cP$ and $\cQ$, so we obtain the statement \eqref{lem17-statement1}.
\end{proof}

\begin{lemma}\label{lem:walks-to-corners-far}
Let $H$ be a connected, undecomposable, bipartite graph with an obstruction $\Ob$, let $(\alpha, \beta) \in C(\Ob)$ and let $s \in V(H)$ be a vertex reachable from $\Ob$, which belongs to the same bipartition class as $\beta$, but incomparable with it. Then at least one of the following~exists:
\begin{compactenum}
\item walks $\cP: \alpha \to \alpha, \cQ: s \to \alpha$ and $\cR: \beta \to \beta$, such that $\cP, \cQ$ avoid $\cR$,
\item walks $\cP: \alpha \to \alpha, \cQ: s \to \beta$ and $\cR: \beta \to \beta$, such that $\cP$ avoids $\cQ, \cR$.
\end{compactenum}  
\end{lemma}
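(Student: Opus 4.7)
The statement of \cref{lem:walks-to-corners-far} differs from \cref{lem:walks-to-corners-near} only by replacing the distance hypothesis $\dist(N[s] \setminus N(\beta),\Ob) \leq 1$ with the (stronger) incomparability of $s$ and $\beta$. My plan is therefore to reduce the far case to the near case, bridging the extra distance by combining \cref{lem:two-walks-to-obs} (which supplies pairs of avoiding walks anchored at $\alpha$ and $\beta$) with the in-obstruction walks from \cref{obs:walks-between-corners} to align and pad lengths.

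Concretely, I would first dispose of the trivial case $\dist(N[s] \setminus N(\beta),\Ob) \leq 1$: since incomparability of $s$ and $\beta$ already yields $N(s) \setminus N(\beta) \neq \emptyset$, the hypotheses of \cref{lem:walks-to-corners-near} are met and I would invoke it directly. For the complementary case $\dist(N[s] \setminus N(\beta),\Ob) \geq 2$, I would apply \cref{lem:two-walks-to-obs} to $(s,\alpha,\beta)$, obtaining walks $\cA, \cA' \colon s \to \alpha$ and $\cB, \cB' \colon \beta \to \beta$ of common length $\ell$ with $\cA$ avoiding $\cB$ and $\cB'$ avoiding $\cA'$. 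Setting $\cQ := \cA$ and $\cR := \cB$ then satisfies two of the three demands of statement~(1). To produce the third walk $\cP \colon \alpha \to \alpha$ of length $\ell$ that also avoids $\cR$, I would apply \cref{lem:two-walks-to-obs} a second time with the role of $s$ played by $\alpha$ itself; this is admissible because $\alpha'$ witnesses $N(\alpha) \not\subseteq N(\beta)$ and $\beta'$ witnesses $N(\beta) \not\subseteq N(\alpha)$, so $\alpha$ and $\beta$ are incomparable in $H$. This second application outputs walks $\cA_*, \cA_*' \colon \alpha \to \alpha$ and $\cB_*, \cB_*' \colon \beta \to \beta$ of a common length $\ell^*$, again avoiding pairwise. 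Using zig-zag walks along the independent edges $\alpha \alpha'$ and $\beta \beta'$ (which are mutually non-adjacent by the definition of $C(\Ob)$) I would align all walks to a common length and concatenate them so that $\cP$ inherits $\cA_*$ for its interior while the corresponding segment of $\cR$ is given by $\cB_*$.

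The main obstacle is the consistency of this alignment: the interior of $\cP$ must avoid the interior of $\cR$ even though the two pieces come from independent applications of \cref{lem:two-walks-to-obs}, and the transitions between consecutive segments (where one application ends and the other begins) must respect both avoiding relations. To handle these transitions I would use the non-adjacency of $\{\alpha,\alpha'\}$ to $\{\beta,\beta'\}$, which makes every walk in the former set automatically non-adjacent to every walk in the latter, so that padding segments of the form $\alpha,\alpha',\ldots$ on the $\alpha$-side and $\beta,\beta',\ldots$ on the $\beta$-side can be freely inserted without disrupting avoidance. Verifying these transition conditions in each regime of the construction yields the walks $\cP, \cQ, \cR$ required in statement~(1); if in some corner of the case analysis the second application of \cref{lem:two-walks-to-obs} is incompatible with statement~(1), the same pieces can be re-stitched to produce walks satisfying statement~(2) instead, completing the proof.
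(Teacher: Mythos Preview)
Your reduction to \cref{lem:walks-to-corners-near} in the near case is fine, but the far case has a genuine gap. The problem is that the two invocations of \cref{lem:two-walks-to-obs} are completely independent: the first gives you $\cA\colon s\to\alpha$ avoiding a \emph{specific} walk $\cB\colon\beta\to\beta$, and the second gives $\cA_*\colon\alpha\to\alpha$ avoiding a \emph{different} walk $\cB_*\colon\beta\to\beta$. To assemble a single triple $(\cP,\cQ,\cR)$ you would need, somewhere in the construction, a walk based at $\alpha$ that avoids $\cB$ (or a walk based at $s$ that avoids $\cB_*$), and you have neither. Your proposed fix---padding with zig-zags $\alpha,\alpha',\alpha,\dots$ and $\beta,\beta',\beta,\dots$---only guarantees non-adjacency between the \emph{padding} segments, because $\{\alpha,\alpha'\}$ and $\{\beta,\beta'\}$ are non-adjacent. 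It says nothing about the interaction of $\alpha,\alpha',\dots$ with the interior of $\cB$, which is an arbitrary $\beta\to\beta$ walk in $H$ that may well visit neighbours of $\alpha$ or $\alpha'$. So the ``consistency of alignment'' you flag as the main obstacle is in fact fatal to this approach, and the re-stitching toward statement~(2) does not escape it.

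The paper's proof works differently. Instead of invoking \cref{lem:two-walks-to-obs}, it introduces a minimal $s$--$\Ob$ separator $P\subseteq N(\Ob)$ and splits into two cases. If some $p\in P$ lies outside $N(\alpha,\alpha',\beta,\beta')$, the $s$--$p$ walk through the far side is non-adjacent to $\{\alpha,\alpha',\beta,\beta'\}$, so one can prepend it and then apply \cref{lem:walks-to-corners-near} with $p$ in the role of $s$. Otherwise $P\subseteq N(\alpha,\alpha',\beta,\beta')$, and the paper invokes \cref{lem:two-walks-near} with $S=P$ to obtain a pair of walks from $\alpha,\beta$ into $P$; a careful analysis of the first index where the $s$-walk to $P$ touches one of these two walks (using that at each step at most one of them lies in $P$) then lets one splice $\cQ$ onto either the $\alpha$-walk or the $\beta$-walk and extend all three back via $\overline{\cA'},\overline{\cB'}$. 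The key point you are missing is this separator structure: it is what forces the $s$-walk and the $\alpha,\beta$-walks to be compatible, rather than hoping two unrelated outputs of \cref{lem:two-walks-to-obs} will line up.
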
 
\begin{proof}
Let $\alpha',\beta'$ be the pair of vertices, such that $C(\Ob)=\{(\alpha, \beta) ,(\alpha',\beta')\}$.
Let $X,Y$ be the bipartition classes of $H$, such that $\alpha,\beta \in X$ and $\ovalp,\ovbet \in Y$.
If $\dist(N[s] \setminus N(\beta),\Ob) \leq 1$, we are done by \cref{lem:walks-to-corners-near}.
Thus let us assume that $\dist(N[s] \setminus N(\beta), \Ob) \geq 2$. Fix any $s' \in N(s) \setminus N(\beta)$ and observe that edges $\alpha\ovalp, \beta\ovbet$ and $ss'$ are independent.
Let $P$ be the minimal $s$-$\Ob$-separator contained in $N(\Ob)$. Let $R$ be the set of vertices reachable from $s$ in $H \setminus P$ and let $Q$ be the set of vertices reachable from $\alpha, \beta$ in $H \setminus P$. Note that $V(\Ob) \subseteq Q$ and $s' \in R$, as it does not belong to $N(\Ob)$.

\paragraph{Case 1: there exists $p \in P$ such that $p \not\in N(\alpha,\beta,\ovalp,\ovbet)$.} Denote by $\cS_p$ the $s$-$p$-walk, such that $\cS_p\noend$ is contained in $R$ (recall that $\cS_p\noend$ denotes the walk $\cS_p$ with the vertex $p$ removed). The walk $\cS_p$ exists, because $P$ is a minimal $s$-$\Ob$-separator. Let $p^\bullet \in R$ be the last vertex of $\cS_p\noend$ (it is possible that $p^\bullet=s$). 

If $p \in P_X$, then clearly $p \in N(\Ob) \setminus N(\beta)$, so $\dist(N[p] \setminus N(\beta),\Ob)\leq 1$. Moreover, $N(p) \setminus N(\beta)$ is non-empty, because $p^\bullet \in N(p) \setminus N(\beta)$.
So by \cref{lem:walks-to-corners-near} we can obtain walks $\cT: \alpha \to \alpha, \cU:p \to \lambda$ for $\lambda \in \{\alpha,\beta\}$, and $\cV: \beta \to \beta$ such that if $\cU$ is a $p$-$\alpha$-walk, then $\cT, \cU$ avoid $\cV$, and otherwise $\cT$ avoids $\cU,\cV$.
Then we define $\cP:=\alpha, \ovalp, \ldots, \alpha \circ  \cT, \cQ:= \cS_p \circ \cU,$ and $\cR:= \beta, \ovbet, \ldots, \beta \circ \cV$ in a way that $|\cP|=|\cQ|=|\cR|$.
Clearly, $\cS_p$ is non-adjacent to $\{\alpha, \ovalp, \beta, \ovbet\}$, because $\cS_p\noend \subseteq R$ and $p \not\in N(\ovalp,\ovbet)$. Thus walks $\cP,\cQ,\cR$ satisfy statement (1.) or (2.), depending on the statement in the call of \cref{lem:walks-to-corners-near}.

Similarly, if $p \in P_Y$, we observe that $\dist(N[p] \setminus N(\ovbet),\Ob)\leq \dist(p,\overline{\Ob})= 1$ and $p^\bullet \in N(p) \setminus N(\ovbet)$. Thus again we can use \cref{lem:walks-to-corners-near}, but now for $(\ovalp,\ovbet) \in C(\Ob)$ instead of $(\alpha,\beta)$.
We obtain walks $\cT: \ovalp \to \ovalp,\cU:p \to \lambda'$ for $\lambda' \in \{\ovalp,\ovbet\}$ and $\cV: \ovbet \to \ovbet$, such that if $\cU$ is a $p$-$\ovalp$-walk, then $\cT, \cU$ avoid $\cV$ and otherwise $\cT$ avoids $\cU,\cV$.
We define
\begin{equation*}
\begin{alignedat}{10}
\cP= \ & \alpha, \ovalp, \ldots, \ovalp &&\circ \cT &&\circ \ovalp,\alpha, \\
\cQ= \ & \cS_p &&\circ \cU &&\circ \lambda', \lambda, \\
\cR= \ & \beta, \ovbet, \ldots, \beta' &&\circ \cV  &&\circ \ovbet,\beta,
\end{alignedat}
\end{equation*}
where $\lambda$ is the vertex in $\{\alpha,\beta\} \cap N(\lambda')$.
Analogously like in the subcase when $p \in P_X$, we can verify that the statement of the lemma holds.

\paragraph{Case 2: for every $p \in P$ it holds that $p \in N(\alpha,\beta,\ovalp,\ovbet)$.} We use \cref{lem:two-walks-near} for $s=\alpha, s'=\alpha' v= \beta, v'=\beta'$ and $S=P$ to obtain vertices $y \in P$ and $x\in Q$ and walks $\cA,\cA': \alpha \to y, \cB,\cB': \beta \to x$ or $\cA,\cA': \beta \to y, \cB,\cB': \alpha \to x$, such that $\cA$ avoids $\cB$, $\cB'$ avoids $\cA'$.
Furthermore, for every $i$ we have $\{\cA,\cB\}^{(i)} \not\subseteq P$ and  $\cA,\cA',\cB,\cB' \subseteq P \cup Q$.

Observe that it is enough to consider the first case, i.e., $\cA,\cA': \alpha \to y, \cB,\cB': \beta \to x$. Indeed, observe that in the second case walks $\cX \circ \cA, \cX' \circ\cA': \alpha \to y$ and $\cY \circ\cB,\cY' \circ\cB': \beta \to x$, where $\cX, \cX', \cY, \cY'$ are given by \cref{obs:walks-between-corners}, satisfy the assumptions of the first case.

Now let $\cS_y$ be the $s$-$y$-walk, such that $\cS_y\noend$ is contained in $R$. Define $\cA^* := \alpha,\alpha',\alpha \circ \cA$, $\cB^* := \beta,\beta',\beta \circ \cB$, and $\cS^* := s,s',s \circ \cS_y$. Observe that $|\cA^*|=|\cB^*|$ and we have $\alpha,\alpha' \in \cA^*$, $\beta,\beta' \in \cB^*$, and $s,s' \in \cS^*$.
Denote the consecutive vertices of these walks by $\cA^* = a_1,\ldots,a_{\ell}$,  $\cB^* = b_1,\ldots,b_{\ell}$, and $\cS^*=s_1,\ldots,s_{m}$. Note that $a_1=\alpha, s_1=s, b_1=\beta$ and $a_{\ell} = s_{m} = y$, $b_{\ell} = x$, and $\cA^*$ avoids $\cB^*$.

Observe that there exist $i \in [\ell]$ and $j \in [m-1]$ such that $a_is_j \in E(H)$ or $b_is_j \in E(H)$ (for example we have $a_{\ell} = s_{m} = y$, so $a_{\ell} s_{m-1} \in E(H)$). Take minimum such $j$ and for that $j$ take minimum $i$. 

Define $\cC_\alpha:=\alpha, \ovalp, \ldots, \alpha$ and $\cC_\beta:=\beta, \ovbet, \ldots, \beta$, such that $|\cC_\alpha|=|\cC_\beta|=\max(j-i+1, 0)$, and $\cC_s:=s, s', \ldots, s$, such that $|\cC_s|:=\max(i-1-j, 0)$.
Note that then three walks $\cC_\alpha \circ a_1, \ldots, a_{i-1}$, $\cC_\beta \circ b_1, \ldots, b_{i-1}$, and $\cC_s \circ s_1, \ldots, s_j$ have the same length.

We claim that only one of the edges $a_is_j$ and $b_is_j$ exists. Indeed, recall that for every $j \in [m-1]$ we have $s_j \in R$. On the other hand, walks $\cA^*$ and $\cB^*$ are contained in $Q \cup P$. So $a_i$ (or $b_i$) can only be adjacent to $s_j$ if $a_i \in P$ ($b_i \in P$). However, by the properties of $\cA,\cB$ we know that at most one of $a_i,b_i$ may be in $P$.

If $a_is_j \in E(H)$ and $b_is_j \not\in E(H)$, we set 
\begin{equation*}
\begin{alignedat}{6}
\cP&=\cC_\alpha \circ a_1, \ldots, a_{i-1} \ & \ \circ & \ a_{i-1}, & \ a_i & \circ a_{i}, \ldots, a_\ell && \circ \overline{\cA'} && \\
\cQ&=\cC_s \circ s_1, \ldots, s_j \ & \ \circ & \ s_j,& \ a_i & \circ a_{i}, \ldots, a_\ell && \circ \overline{\cA'} && \\
\cR&=\cC_\beta \circ b_1, \ldots, b_{i-1} \ & \ \circ & \ b_{i-1},& \ b_i & \circ b_{i}, \ldots, b_\ell && \circ \overline{\cB'} && \\
\end{alignedat}
\end{equation*}
Note that $\cP = \cC_\alpha \circ \cA^* \circ \overline{\cA'}$ and $\cR = \cC_\beta \circ \cB^* \circ \overline{\cB'}$.
Observe that since $s,s' \in \cS^*$, $\beta,\beta' \in \cB$, and by the definition of $j$ and $i$, the subwalk $\cC_s \circ s_1,\ldots,s_j$ of $\cQ$ is non-adjacent to the subwalk $\cC_\beta \circ b_1,\ldots,b_{i-1}$ of $\cR$.

By analogous arguments we can see that if $b_is_j \in E(H)$ and $a_is_j \not\in E(H)$, then for
\begin{equation*}
\begin{alignedat}{6}
\cP&=\cC_\alpha \circ a_1, \ldots, a_{i-1} \ & \ \circ & \ a_{i-1}, & \ a_i & \circ a_{i+1}, \ldots, a_\ell && \circ \overline{\cA'} && \\
\cQ&=\cC_s \circ s_1, \ldots, s_j \ & \ \circ & \ s_j,& \ b_i & \circ b_{i+1}, \ldots, b_\ell && \circ \overline{\cB'} && \\
\cR&=\cC_\beta \circ b_1, \ldots, b_{i-1} \ & \ \circ & \ b_{i-1},& \ b_i & \circ b_{i+1}, \ldots, b_\ell && \circ \overline{\cB'} && \\
\end{alignedat}
\end{equation*} 
we have that $\cP$ avoids $\cQ$ and $\cR$.
\end{proof}

\subsubsection{Proof of \cref{lem:walks-from-uncomp}.}

All lemmas introduced so far will be used to prove \cref{lem:walks-from-uncomp}.

\lemWalksFromIncomp

Before we prove \cref{lem:walks-from-uncomp} in full generality, we consider a special case. We say that an incomparable set $S$ is \emph{strongly incomparable} if for every $v \in S$ there exists $v' \in N(v)$ such that for every $u \in S - \{v\}$ it holds that $uv' \not\in E(H)$. Observe that then the edges $\{vv' \}_{v \in S}$ are independent.
 
\begin{lemma}\label{lem:walks-from-uncomp-special}
\cref{lem:walks-from-uncomp} holds if we additionally assume that $S$ is strongly incomparable.
\end{lemma}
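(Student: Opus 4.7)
The plan is to build each walk $\cD_v$ as a concatenation $\cU_v \circ \cM_v \circ \cS_v$ with a common total length $L$. The prefix $\cU_v = v, v', v, v', \ldots, v$ has even length $2M$ and uses the private neighbor $v'$ guaranteed by strong incomparability. The middle $\cM_v$ is a walk from $v$ to the desired corner $\gamma_v \in \{\alpha, \beta\}$ obtained from Lemma~\ref{lem:walks-to-corners-far} applied to $v$ (or from Lemma~\ref{lem:walks-to-corners-near} when $v$ is close to $\Ob$), possibly post-composed with $\cX$ or $\cY$ from Observation~\ref{obs:walks-between-corners} to switch the endpoint between $\alpha$ and $\beta$ if the natural output disagrees with the chosen $\gamma_v$. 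The suffix $\cS_v$ is an oscillation $\gamma_v, \gamma_v^\circ, \gamma_v, \gamma_v^\circ, \ldots$ on $\{\gamma_v, \gamma_v^\circ\}$, where $\gamma_v^\circ$ is the $C(\Ob)$-partner of $\gamma_v$, chosen to make all $\cD_v$ have total length $L$. I pick $\gamma_a = \alpha$ and $\gamma_b = \beta$ as required, and for $v \in S - \{a, b\}$ I take $\gamma_v$ to match the endpoint produced by Lemma~\ref{lem:walks-to-corners-far}.

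Before the main verification I dispose of the corner cases where some $v \in S$ is comparable with $\alpha$ or $\beta$, or equals one of them; in those cases the containment of neighborhoods allows a direct construction of a walk $v \to \alpha$ or $v \to \beta$ of any prescribed length via the private neighbor $v'$ and a common neighbor with the target corner. In all other cases, Lemma~\ref{lem:walks-to-corners-far} applies, producing walks $\cP_v : \alpha \to \alpha$, $\cQ_v : v \to c_v$ and $\cR_v : \beta \to \beta$ of a common length $\ell_v$ with the stated one-sided avoidance guarantees, from which I obtain $\cM_v$ (of length at most $\ell_v + |\cX|$).

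To verify the avoidance property $\cD_u$ avoids $\cD_v$ when $\gamma_u = \alpha$ and $\gamma_v = \beta$, I check it segment by segment. In the prefix, strong incomparability gives $u' \not\sim v$ and $u \not\sim v'$ (since $u'$ is adjacent only to $u$ in $S$ and similarly for $v'$), so the shifted comparison $\cU_u(i) \not\sim \cU_v(i+1)$ holds for every $i$ in the prefix range regardless of parity. In the suffix, a short case analysis on whether $\Ob$ is $C_6$, $C_8$, or an asteroidal subgraph, using that the edges $\alpha\alpha^\circ$ and $\beta\beta^\circ$ are independent (cf.\ \cref{def:corners} and \cref{obs:asteroid-dual}), yields $\alpha \not\sim \beta^\circ$ and $\alpha^\circ \not\sim \beta$, so $\cS_u$ avoids $\cS_v$. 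The boundary transitions between segments reduce to the same non-adjacency facts.

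The main obstacle is the middle segment: $\cM_u$ and $\cM_v$ arise from independent invocations of Lemma~\ref{lem:walks-to-corners-far} and carry no a priori mutual-avoidance guarantee across distinct $u, v \in S$. My plan is to inflate $M$ so that the middle occupies only a short window of the total walk, and to insert, between $\cM_v$ and $\cS_v$, a synchronization block consisting of the Lemma~\ref{lem:walks-to-corners-far} walks $\cP_v$ or $\cR_v$ themselves (these are $\alpha$-self- or $\beta$-self-walks ending at the same corner $\gamma_v$). This reduces cross-$v$ comparison of the middle to comparison of $\cP_u$ (or $\cQ_u$) against $\cR_v$ (or $\cQ_v$) within a single obstruction, which is exactly the avoidance statement provided by Lemma~\ref{lem:walks-to-corners-far}. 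Shifting these blocks and padding them with the $\cU_v$ and $\cS_v$ oscillations on the appropriate sides then yields the desired uniform-length walks $\cD_v$ satisfying property~\eqref{prop:walks-lemma3}, while the construction automatically respects properties~\eqref{prop:walks-lemma1} and~\eqref{prop:walks-lemma2} by choice of $\gamma_v$.
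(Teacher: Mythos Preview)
Your proposal has a genuine gap in the treatment of the middle segments. You invoke \cref{lem:walks-to-corners-far} \emph{separately} for each $v\in S$, obtaining triples $(\cP_v,\cQ_v,\cR_v)$, and then claim that cross-vertex avoidance ``reduces to comparison of $\cP_u$ (or $\cQ_u$) against $\cR_v$ (or $\cQ_v$) within a single obstruction, which is exactly the avoidance statement provided by \cref{lem:walks-to-corners-far}.'' But that lemma guarantees avoidance only among the three walks produced by \emph{one} invocation. It says nothing about $\cP_u$ versus $\cR_v$, or $\cQ_u$ versus $\cQ_v$, when $u\neq v$. Moreover, the walks $\cP_v,\cR_v$ produced in the proof of \cref{lem:walks-to-corners-far} are not in general contained in $\Ob$: in its Case~2 they are built from the walks of \cref{lem:two-walks-near}, which range freely over $H$ and depend on the specific separator chosen for $s=v$. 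So your synchronization blocks, being $\cP_v$ or $\cR_v$ from $v$'s own call, carry no relationship to the middle segment $\cM_u=\cQ_u$ of another vertex $u$. The vague ``shifting'' at the end does not repair this: shifting $\cM_u$ to align with $\cU_v$ would require $\cQ_u$ to be non-adjacent to $\{v,v'\}$, which is not guaranteed either.

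The paper avoids this obstacle by an inductive construction on $|S|$. Having built walks $\{\widetilde{\cD_i}\}_{i<k}$ for $S\setminus\{x_k\}$, it adds $x_k$ in one of two ways. If $\{x_k,x_k'\}$ is adjacent to some vertex on some $\widetilde{\cD_p}$, it splices the new walk onto $\widetilde{\cD_p}$ at the first such position. If $\{x_k,x_k'\}$ is non-adjacent to \emph{all} existing walks, it makes a \emph{single} call to \cref{lem:walks-to-corners-far} with $s=x_k$, obtaining $(\cP,\cQ,\cR)$, and then appends the \emph{same} $\cP$ to every $\alpha$-ending walk and the \emph{same} $\cR$ to every $\beta$-ending walk, while $\cD_k:=x_k,x_k',\ldots,x_k\circ\cQ$. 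Because $\cP,\cQ,\cR$ come from one invocation, their mutual avoidance is exactly what the lemma provides, and the non-adjacency of $\{x_k,x_k'\}$ to the old walks handles the prefix. This is the idea your construction is missing: the extension appended to every walk must come from a common call, not from per-vertex calls.
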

\begin{proof}Let $X,Y$ be the bipartition classes of $H$ such that $S = \{x_1,\ldots,x_k \} \subseteq X$, where $k \geq 2, x_1=a, x_2=b$. To simplify the notation, we will write $\cD_i$ instead of $\cD_{x_i}$.
Let $(\ovalp,\ovbet)$ be the pair, such that $C(\Ob) = \{(\alpha,\beta),(\ovalp,\ovbet)\}$.

First, observe that all constructed walks must have even length, as they start and end in the same bipartition class. Thus the only possibility to have walks satisfying conditions (1)-(3), which are of length less than two, is if $|S|=2$ and $a=\alpha$ and $b=\beta$. To avoid having walks of length 0, in this case we return walks $\cD_a=\alpha,\ovalp,\alpha$ and $\cD_b=\beta,\ovbet,\beta$. So from now on we do not need to worry about the length of the walks.

We prove the lemma by induction on $k$. Consider the base case that $S=\{a,b\}$. 
By \cref{lem:walks-s-v} used for $s=a, v=b$ and $t=\beta$, we obtain a vertex $q$ and walks $\cP,\cP': a \to \beta,\cQ,\cQ': b \to q$ or $\cP,\cP': a \to q,\cQ,\cQ': b \to \beta$, such that $\cP$ avoids $\cQ$ and $\cQ'$ avoids $\cP'$.
Note that in both cases $q$ must be incomparable with $\beta$ (recall \cref{obs:priv-neighbours}), so we can use \cref{lem:two-walks-to-obs} for $s=q$ to obtain $\cA,\cA':q \to \alpha$ and $\cB,\cB': \beta \to \beta$, such that $\cA$ avoids $\cB$ and $\cB'$ avoids $\cA'$.

If $\cP: a \to q$ and $\cQ:b \to \beta$, then clearly we can define $\cD_a = \cP \circ \cA$ and $\cD_b = \cQ \circ \cB$.
On the other hand, if $\cP: a \to \beta,\cQ: b \to q$, we define $\cD_a = \cP \circ \cB' \circ \cY'$ and $\cD_b = \cQ \circ \cA' \circ \cX'$, where $\cX', \cY'$ are given by \cref{obs:walks-between-corners}.

So now assume that $S=\{x_1,\ldots,x_k\}$ for $k \geq 3$ and $x_1=a$, $x_2=b$, and the lemma holds for $k-1$.
Let $\{ \widetilde{\cD_i} \}_{i=1}^{k-1}$ be the set of walks given by the inductive call for the set $S \setminus \{x_k\}$.
The consecutive vertices of $\widetilde{\cD_i}$ are denoted by $d_1^i, \ldots, d_\ell^i$.
Let $A$ be the set of walks $\widetilde{\cD_i}$ terminating at $\alpha$ and let $B$ be the set of walks $\widetilde{\cD_i}$ terminating at $\beta$. Clearly $\widetilde{\cD_a} \in A$ and $\widetilde{\cD_b} \in B$.

As $S$ is strongly incomparable, for every $x_i \in S$ there exists $x'_i \in N(x_i)$, such that the edges in the set $\{x_ix'_i\}_{i \in [k]}$ are independent.

\paragraph{Case 1: There is an edge between $\{x_k,x'_k\}$ and $\bigcup_{i \in [k-1]} \widetilde{\cD_i}$.}
This means that there is some $j \in [\ell]$ and $p \in [k-1]$, such that one of edges $x_kd^p_j$ or $x'_kd^p_j$ exists (note that both edges cannot exist since $H$ is bipartite).
We choose the minimum such $j$, and for this $j$, if only possible, we choose $p$ such that $\widetilde{\cD_p}\in B$.

Observe that $j >1$, because otherwise $x_px'_k \in E(H)$, a contradiction with the choice of $x'_k$.
Let $\overline{x}_k \in \{x_k,x'_k\}$ be the vertex for which $\overline{x}_kd_j^p \in E(H)$. Then we define 
\[
\cD_i := 
\begin{cases}
x_k,x'_k,\ldots,\overline{x}_k \circ \overline{x}_k,d_j^p,\ldots,d_\ell^p & \text{ if } i=k,\\
\widetilde{\cD_i} & \text{ if } i \in [k-1],
\end{cases}
\]
so that they have equal length. Denote the consecutive vertices of $\cD_k$ by $d^k_1,\ldots,d^k_\ell$.
It is clear that these walks satisfy conditions (1) and (2), so we only need to prove the condition (3).
Assume that there exist walks $\cD_q:x_q \to \alpha$ and $\cD_r:x_r \to \beta$, such that $\cD_q$ does not avoid $\cD_r$.
Note that by the inductive assumption this cannot happen if $q,r \in [k-1]$. Thus either $r=k$ or $q=k$.

Assume that $r=k$, i.e., $\cD_k$ is an $x_k$-$\beta$-walk. Note that this means that $\widetilde{\cD_p} \in B$.
Let $c \geq 2$ be the minimum index for which $d_{c-1}^qd_c^k \in E(H)$.
If $c < j$, it means $d^q_{c-1}x_k \in E(H)$ or $d^q_{c-1}x'_k \in E(H)$, which contradicts the minimality of $j$.
If $c \geq j$, then $d^k_{c}=d^p_c$, so $d_{c-1}^q$ is adjacent to $d_c^p$. Thus the $x_q$-$\alpha$ walk $\widetilde{D_q}$ does not avoid the $x_p$-$\beta$ walk $\widetilde{D_p}$, a contradiction.

So assume $q=k$, i.e., $\cD_k$ is an $x_k$-$\alpha$-walk and $\widetilde{\cD_p}$ is an $x_p$-$\alpha$-walk. Let $c \geq 2$ be the smallest index for which $d_{c-1}^kd_c^r \in E(H)$. The argument is analogous: if $c < j$, then we have a contradiction with the minimality of $j$. If $c >j$, then the $x_p$-$\alpha$-walk $\widetilde{\cD_p}$ does not avoid the $x_r$-$\beta$ walk $\widetilde{\cD_r}$, a contradiction.
Finally, if $c=j$, recall that we would choose $r$ instead of $p$, as $\widetilde{\cD_r} \in B$ and $\widetilde{\cD_p} \notin B$. This completes the proof of this case.

\paragraph{Case 2: There are no  edges between $\{x_k,x'_k\}$ and $\bigcup_{i \in [k-1]} \widetilde{\cD_i}$.} Then we use \cref{lem:walks-to-corners-far} for $\alpha,\beta$ and $s=x_k$ to obtain walks $\cP, \cQ$ and $\cR$ and define

\[\cD_i :=
\begin{cases}
\begin{aligned}
&\widetilde{\cD_i}	 &~\circ~& \cP && \textrm{ if } i \in  A, \\ 
&\widetilde{\cD_i}	&~\circ~& \cR && \textrm{ if } i \in  B, \\
&x_k,x'_k,\ldots,x_k &~\circ~& \cQ && \textrm{ if } i=k,
\end{aligned}
\end{cases}\]
so that all walks have the same length $\ell'-1$. We extend the naming of vertices of walks $\cD_i$ by denoting their consecutive vertices by $d^i_1,\ldots, d^i_{\ell'}$, note that this is consistent with previous notation, as for every $i \in [k-1]$ the walk $\widetilde{\cD_i}$ is the prefix of $\cD_i$.

Again, properties (1) and (2) are straightforward, let us verify the property (3). 
As $\cP$ avoids $\cR$ and for every $q \in A$ and $r \in B$ the walk $\cD_q$ avoids $\cD_r$, by \cref{obs:walks-composition} we know that the property (3) holds for all $q,r \in [k-1]$.

So we only need to consider two cases. First, assume that $\cQ$ is an $x_k$-$\alpha$-walk (and thus so it $\cD_k$), and $\cP, \cQ$ avoid $\cR$.
Suppose that there exists $\cD_r \in  B$ such that $\cD_k$ does not avoid $\cD_r$, i.e., there exists $c \geq 2$ such that $d_{c-1}^kd_c^r \in E(H)$. Recall that the number of vertices in $\widetilde{\cD_r}$ is $\ell$.
If $c-1 \geq \ell$, then $\cQ$ does not avoid $\cR$, a contradiction. And if $c-1 < \ell$, then $d_{c-1}^k \in \{x_k,x'_k\}$, so $\widetilde{\cD_r}$ is adjacent to $x_k$ or $x'_k$, a contradiction with the assumption of the case.

Now assume that $\cQ$ is an $x_k$-$\beta$-walk and there exists $q \in A$ such that $\cD_q$ does not avoid $\cD_k$, i.e., there exists $c \geq 2$ such that $d^q_{c-1}d^k_c \in E(H)$. The arguments which lead to the contradiction are analogous -- if $c-1 \geq \ell$, then $\cP$ does not avoid $\cQ$, and if $c-1 < \ell$, then $\widetilde{\cD_q}$ is adjacent to $x_k$ or $x'_k$.
\end{proof}

Now we are ready to prove \cref{lem:walks-from-uncomp}.

\begin{proof}[Proof of \cref{lem:walks-from-uncomp}]
Let $X,Y$ be the bipartition classes of $H$ such that $S = \{x_1,\ldots,x_k \} \subseteq X$, where $k \geq 2, x_1=a, x_2=b$. Again, to simplify the notation, we will write $\cD_i$ instead of $\cD_{x_i}$.
Let $a' \in N(a) \setminus N(b)$ and $b' \in N(b) \setminus N(a)$. We define $x'_1:=a'$, $x'_2:=b'$, and for every $i \geq 3$ such that $x_i \notin N(a')$ we choose $x'_i$ to be any vertex from $N(x_i) \setminus N(a)$; they exist, since $S$ is incomparable. 

Let $U=\{x_i,x'_i : x_i \notin N(a')\} \cup \{a,a'\}$, and let $\cC$ be the set of all connected components of $H \setminus N(a,a')$. Note that $x_i$ and $x'_i$ are always in the same component from $\cC$. Moreover, there is a component $C_a \in \cC$, such that $V(C_a)=\{a,a'\}$, and a component $C_b \in \cC$, such that $b,b' \in V(C_b)$.
For each $C \in \cC$ containing at least one vertex from $U$, we choose one vertex $u \in U_X \cap V(C)$ and call it the \emph{representative} of $C$. The representatives are chosen arbitrarily, except that we choose $b$ as the representative of $C_b$. Note that necessarily $a$ is the representative of $C_a$. 
Let $R \subseteq S$ be the set of all vertices that are representatives of components in $\cC$, clearly $a,b \in R$.
For every vertex of $C \in \cC$, its representative is the representative of $C$.

We claim that $R$ is strongly incomparable. Indeed, note that for every $x_i \in R$, the vertex $x'_i \in N(x_i)$ is non-adjacent to every $x_j \in R\setminus \{x_i\}$. This is because if $x'_i$ is adjacent to some $x_j$, then, since $x'_i$ is non-adjacent to $a$, both $x_i,x_j$ must be in the same component in $\cC$, so they cannot both belong to $R$.
So calling \cref{lem:walks-from-uncomp-special} for the set $R$ and $a,b,\alpha,\beta$ gives us the family of walks $\{ \widetilde{\cD_i}\}_{x_i \in R}$. 

Recall that the only vertices $x_i \in S$, for which $x'_i$ is not defined yet, are in $(S \cap N(a')) \setminus \{a\}$. 
Let us consider such $x_i$, clearly it is adjacent to some vertices in $U_Y$ (at least $a'$). 
If there is some $x_j \in U$, such that:
\begin{compactitem}
\item $x_i$ is adjacent to $x'_j$, and
\item $\widetilde{\cD_r}$ terminates at $\beta$, where $x_r$ is the representative of $x_j$,
\end{compactitem}
then we set $x'_i := x'_j$. Otherwise, we set $x'_i := a'$ (note that $\widetilde{\cD_a}$ terminates at $\alpha$).
Now for every $x_i$ we have defined $x'_i$, and always $x'_i \in U_Y$. In particular, no $x'_i$, except for $a'$, is adjacent to $a$.

Consider a vertex $x_i \in S$ and let $C \in \cC$ be the component containing $x'_i$.
Let $x_r$ be the representative of $C$, and let $\cZ_i$ be a $x'_i$-$x_r$-walk, contained in $C$.
We define \[\cD_i := \underbrace{x_i, x'_i \circ \cZ_i ~\circ~ x_r,x'_r,\ldots,x_r}_{t \textrm{ vertices}} ~\circ~ \widetilde{\cD_r},\] where $t$ is chosen so that all $\cD_i$'s are of equal length.
Let us denote by $d^i_1,\ldots,d^i_\ell$ the consecutive vertices of $\cD_i$, note that $d^i_t,d^i_{t+1},\ldots,d^i_\ell=\widetilde{\cD_r}$ and $d^i_t=x_r$.

It is clear that all walks $\cD_i$ terminate at $\alpha$ or $\beta$, and in particular $\cD_a$ is an $a$-$\alpha$-walk and $\cD_b$ is a $b$-$\beta$-walk, so the properties (1) and (2) hold.
To prove the property (3), suppose that there are $p,q \in [k]$ such that $\cD_p: x_p \to \alpha$ does not avoid $\cD_q: x_q \to \beta$.
So there exists $c\geq 2$ such that $d^p_{c-1}$ is adjacent to $d^q_c$.
Let $x_{p'},x_{q'}$ be, respectively, the representatives of $x_p$ and $x_q$. Clearly $x_p',x_q' \in R$.
Note that if $c-1 \geq t$, then the $x_{p'}$-$\alpha$-walk $\widetilde{\cD_{p'}}$ does not avoid the $x_{q'}$-$\beta$-walk $\widetilde{\cD_{q'}}$, a contradiction with the properties of the walks $\{ \widetilde{\cD_i}\}_{x_i \in R}$ ensured by \cref{lem:walks-from-uncomp-special}.

If $2 \leq c-1 < t$, then there exists an $x_{p'}$-$x_{q'}$-path in $H - N(a,a')$, so they are in the same connected component in $\cC$. Since each component in $\cC$ has exactly one representative, we obtain that $p'=q'$ and thus and $\cD_p$ and $\cD_q$ both terminate in $\alpha$ or in $\beta$.

Finally, consider $c=2$, which means that $d^p_1=x_p$ is adjacent to $d^q_2=x'_q$.
There are three possibilities: (i) $x_p,x'_q \notin N(a,a')$, or (ii) $x_p \in N(a')$, or (iii) $x'_q \in N(a)$.
In case (i) we observe that $x_p$ and $x'_q$ are in the same connected component in $\cC$, which means that $x_{p'}=x_{q'}$, so both walks $\cD_p,\cD_q$ terminate at the same vertex.
In case (ii), recall that when choosing $x'_p$, we gave preference to vertices whose representative's walk terminates at $\beta$ (see the second condition in definition). Thus we would have chosen $x'_p=x'_q$, a contradiction.
Finally, in case (iii), recall that $x'_q=a'$. But since the representative of $a'$ is $a$, the walk $\cD_q$ terminates at $\alpha$, a contradiction.
This completes the proof of the lemma.
\end{proof}

\newpage
\section{Algorithm for general target graphs}
In this section we will generalize the invariant $i^*(H)$ and extend \cref{thm:main-bipartite-algo} to all targets relevant target graphs $H$.
Let us start with a simple observation, which is an analogue of \cref{prop:lists-bipartite}.

\begin{observation}\label{prop:lists-general}
Let $(G,L)$ be an instance of \lhomo{H}. Without loss of generality we might assume the following.
\begin{compactenum}[(1)]
\item The graph $G$ is connected,
\item for each $x \in V(G)$, the set $L(x)$ is incomparable, \label{it:lists-general-incomparable}
\item for each edge $xy \in E(G)$, for every $u \in L(x)$ there is $v \in L(y)$, such that $uv \in E(H)$.
\end{compactenum}
\end{observation}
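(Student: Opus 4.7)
The plan is to justify each of the three reductions separately, each via a polynomial-time preprocessing step that preserves the yes/no answer of the instance, exactly mirroring the structure of \cref{prop:lists-bipartite}.

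Item~(1) is the standard component decomposition: if $G$ has connected components $G_1,\ldots,G_\ell$, then $(G,L)\to H$ if and only if $(G_i, L|_{V(G_i)})\to H$ for every $i$, so we solve each component independently and take the conjunction of the answers. Any algorithmic bound (including one parameterized by treewidth) is inherited additively across the components.

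For item~(2) I use a pointwise recoloring argument, as in item~(3) of \cref{prop:lists-bipartite}. Suppose $u,v\in L(x)$ with $N(u)\subseteq N(v)$, and let $h\colon(G,L)\to H$ be any list homomorphism with $h(x)=u$. Define $h'(x):=v$ and $h'(z):=h(z)$ for $z\neq x$. For every edge $xy$ of $G$ we have $h(y)\in N(u)\subseteq N(v)$, so $h'(x)h'(y)\in E(H)$; all other edges are preserved trivially. Hence $h'$ is still a list homomorphism, and we may safely delete $u$ from $L(x)$. Iterating this pruning until no list contains a comparable pair terminates in polynomial time, since every step strictly shrinks some list.

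For item~(3) I apply a standard arc-consistency reduction. If $u\in L(x)$ and there is a neighbor $y$ of $x$ in $G$ with $L(y)\cap N(u)=\emptyset$, then any list homomorphism $h$ with $h(x)=u$ would require $h(y)\in L(y)\cap N(u)$, a contradiction; hence $u$ can be removed from $L(x)$. We again iterate to a fixed point in polynomial time. The main point to check is that the three reductions do not interfere: reductions~(2) and~(3) only remove vertices from lists, and~(1) only splits the instance into subinstances, so composing them in any order is safe and termination is automatic. Applying all three to a fixed point yields an equivalent consistent instance.
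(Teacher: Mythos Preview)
Your proposal is correct and follows essentially the same approach as the paper, which proves items~(1) and~(2) by pointing back to the corresponding items of \cref{prop:lists-bipartite} and proves item~(3) by the same arc-consistency observation. Your version is simply more explicit (in particular, you spell out the recoloring and the termination and non-interference of the three reductions, which the paper leaves implicit).
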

\begin{proof}
The first two items are analogous to the corresponding ones in \cref{prop:lists-bipartite}.
For the last item observe that if $u \in L(x)$ is non-adjacent (in $H$) to every element of $L(y)$, then we can safely remove $u$ from $L(x)$.
\end{proof}

The high-level idea is to reduce the general case of \lhomo{H} to the case, when the target is bipartite, and then use \cref{thm:main-bipartite-algo}. For this, we will consider the so-called \emph{associated instances}, introduced by Feder, Hell, and Huang~\cite{DBLP:journals/jgt/FederHH03} (see \cref{sec:associated}).

We will also separately consider some special graphs that we call \emph{strong split graphs}. A graph $H$ is a \emph{strong split graph}, if its vertex set can be partitioned into two sets $B$ and $P$, where $B$ is independent and $P$ induces a reflexive clique. We call the pair $(B,P)$ \emph{the partition of $H$}. Note that the partition is unique: all vertices without loops must belong to $B$ and all vertices with loops must belong to $P$.

\subsection{Associated instances and clean homomorphisms}\label{sec:associated}
For a graph $G=(V,E)$, by $G^*$ we denote the \emph{associated bipartite graph}, defined as follows.
The vertex set of $G^*$ is the union of two independent sets: $\{x' \colon x \in V\}$ and $\{x'' \colon x \in V\}$. 
The vertices $x'$ and $y''$ are adjacent if and only if $xy \in E$. Note that the edges of type $x'x''$ in $G^*$ correspond to loops in $G$. 
The vertices $x'$ and $x''$ are called \emph{twins}.

Let $(G,L)$ be an instance of \lhomo{H}. An \emph{associated instance} is the instance $(G^*,L^*)$ of $\lhomo{H^*}$, where $L^*$ are \emph{associated lists} defined as follows. For $x \in V(G)$, we set $L^*(x') = \{u' \colon u \in L(x)\}$ and $L^*(x'') = \{u'' \colon u \in L(x)\}$. Note that in the associated lists, the vertices  appearing in the list of $x'$ are precisely the twins of the vertices appearing in the list of $x''$. A homomorphism $f \colon (G^*,L^*)  \to H^*$ is \emph{clean} if it maps twins to twins, i.e., $f(x') = u'$ if and only if $f(x'') = u''$. 
The following simple observation was the crucial step of the proof of the complexity dichotomy for list homomorphisms, shown by Feder, Hell, and Huang~\cite{DBLP:journals/jgt/FederHH03}. We state it using slightly different language, which is more suitable for our purpose.

\begin{proposition}[Feder, Hell, Huang~\cite{DBLP:journals/jgt/FederHH03}] \label{prop:associated-equiv}
Let $(G,L)$ be an instance of \lhomo{H}. Then it is a yes-instance if and only if $(G^*,L^*)$ admits a clean homomorphism to $H^*$. 
\end{proposition}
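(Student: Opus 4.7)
The plan is to establish a bijection between list homomorphisms $(G,L) \to H$ and clean list homomorphisms $(G^*, L^*) \to H^*$, so the two existence questions become equivalent. This is essentially unpacking the definitions, and no step should pose a real obstacle; the only thing to verify carefully is that cleanness is preserved in both directions.

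For the forward implication, suppose $f \colon (G,L) \to H$ is a list homomorphism. I would define $f^* \colon V(G^*) \to V(H^*)$ by $f^*(x') := (f(x))'$ and $f^*(x'') := (f(x))''$ for every $x \in V(G)$. Cleanness is immediate from this definition. To see that $f^*$ respects $L^*$, note that $f(x) \in L(x)$ implies $(f(x))' \in \{u' \colon u \in L(x)\} = L^*(x')$ and analogously for $x''$. To check that $f^*$ is a homomorphism, take any edge $x'y'' \in E(G^*)$; by construction of $G^*$ this means $xy \in E(G)$, so $f(x)f(y) \in E(H)$, and then the definition of $H^*$ gives $(f(x))'(f(y))'' = f^*(x')f^*(y'') \in E(H^*)$.

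For the backward implication, let $g \colon (G^*, L^*) \to H^*$ be a clean list homomorphism. For each $x \in V(G)$, cleanness means that there is a unique $u \in V(H)$ with $g(x') = u'$ and $g(x'') = u''$; define $f(x) := u$. List-compatibility follows because $g(x') = (f(x))' \in L^*(x')$ forces $f(x) \in L(x)$, by the definition of $L^*$. For edge-preservation, let $xy \in E(G)$; then $x'y'' \in E(G^*)$, so $g(x')g(y'') = (f(x))'(f(y))'' \in E(H^*)$, which by the definition of $H^*$ is equivalent to $f(x)f(y) \in E(H)$, as required.

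Finally, it is worth observing (for completeness, though not strictly needed) that the two constructions above are inverse to one another, confirming the bijective nature of the correspondence; this also makes it clear where cleanness is essential: without it, the value $g(x'')$ need not be the twin of $g(x')$, and there would be no well-defined way to recover a single vertex $f(x) \in V(H)$ from $g$.
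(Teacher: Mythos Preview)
Your proof is correct and is the natural direct verification from the definitions. The paper does not include its own proof of this proposition; it is quoted from Feder, Hell, and Huang~\cite{DBLP:journals/jgt/FederHH03} and labeled a ``simple observation,'' so there is nothing to compare against beyond noting that your argument is exactly the intended one.
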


Let us point out that the restriction to clean homomorphisms is necessary for the equivalence.
Indeed, consider for example $G=K_3$ and $H=C_6$, so clearly $G \not\to H$. However, we have $G^* \simeq C_6$ and $H^* \simeq 2C_6$, so $G^* \to H^*$.

Recall that if $H$ is bipartite, then \lhomo{H} is polynomial-time solvable if $H$ is the complement of a circular-arc graph~\cite{DBLP:journals/combinatorica/FederHH99}, and NP-complete otherwise.
Feder, Hell, and Huang~\cite{DBLP:journals/jgt/FederHH03} proved the following dichotomy theorem.

\begin{theorem}[Feder, Hell, Huang~\cite{DBLP:journals/jgt/FederHH03}]\label{thm:bi-arcs}
Let $H$ be an arbitrary graph (with loops allowed).
\begin{compactenum}
\item The \lhomo{H} problem is polynomial-time solvable if $H$ is a bi-arc graph, and NP-complete otherwise.
\item The graph $H$ is a bi-arc-graph if and only if $H^*$ is the complement of a circular-arc graph.
\end{compactenum}
\end{theorem}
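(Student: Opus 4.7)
I would tackle the two parts of the theorem in reverse order: first establish the structural equivalence in part 2, which then feeds naturally into the algorithmic dichotomy in part 1.

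For part 2, I would establish a direct bijection between bi-arc representations of $H$ and co-circular-arc representations of $H^*$. Recall that in a bi-arc representation each vertex $v$ of $H$ is assigned a pair of arcs $(N_v, S_v)$ on a circle with two distinguished points $p, q$, where $N_v$ contains $p$ but not $q$, $S_v$ contains $q$ but not $p$, and the adjacency in $H$ is encoded by certain disjointness conditions between these arcs. Given such a representation, I would assign the arc $N_v$ to the vertex $v' \in V(H^*)$ and the arc $S_v$ to $v'' \in V(H^*)$. Since $V(H^*) = \{v' : v \in V(H)\} \cup \{v'' : v \in V(H)\}$ is naturally bipartite with classes distinguished by which of $p, q$ is contained, one checks that the bi-arc adjacency conditions in $H$ translate exactly into $u'v'' \in E(H^*) \iff N_u \cap S_v = \emptyset$, which is the co-circular-arc condition for $H^*$. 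The converse is symmetric.

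For part 1, I would use \cref{prop:associated-equiv} to reduce \lhomo{H} to the problem of finding \emph{clean} list homomorphisms $(G^*, L^*) \to H^*$, and then combine this with the bipartite dichotomy of~\cite{DBLP:journals/combinatorica/FederHH99}. On the polynomial side, if $H$ is a bi-arc graph, then by part 2, $H^*$ is the complement of a circular-arc graph, so \lhomo{H^*} is polynomial-time solvable. To enforce cleanness, I would either adapt the list-propagation algorithm for $H^*$ so that every reduction applied to $L^*(x')$ is mirrored on $L^*(x'')$ (by twinning), or, after solving \lhomo{H^*}, post-process by repeated consistency checks until the list pairs are symmetric with respect to twins; the bi-arc structure ensures that such a clean solution exists whenever any solution exists.

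On the hard side, suppose $H$ is not a bi-arc graph. Then by part 2, $H^*$ is not the complement of a circular-arc graph, so \lhomo{H^*} is NP-complete. The key step is lifting hardness from \lhomo{H^*} to \lhomo{H}: given a bipartite instance $(G, L)$ of \lhomo{H^*}, one constructs an instance $(G', L')$ of \lhomo{H} by ``folding'' twin vertices — identifying $x'$ and $x''$ into a single vertex $x$ of $G'$, taking $L'(x)$ to consist of those $u \in V(H)$ with $u' \in L(x')$ and $u'' \in L(x'')$, and keeping an edge for every adjacency. A clean homomorphism $(G^*, L^*) \to H^*$ then corresponds bijectively to a list homomorphism $(G', L') \to H$; the cleanness constraint is automatically enforced by the identification.

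The main obstacle is the soundness of the folding reduction in the hardness direction: the reduction from \lhomo{H^*} is straightforward only when the starting instance already respects the bipartite structure appropriately, and one must argue that the NP-hardness gadgets constructed for $H^*$ (built from induced cycles of length $\geq 6$ or special edge asteroids by~\cite{DBLP:journals/combinatorica/FederHH99}) can be realized in such a bipartite form so that the folded instance remains meaningful. Equally delicate is verifying that the polynomial algorithm for $H^*$ can always be made to produce a clean solution when one exists — this ultimately rests on the structural properties of the bi-arc representation, in particular that the arcs $(N_v, S_v)$ allow consistent simultaneous choices for twin pairs.
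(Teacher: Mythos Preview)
The paper does not prove this theorem. It is stated as a cited result of Feder, Hell, and Huang~\cite{DBLP:journals/jgt/FederHH03} and used as a black box; there is no proof in the paper to compare against.

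That said, a remark on your proposal is still worthwhile, because the paper does contain the relevant tool for the hardness direction, and your sketch of that step is confused. You write that given a bipartite instance $(G,L)$ of \lhomo{H^*} you would ``fold'' twin vertices, identifying $x'$ and $x''$ into a single vertex of $G'$. But a general instance $(G,L)$ of \lhomo{H^*}$ has no twin structure on $G$; the twins live in $H^*$, not in $G$. You seem to be conflating an arbitrary bipartite instance $G$ with a graph of the form $\widetilde{G}^*$ for some $\widetilde{G}$. The correct (and simpler) reduction is the one the paper states as \cref{prop:bipartite-associted}: keep $G$ exactly as it is, and just replace each $H^*$-list $L(x)$ by the $H$-list $L'(x) := \{u : \{u',u''\} \cap L(x) \neq \emptyset\}$. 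Because a consistent instance has each list contained in one bipartition class of $H^*$, this is a bijective recoding of colours, and $(G,L) \to H^*$ if and only if $(G,L') \to H$. No folding of $G$ is needed, and no cleanness issue arises in this direction.

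Your sketch of part~2 (the direct correspondence between bi-arc representations of $H$ and co-circular-arc representations of $H^*$ via $v \mapsto (N_v,S_v) \leftrightarrow (v' \mapsto N_v,\ v'' \mapsto S_v)$) is the standard argument and is fine.
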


So for our problem the interesting graphs $H$ are those, for which $H^*$ is not the complement of a circular-arc graph.
In the observation below we summarize some properties of associated instances.

\begin{observation}\label{obs:associated-properties}
Consider an instance $(G,L)$ of \lhomo{H} and the associated instance $(G^*,L^*)$ of \lhomo{H^*}.
Suppose that $G$ is given along with a tree decomposition of width $t$.
\begin{compactenum}[(1)]
\item For each $v \in V(G)$ and $u \in V(H)$, we have $u \in L(v)$ if and only if $u' \in L^*(v')$ if and only if $u'' \in L^*(v'')$. In particular, each list is contained in one bipartition class of $H^*$. \label{it:associated-twins}
\item In polynomial time we can construct a tree decomposition $\cT^*$ of $G^*$ of width at most $2t$ with the property that for each $x \in V(G)$, each bag of $\cT$ either contains both $x',x''$ or none of them. \label{it:associated-treedecomp}
\end{compactenum}
\end{observation}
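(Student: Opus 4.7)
The plan is that both parts follow by direct unpacking of the relevant definitions; no substantive combinatorics is required, and I do not foresee any real obstacle.

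For part (1), I would invoke the definition of $L^*$ directly: by construction $L^*(v') = \{u' : u \in L(v)\}$ and $L^*(v'') = \{u'' : u \in L(v)\}$, and the maps $u \mapsto u'$ and $u \mapsto u''$ are bijections onto the two bipartition classes of $H^*$. The three-way equivalence $u \in L(v) \iff u' \in L^*(v') \iff u'' \in L^*(v'')$ is then immediate. The ``in particular'' clause follows because $L^*(v') \subseteq \{u' : u \in V(H)\}$ and $L^*(v'') \subseteq \{u'' : u \in V(H)\}$, and these two sets are precisely the bipartition classes of $H^*$.

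For part (2), the construction is the natural one: keep the underlying tree of $\cT = (T, \{X_s\}_{s \in V(T)})$ unchanged and set $X_s^* := \{v',v'' : v \in X_s\}$ for each node $s \in V(T)$. I would then verify the three tree-decomposition axioms for $\cT^* := (T, \{X_s^*\}_{s \in V(T)})$ with respect to $G^*$. \emph{Vertex coverage}: every $v \in V(G)$ lies in some $X_s$, so both twins $v',v''$ lie in $X_s^*$, and every vertex of $G^*$ is accounted for. \emph{Edge coverage}: every edge of $G^*$ has the form $x'y''$ with $xy \in E(G)$ (recall that loops $xx$ of $G$ produce edges $x'x''$ of $G^*$), and any bag $X_s$ containing both $x$ and $y$ yields a bag $X_s^*$ containing both $x'$ and $y''$. \emph{Subtree property}: for each $v \in V(G)$ the set of bags of $\cT^*$ containing $v'$ equals the set of bags of $\cT$ containing $v$, which is a subtree of $T$ by the corresponding property of $\cT$; the identical argument works for $v''$.

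The twin-together property $\{v',v''\} \subseteq X_s^*$ or $\{v',v''\} \cap X_s^* = \emptyset$ is hard-wired into the construction. The width bound is then an immediate bag-size calculation from $|X_s^*| = 2|X_s|$, and the whole procedure runs in time linear in the size of $\cT$. The only mildly delicate point is this bag-size bookkeeping, but it is a single line.
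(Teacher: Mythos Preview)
Your proposal is correct and matches the paper's approach exactly: the paper likewise keeps the tree of $\cT$ and replaces each vertex $v$ in each bag by the pair $v',v''$, declaring the verification ``straightforward.'' One small remark on the bag-size bookkeeping you flag as delicate: from $|X_s^*| = 2|X_s| \le 2(t+1)$ you actually get width at most $2t+1$, not $2t$; this off-by-one is already present in the statement itself and is immaterial for how it is used later.
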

\begin{proof}
The first item follows directly from the definition of $(G^*,L^*)$.
To see the second item, consider a tree decomposition $\cT$ of $G$. We construct $\cT^*$ by taking the same tree structure as for $\cT$, and replacing each vertex $v$ of $G$ in each bag of $\cT$ by the vertices $v',v''$ of $G^*$.
It is straightforward to verify that this way we obtain a tree decomposition of $\cT^*$ with the desired properties.
\end{proof}

Observe that for bipartite $H$, the graph $H^*$ consists of two disjoint copies of $H$, so clearly $i^*(H) = i^*(H^*)$.
This motivates the following definition, generalizing \cref{def:i_star}.
\begin{definition}[$i^*(H)$ for general $H$]
Let $H$ be a connected non-bi-arc graph. Define
\[i^*(H) := i^*(H^*).\]
\end{definition}

\subsection{Decompositions of generals target graphs} \label{sec:general-decompositions}
In this section we generalize the notion of decompositions of bipartite graphs, introduced in \cref{sec:decomposition}, to all graphs (with possible loops).
The high-level idea is to define decompositions of $H$, so that they will correspond to bipartite decompositions of $H^*$. We consider the following three types of decompositions of a graph $H$ (see \cref{fig:decompos}). Note that unless stated explicitly, we do not insist that any of the defined sets is non-empty.

\begin{definition}[$F$-decomposition]\label{def:f-decomposition}
A partition of $V(H)$ into an ordered triple of sets $(F,K,Z)$ is an \emph{$F$-decomposition} if the following conditions are satisfied (see \cref{fig:decompos}, left).
\begin{compactenum}
\item $K$ is non-empty and it separates $F$ and $Z$, \label{it:fdecomp-separator}
\item $|F| \geq 2$, \label{it:fdecomp-geq2}
\item $K$ induces a reflexive clique,\label{it:fdecomp-cliqueis}
\item $F$ is complete to $K$. \label{it:fdecomp-complete}
\end{compactenum}
\end{definition}

\begin{definition}[$BP$-decomposition]\label{def:bp-decomposition}
A partition of $V(H)$ into an ordered five-tuple of sets $(B,P,M,K,Z)$ is a \emph{$BP$-decomposition} if the following conditions are satisfied (see \cref{fig:decompos}, middle).
\begin{compactenum}
\item $K \cup M$ is non-empty and there are no edges between $(P \cup B)$ and $Z$,
\item $|P| \geq 2$ or $|B| \geq 2$,
\item $K \cup P$ induces a reflexive clique and $B$ is an independent set,
\item $M$ is complete to $P \cup K$ and $B$ is complete to $K$,
\item $B$ is non-adjacent to $M$.
\end{compactenum}
\end{definition}

\begin{definition}[$B$-decomposition] \label{def:b-decomposition}
A partition of $V(H)$ into an ordered six-tuple of sets $(B_1,B_2,K,M_1,M_2,Z)$ is a \emph{$B$-decomposition} if the following conditions are satisfied (see \cref{fig:decompos}, right).
\begin{compactenum}
\item $K \cup M_1 \cup M_2$ is non-empty and it separates $(B_1 \cup B_2)$ and  $Z$, \label{it:bdecomp-separator}
\item $|B_1| \geq 2$ or $|B_2| \geq 2$, \label{it:bdecomp-geq2}
\item $K$ induces a reflexive clique and each of $B_1,B_2$ is an independent set, \label{it:bdecomp-cliqueis}
\item $K$ is complete to $M_1 \cup M_2 \cup B_1 \cup B_2$, and $M_2$ is complete to $M_1 \cup B_1$, and $M_1$ is complete to $B_2$, \label{it:bdecomp-complete}
\item $B_1$ is non-adjacent to $M_1$ and $B_2$ is non-adjacent to $M_2$. \label{it:bdecomp-nonadjacent}
\end{compactenum}
\end{definition}

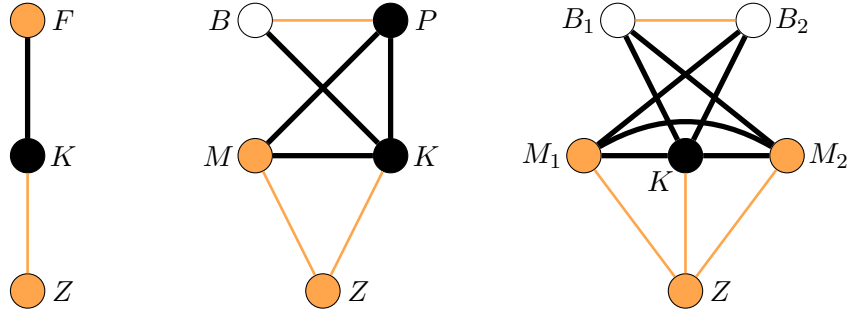
\begin{figure}[h]
\begin{center}
\begin{tikzpicture}[every node/.style={draw,circle,fill=white,inner sep=0pt,minimum size=13pt},every loop/.style={},scale=0.9]
\node[fill=black, label=right:$K$] (K) at (2,0) {};
\node[fill=\yourfavouritecolor,label=right:$F$] (F) at (2,2) {};
\node[fill=\yourfavouritecolor,label=right:$Z$] (Z) at (2,-2) {};

\draw[line width=2] (F) -- (K);
\draw[\yourfavouritecolor,line width=1] (K) -- (Z);
\end{tikzpicture} \hskip 1.5cm
\begin{tikzpicture}[every node/.style={draw,circle,fill=white,inner sep=0pt,minimum size=13pt},every loop/.style={},scale=0.9]
\node[fill=\yourfavouritecolor, label=left:$M$] (MO) at (0,0) {};
\node[fill=black, label=right:$K$] (K) at (2,0) {};
\node[label=left:$B$] (BO) at (0,2) {};
\node[fill=black,label=right:$P$] (P) at (2,2) {};

\node[fill=\yourfavouritecolor,label=right:$Z$] (Z) at (1,-2) {};
\draw[line width=2] (P) -- (K);
\draw[line width=2] (BO) -- (K);
\draw[line width=2] (P) -- (MO);
\draw[line width=2] (K) -- (MO);

\draw[\yourfavouritecolor,line width=1] (MO) -- (Z);
\draw[\yourfavouritecolor,line width=1] (K) -- (Z);
\draw[\yourfavouritecolor,line width=1] (BO) -- (P);
\end{tikzpicture}\hskip 1cm
\begin{tikzpicture}[every node/.style={draw,circle,fill=white,inner sep=0pt,minimum size=13pt},every loop/.style={},scale=0.9]
\node[fill=\yourfavouritecolor, label=left:$M_1$] (MO) at (0.5,0) {};
\node[fill=black, label=225:$K$] (K) at (2,0) {};
\node[fill=\yourfavouritecolor, label=right:$M_2$] (MT) at (3.5,0) {};
\node[label=left:$B_1$] (BO) at (1,2) {};
\node[label=right:$B_2$] (BT) at (3,2) {};

\node[fill=\yourfavouritecolor,label=right:$Z$] (Z) at (2,-2) {};
\draw[line width=2] (BO) -- (K);
\draw[line width=2] (BT) -- (K);
\draw[line width=2] (BT) -- (MO);
\draw[line width=2] (BO) -- (MT);
\draw[line width=2] (K) -- (MT);
\draw[line width=2] (K) -- (MO);
\draw[line width=2] (MO) to[bend left] (MT);

\draw[\yourfavouritecolor,line width=1] (MO) -- (Z) -- (MT);
\draw[\yourfavouritecolor,line width=1] (K) -- (Z);
\draw[\yourfavouritecolor,line width=1] (BT) -- (BO);
\end{tikzpicture}
\end{center}

\caption{A schematic view of an $F$-decomposition (left), a $BP$-decomposition (middle), and a $B$-decomposition of $H$ (right).
Disks correspond to sets of vertices: white ones depict independent sets, black ones depict reflexive cliques, and orange ones depict arbitrary subgraphs. Similarly, thick black lines indicate that all possible edges between two sets exist, and thin orange lines depict edges that might exist, but do not have to. The lack of a line means that there are no edges between two sets.}
\label{fig:decompos}
\end{figure}

Observe that a graph $H$ can have more than one decomposition: for example, if $(B,P,M,K,Z)$ is an $BP$-decomposition of $H$, but $M=\emptyset$, then $(B \cup P, K, Z)$ is an $F$-decomposition of $H$. 

For each kind of decomposition, we define its \emph{factors} as the following pair of graphs $(H_1, H_2)$.
\begin{compactdesc}
\item[for an $F$-decomposition:] $H_1=H[F]$ and $H_2$ is obtained from $H$ by contracting $F$ to a vertex $f$. It has a loop if and only if $F$ is not an independent set.
\item[for a $BP$-decomposition:] $H_1=H[B \cup P]$ and $H_2$ is obtained from $H$ by contracting $P$ and $B$ respectively (if they are non-empty), to vertices $p$ and $b$, such that $p$ has a loop and $b$ does not. Also, $pb \in E(H_2)$ if and only if there is any edge between $P$ and $B$ in $H$.
\item[for a $B$-decomposition:] $H_1=H[B_1 \cup B_2]$ and $H_2$ is obtained from $H$ by contracting $B_1$ and $B_2$ respectively (if they are non-empty), to vertices $b_1$ and $b_2$ (without loops). Also, $b_1b_2 \in E(H_2)$ if and only if there is any edge between $B_1$ and $B_2$ in $H$.
\end{compactdesc}

Let us prove that if $H$ is not a strong split graph, then the three types of decompositions defined above precisely correspond to bipartite decompositions of the associated bipartite graph $H^*$.
For any $W \subseteq V(H)$, we define two subsets of $V(H^*)$ as follows: $W':=\{x' : x \in W\}$ and $W'':=\{x'' : x \in W\}$.

\begin{lemma} \label{lem:decompositions-equivalent}
Let $H$ be be a connected, non-bi-arc graph, which is not a strong split graph.
Then $H^*$ admits a bipartite decomposition if and only if $H$ admits a $B$-, a $BP$-, or an $F$-decomposition.
\end{lemma}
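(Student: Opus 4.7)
The proof proceeds by establishing a correspondence in each direction. My plan follows.

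\textbf{Forward direction.} I would verify by direct construction that each of the three decomposition types of $H$ yields a bipartite decomposition of $H^*$. Given an $F$-decomposition $(F,K,Z)$, set $D := F' \cup F''$, $N := K' \cup K''$, $R := Z' \cup Z''$. The four conditions of \cref{def:bipartite-decomposition} follow because $K$ is a reflexive clique (hence $K' \cup K''$ is a biclique in $H^*$), $F$ is complete to $K$ in $H$ (hence $D$ is bipartite-complete to $N$), $|F| \geq 2$, and $K$ separates $F$ from $Z$ in $H$ (hence $N$ separates $D$ from $R$ in $H^*$). Given a $BP$-decomposition $(B,P,M,K,Z)$, set $D := B' \cup P''$, $N := P' \cup K' \cup M' \cup K''$, $R := V(H^*) \setminus (D \cup N)$. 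Here $N$ is a biclique since $N_X = P' \cup K' \cup M'$ and $N_Y = K''$ and $P \cup K \cup M$ is complete to $K$ in $H$; bipartite-completeness uses that $B$ is complete to $K$ and $P$ to $P \cup K \cup M$; and separation follows because $B$ is non-adjacent to $M$ and $B \cup P$ has no neighbors in $Z$. Given a $B$-decomposition $(B_1,B_2,K,M_1,M_2,Z)$, set $D := B_1' \cup B_2''$, $N := K' \cup M_1' \cup K'' \cup M_2''$, $R := V(H^*) \setminus (D\cup N)$; the verification is analogous, using that $K \cup M_1$ is complete to $K \cup M_2$ in $H$ and that $B_1$ (resp.\ $B_2$) has no edges to $M_1$ (resp.\ $M_2$) or to $Z$.

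\textbf{Backward direction.} Let $(D,N,R)$ be a bipartite decomposition of $H^*$. I would project to $V(H)$ by defining $D_X^0 := \{v : v'\in D_X\}$, $D_Y^0 := \{v : v''\in D_Y\}$, and $N_X^0, N_Y^0$ analogously. Each $v\in V(H)$ then has a ``state'' indicating where $v'$ and $v''$ lie among $D, N, R$. A key observation is that the loop status of $v$ constrains its state: if $v$ has a loop then $v'v''\in E(H^*)$, so $v$ cannot have one endpoint in $D$ and the other in $R$; conversely, the biclique and bipartite-completeness conditions force any $v$ with $v' \in N_X$ and $v''\in N_Y$, or $v'\in D_X$ and $v''\in N_Y$, to have a loop. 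I would then split into three cases depending on which of the sets $D_X^0\setminus D_Y^0$ and $D_Y^0\setminus D_X^0$ are empty. If both are empty, take $F := D_X^0 = D_Y^0$ and $K := N_X^0 \cap N_Y^0$ to extract an $F$-decomposition. If exactly one is empty, take $B$ to be the nonempty difference, $P := D_X^0 \cap D_Y^0$ (which the constraints force to be a reflexive clique), $K := N_X^0 \cap N_Y^0$, and $M := (N_X^0 \cup N_Y^0)\setminus K$, yielding a $BP$-decomposition. If both differences are nonempty, take $B_1 := D_X^0\setminus D_Y^0$, $B_2 := D_Y^0\setminus D_X^0$, and $K, M_1, M_2$ defined analogously, yielding a $B$-decomposition. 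In each case the axioms of the target decomposition are verified using the biclique, bipartite-completeness, and separation conditions of $(D,N,R)$ together with the state/loop analysis.

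\textbf{Main obstacle.} The hard part will be the backward direction. The subtlety is that vertices in ``mixed'' states (e.g.\ $v'\in D_X$ but $v''\in N_Y$) do not a priori belong to any single candidate set, and one must argue that they can be consistently assigned so that the sets $B$, $B_1$, $B_2$ remain independent, $P$ (or $K$) remains a reflexive clique, and all completeness/separation relations hold. The hypothesis that $H$ is not a strong split graph is precisely what rules out the one configuration in which no such consistent assignment exists: namely, the case in which the projection would collapse the whole of $V(H)$ into an independent set plus a reflexive clique with no genuine ``outside'' component, giving a structure equivalent to a strong split partition rather than to any of the $F$-, $BP$-, or $B$-decompositions. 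Having excluded this case, the case analysis carries through and, combined with the forward direction, proves the equivalence.
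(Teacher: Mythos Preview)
Your forward direction is correct and matches the paper exactly: the same three explicit assignments of $(D,N,R)$ work, and the verifications are routine.

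The backward direction has the right starting idea---project each $v\in V(H)$ to a ``state'' recording where $v'$ and $v''$ land among $D,N,R$---but your three-way case split and the resulting assignments are too coarse. In the paper's nine-set notation (where e.g.\ $P=\{x:x'\in D,\ x''\in N\}$ and $Q=\{x:x'\in N,\ x''\in D\}$), one computes $D_X^0\setminus D_Y^0 = P\cup B_1$ and $D_Y^0\setminus D_X^0 = Q\cup B_2$. But $P$ and $Q$ are forced to be \emph{reflexive cliques} by the biclique condition on $N$, so in your ``both differences nonempty'' case you may be setting $B_1:=P$ and $B_2:=Q$, which are not independent. Concretely, it is perfectly consistent with a bipartite decomposition of $H^*$ to have $P\neq\emptyset$, $Q\neq\emptyset$, and $B_1=B_2=\emptyset$ (and then automatically $M_1=M_2=\emptyset$); in that configuration the correct output is an $F$-decomposition $(P,\,Q\cup K,\,Z)$, not a $B$-decomposition. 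Similarly, in your ``exactly one difference empty'' case you may have $F:=D_X^0\cap D_Y^0$ nonempty together with $P\neq\emptyset$, and then again an $F$-decomposition (with $F\cup P\cup B_1$ in the first slot) is what one needs, not a $BP$-decomposition with $B=P\cup B_1$. Your sets $B$ and $M$ also overlap as defined, since $P\subseteq D_X^0\setminus D_Y^0$ and $P\subseteq N_Y^0$ simultaneously.

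The fix is exactly what the paper does: write out all nine state-sets $F,P,Q,B_1,B_2,K,M_1,M_2,Z$, record the pairwise incompatibilities (e.g.\ $F$ and $M_i$ cannot coexist, $P$ and $B_1$ cannot coexist, etc.), and then case-split primarily on whether $F$ is empty, with further subcases on $B_1,B_2$, then $M_1,M_2$, then $P,Q$. The ``not a strong split graph'' hypothesis is used in precisely one leaf of this analysis (when only $P$ and $B_2$, or symmetrically $Q$ and $B_1$, survive). Your obstacle paragraph correctly anticipates that the mixed-state vertices are the issue, but the proposed case split does not resolve it.
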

\begin{proof}
First, observe that a subset $W \subseteq V(H)$ induces a reflexive clique in $H$ if and only if $W' \cup W''$ induces a biclique in $H^*$.

Let us show that if $H$ has a decomposition, then $H^*$ has a bipartite decomposition $(D,N,R)$. We consider three cases, depending on the type of a decomposition of $H$.

If $H$ has an $F$-decomposition $(F,K,Z)$, we define $D,N,R$ as follows (see \cref{fig:decompos-all} a)):
\begin{align*}
D := & F' \cup F'',\\
N := & K' \cup K'',\\
R := & Z' \cup Z''.
\end{align*}
Now the fact that $(D,N,R)$ is a bipartite decomposition of $H^*$ (recall \cref{def:bipartite-decomposition}) follows directly from the definition of a $F$-decomposition (recall \cref{def:f-decomposition}): each property in \cref{def:bipartite-decomposition} follows from the corresponding property in \cref{def:f-decomposition}.

The other two cases are analogous.
If $H$ has a $BP$-decomposition, then we define $D,N,R$ as follows (see \cref{fig:decompos-all} b)):
\begin{align*}
D := & B' \cup P'',\\
N := & K'\cup M' \cup P' \cup K'',\\
R := & Z' \cup Z'' \cup M'' \cup B''.
\end{align*}
Finally, if $H$ has a $B$-decomposition $(B_1,B_2,K,M_1,M_2,Z)$, then we define $D,N,R$ as follows (see \cref{fig:decompos-all}~c)):
\begin{align*}
D := & B_1' \cup B_2'',\\
N := & K'\cup M_1' \cup K'' \cup M_2'',\\
R := &  Z' \cup M_2' \cup B_2' \cup Z'' \cup M_1'' \cup B_1''.
\end{align*}
It is straightforward to verify that in both cases $(D,N,R)$ is a bipartite decomposition of $H^*$.

\begin{figure}[h]
\begin{center}
\begin{tikzpicture}[every node/.style={draw,circle,fill=white,inner sep=0pt,minimum size=11pt},every loop/.style={},scale=0.7]
\node[draw=none, fill=none] at (.5,0) {a)};
\node[fill=black, label=right:$K$] (K) at (2,0) {};
\node[fill=\yourfavouritecolor,label=right:$F$] (F) at (2,1) {};
\node[fill=\yourfavouritecolor,label=right:$Z$] (Z) at (2,-1) {};

\draw[line width=2] (F) -- (K);
\draw[\yourfavouritecolor,line width=1] (K) -- (Z);

\node[label=left:$K'$] (K1) at (5,0) {};
\node[label=left:$F'$] (F1) at (5,1) {};
\node[label=left:$Z'$] (Z1) at (5,-1) {};

\node[label=right:$K''$] (K2) at (8,0) {};
\node[label=right:$F''$] (F2) at (8,1) {};
\node[label=right:$Z''$] (Z2) at (8,-1) {};

\draw[line width=2] (F1) -- (K2);
\draw[line width=2] (F2) -- (K1);
\draw[\yourfavouritecolor,line width=1] (K1) -- (Z2);
\draw[\yourfavouritecolor,line width=1] (K2) -- (Z1);
\draw[\yourfavouritecolor,line width=1] (Z2) -- (Z1);
\draw[black,line width=2] (K2) -- (K1);
\draw[\yourfavouritecolor,line width=1] (F2) -- (F1);
\draw[dashed,red,line width=1.5, rounded corners=10pt] (6,.35) --(4.65,.35) -- (4.65,-.35) -- (8.35, -.35) -- (8.35,.35) -- (6,.35);

\end{tikzpicture}
\end{center}
\begin{tikzpicture}[every node/.style={draw,circle,fill=white,inner sep=0pt,minimum size=11pt},every loop/.style={},scale=0.7]
\node[draw=none, fill=none] at (-1.3,1.5) {b)};
\node[draw=none, fill=none] at (0,-2) {};
\node[fill=\yourfavouritecolor, label=left:$M$] (MO) at (0,1) {};
\node[fill=black, label=right:$K$] (K) at (2,1) {};
\node[label=left:$B$] (BO) at (0,3) {};
\node[fill=black,label=right:$P$] (P) at (2,3) {};

\node[fill=\yourfavouritecolor,label=right:$Z$] (Z) at (1,-1) {};
\draw[line width=2] (P) -- (K);
\draw[line width=2] (BO) -- (K);
\draw[line width=2] (P) -- (MO);
\draw[line width=2] (K) -- (MO);

\draw[\yourfavouritecolor,line width=1] (MO) -- (Z);
\draw[\yourfavouritecolor,line width=1] (K) -- (Z);
\draw[\yourfavouritecolor,line width=1] (BO) -- (P);

\draw[\yourfavouritecolor,line width=1] (K) -- (Z);

\node[label=left:$K'$] (K1) at (5,2) {};
\node[label=left:$B'$] (B1) at (5,3) {};
\node[label=left:$M'$] (M1) at (5,1) {};
\node[label=left:$P'$] (P1) at (5,0) {};
\node[label=left:$Z'$] (Z1) at (5,-1) {};

\node[label=right:$K''$] (K2) at (8,2) {};
\node[label=right:$B''$] (B2) at (8,0) {};
\node[label=right:$M''$] (M2) at (8,1) {};
\node[label=right:$P''$] (P2) at (8,3) {};
\node[label=right:$Z''$] (Z2) at (8,-1) {};

\draw[\yourfavouritecolor,line width=1] (B1) -- (P2);
\draw[\yourfavouritecolor,line width=1] (B2) -- (P1);
\draw[line width=2] (B1) -- (K2);
\draw[line width=2] (B2) -- (K1);
\draw[line width=2] (M1) -- (P2);
\draw[line width=2] (M2) -- (P1);
\draw[line width=2] (K1) -- (P2);
\draw[line width=2] (K2) -- (P1);
\draw[line width=2] (K1) -- (M2);
\draw[line width=2] (K2) -- (M1);
\draw[line width=2] (P1) -- (P2);
\draw[line width=2] (K2) -- (K1);
\draw[\yourfavouritecolor,line width=1] (M2) -- (M1);
\draw[\yourfavouritecolor,line width=1] (Z2) -- (Z1);
\draw[\yourfavouritecolor,line width=1] (Z2) -- (M1);
\draw[\yourfavouritecolor,line width=1] (M2) -- (Z1);
\draw[\yourfavouritecolor,line width=1] (Z2) -- (K1);
\draw[\yourfavouritecolor,line width=1] (K2) -- (Z1);

\draw[dashed,red,line width=1.5, rounded corners=10pt] (6,2.35) --(4.65,2.35) -- (4.65,-.4)--(5.35,-.4) -- (8.35,1.75) -- (8.35,2.35) -- (6,2.35);
\end{tikzpicture}\hskip .5cm
\begin{tikzpicture}[every node/.style={draw,circle,fill=white,inner sep=0pt,minimum size=11pt},every loop/.style={},scale=0.7]
\node[draw=none, fill=none] at (-1,1.5) {c)};
\node[fill=\yourfavouritecolor, label=left:$M_1$] (MO) at (0.5,1) {};
\node[fill=black, label=225:$K$] (K) at (2,1) {};
\node[fill=\yourfavouritecolor, label=right:$M_2$] (MT) at (3.5,1) {};
\node[label=left:$B_1$] (BO) at (1,3) {};
\node[label=right:$B_2$] (BT) at (3,3) {};

\node[fill=\yourfavouritecolor,label=right:$Z$] (Z) at (2,-1) {};
\draw[line width=2] (BO) -- (K);
\draw[line width=2] (BT) -- (K);
\draw[line width=2] (BT) -- (MO);
\draw[line width=2] (BO) -- (MT);
\draw[line width=2] (K) -- (MT);
\draw[line width=2] (K) -- (MO);
\draw[line width=2] (MO) to[bend left] (MT);

\draw[\yourfavouritecolor,line width=1] (MO) -- (Z) -- (MT);
\draw[\yourfavouritecolor,line width=1] (K) -- (Z);
\draw[\yourfavouritecolor,line width=1] (BT) -- (BO);

\node[label=left:$K'$] (K1) at (7,2) {};
\node[label=left:$B_1'$] (B11) at (7,3) {};
\node[label=left:$B_2'$] (B12) at (7,0) {};
\node[label=left:$M_1'$] (M11) at (7,1) {};
\node[label=left:$M_2'$] (M12) at (7,-1) {};
\node[label=left:$Z'$] (Z1) at (7,-2) {};

\node[label=right:$K''$] (K2) at (10,2) {};
\node[label=right:$B_1''$] (B21) at (10,0) {};
\node[label=right:$B_2''$] (B22) at (10,3) {};
\node[label=right:$M_1''$] (M21) at (10,-1) {};
\node[label=right:$M_2''$] (M22) at (10,1) {};
\node[label=right:$Z''$] (Z2) at (10,-2) {};

\draw[\yourfavouritecolor,line width=1] (B11) -- (B22);
\draw[\yourfavouritecolor,line width=1] (B21) -- (B12);
\draw[line width=2] (B11) -- (K2);
\draw[line width=2] (B12) -- (K2);
\draw[line width=2] (B22) -- (K1);
\draw[line width=2] (B21) -- (K1);
\draw[line width=2] (M11) -- (K2);
\draw[line width=2] (M12) -- (K2);
\draw[line width=2] (M22) -- (K1);
\draw[line width=2] (M21) -- (K1);
\draw[\yourfavouritecolor,line width=1] (M22) -- (M12);
\draw[\yourfavouritecolor,line width=1] (M21) -- (M11);
\draw[line width=2] (B22) -- (M11);
\draw[line width=2] (B21) -- (M12);
\draw[line width=2] (M21) -- (B12);
\draw[line width=2] (M22) -- (B11);
\draw[line width=2] (M22) -- (M11);
\draw[line width=2] (M21) -- (M12);
\draw[\yourfavouritecolor,line width=1] (Z2) -- (M12);
\draw[\yourfavouritecolor,line width=1] (Z2) -- (M11);
\draw[\yourfavouritecolor,line width=1] (M22) -- (Z1);
\draw[\yourfavouritecolor,line width=1] (M21) -- (Z1);
\draw[\yourfavouritecolor,line width=1] (K1) -- (Z2);
\draw[\yourfavouritecolor,line width=1] (K2) -- (Z1);
\draw[\yourfavouritecolor,line width=1] (Z2) -- (Z1);
\draw[black,line width=2] (K2) -- (K1);

\draw[dashed,red,line width=1.5, rounded corners=10pt] (8,2.35) --(6.65,2.35) -- (6.65,.6) -- (10.35,.6) -- (10.35,2.35) -- (8,2.35);
\end{tikzpicture}
\caption{Decompositions of a graph $H$ (left) and their corresponding bipartite decompositions $(D,N,R)$ of $H^*$ (right):  a) an $F$-decomposition, b) a $BP$-decomposition and c) a $B$-decomposition. Dashed lines mark the set $N$.}
\label{fig:decompos-all}
\end{figure}
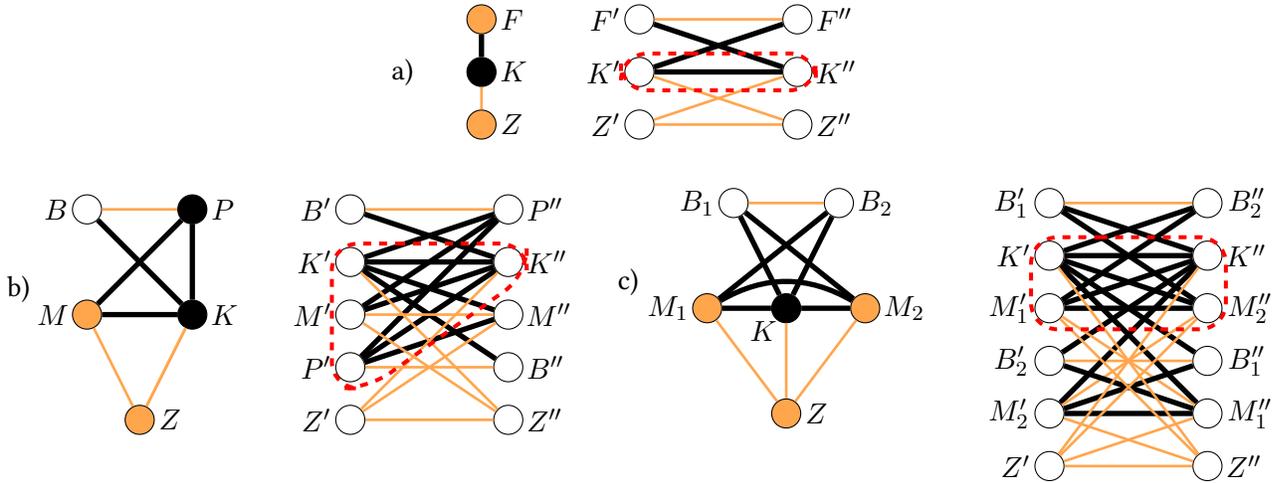

Now we suppose that $(D,N,R)$ is a bipartite decomposition of $H^*$ and recall that $N \neq \emptyset$ and one bipartition class of $D$ has at least two elements.
We aim to show that $H$ has one of the decompositions defined above.
We partition the vertices of $H$ into nine sets as follows (see \cref{fig:vertices-partition}):

\begin{figure}[h]
\begin{center}
\begin{tikzpicture}[scale=0.8]
\draw[line width=2] (0,5)--(3,2.5)--(0,2.5)--(3,5);
\draw[orange, line width=1] (0,2.5)--(3,0)--(0,0)--(3,2.5);
\draw[orange, line width=1] (0,5)--(3,5);
\foreach \i in {0,2.5,5}
{
\foreach \j in {0,3}
{
\draw[fill=white] (\j,\i) circle (1);
\draw (\j,\i)--(\j,\i+1);
\draw (\j,\i)--++(-30:1);
\draw (\j,\i)--++(210:1);
}
}
\draw (0.45,-0.1) node[above] {$M'_2$};
\draw (-0.5,-0.1) node[above] {$B'_2$};
\draw (0,-0.8) node[above] {$Z'$};
\draw (3.4,-0.1) node[above] {$M''_1$};
\draw (2.5,-0.1) node[above] {$B''_1$};
\draw (3,-0.8) node[above] {$Z''$};
\draw (0.5,2.5) node[above] {$K'$};
\draw (-0.5,2.5) node[above] {$Q'$};
\draw (0,1.6) node[above] {$M'_1$};
\draw (3.5,2.5) node[above] {$K''$};
\draw (2.5,2.5) node[above] {$P''$};
\draw (3,1.6) node[above] {$M''_2$};
\draw (0.5,5) node[above] {$P'$};
\draw (-0.5,5) node[above] {$F'$};
\draw (0,4.1) node[above] {$B'_1$};
\draw (3.5,5) node[above] {$Q''$};
\draw (2.5,5) node[above] {$F''$};
\draw (3,4.1) node[above] {$B''_2$};

\draw[dashed] (-1.5,6.2) --++ (6,0) --++ (0,-2.4) --++ (-6,0) --cycle;
\draw[dashed] (-1.5,3.7) --++ (6,0) --++ (0,-2.4) --++ (-6,0) --cycle;
\draw[dashed] (-1.5,1.2) --++ (6,0) --++ (0,-2.4) --++ (-6,0) --cycle;
\draw (-2,5) node[above] {$D$};
\draw (-2,2.5) node[above] {$N$};
\draw (-2,0) node[above] {$R$};
\end{tikzpicture}
\caption{Schematic definition of sets in the proof of \cref{lem:decompositions-equivalent}. E.g. $P$ is the set of those $x \in V(G)$, for which $x' \in D$  and $x'' \in N$.
}
\label{fig:vertices-partition}
\end{center}
\end{figure}
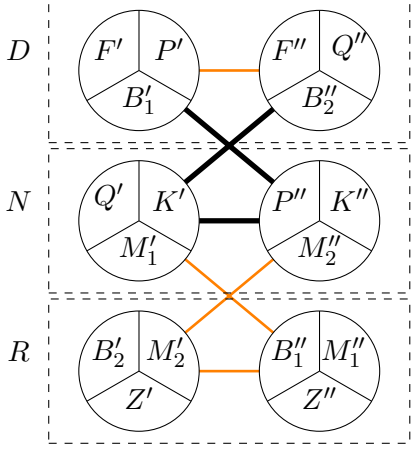

\begin{equation*}
\begin{alignedat}{10}
F:=&&\{x: x' \in D, x'' \in D\}, \ && Q:=&&\{x: x' \in N, x'' \in D\}, \ && B_2:=&&\{x: x' \in R, x'' \in D\}, \\ 
P:=&&\{x: x' \in D, x'' \in N\}, \ && K:=&&\{x: x' \in N, x'' \in N\}, \ && M_2:=&&\{x: x' \in R, x'' \in N\}, \\ 
B_1:=&&\{x: x' \in D, x'' \in R\}, \ && M_1:=&&\{x: x' \in N, x'' \in R\}, \ && Z:=&&\{x: x' \in R, x'' \in R\}.
\end{alignedat}
\end{equation*}
Clearly, from the definition of the bipartite decomposition it follows that some pairs of these sets cannot be both non-empty (e.g. $F$ and $M_2$, because $F'$ must be complete to $M_2''$ and in the same time $F''$ is non-adjacent to $M_2'$). Observe that  $M_1 \cup Q \cup K$ is complete to $Q \cup B_2 \cup F \cup P \cup M_2 \cup K$ and $P \cup M_2 \cup K$ is complete to $B_1 \cup P \cup F \cup M_1 \cup Q \cup K$. Also, $B_1 \cup P \cup F$ is non-adjacent to $M_1 \cup B_1 \cup Z$ and $B_2 \cup Q \cup F$ is non-adjacent to $M_2 \cup B_2 \cup Z$. In particular, it implies that $P, Q,$ and $K$ must be reflexive cliques and $B_1,B_2$ are independent sets. Finally, at least one of the sets $B_1 \cup P \cup F$ and $B_2 \cup Q \cup F$ has at least two vertices and at least one of the sets $M_1 \cup Q \cup K$ and $P \cup M_2 \cup K$ is non-empty.

\paragraph{Case 1: $F \neq \emptyset$.} It implies that $M_1, M_2 = \emptyset$. If $K \neq \emptyset$, then it is straightforward to observe that $(F \cup P \cup Q \cup B_1 \cup B_2, K, Z)$ is an $F$-decomposition of $H$.
If $K = \emptyset$, then also $Z=\emptyset$, as we assumed that $H$ is connected.
Moreover, since $M_1 \cup M_2 \cup P \cup Q \cup K \neq \emptyset$, at least one of $P,Q$ is non-empty. Assume that $P \neq \emptyset$, as the other case is symmetric.
Recall that this means that $B_1=\emptyset$.

If $Q \neq \emptyset$, then $B_2 = \emptyset$ and $(F \cup Q,P,\emptyset)$ is an $F$-decomposition. So let $Q = \emptyset$.
If $|F| \geq 2$, then $(F,P,B_2)$ is an $F$-decomposition.
If $|B_2| \geq 2$ or $|P| \geq 2$, then $(B_2, P, F, \emptyset, \emptyset)$ is a $BP$-decomposition.
So in the last case we have $|F|=|P|=1$ and $|B_2| \leq 1$. It is easy to verify that then $H$ is a bi-arc graph (or, equivalently, $H^*$ does not contain an induced cycle on at least 6 vertices or an edge asteroid). This contradicts our assumption on $H$.

\paragraph{Case 2: $F = \emptyset$.} We consider three subcases: either $B_1,B_2 \neq \emptyset$, or $B_1=\emptyset$ and $B_2\neq\emptyset$ (the case $B_1\neq\emptyset$ and $B_2=\emptyset$ is symmetric), or $B_1,B_2 = \emptyset$.
The first case implies that $P,Q = \emptyset$, so we immediately obtain a $B$-decomposition $(B_1,B_2,K,M_1,M_2,Z)$.

In the second one we have $Q=\emptyset$.
If $P \neq \emptyset$, then $M_1=\emptyset$.
If additionally $M_2 \cup K \neq \emptyset$, then there exists a $BP$-decomposition $(B_2,P,M_2,K,Z)$.
On the other hand, if $M_2 \cup K = \emptyset$, then $Z = \emptyset$ by connectivity of $H$.
Thus the only non-empty sets are $P$ and $B_2$, and therefore $H$ is a strong split graph.
Finally, if $P=\emptyset$, then $H$ admits a $B$-decomposition $(\emptyset,B_2,K,M_1,M_2,Z)$.

So let us assume that $B_1,B_2 = \emptyset$. We will consider further subcases:  
either $M_1,M_2\neq \emptyset$, or $M_1=\emptyset$ and $M_2\neq\emptyset$ (the other case, i.e., $M_1\neq\emptyset$ and $M_2=\emptyset$, is symmetric), or $M_1,M_2=\emptyset$.
Note that if $M_1,M_2\neq \emptyset$, then $P,Q=\emptyset$, so $D = \emptyset$, which is a contradiction with $(D,N,R)$ being a bipartite decomposition of $H^*$.
If $M_1=\emptyset$ and $M_2\neq\emptyset$, then $Q=\emptyset$ and $|P|\geq 2$ and we have a $BP$-decomposition $(\emptyset,P,M_2,K,Z).$ Lastly, when $M_1,M_2=\emptyset$, then without loss of generality $|P|\geq 2$ and either $(P,Q\cup K,Z)$ is an $F$-decomposition, or $Q \cup K=\emptyset$, then by connectivity of $H$ the set $Z$ must be empty and $H$ is just a reflexive clique induced by $P$. However, such a graph is a bi-arc graph, a contradiction.
\end{proof}

Let us point out that one application of a $BP$-decomposition or a $B$-decomposition corresponds to two consecutive applications of a bipartite decomposition in $H^*$.
Consider the case of a $BP$-decomposition and the bipartite decomposition $(B' \cup P'', K' \cup M' \cup P' \cup K'', Z' \cup M'' \cup B'' \cup Z'')$ of $H'$. Note that after contracting $B'$ to $b'$ and $P''$ to $p''$, we still have a decomposition $(P' \cup B'', K' \cup K'' \cup M'' \cup \{p''\}, M' \cup \{b'\} \cup Z' \cup Z'')$ of the second factor $(H^*)_2$ of $H^*$.
Similarly, in the case of a $B$-decomposition, we still have another bipartite decomposition of $(H^*)_2$, where the new set $D$ is $B_1'' \cup B_2'$.

Note that in a $BP$-decomposition, a $B$-decomposition, and an $F$-decomposition, when $F$ is an independent set or contains a vertex with a loop, the factors are always induced subgraphs of $H$.
Indeed, we can equivalently obtain $H_2$ by removing certain vertices from $H$. In the last case of an $F$-decomposition, when $F$ contains at least one edge and has only vertices without loops, $H_2$ is not an induced subgraph of $H$.
We can equivalently define $H_2$ as the graph obtained by removing from $F$ all but two vertices that are adjacent to each other, and then replacing them with a vertex with a loop. 

In the following lemma we consider a graph $H'$ that was obtained from $H$ by a series of decompositions (i.e., it is a factor of $H$, or a factor of a factor of $H$ etc.). We show that even if $H'$ is not an induced subgraph of $H$, the associated bipartite graph $H'^*$ is still an induced subgraph of $H^*$.

\begin{lemma} \label{lem:induced-hstar}
Let $H'$ be a graph obtained from $H$ by a series of decompositions.
Then $H'^*$ is an induced subgraph of $H^*$.
\end{lemma}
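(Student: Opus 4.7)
The plan is to induct on the number $k$ of decomposition steps taken to produce $H'$ from $H$. The base case $k=0$ gives $H'=H$, so $H'^*=H^*$ trivially. For the inductive step, let $\tilde H$ denote the graph one step before $H'$, so that $H'$ is a factor of some $F$-, $BP$-, or $B$-decomposition of $\tilde H$, and by the inductive hypothesis $\tilde H^*$ is an induced subgraph of $H^*$. Since the ``induced subgraph'' relation is transitive, it suffices to prove that $H'^*$ is an induced subgraph of $\tilde H^*$.

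If $H'=\tilde H_1$, the claim is immediate because $\tilde H_1$ is literally $\tilde H[F]$, $\tilde H[B\cup P]$, or $\tilde H[B_1\cup B_2]$, so $\tilde H_1$ is an induced subgraph of $\tilde H$ and hence $(\tilde H_1)^*$ is an induced subgraph of $\tilde H^*$. For $H'=\tilde H_2$ I plan to argue that, in almost all subcases, $\tilde H_2$ can also be realized directly as an induced subgraph of $\tilde H$ by a careful choice of representatives of the contracted classes: for a $BP$-decomposition, pick any $p_0\in P$ and $b_0\in B$, taking them to be adjacent whenever $P$ and $B$ share any edge in $\tilde H$; since $P$ is a reflexive clique and $B$ is independent, the loop/non-loop pattern of $(p,b)$ is matched, and all adjacencies to $M,K,Z$ are inherited from the $BP$-decomposition axioms. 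The $B$-decomposition is handled analogously, and so is the $F$-decomposition in the subcases where $F$ is independent or already contains a loop vertex (pick that vertex as the representative of $f$).

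The sole remaining subcase, flagged in \cref{sec:general-decompositions}, is an $F$-decomposition in which $F$ contains an edge $uv$ but no loop vertex: then the contracted vertex $f$ of $\tilde H_2$ carries a loop even though no vertex of $F$ does, so $\tilde H_2$ is not an induced subgraph of $\tilde H$. This is the main obstacle, and I will handle it by invoking the forward direction from the proof of \cref{lem:decompositions-equivalent} (which does not use the strong-split hypothesis): the $F$-decomposition $(F,K,Z)$ produces a bipartite decomposition $(F'\cup F'',\, K'\cup K'',\, Z'\cup Z'')$ of $\tilde H^*$. By the construction of the factor of a bipartite decomposition in \cref{sec:decomposition}, its second factor is obtained from $\tilde H^*$ by keeping a single pair $u'_0\in F'$, $v''_0\in F''$ that is joined by an edge whenever possible; thanks to the edge $uv\in E(\tilde H)$ we may take $u'_0 = u'$ and $v''_0 = v''$. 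The bijection $f'\mapsto u'$, $f''\mapsto v''$, and identity on all other vertices, then gives an isomorphism between $(\tilde H_2)^*$ and this second factor: the loop $f'f''$ is realized by the edge $u'v''$, and every adjacency $f'x''$ (equivalently, some edge from $F$ to $x$ in $\tilde H$) matches the adjacency $u'x''$ in $\tilde H^*$, using that $F$ is complete to $K$ and non-adjacent to $Z$. Hence $(\tilde H_2)^*$ is an induced subgraph of $\tilde H^*$, closing the induction.
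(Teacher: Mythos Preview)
Your proof is correct and follows essentially the same approach as the paper. The paper organizes the argument slightly differently---classifying the effect of every decomposition step as either ``remove vertices'' (O1) or ``remove vertices, then contract two adjacent loopless vertices $a,b$ with $N(a)\setminus\{b\}=N(b)\setminus\{a\}$ to a looped vertex $c$'' (O2), and then handling (O2) globally by sending $c'\mapsto a'$, $c''\mapsto b''$---but your inductive case analysis by decomposition type lands on exactly the same key mapping $f'\mapsto u'$, $f''\mapsto v''$ in the one nontrivial subcase.
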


\begin{proof}
Recall that by the definitions of decompositions, $H'$ was obtained from $H$ by a sequence of two types of operations:
\begin{compactenum}[(O1)]
\item removing some vertices or, equivalently, taking an induced subgraph ($BP$-decomposition, $B$-decomposition, and $F$-decomposition when the set $F$ contains a vertex with a loop or is independent),
\item removing some vertices and then contracting two adjacent vertices $a$, $b$, such that $N(a) \setminus \{b\} = N(b) \setminus \{a\}$ and none of $a,b$ has a loop, to a vertex $c$ with a loop; vertices $a$ and $b$ are removed from the graph and the new vertex $c$ becomes adjacent to all vertices in $N(a) \setminus \{b\} = N(b) \setminus \{a\}$  ($F$-decomposition, when $F$ has no vertex with a loop and contains at least one edge, see \cref{fig:decomp-operation-o2}).
\end{compactenum}

Observe that if only the first type of operation was applied, then $H'$ is an induced subgraph of $H$, which implies that $H'^*$ is an induced subgraph of $H^*$.
Let us analyze the case when some operations of the second type were applied as well.

In this case $H'$ might not be an induced subgraph of $H$, but each newly created vertex $c$ uniquely corresponds to two adjacent vertices, i.e., $a,b$. Moreover, when the operation was applied, $a$ and $b$ had the same neighborhoods in the current graph, except of being adjacent to each other. Note that both $a$ and $b$ are vertices of $H$: they do not have loops, while we only add vertices with loops.
Furthermore, when $c$ is created, $a$ and $b$ are removed from a graph, so each vertex is used in the operation of the second type at most once. 

Let us consider $H'^*$. Note that the edge $ab$ of $H$ corresponds to two edges $a'b''$ and $a''b'$ in $H^*$.
Note that we can map the vertex $c'$ to $a'$ and $c''$ to $b''$.
Since $N(a) \setminus \{b\} = N(b) \setminus \{a\}$, when the operation was applied, the associated bipartite graph of the obtained graph is indeed an induced subgraph of $H^*$, and the mapping mentioned above defines the isomorphism from $H'^*$ to a subgraph of $H^*$.
\end{proof}

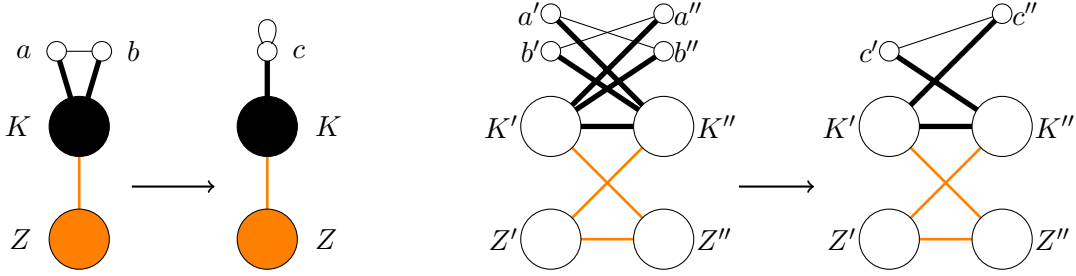
\begin{figure}
\begin{center}
\begin{tikzpicture}[every loop/.style={}]
\draw[line width=2] (-0.3,2.5)--(0,1.5)--(0.3,2.5);
\draw[orange, line width=1] (0,0)--(0,1.5);
\draw (-0.3,2.5)--(0.3,2.5);
\draw[line width=2] (2.5,2.5)--(2.5,1.5);
\draw[orange, line width=1] (2.5,0)--(2.5,1.5);
\draw[fill=orange] (0,0) circle (0.4);
\draw[fill=black] (0,1.5) circle (0.4);
\draw[fill=white] (-0.3,2.5) circle (0.13);
\draw[fill=white] (0.3,2.5) circle (0.13);
\draw[fill=orange] (2.5,0) circle (0.4);
\draw[fill=black] (2.5,1.5) circle (0.4);
\draw[fill=white] (2.5,2.5) circle (0.13);
\draw[->,thick] (0.7,0.7)--(1.8,0.7);
\draw (-0.5,0) node[left] {$Z$};
\draw (-0.5,1.5) node[left] {$K$};
\draw (-0.5,2.5) node[left] {$a$};
\draw (0.5,2.5) node[right] {$b$};
\draw (3,0) node[right] {$Z$};
\draw (3,1.5) node[right] {$K$};
\draw (2.7,2.5) node[right] {$c$};

\node[minimum size=5.5pt,inner sep=0pt,outer sep=0pt] at (2.5,2.5) {} edge [in=60,out=120,loop] ();
\end{tikzpicture} \qquad \qquad
\begin{tikzpicture}
\draw (0,3)--(1.5,2.5);
\draw (0,2.5)--(1.5,3);
\draw[line width=2] (0,3)--(1.5,1.5)--(0,1.5)--(1.5,3);
\draw[line width=2] (0,2.5)--(1.5,1.5);
\draw[line width=2] (0,1.5)--(1.5,2.5);
\draw[orange, line width=1] (0,1.5)--(1.5,0)--(0,0)--(1.5,1.5);
\draw[fill=white] (0,0) circle (0.4);
\draw[fill=white] (1.5,0) circle (0.4);
\draw[fill=white] (0,1.5) circle (0.4);
\draw[fill=white] (1.5,1.5) circle (0.4);
\draw[fill=white] (0,2.5) circle (0.13);
\draw[fill=white] (1.5,2.5) circle (0.13);
\draw[fill=white] (0,3) circle (0.13);
\draw[fill=white] (1.5,3) circle (0.13);
\draw (-0.3,0) node [left] {$Z'$};
\draw (-0.3,1.5) node [left] {$K'$};
\draw (0,2.5) node [left] {$b'$};
\draw (0,3) node [left] {$a'$};
\draw (1.8,0) node [right] {$Z''$};
\draw (1.8,1.5) node [right] {$K''$};
\draw (1.5,2.5) node [right] {$b''$};
\draw (1.5,3) node [right] {$a''$};
\draw (4.5,2.5)--(6,3);
\draw[line width=2] (4.5,2.5)--(6,1.5)--(4.5,1.5)--(6,3);
\draw[orange, line width=1] (4.5,1.5)--(6,0)--(4.5,0)--(6,1.5);
\draw[fill=white] (4.5,0) circle (0.4);
\draw[fill=white] (6,0) circle (0.4);
\draw[fill=white] (4.5,1.5) circle (0.4);
\draw[fill=white] (6,1.5) circle (0.4);
\draw[fill=white] (4.5,2.5) circle (0.13);
\draw[fill=white] (6,3) circle (0.13);
\draw (-0.3,0)++(4.5,0) node [left] {$Z'$};
\draw (-0.3,1.5)++(4.5,0) node [left] {$K'$};
\draw (0,2.5)++(4.5,0) node [left] {$c'$};
\draw (1.8,0)++(4.5,0) node [right] {$Z''$};
\draw (1.8,1.5)++(4.5,0) node [right] {$K''$};
\draw (1.5,3)++(4.5,0) node [right] {$c''$};
\draw[->,thick] (2.5,0.7)--(3.5,0.7);
\end{tikzpicture}
\caption{Operation (O2) applied to vertices $a,b$ in $H$ (left) and corresponding operation apllied to vertices $a',a'',b',b''$ in $H^*$ (right).}
\label{fig:decomp-operation-o2}
\end{center}
\end{figure}

Finally, we show an analogue of \cref{lem:decomposition}, where we explain how decompositions can be used algorithmically. Recall that $T(H,n,t)$ denotes an upper bound for the complexity of \lhomo{H} on instances with $n$ vertices, given along a tree decomposition of width $t$. In the following lemma we do not assume that $|H|$ is a constant.

\begin{lemma}[General decomposition lemma] \label{lem:general-decomposition}
Let $H$ be a connected, non-bi-arc graph, and let $\Gamma$ be a decomposition of $H$ (i.e., $\Gamma$ is either an $F$-, a $BP$-, or a $B$-decomposition) with factors $H_1,H_2$.
If there exist constants $c \geq 1$ and $d > 2$ such that $T(H_1,n,t)=\cO\left(c^t \cdot (n \cdot |H_1|)^d\right)$ and $T(H_2,n,t)=\cO\left(c^t \cdot (n \cdot |H_2|)^d\right)$, then $T(H,n,t)=\cO\left(c^t \cdot (n \cdot (|H|+2))^d\right)$.
\end{lemma}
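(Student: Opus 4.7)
The plan is to adapt, uniformly across the three decomposition types, the argument used in the proof of \cref{lem:decomposition}. For any decomposition $\Gamma$ of $H$ I will identify a disconnection set $D \subseteq V(H)$, a separator $N \subseteq V(H)$, and a remainder $R = V(H) - (D \cup N)$, with the property that $N$ separates $D$ from $R$ in $H$; the first factor $H_1$ is the subgraph of $H$ induced by $D$ (with one minor modification for $F$-decompositions in which $F$ has no loops but at least one edge); and $H_2$ is obtained from $H$ by contracting parts of $D$ to a constant number of new vertices. Concretely, for an $F$-decomposition $(F,K,Z)$ I take $(D,N,R) = (F,K,Z)$; for a $BP$-decomposition $(B,P,M,K,Z)$ I take $(D,N,R) = (B \cup P,\, K \cup M,\, Z)$; and for a $B$-decomposition $(B_1,B_2,K,M_1,M_2,Z)$ I take $(D,N,R) = (B_1 \cup B_2,\, K \cup M_1 \cup M_2,\, Z)$. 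The separator property in each case follows directly from the first axiom of the relevant decomposition definition.

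Starting from a consistent instance $(G, L)$ (justified by \cref{prop:lists-general}), I let $Q := \{v \in V(G) \colon L(v) \cap N \neq \emptyset\}$. Because $N$ separates $D$ from $R$ in $H$, the image of every $C \in \concomp{G - Q}$ under any list homomorphism $(G,L) \to H$ lies entirely in $D$ or entirely in $R$; the proof is the same walk-connectivity contradiction used inside the proof of \cref{lem:decomposition}. For every such $C$ I then invoke the $\lhomo{H_1}$ subroutine on $(C, L_1)$ with $L_1(v) := L(v) \cap D$ to determine the set $\mathbb{C}_1$ of components realizable inside $D$. For $BP$- and $B$-decompositions I additionally need, for every $v \in C \in \mathbb{C}_1$, to know which of the two classes ($B$ vs.\ $P$, or $B_1$ vs.\ $B_2$) may contain $h_C(v)$ in some witness $h_C$; this per-vertex information is extracted by the standard forward/backward traversal of the tree-decomposition dynamic program that already solves $\lhomo{H_1}$, and therefore costs no more than the solve itself.

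I will then construct an equivalent instance $(G, L_2)$ of $\lhomo{H_2}$. Vertices $v \in Q$ keep their original list; vertices in components $C \notin \mathbb{C}_1$ have all $D$-colours removed from their list; and for $v \in C \in \mathbb{C}_1$ the $D$-portion of $L(v)$ is replaced by the set of contracted vertices of $H_2$ that $v$ can actually represent according to the per-vertex queries above ($\{f\}$ for $F$-decompositions, and a subset of $\{b,p\}$ or $\{b_1,b_2\}$ in the other two cases). The equivalence $(G,L) \to H$ if and only if $(G,L_2) \to H_2$ then follows by the same gluing argument as in \cref{lem:decomposition}: a list homomorphism $h' \colon (G, L_2) \to H_2$ is lifted to $h \colon (G, L) \to H$ by re-using, on each $C \in \mathbb{C}_1$, a witnessing homomorphism $h_C$ compatible with the projection of $h'|_C$ to the $B/P$ (respectively $B_1/B_2$) partition; such a $h_C$ exists by construction of $L_2$ together with the fact that $H_2$ encodes exactly the adjacencies between the contracted classes of $D$ in $H$.

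The step that most needs care is verifying the gluing at the boundary between a component $C \in \mathbb{C}_1$ and its $Q$-neighbourhood. This reduces case-by-case to direct adjacency checks against the decomposition axioms; for instance, in a $BP$-decomposition a vertex contracted to $b$ must remain $H$-adjacent to the $K$-coloured images of its $Q$-neighbours but must \emph{not} be forced adjacent to $M$-coloured ones, which is guaranteed by the axioms ``$B$ is complete to $K$'' and ``$B$ is non-adjacent to $M$''; the analogous checks for $B$- and $F$-decompositions follow from the corresponding axioms. Once correctness is established, the running-time analysis mirrors \cref{lem:decomposition}: computing $\concomp{G - Q}$ is polynomial in $n$ and $|H|$; summing the $\lhomo{H_1}$ cost over components of $G - Q$ yields at most $\Oh(c^t (n|H_1|)^d)$ by superadditivity of $n \mapsto n^d$; solving $\lhomo{H_2}$ on $(G, L_2)$ takes $\Oh(c^t (n|H_2|)^d)$; and since $|H_1| + |H_2| \leq |H| + 2$ in every decomposition type, the total is absorbed into $\Oh(c^t (n(|H|+2))^d)$ for sufficiently large $n$, as required.
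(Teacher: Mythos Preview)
Your uniform scheme works for the $F$-decomposition but breaks for the $BP$- and $B$-decompositions, and this is exactly why the paper splits the lemma into three separate proofs (\cref{lem:f-decomposition}, \cref{lem:bp-decomposition}, \cref{lem:b-decomposition}).

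The problem is your definition of $Q$. For a $BP$-decomposition with $N = K \cup M$, a vertex $v$ can have both an $M$-colour and a $P$-colour in its list (these are incomparable in general), so $v \in Q$ while $L(v) \cap D \neq \emptyset$. Your construction puts such a $v$ outside every component of $G - Q$, so it is never passed to the \lhomo{H_1} subroutine. Now suppose $h'(v) = p$ and $h'(u) = b$ for a neighbour $u$ of $v$ lying in some $C \in \mathbb{C}_1$. When you lift, you set $h(u) = h_C(u) \in B$, but you have no $h_C(v)$; you must pick some $p_1 \in L(v) \cap P$, and there is no reason $p_1 h_C(u)$ is an edge of $H$ (the axioms do not make $P$ complete to $B$). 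The paper avoids this by defining the set $Q'$ of such mixed vertices and building an equivalence relation $\simeq$ that merges each $v \in Q'$ with the components of $G-Q$ containing its type-$B$ neighbours; the \lhomo{H_1} call is then made on these enlarged subgraphs, so the $v$--$u$ edge constraint is actually checked. An analogous issue arises for $B$-decompositions with the sets $Q_1,Q_2$, compounded by the fact that $H_1$ is bipartite and the orientation of each component is not known in advance; the paper handles this with a second, more elaborate grouping over ordered triples $(C,X,Y)$.

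Your ``per-vertex query'' idea does not rescue this: knowing for each $u$ which class of $D$ it can land in under \emph{some} $h_C$ does not ensure the choices are mutually consistent across $C$, let alone consistent with the colour of an adjacent $v \in Q$ that was never part of the subproblem. What is needed is precisely the grouping that the paper performs.
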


Since every type of decomposition is handled in a slightly different way, for the sake of readability we split  \cref{lem:general-decomposition} into three lemmas, one for each type of decomposition. Their proofs are similar to the proof of \cref{lem:decomposition}.
We also take more care about constants to make sure that the estimation holds even if $H$ is not fixed.

\begin{lemma}[$F$-decomposition lemma] \label{lem:f-decomposition}
Let $H$ be a connected, non-bi-arc graph, and let $(F,K,Z)$ be an $F$-decomposition of $H$.
If there exist constants $\alpha,c \geq 1$ and $d > 2$ such that $T(H_1,n,t) \leq \alpha \cdot c^t \cdot (n \cdot |H_1|)^d$ and $T(H_2,n,t) \leq \alpha \cdot c^t \cdot (n \cdot |H_2|)^d$, then $T(H,n,t) \leq \alpha \cdot c^t \cdot (n \cdot |H|)^d$, provided that $n$ is sufficiently large.
\end{lemma}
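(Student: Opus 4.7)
The argument closely mirrors the proof of \cref{lem:decomposition}, but now we work with a single ``distinguished'' set $F$ (instead of the two bipartite halves $D_X,D_Y$) that is complete to a reflexive clique separator $K$. The plan is therefore to adapt the three-step skeleton of that proof (preprocess the lists, decompose $G$ along the components induced by the $K$-list vertices, and translate the problem to $H_2$) to the present setting.

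First, I would observe that for every $u\in F$ and every $s\in K$ we have $N(u)\subseteq N(s)$: any neighbor of $u$ lies in $F\cup K$ (since $K$ separates $F$ from $Z$), and each vertex of $F\cup K$ is adjacent to $s$ because $F$ is complete to $K$ and $K$ is a reflexive clique. Thus $u$ and $s$ are comparable, and since we may assume $(G,L)$ is consistent (\cref{prop:lists-general}~\eqref{it:lists-general-incomparable}), no list simultaneously contains a vertex of $F$ and a vertex of $K$. Let $Q$ be the set of vertices of $G$ whose list intersects $K$; then every $v\in V(G)\setminus Q$ satisfies $L(v)\subseteq F\cup Z$. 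As $K$ separates $F$ from $Z$ and the image of $V(G)\setminus Q$ under any list homomorphism avoids $K$, a connectivity argument (the analogue of \cref{clm:inside}) shows that every component $C\in\concomp{G\setminus Q}$ is mapped either entirely into $F$ or entirely into $Z$.

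Next I would recursively solve, for each $C\in\concomp{G\setminus Q}$, the instance $(C,L_1)$ with $L_1(v)=L(v)\cap F$ against $H_1=H[F]$; let $\mathbb{C}_1$ be the collection of components for which a list homomorphism $h_C$ exists. I then define lists $L_2$ with values in $V(H_2)$ by $L_2(v)=L(v)$ if $v\in Q$, $L_2(v)=(L(v)\setminus F)\cup\{f\}$ if $v$ belongs to some $C\in\mathbb{C}_1$, and $L_2(v)=L(v)\setminus F$ otherwise. The core step is to establish that $(G,L)\to H$ iff $(G,L_2)\to H_2$. The forward direction collapses every $F$-valued vertex to $f$; edge preservation holds because $F$ is complete to $K$ (handling the $F$--$K$ case) and because $f$ has a loop in $H_2$ exactly when $F$ contains at least one edge in $H$ (handling the $F$--$F$ case); list preservation uses that a vertex mapped into $F$ cannot belong to $Q$ and hence must lie in some component of $\mathbb{C}_1$. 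The backward direction replaces each $f$-valued vertex inside a component $C\in\mathbb{C}_1$ by $h_C(v)$; edges inside $C$ are handled by $h_C$, and edges between $f$-vertices and vertices of $K$ are handled by the completeness of $F$ to $K$.

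Finally, the running time analysis is routine: computing $\concomp{G\setminus Q}$ and the lists $L_2$ is polynomial in $n\cdot|H|$, the recursive calls cost $\sum_{C}T(H_1,|C|,t)\leq \alpha c^{t}(n|H_1|)^{d}$ by superadditivity of $n^{d}$ for $d>2$, and the final call costs at most $\alpha c^{t}(n|H_2|)^{d}$. Since $|F|\geq 2$ and $K\neq\emptyset$, one has $|H_1|,|H_2|\geq 2$ and $|H_1|+|H_2|=|H|+1$, so for $d>2$ the inequality $|H_1|^{d}+|H_2|^{d}\leq |H|^{d}$ holds with a strictly positive slack depending on $|H|$; this slack absorbs the polynomial overhead once $n$ is sufficiently large. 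The step I expect to require the most care is the backward equivalence: one must verify the edge condition in every case distinction $h'(u)=h'(v)=f$, $h'(u)=f\neq h'(v)$, and $h'(u),h'(v)\neq f$, and this verification genuinely uses each of the four defining properties of an $F$-decomposition, so any weakening of those properties would break the argument.
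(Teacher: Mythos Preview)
Your proposal is correct and follows essentially the same approach as the paper's proof: define $Q$ via lists meeting $K$, use the comparability $N(u)\subseteq N(s)$ for $u\in F,s\in K$ to ensure no list meets both $F$ and $K$, split $G\setminus Q$ into components mapped wholly into $F$ or $Z$, solve each against $H_1$ to form $\mathbb{C}_1$, define $L_2$ exactly as you do, prove the equivalence $(G,L)\to H\iff(G,L_2)\to H_2$, and conclude with the same superadditivity-based running-time estimate. If anything, you are slightly more explicit than the paper in two places (the reason $N(u)\subseteq N(s)$ and the arithmetic $|H_1|+|H_2|=|H|+1$), but the structure and all key ideas coincide.
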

\begin{proof} Let $(G,L)$ be an instance of \lhomo{H} with $n$ vertices, given along with a tree decomposition of with $t$.
Define $Q:=\{x \in V(G): L(x) \cap K \neq \emptyset \}$ and note that by \cref{prop:lists-general}~\cref{it:lists-general-incomparable} no vertex from $Q$ has a vertex from $F$ in its list.  
First, we observe that an analogue of \cref{clm:inside} in the proof of \cref{lem:decomposition} holds in this situation too (the proof is essentially the same as the proof of \cref{clm:inside}, so we do not repeat it).
\begin{claim}\label{cla:f-decomp-inside}
If there exists a list homomorphism $h \colon (G,L) \to H$, the image of each $C \in \concomp{G \setminus Q}$ is entirely contained either in $F$ or in $Z$. \hfill$\blacksquare$\end{claim}

For every $x \in V(G) - Q$ we define $L_1(x) := L(x) \cap F$ and for every component $C \in \concomp{G - Q}$ we check if there exists a homomorphism $h_C:(C, L_1) \to H_1$.
Let $\mathbb{C}_1$ be the set of those $C \in \concomp{G-Q}$, for which $h_C$ exists.

Now let us define an instance $(G, L_2)$ of \lhomo{H_2} as follows.
If $x \in Q$, then we set $L_2(x):= L(x)$, note that in this case $F \cap L(x) = \emptyset$.
On the other hand, if $x \notin Q$, then it is in exactly one $C \in \concomp{G-Q}$.
If $C \in \concomp{G \setminus Q} \setminus \mathbb{C}_1$, then we set $L_2(x):=L(x) \setminus F$.
If $C \in \mathbb{C}_1$, then we set $L_2(x):= L(x) \setminus F \cup \{f\}$. 

\begin{claim}
There exists a list homomorphism $h: (G,L) \to H$ if and only if there exists a list homomorphism $h':(G,L_2) \to H_2$.
\end{claim}
\begin{claimproof}
First, assume that $h:(G,L) \to H$ exists. We define $h': V(G) \to V(H_2)$ as follows:
\[h'(x):= \begin{cases}
f & \textrm{if }h(x) \in F, \\
h(x) & \textrm{otherwise}.
\end{cases}\]

It is straightforward to verify that $h'$ is a homomorphism from $G$ to $H_2$.
Let us show that it respects lists $L_2$. Indeed, consider a vertex $x \in V(G)$.
If $h(x) \notin F$, then $h'(x) = h(x) \in L(x) \setminus F \subseteq L_2(x)$.
On the other hand, if $h(x) \in F$, then $L(x) \cap F \neq \emptyset$, so $x$ is in some $C \in \concomp{G-Q}$. Since $h(x) \in F$, then by \cref{cla:f-decomp-inside} the image of $C$ is contained in $F$, so $C \in \mathbb{C}_1$.
Thus $h'(x) = f \in L_2(x)$.

Now assume that there exists a homomorphism $h':(G,L_2) \to H_2$.
We define $h: V(G) \to V(H)$ as follows. If $h'(x) \neq f$, then we set $h(x) := h'(x)$.
On the other hand, if $h'(x) = f$, then we know that $x$  belongs to some $C \in \mathbb{C}_1$. In this case we set $h(x):=h_C(x)$.
Note that $h$ respects lists $L$, as in the first case $h'(x)=h(x)$ clearly belongs to $(L_2(x) \setminus \{f\}) \subseteq L(x)$, and in the second one it follows from the definition of $h_C$. 

Now let us show that $h$ is a homomorphism. Consider and edge $xy \in E(G)$.
If $h'(x)=h'(y)=f$, then $x,y \notin Q$, so they must both belong to the same $C \in \mathbb{C}_1$ and $h(x)h(y)=h_C(x)h_C(y) \in E(H)$, by the definition of $h_C$.
Similarly, if $h'(x),h'(y) \in K \cup Z$, we have that $h(x)h(y)=h'(x)h'(y) \in E(H)$ by the definition of $h'$.

So suppose that $h'(x) =f$ and $h'(y) \in K \cup Z$. Note that since $h'$ is a homomorphism, we have $h'(y) \in K$. Recall that the set $K$ is complete to $F$, so $h(x) \in F$ is adjacent to $h(y)=h'(y) \in K$.
\end{claimproof}

Computing $Q$ and $\concomp{G-Q}$ can be done in total time $\Oh(n|H| + n^2)=\Oh(( n \cdot |H|)^2)$. 
Computing $h_C$ for all $C \in \concomp{G \setminus Q}$ requires time at most
\[
\sum_{C \in \concomp{G \setminus Q}} T(H_1,|C|,t) \leq \sum_{C \in \concomp{G \setminus Q}} \alpha \cdot c^t \cdot  (|H_1| \cdot |C|)^d \leq \alpha \cdot c^t \cdot  (|H_1| \cdot n)^d.
\]
Computing lists $L_2$ can be performed in time $\Oh(|H| \cdot n)$. Finally, computing $h'$ requires time $T(H_2,n,t) \leq  \alpha \cdot c^t \cdot  (|H_2| \cdot n)^d$.
Similarly to the proof of \cref{lem:decomposition},  if $n$ is sufficiently large, then the total running time can be bounded by $\alpha \cdot c^t \cdot  (|H| \cdot n)^d$, as claimed.
\end{proof}

The case of a $BP$-decomposition is slightly more complicated, as there might be vertices of $G$, whose lists contain both vertices from $P$ (which is a part of $H_1$) and from $M$ (which is a part of the separator). These vertices require special handling.

\begin{lemma}[$BP$-decomposition lemma] \label{lem:bp-decomposition}
Let $H$ be a connected, non-bi-arc graph, and let $(B,P,M,K,Z)$ be a $BP$-decomposition of $H$ with factors $H_1,H_2$.
If there exist constants $\alpha,c \geq 1$ and $d > 2$ such that $T(H_1,n,t) \leq \alpha \cdot c^t \cdot (n \cdot |H_1|)^d$ and $T(H_2,n,t) \leq \alpha \cdot c^t \cdot (n \cdot |H_2|)^d$, then $T(H,n,t) \leq \alpha \cdot c^t \cdot (n \cdot |H|)^d$, provided that $n$ is sufficiently large.
\end{lemma}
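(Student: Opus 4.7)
The plan is to follow the template of \cref{lem:decomposition} and \cref{lem:f-decomposition}, defining a separator set $Q$ in $G$, preprocessing the components of $G \setminus Q$ as $\lhomo{H_1}$-subinstances, and reducing to a single $\lhomo{H_2}$-instance. The new subtlety, compared to the $F$-decomposition, is that collapsing both $P$ and $B$ to single vertices in $H_2$ discards the information about individual $P$-$B$ edges in $H$, and this lost information must be recovered at the reconstruction step.

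Consider a consistent instance $(G,L)$ of $\lhomo{H}$. The $BP$-decomposition structure together with incomparability immediately yields that for every $v \in K$ and $u \in P \cup B$ one has $N(u) \subseteq N(v)$ (since $P \cup B$ has no neighbours in $Z$ and $K$ is complete to $P,M,B$), and similarly every $v \in P \cup M$ dominates every $u \in B$. Hence by \cref{prop:lists-general}(2), no list contains both a $K$-vertex and a $(P \cup B)$-vertex, nor both a $B$-vertex and a $(P \cup M)$-vertex. Setting $Q := \{x \in V(G) : L(x) \cap (K \cup M) \neq \emptyset\}$, we get $L(x) \cap B = \emptyset$ for every $x \in Q$ and $L(x) \subseteq B \cup P \cup Z$ for every $x \notin Q$. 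Since $K \cup M$ separates $B \cup P$ from $Z$ in $H$, the analogue of \cref{clm:inside} shows that under any list homomorphism each $C \in \concomp{G \setminus Q}$ maps entirely into $B \cup P$ or entirely into $Z$. For every $C$ set $L_1(x) := L(x) \cap (B \cup P)$ and test whether $(C, L_1) \to H_1$; let $\mathbb{C}_1$ collect the successful components and $h_C$ a chosen witness for each. The $\lhomo{H_2}$-instance $(G, L_2)$ is then defined by: for $x \in Q$, $L_2(x) := L(x) \cap (K \cup M \cup Z)$, plus $p$ if $L(x) \cap P \neq \emptyset$; for $x$ lying in some $C \in \mathbb{C}_1$, $L_2(x) := L(x) \cap Z$, plus $p$ if $L(x) \cap P \neq \emptyset$ and plus $b$ if $L(x) \cap B \neq \emptyset$; and for any remaining $x$, $L_2(x) := L(x) \cap Z$.

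The main technical content will be the equivalence $(G,L) \to H \iff (G, L_2) \to H_2$. The forward direction is routine: given $h$, set $h'(v) := p$ when $h(v) \in P$, $h'(v) := b$ when $h(v) \in B$, and $h'(v) := h(v)$ otherwise; the completeness conditions of the $BP$-decomposition together with the loop on $p$ in $H_2$ ensure every edge of $G$ is preserved. The reverse direction is where I expect the main obstacle. Given $h'$, I intend to set $h(v) := h'(v)$ whenever $h'(v) \in K \cup M \cup Z$, $h(v) := h_C(v)$ whenever $v$ lies in some $C \in \mathbb{C}_1$ with $h'(v) \in \{p, b\}$, and choose some $u \in L(v) \cap P$ whenever $v \in Q$ and $h'(v) = p$. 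The delicate case is an edge $vy$ with $v \in Q$, $h'(v) = p$, and $y \in C \in \mathbb{C}_1$ with $h_C(y) \in B$: since $H_2$ only records that $p$ is adjacent to $b$ without tracking which $P$-$B$ pairs form edges in $H$, I must choose $u$ adjacent in $H$ to the specific vertex $h_C(y)$ for every such $y$. The key observation is that $L(v) \cap P \neq \emptyset$ forces $L(v) \cap K = \emptyset$ by the list constraints derived above, so arc consistency on the edge $vy$ immediately gives that for every $w \in L(y) \cap B$ there is $u \in L(v) \cap P$ with $uw \in E(H)$, and symmetrically for every $u \in L(v) \cap P$ there is $w \in L(y) \cap B$ with $uw \in E(H)$. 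I plan to leverage this together with the freedom inside each $h_C$ (exploiting that $P$ is a reflexive clique in $H_1$, so its vertices are interchangeable modulo their individual $B$-neighbourhoods) to argue that globally consistent choices can be made; whenever a local conflict would arise, iterated arc consistency will already have pruned either the offending $u$ from $L(v)$ or the offending $w$ from $L(y)$, so that the corresponding pattern cannot actually occur in $h'$.

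The running-time analysis mirrors \cref{lem:f-decomposition}. Computing $Q$ and $\concomp{G \setminus Q}$ takes $\cO((n \cdot |H|)^2)$; computing all $h_C$'s costs
\[
\sum_{C \in \concomp{G \setminus Q}} T(H_1,|C|,t) \;\le\; \sum_C \alpha \cdot c^t \cdot (|H_1| \cdot |C|)^d \;\le\; \alpha \cdot c^t \cdot (|H_1| \cdot n)^d
\]
by superadditivity of $n^d$; assembling $L_2$ takes $\cO(|H| \cdot n)$; and the final $\lhomo{H_2}$-call costs $T(H_2, n, t) \le \alpha \cdot c^t \cdot (|H_2| \cdot n)^d$. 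Since $|H_1|, |H_2| \le |H|$ and $d > 2$, routine estimation shows that the total is bounded by $\alpha \cdot c^t \cdot (|H| \cdot n)^d$ once $n$ is sufficiently large.
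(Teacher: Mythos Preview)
Your overall plan---define $Q$, precompute $H_1$-homomorphisms on pieces, collapse to an $\lhomo{H_2}$-instance, then reconstruct---is the right template, but the reconstruction step has a genuine gap that arc consistency does not close.

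The problematic vertices are exactly those $v \in Q$ with $L(v) \cap P \neq \emptyset$ (call this set $Q'$). Such a $v$ may be adjacent to type-$B$ vertices $y_1,\ldots,y_s$ lying in \emph{different} components $C_1,\ldots,C_s$ of $G-Q$. After you set $h'(v)=p$ and $h'(y_j)=b$, you must pick a single $u \in L(v)\cap P$ that is simultaneously adjacent in $H$ to every $h_{C_j}(y_j)\in B$. Arc consistency only guarantees that for each $u$ and each $y_j$ separately there is \emph{some} compatible value in $L(y_j)$; it says nothing about a single $u$ working for the particular tuple $(h_{C_1}(y_1),\ldots,h_{C_s}(y_s))$ you already fixed. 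Concretely, take $P=\{p_1,p_2,p_3\}$, $B=\{b_1,b_2,b_3\}$ with $p_ib_j\in E(H)$ iff $i=j$, a vertex $v$ with $L(v)=\{p_1,p_2,p_3,m\}$ for some $m\in M$ (incomparable with each $p_i$ via private $Z$-neighbours), and three neighbours $y_1,y_2,y_3$ each with $L(y_j)=\{b_1,b_2,b_3,z_j\}$ where $z_j\in Z\cap N(m)$. This instance is arc-consistent, each singleton component $C_j=\{y_j\}$ admits $h_{C_j}(y_j)=b_j$, and $(G,L_2)\to H_2$ via $v\mapsto p$, $y_j\mapsto b$; but with those particular $h_{C_j}$'s no $p_i$ works for $v$. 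Your ``freedom inside each $h_C$'' amounts to re-solving the $H_1$-subinstances \emph{with the $Q'$-vertices included as boundary constraints}---but the $Q'$-vertices link the components to each other, so this is no longer a collection of independent subproblems.

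The paper handles exactly this by refusing to treat the components of $G-Q$ independently. It builds an equivalence relation that glues a component $C$ of $G-Q$ to a singleton $\{x\}$ for $x\in Q'$ whenever $x$ is adjacent to a type-$B$ vertex of $C$, and takes $\mathbb{C}$ to be the induced subgraphs on the resulting equivalence classes. The key structural claim is that in any $h:(G,L)\to H$, each such class is mapped entirely into $P\cup B$ or entirely into $M\cup Z$; hence computing $h_C$ on these larger subgraphs already determines consistent $P$-colours for the $Q'$-vertices together with their $B$-neighbours. The $L_2$-definition and the equivalence proof then go through with the $P$-reflexive-clique observation used only in the benign case where two type-$P$ vertices from \emph{different} classes are adjacent. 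Your running-time analysis is fine and carries over unchanged, since the classes are still vertex-disjoint.
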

\begin{proof} Let $(G,L)$ be an instance of \lhomo{H} with $n$ vertices, given along with a tree decomposition of with $t$.
We define $Q:=\{x \in V(G): L(x) \cap (M \cup K) \neq \emptyset\}$.
We say that a vertex $x \in V(G)$ is of \emph{type $B$} (resp. \emph{type $P$}) if $L(x) \cap B \neq \emptyset$ (resp. $L(x) \cap P \neq \emptyset$). Note that by \cref{prop:lists-general}~\cref{it:lists-general-incomparable} no vertex is simultaneously of type $B$ and type $P$. 
Moreover, no vertex of type $B$ is in $Q$. Let $Q'$ be the set of vertices of type $P$ in $Q$. Note that by \cref{prop:lists-general}~\cref{it:lists-general-incomparable} each vertex from $Q'$ has in its list some vertices from $M$ and some vertices from $P$, but no vertices from $K$.

Notice that now we cannot just consider components of $G-Q$ independently from each other and from the vertices from $Q$, as we did in the case of an $F$-decomposition.
This is because the coloring we choose for these components influences the coloring of vertices from $Q'$, which in turn influences the coloring of some other components of $G-Q$.
Thus, instead of considering components of $G-Q$ separately, we gather them together with their neighbors from $Q'$ and only then compute the homomorphisms to $H_1$. This intuition is formalized by the following argument. 

Let $\mathbb{S}_1,\mathbb{S}_2$, and $\mathbb{S}$ be  families of sets defined as follows \[\mathbb{S}_1 := \bigcup_{C \in \concomp{G-Q}} \Bigl \{ V(C) \Bigr \}  \quad \text{ and } \quad  \mathbb{S}_2 := \bigcup_{x \in Q'} \Bigl \{\{x\} \Bigr \} \quad \text{ and } \quad \mathbb{S} := \mathbb{S}_1 \cup \mathbb{S}_2.\]
Let $\sim \; \subseteq \mathbb{S}^2$ be a relation, defined as follows: $S_1 \sim S_2$ if and only if (i) $S_1 \in \mathbb{S}_1$, i.e., it is a vertex set of some $C \in \concomp{G-Q}$ and (ii) $S_2 \in \mathbb{S}_2$, i.e., $S_2= \{x\}$ for some $x \in Q'$, and (iii) $x$ is adjacent to some vertex of type $B$ from $S_1$.
Let $\simeq \; \subseteq \mathbb{S}^2$ be the smallest equivalence relation containing $\sim$ and consider the set $\mathbb{S} /_\simeq$ of equivalence classes of $\simeq$.

Now we form the set $\mathbb{C}'$ as follows. Consider a class $\pi \in \mathbb{S}/_\simeq$, it is a subset of $\mathbb{S}_1 \cup \mathbb{S}_2$, i.e., a family of pairwise disjoint subsets of $(V(G) \setminus Q) \cup Q'$.
We construct a subgraph $C'$ of $G$, whose vertex set is the union of all sets in $\pi$.
An edge $xy \in E(G)$ exists in $C'$ if its both endvertices belong to one element of $\pi$ (i.e., it is an edge of some $C \in \concomp{G-Q}$, such that $V(C) \in \pi$), or is an edge joining some $x \in Q'$, such that $\{x\} \in \pi$, with a vertex $y$, which is a vertex of type $B$ in $C$, such that $V(C) \in \pi$.
Note that the edges of the second kind are exactly the edges which justifed including $\{x\}$ and $V(C)$ in the same equivalence class $\pi \in \mathbb{S}/_\simeq$. Moreover, each $C' \in \mathbb{C}'$ is connected.
Finally, we define $\mathbb{C} := \{G[V(C')] : C' \in \mathbb{C}'\}$, observe that it is a set of pairwise disjoint induced subgraphs of $G$.

The following analogue of \cref{clm:inside} and \cref{cla:f-decomp-inside} is a crucial step towards proving the lemma.

\begin{claim}\label{cla:bp-decomp-inside}
If there exists a list homomorphism $h \colon (G,L) \to H$, the image of each $C \in \mathbb{C}$ is entirely contained either in $P \cup B$ or in $M \cup Z$.
\end{claim} 
\begin{claimproof}
Consider $C \in \mathbb{C}$ and assume that there exist $x,y \in V(C)$ such that $h(x) \in P \cup B$ and $h(y) \in M \cup Z$.
Recall that there exists $C' \in \mathbb{C}'$, which is a connected spanning subgraph of $C$.
Let $\cP$ be a $x$-$y$-path in $C'$.
Recall that no vertex of $C$ (and thus of $\cP$) has a vertex from $K$ in its list. This implies that on $\cP$ there exists an edge $zz'$, such that $h(z) \in P$ and $h(z') \in M$.
Clearly, $z' \in Q'$. Moreover, $z \notin Q'$, as in $C'$ there are no edges between vertices of $Q'$. So $z$ must be a vertex of type $P$ from $V(G) \setminus Q$.
However, in $C'$ there are no edges between $Q'$ and vertices of type $P$ in $V(G) \setminus Q$, a contradiction.
\end{claimproof}

Let $C \in \mathbb{C}$. For every $x \in V(C)$ we define $L_1(x) := L(x) \cap (P \cup B)$.
Denote by $\mathbb{C}_1$ the set of these subgraphs $C \in \mathbb{C}$, for which there exists a list homomorphism $h_C: (C,L_1) \to H_1$. We define an instance $(G, L_2)$ of \lhomo{H_2} as follows:
\begin{align*}
L_2(x):=\begin{cases}
L(x) \setminus P \cup \{p\} & \textrm{ if }x \text{ is of type $P$ and belongs to some subgraph in } \mathbb{C}_1,\\
L(x) \setminus B \cup \{b\} & \textrm{ if }x \text{ is of type $B$ and belongs to some subgraph in } \mathbb{C}_1,\\
L(x) \setminus (P \cup B) & \textrm{ otherwise}.
\end{cases}
\end{align*}
Recall that no vertex $x$ is simultaneously of type $P$ and $B$, so $L_2$ is well-defined. Now let us show the following.
\begin{claim}
There exists a list homomorphism $h: (G,L) \to H$ if and only if there exists a list homomorphism $h':(G,L_2) \to H_2$.
\end{claim}
\begin{claimproof}
First, assume that $h: (G,L) \to H$ exists. Define $h': V(G) \to V(H_2)$ in the following way:
\[
h'(x) := 
\begin{cases}
p & \text{ if } h(x) \in P,\\
b & \text{ if } h(x) \in B,\\
h(x) & \text{ otherwise.}
\end{cases}
\]
The fact that $h'$ is a homomorphism follows from the definition of $H_2$ and the fact that $h$ is a homomorphism. We need to show for every $x$ we have $h'(x) \in L_2(x)$.

If $h'(x) \notin \{p,b\}$, then $h'(x) = h(x) \in L(x) \setminus (P \cup B) \subseteq L_2(x)$.
So assume that $h'(x) \in \{p,b\}$. This means that $h(x) \in P \cup B$, so $x \in (V(G) \setminus Q) \cup Q'$. This means that $x$ belongs to some $C \in \mathbb{C}$, and since $h(x) \in P \cup B$, by \cref{cla:bp-decomp-inside}, we observe that $C \in \mathbb{C}_1$.
Moreover, if $h'(x) = p$, then $x$ is of type $P$, and if $h'(x)=b$, then $x$ is of type $B$. Therefore $h'(x) \in L_2(x)$.

Now suppose there exists a list homomorphism $h' : (G,L_2) \to H_2$.  We define a mapping $h: V(G) \to V(H)$ as follows.
If $h'(x) \notin \{p,b\}$, then $h(x) := h'(x)$.
Otherwise, by the definition of $L_2$, the vertex $x$ must belong to exactly one subgraph $C$ in $\mathbb{C}_1$. In this case we set $h(x):=h_C(x)$.
Note that for every $x \in V(G)$ we have $h(x) \in L(x)$: if $h'(x) \notin \{p,b\}$, then $h(x)=h'(x) \in L_2(x) \setminus \{p,b\} \subseteq L(x)$, and if $h'(x) \in \{p,b\}$, then $h(x) \in L(x)$ by the definition of $h_C$.

Now let us show that $h$ is a homomorphism. Let $xy$ be an edge of $G$ and consider the following cases.
\begin{compactenum}
\item If $h'(x),h'(y) \notin \{p,b\}$, then $h(x)h(y)=h'(x)h'(y) \in E(H_2 - \{p,b\})  \subseteq E(H)$, by the fact that $h'$ is a homomorphism.
\item If $h'(x)=b$ and $h'(y)=p$, then $x$ is of type $B$ and is in $V(G) \setminus Q$, while $y$ is of type $P$ and belongs to $(V(G) \setminus Q) \cup Q'$.
Recall that by the definition of $\sim$, vertices $x,y$ belong to the same subgraph $C \in \mathbb{C}$.
Moreover, $C \in \mathbb{C}_1$, as only in this case $p$ and $b$ could appear on the lists of $x,y$.
This gives $h(x)h(y)=h_C(x)h_C(y) \in E(H_1) \subseteq E(H)$, by the definition of $h_C$. 
\item The case $h'(x)=h'(y)=b$ cannot occur, as $h'$ is a homomorphism and $b$ has no loop.
\item If $h'(x)=h'(y)=p$, then $p \in L_2(x) \cap L_2(y)$, so both $x,y$ are of type $P$, and each of them belongs to some subgraph in $\mathbb{C}_1$. Let us call these subgraphs $C_x$ and $C_y$, respectively, and observe that $h(x) = h_{C_x}(x)$ and $h(y)=h_{C_y}(y)$.
As $h_{C_x}$ and $h_{C_y}$ are homomorphisms to $H_1$, we know that $h_{C_x}(x),h_{C_y}(y)\in P\cup B$. But since neither $L(x)$ or $L(y)$ contains elements from $B$, we have $h_{C_x}(x),h_{C_y}(y)\in P$.
This means that $h(x)h(y) = h_{C_x}(x)h_{C_y}(y)$ is an edge of $H$, as $P$ is a reflexive clique.
\item If $h'(x)=b$ and $h'(y) \notin \{p,b\}$, then $h'(y)=h(y) \in K$, as $h'$ maps $xy$ to an edge of $H_2$.
Again, let $C$ be the subgraph from $\mathbb{C}_1$ which contains $x$ and observe that $x$ must be of type $B$, as $b \in L_2(x)$. So $h(x)=h_C(x) \in B$, which implies that $h(x)h(y) \in E(H)$, as $B$ is complete to $K$.
\item If $h'(x)=p$ and $h'(y) \notin \{p,b\}$, then $h'(y)=h(y) \in K\cup M$. Again, let $C$ be the subgraph in $\mathbb{C}_1$, which contains $x$ and observe that $p$ is of type $P$. So $h(x)=h_C(x) \in P$, which implies that $h(x)h(y) \in E(H)$, as $P$ is complete to $K \cup M$.
\end{compactenum}
This concludes the proof of the claim.
\end{claimproof}

The running time analysis is analogous to the case of an $F$-decomposition.
Note that $\mathbb{S}/_\simeq$ and thus $\mathbb{C}$ can be computed in total time $\Oh(n \cdot |H|)^2$.
With a reasoning analogous to the one in the proof of \cref{lem:f-decomposition}, we conclude that if $n$ is sufficiently large, then the problem can be solved in time $\alpha \cdot c^t \cdot (n \cdot |H|)^d$.
\end{proof}

The final lemma, concerning $B$-decompositions, is the most complicated one.
Since in this case the first factor, $H_1$, is bipartite, we know that every subgraph $C$ of $G$ that is mapped to $H_1$ must be bipartite too. However, unlike in the case when the whole graph $H$ is bipartite (i.e., \cref{lem:decomposition}), we do not know in advance which bipartition class of $C$ is mapped to which bipartition class of $H_1$. Thus for every $C$ we need to consider both possibilities.

\begin{lemma}[$B$-decomposition lemma] \label{lem:b-decomposition}
Let $H$ be a connected, non-bi-arc graph, and let $(B_1,B_2,K,M_1,M_2,Z)$ be a $B$-decomposition of $H$.
If there exist constants $\alpha,c,d$ such that $T(H_1,n,t) \leq \alpha/2 \cdot c^t \cdot (n \cdot |H_1|)^d$ and $T(H_2,n,t) \leq \alpha \cdot c^t \cdot (n \cdot |H_2|)^d$, then $T(H,n,t) \leq \alpha \cdot c^t \cdot (n \cdot |H|)^d$, provided that $n$ is sufficiently large.
\end{lemma}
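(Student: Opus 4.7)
The plan is to adapt the template of \cref{lem:f-decomposition} and \cref{lem:bp-decomposition}, with extra care because the factor $H_1 = H[B_1 \cup B_2]$ is bipartite rather than containing a reflexive clique. First, set $Q := \{x \in V(G) : L(x) \cap (K \cup M_1 \cup M_2) \neq \emptyset\}$. Using that each list is incomparable (\cref{prop:lists-general}) and that vertices of $K \cup M_1 \cup M_2$ dominate the neighborhoods of vertices of $B_1 \cup B_2$, one shows $L(x) \subseteq B_1 \cup B_2 \cup Z$ for every $x \notin Q$. The separator property then gives the inside claim, directly analogous to \cref{cla:f-decomp-inside}: in any list homomorphism $h \colon (G,L) \to H$, the image of each component $C \in \concomp{G-Q}$ is contained either in $B_1 \cup B_2$ or in $Z$.

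The novelty is that since $H_1$ is bipartite, whenever $h(V(C)) \subseteq B_1 \cup B_2$ the component $C$ must itself be bipartite, with its two bipartition classes mapped respectively to $B_1$ and $B_2$. For each bipartite $C$ with bipartition classes $C_X, C_Y$, I will form two restricted $\lhomo{H_1}$ instances $(C, L_1^\sigma)$ for $\sigma \in \{1,2\}$, where $L_1^\sigma(x) = L(x) \cap B_\sigma$ for $x \in C_X$ and $L_1^\sigma(x) = L(x) \cap B_{3-\sigma}$ for $x \in C_Y$. Solving both costs at most $2 \cdot (\alpha/2) \cdot c^t \cdot (|C| \cdot |H_1|)^d$ by hypothesis; summing over components and using superadditivity of $n^d$ gives a combined $H_1$-cost of $\alpha \cdot c^t \cdot (n|H_1|)^d$, which is precisely the role of the $\alpha/2$ factor in the hypothesis. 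Let $\mathbb{C}_1^\sigma$ denote the set of components feasible under orientation $\sigma$, with a witnessing $h_C^\sigma$ stored for each.

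Next I will build the instance $(G, L_2)$ of $\lhomo{H_2}$ analogously to \cref{lem:bp-decomposition}: $L_2(x) := L(x)$ for $x \in Q$; $L_2(x) := L(x) \cap Z$ for $x$ in a component infeasible in both orientations; and $L_2(x) := (L(x) \cap Z) \cup S(x)$ for $x$ in a feasible component, where $b_\sigma \in S(x)$ exactly when some feasible orientation places $x$ in $B_\sigma$. The two directions of the equivalence ``$(G,L) \to H$ if and only if $(G,L_2) \to H_2$'' follow the earlier template: the forward direction collapses $B_1 \to b_1$ and $B_2 \to b_2$ using the inside claim; the backward direction reads the orientation of each feasible component off $h'$ and reconstitutes $h$ on that component via the stored $h_C^\sigma$. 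Edges between $Q$ and components are handled automatically, since $b_1$ and $b_2$ inherit in $H_2$ the neighborhoods of $B_1$ and $B_2$ in $K \cup M_1 \cup M_2$.

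Summing the costs---$\Oh((n|H|)^2)$ for setup, $\alpha c^t (n|H_1|)^d$ for the $H_1$ calls, $\alpha c^t (n|H_2|)^d$ for the single $H_2$ call, and $\Oh(n|H|)$ for assembling $L_2$---and using $|H_1|, |H_2| \leq |H|-1$ (since $|B_1|+|B_2| \geq 2$ and $K \cup M_1 \cup M_2 \neq \emptyset$) together with $d > 2$, the total is bounded by $\alpha c^t (n|H|)^d$ for $n$ sufficiently large, exactly as in the estimation at the end of \cref{lem:f-decomposition}. The main obstacle is the bipartite branching: I must faithfully record in $L_2$ which orientations are feasible for each component, so that any list homomorphism $h'$ to $H_2$ can be reversed into a list homomorphism to $H$ on the correctly chosen side, and I must verify that solving two $H_1$-instances per component is exactly what the $\alpha/2$ hypothesis permits.
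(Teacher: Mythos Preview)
Your approach has a genuine gap in the claim that ``vertices of $K \cup M_1 \cup M_2$ dominate the neighborhoods of vertices of $B_1 \cup B_2$.'' This is false in general. Take $b_1 \in B_1$ with a neighbor in $B_2$, and any $m_2 \in M_2$: since $B_2$ is non-adjacent to $M_2$, that neighbor of $b_1$ lies in $N(b_1) \setminus N(m_2)$; conversely $b_1 \in N(m_2) \setminus N(b_1)$ because $B_1$ is independent. Hence $b_1$ and $m_2$ are incomparable, and both may legitimately appear in the same (incomparable) list. Symmetrically, $b_2 \in B_2$ and $m_1 \in M_1$ can be incomparable. So there can be vertices $x \in Q$ with $L(x) \cap (B_1 \cup B_2) \neq \emptyset$; your rule $L_2(x) := L(x)$ for $x \in Q$ then places vertices of $B_1 \cup B_2$ into $H_2$-lists, which is ill-defined, and even after patching this the backward direction breaks: if $h'(x) = b_1$ for such an $x \in Q$, you have no $h_C$ to fall back on, and the choice of which vertex of $B_1$ to pick for $h(x)$ must be consistent with its neighbors in $B_2$.

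This is precisely the complication that makes the paper's proof substantially longer than the $F$- and $BP$-cases. The paper singles out $Q_1 = \{x \in Q : L(x) \cap B_1 \neq \emptyset\}$ and $Q_2 = \{x \in Q : L(x) \cap B_2 \neq \emptyset\}$, shows these are disjoint (since $N(b_1) \subseteq N(m_1)$ and $N(b_2) \subseteq N(m_2)$ do hold), and then glues components of $G-Q$ together with vertices of $Q_1 \cup Q_2$ via an equivalence relation that tracks which pieces must agree on being mapped into $B_1 \cup B_2$ versus $M_1 \cup M_2 \cup Z$, and with which orientation. The $H_1$-instances are solved on these enlarged subgraphs (still covering each vertex at most twice, which is where the $\alpha/2$ is actually spent). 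Your plan to handle the two orientations of bipartite components is the right idea for the $\alpha/2$ bookkeeping, but it must be applied to these larger grouped subgraphs rather than to the raw components of $G - Q$.
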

\begin{proof} Let $(G,L)$ be an instance of \lhomo{H} with $n$ vertices, given along with a tree decomposition of with $t$.
We define $Q:=\{x \in V(G): L(x) \cap (M_1 \cup M_2 \cup K) \neq \emptyset\}$.
Furthermore we define two subsets of $Q$ as follows: $Q_1 := \{x \in Q : L(x) \cap B_1 \neq \emptyset\}$ and $Q_2 := \{x \in Q : L(x) \cap B_2 \neq \emptyset\}$.
Note that by \cref{prop:lists-general}~\cref{it:lists-general-incomparable}, if $L(x)$ contains a vertex from $K$, then $x \notin Q_1 \cup Q_2$. Furthermore, if $L(x)$ contains a vertex from $M_1$, then $x \notin Q_1$, and if $L(x)$ contains a vertex from $M_2$, then $x \notin Q_2$. This means that $Q_1$ is disjoint with $Q_2$.
Moreover, every vertex from $Q_1$ has in its list a vertex from $B_1$ and a vertex from $M_2$, while every vertex from $Q_2$ has in its list a vertex from $B_2$ and a vertex from $M_1$.

Similarly to e.g.~\cref{cla:f-decomp-inside}, we observe that every connected component $C$ of $G-Q$ must be entirely mapped either to $B_1 \cup B_2$ or to $Z$. Recall that $H_1 = H[B_1 \cup B_2]$ is bipartite, so if $C$ is not bipartite, then we can safely remove all vertices from $B_1 \cup B_2$ from the lists of vertices of $C$, let us still call these lists $L$.
Let $S_1$ be the set of bipartite components of $G-Q$.

Now let us consider an auxiliary graph $G'$ defined as follows. The vertex set of $G'$ is $Q_1 \cup Q_2$. We add to $G'$ all edges of $G$ with one end in $Q_1$ and the other in $Q_2$.
Observe that for each component $C$ of $G'$ and any list homomorphism $(G,L) \to H$, either all vertices of $C$ are mapped to $B_1 \cup B_2$ (with $V(C) \cap Q_1$ mapped to $B_1$ and $V(C) \cap Q_2$ mapped to $B_2$), or all vertices of $C$ are mapped to $M_1 \cup M_2 \cup Z$ (with $V(C) \cap Q_1$ mapped to $M_2 \cup Z$ and $V(C) \cap Q_2$ mapped to $M_1 \cup Z$).
Again, recall that $H[B_1 \cup B_2] = H_1$ is bipartite. Thus if for any $C \in \concomp{G'}$, the graph $G[V(C)]$ is not bipartite, then we can safely remove all vertices from $B_1 \cup B_2$ from the lists of the vertices of $C$. We still call these lists $L$.
Let $S_2$ be the set of those $C \in \concomp{G'}$, for which $G[V(C)]$ is bipartite. Note that in this case $C = G[V(C)]$.

Observe that $S_1 \cup S_2$ is a collection of pairwise vertex-disjoint bipartite induced subgraphs of $G$.
Furthermore, no vertex appearing in a subgraph from $S_1 \cup S_2$ has a vertex from $K$ in its list.
Moreover, we observe that for any list homomorphism $(G,L) \to H$, each $C \in S_1 \cup S_2$ must be entirely mapped to $B_1 \cup B_2$ or to $M_1 \cup M_2 \cup Z$.
This is straightforward for $C \in S_1$, as these vertices do not contain any vertex from $K \cup M_1 \cup M_2$ in their lists.
For $C \in S_2$, we can observe that in the other case we would obtain $uv \in E(C)$, such that
$u$ is mapped to a vertex in $B_1$ and $v$ is mapped to a vertex in $M_2$, or
$u$ is mapped to a vertex in $B_2$ and $v$ is mapped to a vertex in $M_1$. However, notice that for no edge of $C$ such configuration of lists is possible.

Let $C \in S_1 \cup S_2$ and suppose it gets mapped to $B_1 \cup B_2$. If $C \in S_2$, then we know which bipartition class of $C$ is mapped to $B_1$ and which is mapped to $B_2$, i.e., $V(C) \cap Q_1$ is mapped to $B_1$ and $V(C) \cap Q_2$ is mapped to $B_2$. On the other hand, if $C \in S_1$, both options might be possible.

We define a set $\mathbb{S}_1$ as follows: for each $C \in S_1$ with bipartition classes $X,Y$, we add to $\mathbb{S}_1$ two triples: $(C,X,Y)$ and $(C,Y,X)$.
Similarly we define a set $\mathbb{S}_2$: for each $C \in S_2$, we add to $\mathbb{S}_2$ the triple $(C, V(C) \cap Q_1, V(C) \cap Q_2$). Finally, define $\mathbb{S} := \mathbb{S}_1 \cup \mathbb{S}_2$.

The intuition behind $\mathbb{S}$ is that it represents the subgraphs that could be mapped to $B_1 \cup B_2$, taking into considerations that some of them (i.e., the ones from $S_1$) could be mapped to $B_1 \cup B_2$ in two ways. A triple $(C,X,Y) \in\mathbb{S}$ indicates that $C$ could be potentially mapped to $B_1 \cup B_2$ in a way that $X$ is mapped to $B_1$ and $Y$ is mapped to $B_2$.

Now we proceed similarly to the proof of \cref{lem:bp-decomposition}. We define a relation $\sim \; \subseteq \mathbb{S}^2$, so that $(C_1,X_1,Y_1) \sim (C_2,X_2,Y_2)$ if and only if the following conditions hold: (i) $(C_1,X_1,Y_1) \in \mathbb{S}_1$, and (ii) $(C_2,X_2,Y_2) \in \mathbb{S}_2$, and (iii) there is an edge with one endvertex in $X_1$ and the other in $Y_2$, or an edge with one endvertex in $Y_1$, and the other in $X_2$.
Note that if $(C_1,X_1,Y_1) \sim (C_2,X_2,Y_2)$ and there is a list homomorphism $h \colon (G,L) \to H$, then either:
(i) both $C_1$ and $C_2$ are mapped to $B_1 \cup B_2$ where $h(X_1 \cup X_2) \subseteq B_1$ and $h(Y_1 \cup Y_2) \subseteq B_2$, or
(ii) no vertex from $X_1 \cup X_2$ is mapped to $B_1$ and no vertex from $Y_1 \cup Y_2$ is mapped to $B_2$.
This is because lists of vertices from $X_1 \cup Y_1$ do no contain any vertex from $K \cup M_1 \cup M_2$, while lists of vertices of $X_2$ (resp. $Y_2$) do not contain any vertex from $M_1 \cup B_2 \cup K$ (resp. $M_2 \cup B_1 \cup K$). 

Now let $\simeq$ be the smallest equivalence relation containing $\sim$, and let $\mathbb{S}/_\simeq$ be the set of equivalence classes of $\simeq$. We will define a set $\mathbb{C}$ of subgraphs of $G$.

Consider an equivalence class $\pi = \{ (C_1,X_1,Y_1),\ldots,(C_s,X_s,Y_s) \}$ from $\mathbb{S}/_\simeq$, and define $X_\pi := \bigcup_{i \in [s]} X_i$ and $Y_\pi := \bigcup_{i \in [s]} Y_i$. 
Let $C_\pi$ be the subgraph of $G$ induced by $X_\pi \cup Y_\pi$.
Recall that by the definition of $\sim$, either 
(i) all vertices from $X_\pi$ must be mapped to $B_1$ and all vertices from $Y_\pi$ must be mapped to $B_2$, or
(ii) no vertex from $X_\pi$ is mapped to $B_1$ and no vertex from $Y_\pi$ is mapped to $B_2$.
Thus if $C_\pi$ is not bipartite, then we can safely remove all vertices from $B_1$ from the lists of $X_\pi$ and all vertices from $B_2$ from the lists of $Y_\pi$, we still call these lists $L$.

On the other hand, if $C_\pi$ is bipartite, we note that its bipartition classes are exactly $X_\pi$ and $Y_\pi$. We add to $\mathbb{C}$ the triple $(C_\pi,X_\pi,Y_\pi)$.
Now $\mathbb{C}$ can be seen as a collection of bipartite induced subgraphs of $G$, containing only vertices from $(V(G) \setminus Q) \cup (Q_1 \cup Q_2)$, such that each vertex from $V(G) \setminus Q$ appears in at most two of them, while every vertex from $Q_1 \cup Q_2$ appears in one of them.

We claim that at the current step the following holds.
\begin{claim} \label{clm:b-decomp-b-inC}
If $x$ is a vertex such that $L(x) \cap B_1 \neq \emptyset$ (resp. $L(x) \cap B_2 \neq \emptyset$), then there exists $(C,X,Y) \in \mathbb{C}$, such that $x \in X$ (resp. $x \in Y$).
\end{claim}
\begin{claimproof}
Let us show the claim for the case that $L(x) \cap B_1 \neq \emptyset$, as the other case is symmetric.
Note that since $x$ has a vertex from $B_1$ in its list, then either $x \in Q_1$ or $x \in V(G) \setminus Q$. 

In the first case, note that there is exactly one $(C',X',Y') \in \mathbb{S}_2$, such that $x \in X' \cup Y'$, and, since $x \in Q_1$, we know that $x \in X'$. Consider the class $\pi$ in $\mathbb{S}/_\simeq$ containing $(C',X',Y')$, and the triple $(C_\pi,X_\pi,Y_\pi)$. Clearly $x \in X_\pi$.
Note that if $C_\pi$ is not bipartite, we would have removed all vertices of $B_1$ from the list of $x$. But since $L(x) \cap B_1 \neq \emptyset$, the graph $C_\pi$ must be bipartite and thus $(C_\pi,X_\pi,Y_\pi)$ is the desired triple in $\mathbb{C}_1$.

So suppose that $x \in V(G) \setminus Q$, and therefore $x$ is in some bipartite connected component $C'$ of $G - Q$. Let $X',Y'$ be the bipartition classes of $C'$, such that $x \in X'$. Recall that both triples $(C',X',Y')$ and $(C',Y',X')$ are in $\mathbb{S}_1$. Let $\pi$ and $\pi'$ be the equivalence classes from $\mathbb{S}/_\simeq$, containing $(C',X',Y')$ and $(C',Y',X')$, respectively.

Observe that if $\pi = \pi'$, then $x \in X_\pi \cap Y_\pi$ and $C_\pi$ is not bipartite. Indeed, the definition of $\sim$ implies that there is an odd length walk starting and terminating at $x$. So in this case we have removed all vertices from $B_1 \cup B_2$ from the list of $x$.

So finally assume that $\pi \neq \pi'$ and consider $(C_{\pi},X_{\pi},Y_{\pi})$ and $(C_{\pi'},X_{\pi'},Y_{\pi'})$. Note that $x \in X_\pi \cap Y_{\pi'}$.
If $C_\pi$ is not bipartite, then we would have removed all vertices of $B_1$ from the list of $x$. Therefore $C_\pi$ must be bipartite and $(C_{\pi},X_{\pi},Y_{\pi}) \in \mathbb{C}$ is the desired triple.
\end{claimproof}

The following claim follows from the definition of  $\mathbb{C}$ and the reasoning there.

\begin{claim}\label{cla:b-decomp-inside}
Let $(C,X,Y) \in \mathbb{C}$ and suppose that there exists a list homomorphism $h \colon (G,L) \to H$.
\begin{compactenum}[(a)]
\item The image of $C$ is entirely contained in $B_1 \cup B_2$ or in $M_1 \cup M_2 \cup Z$. \label{it:b-inside-either}
\item If  the image of $C$ is contained in $B_1 \cup B_2$, then either $h(X) \subseteq B_1$ and $h(Y) \subseteq B_2$ or $h(X) \subseteq B_2$ and $h(Y) \subseteq B_1$.\hfill$\blacksquare$ \label{it:b-inside-xy}
\end{compactenum}
\end{claim} 

Let $(C,X,Y) \in \mathbb{C}$. For every $x \in X$ we define $L_1(x) := L(x) \cap B_1$ and for every $x \in Y$ we define $L_1(x):=L(x) \cap B_2$.
Denote by $\mathbb{C}_1$ the set of these triples $(C,X,Y) \in \mathbb{C}$, for which there exists a homomorphism $h_{(C,X,Y)}: (C,L_1) \to H_1$.

We define an instance $(G, L_2)$ of \lhomo{H_2} as follows:
\begin{align*}
L_2(x):=\begin{cases}
L(x) \setminus (B_1 \cup B_2) \cup \{b_1,b_2\} & \textrm{if there is some $(C_1,X_1,Y_1) \in \mathbb{C}_1$, such that $x \in X_1$}\\
								& \textrm{and some $(C_2,X_2,Y_2) \in \mathbb{C}_1$, such that $x \in Y_2$},\\
L(x) \setminus (B_1 \cup B_2) \cup \{b_1\} & \textrm{if there is some $(C_1,X_1,Y_1) \in \mathbb{C}_1$, such that $x \in X_1$,}\\
								& \textrm{but no $(C_2,X_2,Y_2) \in \mathbb{C}_1$, such that $x \in Y_2$},\\
L(x) \setminus (B_1 \cup B_2) \cup \{b_2\} & \textrm{if there is some $(C_2,X_2,Y_2) \in \mathbb{C}_1$, such that $x \in Y_2$,}\\
								& \textrm{but no $(C_1,X_1,Y_1) \in \mathbb{C}_1$, such that $x \in X_1$},\\
L(x) \setminus (B_1 \cup B_2) & \textrm{otherwise}.
\end{cases}
\end{align*}
Note that $L_2$ are $H_2$-lists.
Furthermore, by \cref{clm:b-decomp-b-inC} we observe that $L_2(x) \cap \{b_1,b_2\} = \emptyset$ if (i) $L(x)  \cap (B_1 \cup B_2) = \emptyset$, or (ii) $L(x)  \cap (B_1 \cup B_2) \neq \emptyset$, but for every $(C,X,Y) \in \mathbb{C}$, such that $x \in X \cup Y$, we have $(C,X,Y) \notin \mathbb{C}_1$.
This means that removing $B_1 \cup B_2$ from the list of $x$ was justified.

Finally, we are ready to show the main claim of the lemma.

\begin{claim}
There exists a list homomorphism $h: (G,L) \to H$ if and only if there exists a list homomorphism $h':(G,L_2) \to H_2$.
\end{claim}
\begin{claimproof}
First, assume that $h: (G,L) \to H$ exists. Define $h': V(G) \to V(H)$ in the following way:
\[
h'(x) := 
\begin{cases}
b_1 & \text{ if } h(x) \in B_1,\\
b_2 & \text{ if } h(x) \in B_2,\\
h(x) & \text{ otherwise.}
\end{cases}
\]
The fact that $h'$ is a homomorphism follows from the definition of $H_2$ and the fact that $h$ is a homomorphism. We need to show that for every $x$ we have $h'(x) \in L_2(x)$.

If $h'(x) \notin \{b_1,b_2\}$, then $h'(x) = h(x) \in L(x) \setminus (B_1 \cup B_2) \subseteq L_2(x)$. So consider the case that $h'(x) =b_1$ (the case if $h'(x) = b_2$ is symmetric).
This means that $h(x) \in B_1$ and thus $L(x) \cap B_1 \neq \emptyset$, so, by \cref{clm:b-decomp-b-inC}, there is some $(C,X,Y) \in \mathbb{C}$, such that $x \in X$. 
Since $h(x) \in B_1$, \cref{cla:b-decomp-inside}~\cref{it:b-inside-either} implies that $h$ maps all vertices of $C$ to $B_1 \cup B_2$,
and by \cref{cla:b-decomp-inside}~\cref{it:b-inside-xy} we have that $h(X) \subseteq B_1$ and $h(Y) \subseteq B_2$.
Therefore $(C,X,Y) \in \mathbb{C}_1$ and thus $b_1 \in L_2(x)$ by the definition of $L_2$.

Now suppose there exists a list homomorphism $h' : (G,L_2) \to H_2$.  We define a mapping $h: V(G) \to V(H)$ as follows.
If $h'(x) \notin \{b_1,b_2\}$, then $h(x) := h'(x)$.
If $h'(x) \in  \{b_1,b_2\}$, then, by the definition of $L_2$, there is $(C,X,Y) \in \mathbb{C}_1$, such that $x \in X$ if $h'(x) = b_1$ or $x \in Y$ if $h'(x)=b_2$. Moreover, recall that by the construction of $\mathbb{C}$ this $(C,X,Y)$ is unique. Then we set $h(x) = h_{(C,X,Y)}(x)$. 

For every $x \in V(G)$ we have $h(x) \in L(x)$: if $h'(x) \notin \{b_1,b_2\}$, then $h(x)=h'(x) \in L_2(x) \setminus \{b_1,b_2\} \subseteq L(x)$, and if $h'(x) \in \{b_1,b_2\}$, then $h(x) \in L(x)$ by the definition of $h_{(C,X,Y)}$.

So let us show that $h$ is a homomorphism. Consider an edge $xy$ of $G$ and the following cases.
\begin{compactenum}
\item If $h'(x),h'(y) \notin \{b_1,b_2\}$, then $h(x)h(y)=h'(x)h'(y) \in E(H_2 - \{b_1,b_2\}) \subseteq E(H)$, by the fact that $h'$ is a homomorphism.

\item If $h'(x)=b_1$ and $h'(y)=b_2$, then, by the definition of $L_2$, there are $(C_1,X_1,Y_1) \in \mathbb{C}_1$ and $(C_2,X_2,Y_2) \in \mathbb{C}_1$, such that $x \in X_1$ and $y \in Y_2$. 
But since there is an edge from $X_1$ to $Y_2$, we must have $(C_1,X_1,Y_1) = (C_2,X_2,Y_2)$, by the definition of $\sim$ and $\mathbb{C}$.
So we have  $h(x)h(y)=h_{(C_1,X_1,Y_1)}(x)h_{(C_1,X_1,Y_1)}(y) \in E(H_1) \subseteq E(H)$.

\item The case $h'(x)=h'(y)=b_1$ or $h'(x)=h'(y)=b_2$ cannot occur, as $h'$ is a homomorphism and $b_1,b_2$ have no loops.

\item Finally, consider the case that $h'(x) \in \{b_1,b_2\}$ and $h'(y) \notin \{b_1,b_2\}$.
Assume that $h'(x) = b_1$, as the other case is symmetric. This implies that $h(x) \in B_1$.
Since $h'$ is a homomorphism, we observe that $h(y)=h'(y) \in K \cup M_2$. Therefore $h(x)h(y) \in E(H)$.
\end{compactenum}
This concludes the proof of the claim.
\end{claimproof}

We observe that $\mathbb{C}$ can be computed in total time $\Oh((n \cdot |H|)^2)$. Computing $h_{(C,X,Y)}$ for all $(C,X,Y) \in \mathbb{C}$ can be done in total time:
\[
\sum_{(C,X,Y) \in \mathbb{C}} T(H_1,|C|,t) \leq \sum_{(C,X,Y) \in \mathbb{C}} \alpha/2 \cdot c^t \cdot (|H_1| \cdot |C|)^d \leq 2 \left( \alpha/2 \cdot c^t \cdot (|H_1| \cdot n)^d \right) = \alpha \cdot c^t \cdot (|H_1| \cdot n)^d,
\]
as every vertex of $G$ might appear twice in subgraphs in $\mathbb{C}$. The rest of the argument is exactly the same as in the case of an $F$-decomposition and a $BP$-decomposition. Finally we conclude that if $n$ is sufficiently large, then the problem can be solved in time $\alpha \cdot c^t \cdot (n \cdot |H|)^d$.
\end{proof}

\subsection{Special case: strong split graphs}
As we have seen, in \cref{lem:decompositions-equivalent} we have assumed that $H$ is not a strong split graph.
This case is somehow special, as if $H$ is a strong split graph, then the straightforward bipartite decomposition of $H^*$ does not correspond to any of decompositions defined in the preceding section.
We will consider these graphs $H$ separately.

\begin{lemma} \label{lem:strongsplit}
Let $H$ be a strong split graph with partition $(B,P)$.
Let $H'$ be the graph obtained from $H$ by removing all edges with both endvertices in $P$, including loops.
The \lhomo{H} problem with instance $(G,L)$ with $n$ vertices, given along with a tree decomposition of width $t$, can be solved in time $\Oh\left(i^*(H')^t \cdot (n \cdot |H|)^{\Oh(1)}\right)$.
\end{lemma}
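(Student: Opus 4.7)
The plan is to reduce \lhomo{H} directly to \lhomo{H'}, exploiting the very rigid structure of strong split graphs. The key observation is that every $u \in B$ is dominated by every $v \in P$: indeed $N_H(u) \subseteq P \subseteq N_H(v)$ because $P$ is a reflexive clique and all neighbours of $u$ lie in $P$. Combined with \cref{prop:lists-general}~(\ref{it:lists-general-incomparable}), which lets us assume that every list $L(v)$ is incomparable, this forces each list to be contained entirely in $B$ or entirely in $P$. This naturally partitions $V(G)$ into $V_B := \{v \in V(G) : L(v) \subseteq B\}$ and $V_P := \{v \in V(G) : L(v) \subseteq P\}$.

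Next I would handle the trivial no-instance: if $G$ has any edge (including a loop) with both endpoints in $V_B$, then $(G,L)$ is infeasible, because such an edge would have to be mapped to a pair of adjacent vertices of the independent set $B$. Otherwise, I define $G^\text{new}$ to be the graph obtained from $G$ by deleting every edge (and every loop) whose both endpoints lie in $V_P$. Then $G^\text{new}$ is bipartite with bipartition $(V_B, V_P)$, and $(G^\text{new}, L)$ is a consistent instance of \lhomo{H'}, since $H'$ is bipartite with bipartition $(B,P)$ and the assignment of lists to the two sides already respects this partition. The key equivalence to prove is $(G,L) \to H$ if and only if $(G^\text{new}, L) \to H'$. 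For the forward direction, any list homomorphism $h:(G,L) \to H$ necessarily sends $V_B$ to $B$ and $V_P$ to $P$, so each surviving edge of $G^\text{new}$ runs between $V_B$ and $V_P$ and is mapped to an edge of $H$ between $B$ and $P$, which is precisely an edge of $H'$. Conversely, any list homomorphism $h:(G^\text{new}, L) \to H'$ is automatically a valid list homomorphism $(G,L) \to H$: the edges of $G$ that we deleted lie entirely inside $V_P$, hence get mapped into $P$, and they are satisfied for free because $P$ is a reflexive clique in $H$.

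Finally, any tree decomposition of $G$ of width $t$ is also a tree decomposition of $G^\text{new}$, since $G^\text{new}$ and $G$ share the same vertex set and $G^\text{new}$ is a subgraph of $G$. I would then invoke \cref{thm:main-bipartite-algo} on $(G^\text{new}, L)$ with the bipartite target $H'$ (applied componentwise if $H'$ is disconnected, and noting that the algorithm of that theorem already handles in polynomial time the base case in which the complement of $H'$ is a circular-arc graph). This yields a running time of $\Oh\!\left(i^*(H')^t \cdot (n \cdot |H'|)^{\Oh(1)}\right) = \Oh\!\left(i^*(H')^t \cdot (n \cdot |H|)^{\Oh(1)}\right)$, as $|H'| = |H|$. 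The only slightly delicate point is justifying the equivalence in the previous paragraph; everything hinges on $P$ being a reflexive clique, which is exactly what makes forgetting all $V_P$--$V_P$ edges harmless.
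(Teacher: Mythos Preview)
Your proposal is correct and follows essentially the same argument as the paper: both use the domination $N_H(b)\subseteq N_H(p)$ to split $V(G)$ into a ``$B$-side'' and a ``$P$-side'', reject if the $B$-side is not independent, delete all edges inside the $P$-side, and then invoke \cref{thm:main-bipartite-algo} on the resulting bipartite instance of \lhomo{H'}. Your write-up is in fact slightly more careful than the paper in explicitly noting the disconnected/co-circular-arc base cases for $H'$.
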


\begin{proof}
Let $X$ (resp. $Y$) be the set of vertices $x \in V(G)$, such that $L(x) \cap B \neq \emptyset$ (resp. $L(x) \cap P \neq \emptyset$).
Observe that for every $p \in P$ and $b \in B$ we have $N(b) \subseteq N(p)$, so, by \cref{prop:lists-general}~\cref{it:lists-general-incomparable}, the sets $X$ and $Y$ are disjoint. Since we can assume that there are no vertices of $G$ with empty lists, we conclude that $X,Y$ is a partition of $V(G)$.

Furthermore, notice that if $X$ is not independent, then we can immediately report that $(G,L)$ is a no-instance. So assume that $X$ is independent.
Let $G'$ be the graph obtained from $G$ by removing all edges with both endvertices in $Y$, including loops.
Clearly $H'$ and $G'$ are bipartite, with bipartition classes $B,P$ and $X,Y$, respectively.
Furthermore, any tree decomposition of $G$ is also a tree decomposition of $G'$.

Recall that (i) $P$ is a reflexive clique and (ii) $L(X) \subseteq B$ and $L(Y) \subseteq P$. Furthermore, (iii) in order to obtain $G'$ and $H'$ we have removed all edges with two endvertices inside $Y$ and $P$, respectively.
These three facts imply the following.

\begin{claim}
Consider a function $h : V(G) \to V(H)$, such that for every $x \in V(G)$ it holds that $h(x) \in L(x)$.
Then $h$ is a homomorphism from $G$ to $H$ if and only if it is a homomorphism from $G'$ to $H'$. \hfill $\blacksquare$
\end{claim}

Thus we can determine whether $(G,L) \to H$ by calling the algorithm for bipartite target graphs, given by \cref{thm:main-bipartite-algo}.
\end{proof}

In the following corollary we bound the running time of the algorithm from \cref{lem:strongsplit} in terms of the original target graph.

\begin{corollary} \label{cor:strongsplit} 
Let $\widehat{H}$ be a graph and let $H$ be a strong split graph, which was obtained from $\widehat{H}$ by a series of decompositions.
The \lhomo{H} problem with instance $(G,L)$ with $n$ vertices, given along with a tree decomposition of width $t$, can be solved in time $\Oh(i^*(\widehat{H})^t \cdot (n \cdot |H|)^{\Oh(1)})$.
\end{corollary}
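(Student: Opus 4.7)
The plan is to reduce Corollary~\ref{cor:strongsplit} to \cref{lem:strongsplit} and then bound $i^*(H')$ in terms of $i^*(\widehat{H})$, where $H'$ is the bipartite graph produced by \cref{lem:strongsplit} (i.e., $H$ with all edges inside $P$, loops included, removed). \cref{lem:strongsplit} already gives a running time of $\Oh(i^*(H')^t \cdot (n \cdot |H|)^{\Oh(1)})$, so the only thing to prove is the inequality $i^*(H') \leq i^*(\widehat{H})$.

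The key observation is that $H'$ appears as an induced subgraph of $H^*$. Indeed, $H$ is a strong split graph with partition $(B,P)$, so $B$ is independent and in $H'$ the only remaining edges are those between $B$ and $P$, i.e., $H'$ is a bipartite graph with bipartition $(B,P)$ and edge set $\{bp \colon b \in B, p \in P, bp \in E(H)\}$. On the other hand, by the definition of the associated bipartite graph, the induced subgraph $H^*[B' \cup P'']$ has vertex set $B' \cup P''$ and edge set $\{b'p'' \colon bp \in E(H)\}$. Hence the map $b \mapsto b'$, $p \mapsto p''$ is an isomorphism from $H'$ to $H^*[B' \cup P'']$.

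Next, I will apply \cref{lem:induced-hstar}, which guarantees that $H^*$ is an induced subgraph of $\widehat{H}^*$, since $H$ is obtained from $\widehat{H}$ by a series of decompositions. Composing the two embeddings, $H'$ is an induced subgraph of the bipartite graph $\widehat{H}^*$. Applying the monotonicity of $i^*$ under induced subgraphs of bipartite graphs (stated right after \cref{def:i_star}) together with the definition $i^*(\widehat{H}) := i^*(\widehat{H}^*)$, we conclude
\[
i^*(H') \leq i^*(\widehat{H}^*) = i^*(\widehat{H}),
\]
which plugged back into the bound from \cref{lem:strongsplit} yields the claimed running time.

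The only mild subtlety is verifying the induced-subgraph claim for $H'$ inside $H^*$: one must check that no additional edges appear in $H^*[B' \cup P'']$ beyond those coming from $B$-$P$ edges of $H$, which follows immediately from the definition of $H^*$ (edges of $H^*$ only go between $V(H)'$ and $V(H)''$, and edges of the form $p_1'p_2''$ with $p_1,p_2 \in P$ do not lie in $B' \cup P''$). No deeper combinatorial work is needed; the whole argument is a one-step chain of embeddings combined with the previously established lemma.
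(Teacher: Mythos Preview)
Your proof is correct and follows essentially the same route as the paper: embed $H'$ into $H^*[B' \cup P'']$, then into $\widehat{H}^*$ via \cref{lem:induced-hstar}, and invoke monotonicity of $i^*$ to get $i^*(H') \leq i^*(\widehat{H}^*) = i^*(\widehat{H})$. The paper additionally singles out the trivial case where $H'$ is the complement of a circular-arc graph (so the problem is polynomial) and unpacks the monotonicity step by explicitly picking a witness subgraph $H''$ with $i(H'')=i^*(H')$, but the argument is the same.
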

\begin{proof}
Let $H'$ be defined as in \cref{lem:strongsplit}.
Note that if $H'$ is the complement of a circular-arc graph, then the problem can be solved in polynomial time, so the claim clearly holds. Otherwise, let $H''$ be the connected, induced, undecomposable subgraph of $H'$, whose complement is not a circular-arc graph, such that $i(H'')=i^*(H')$.

Recall from \cref{lem:induced-hstar} that $H^*$ is an induced subgraph of $\widehat{H}^*$.
Observe that $H^*$ consists of two copies of $H'$, induced by the sets $B' \cup P''$ and $B'' \cup P'$, with additional edges joining every vertex from $P'$ with every vertex from $P''$.
Thus $H'$ is an induced subgraph of $H^*$, and therefore  $H''$ is an induced subgraph of $\widehat{H}^*$.
This means that $i^*(\widehat{H})=i^*(\widehat{H}^*) \geq i(H'') = i^*(H')$.
Therefore the algorithm from \cref{lem:strongsplit} solved $(G,L)$ in time $\Oh\left(i^*(H')^t \cdot (n \cdot |H|)^{\Oh(1)}\right)=\Oh\left(i^*(\widehat{H})^t \cdot (n \cdot |H|)^{\Oh(1)}\right)$.
\end{proof}

\subsection{Solving \lhomo{H} for general target graphs}

By \cref{lem:decompositions-equivalent}, it is straightforward to observe the following.

\begin{observation} \label{obs:i-star-undecomp}
If $H$ is a connected, undecomposable, non-bi-arc graph, then $i^*(H)$ is the size of the largest incomparable set in $H$. \qedhere
\end{observation}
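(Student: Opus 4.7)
The plan is to derive the observation from the definition $i^*(H) = i^*(H^*)$ together with Lemma~\ref{lem:decompositions-equivalent}, which translates the undecomposability of $H$ (as a general graph) into undecomposability of $H^*$ (as a bipartite graph). The bipartite case is almost free: if $H$ itself is bipartite, then the remark immediately following Definition~\ref{def:i_star} tells us that $i^*(H) = i(H)$ for undecomposable $H$, and $i(H)$ is, by definition, the size of a largest incomparable set in $H$ lying inside one bipartition class.

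For the main case, assume $H$ is non-bipartite. Then $H^*$ is connected, because $H$ being connected and non-bipartite means that between any pair of vertices of $H$ there are walks of both parities, which is exactly what is needed to reach from one bipartition class of $H^*$ to the other. The key neighborhood correspondence is: for any two vertices $u',v'$ lying in the same bipartition class of $H^*$, we have $N_{H^*}(u')\subseteq N_{H^*}(v')$ iff $N_H(u)\subseteq N_H(v)$; so $i(H^*)$ coincides with the size of the largest incomparable set in $V(H)$. The inequality $i^*(H)=i^*(H^*)\le i(H^*)$ is immediate from the remark after Definition~\ref{def:i_star} (taking the trivial induced subgraph $H^*\subseteq H^*$). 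For the reverse inequality, I would exhibit $H^*$ itself as a valid candidate for the subgraph $H'$ in the supremum defining $i^*(H^*)$: it is connected (just argued), its complement is not a circular-arc graph (by \cref{thm:bi-arcs}, since $H$ is non-bi-arc), and---this is the main step---it is undecomposable as a bipartite graph, which is exactly the content of \cref{lem:decompositions-equivalent} given the hypothesis that $H$ has no $F$-, $BP$-, or $B$-decomposition. This gives $i^*(H^*)\ge i(H^*)$, closing the chain of inequalities.

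The main obstacle is that \cref{lem:decompositions-equivalent} excludes strong split graphs, so this case needs to be handled separately. For a connected undecomposable non-bi-arc strong split $H$ with partition $(B,P)$, I would follow the route of \cref{lem:strongsplit} and \cref{cor:strongsplit}: the bipartite graph $H'$ obtained by deleting all edges inside $P$ embeds as an induced subgraph of $H^*$ (inside either the set $B'\cup P''$ or $B''\cup P'$), and an undecomposable connected component of $H'$ realizing $i^*(H')$ provides a subgraph of $H^*$ that is connected, undecomposable, has complement not circular-arc, and whose largest incomparable set matches the largest incomparable set in $H$ (again via the $u\leftrightarrow u'$ bijection). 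Combining this with the upper bound $i^*(H^*)\le i(H^*)$ closes the strong split case as well. The overall conclusion is $i^*(H)=i^*(H^*)=i(H^*)=\mathrm{inc}(H)$, the size of the largest incomparable set in $H$.
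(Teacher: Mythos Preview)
Your core argument for the main case (connected, non-bipartite, non-strong-split, undecomposable, non-bi-arc $H$) is correct and is exactly the ``straightforward'' argument the paper intends: \cref{lem:decompositions-equivalent} gives that $H^*$ has no bipartite decomposition, $H^*$ is connected because $H$ is non-bipartite, and $H^*$ is not co-circular-arc by \cref{thm:bi-arcs}; hence $H^*$ itself is a valid witness in \cref{def:i_star}, yielding $i^*(H^*)\ge i(H^*)$, and the neighborhood correspondence $N_{H^*}(u')\subseteq N_{H^*}(v')\iff N_H(u)\subseteq N_H(v)$ finishes it.

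There is, however, a genuine gap in your treatment of the strong split case. You take an undecomposable induced subgraph $H''$ of $H'$ realizing $i^*(H')$ and assert that ``its largest incomparable set matches the largest incomparable set in $H$''. But $i(H'')=i^*(H')$, while the largest incomparable set in $H$ equals $i(H')$; these coincide only if $H'$ itself is bipartite-undecomposable, which you never verify (and which does not follow in any obvious way from $H$ having no $F$-, $BP$-, or $B$-decomposition). Note that the paper does not address this either: \cref{lem:decompositions-equivalent} explicitly excludes strong split graphs, and the observation is only ever \emph{applied} in the proof of \cref{thm:main-algo} to graphs that are not strong split (those leaves being handled separately via \cref{cor:strongsplit}). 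So the gap is in your attempted extension beyond what the paper actually needs, not in the paper's logic.

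A smaller issue: in the bipartite case you identify $i(H)$ with ``the size of the largest incomparable set in $H$'', but $i(H)$ is by definition restricted to one bipartition class, whereas vertices from different classes of a connected bipartite graph are typically incomparable. Again this is moot for the paper, which handles bipartite leaves separately and only invokes the observation for non-bipartite $H'$.
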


In this section we will prove the following, slightly stronger version of \cref{thm:main} a), where the input tree decomposition is not assumed to be optimal.

\begin{cthm}{5' a)} \label{thm:main-algo} 
Let $H$ be non-bi-arc graph. 
Even if $H$ is given as an input, the $\lhomo{H}$ problem with instance $(G,L)$ can be solved in time $\Oh \left(i^*(H)^{t} \cdot (n\cdot |H|)^{\Oh(1)}\right)$ for any lists $L$, provided that $G$ is given with its tree decomposition of width $t$. \end{cthm}

The main idea is similar to the one in the proof of \cref{thm:main-bipartite-algo}: given an instance of \lhomo{H}, we recursively decompose $H$ into smaller subgraphs and reduce the initial instance to a number of instances of list homomorphism to these smaller subgraphs. Finally, we solve the problem for leaves of the recursion tree, and then, using \cref{lem:general-decomposition} in a bottom-up fashion, we will compute the solution to the original instance.

The only thing missing is how to solve the instances corresponding to leaves of the recursion tree. We describe this in the following lemma.

\begin{lemma}\label{lem:leaves-general}
Let $H$ be an arbitrary graph. Then any $n$-vertex instance $(G,L)$ of \lhomo{H} can be solved in time $\Oh\left(i(H^*)^t \cdot (n \cdot |H|)^{\Oh(1)}\right)$, assuming a tree decomposition of $G$ with width at most $t$ is given.
\end{lemma}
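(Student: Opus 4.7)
The plan is to reduce to the associated bipartite instance and run a dynamic programming algorithm on its tree decomposition, restricted to ``clean'' states. First, by \cref{prop:lists-general} we may assume $(G,L)$ is consistent, so every list is incomparable in $H$. I would form the associated instance $(G^*, L^*)$ of \lhomo{H^*} and, using \cref{obs:associated-properties}, build in polynomial time a tree decomposition $\cT^*$ of $G^*$ of width $\Oh(t)$ in which each bag contains either both twins $x', x''$ or neither. Since $N_H(u) \subseteq N_H(v)$ is equivalent to $N_{H^*}(u') \subseteq N_{H^*}(v')$, each list $L^*(x')$ is incomparable in $H^*$, and by construction it lies inside a single bipartition class of $H^*$.

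By \cref{prop:associated-equiv}, $(G,L) \to H$ if and only if $(G^*, L^*)$ admits a \emph{clean} list homomorphism to $H^*$, i.e., one sending twins to twins. I would run the standard bag-by-bag DP on $\cT^*$ for \lhomo{H^*}, but retain only those partial homomorphisms $\phi$ of a bag $B$ that are \emph{clean on $B$}: whenever $\{x', x''\} \subseteq B$, we require $\phi(x') = u'$ iff $\phi(x'') = u''$. Transitions between adjacent bags preserve this property (newly introduced twin pairs must be assigned twin-consistently, while forgotten pairs were already consistent), and since each twin pair appears together in some bag of $\cT^*$, cleanness on every bag is equivalent to global cleanness. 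Hence the DP correctly decides whether a clean list homomorphism exists, and thus whether $(G,L) \to H$.

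For the running time, the twin property implies that a bag $B$ of $\cT^*$ is a disjoint union of $|B|/2 \le t+1$ twin pairs, and a clean assignment on $B$ is uniquely determined by its values on the ``primed'' vertices of $B$. Applying \cref{cor:listsize} to $H^*$ (which is valid since each $L^*(x')$ is incomparable and lies inside one bipartition class of $H^*$) gives $|L^*(x')| \le i(H^*)$. Consequently the number of clean states per bag is at most $i(H^*)^{t+1} = \Oh\bigl(|H| \cdot i(H^*)^t\bigr)$, and a standard analysis of the DP transitions (e.g.\ introduce/forget/join on a nice tree decomposition, where transitions cost polynomial in $|H|$ per pair of compatible states) yields total running time $\Oh\bigl(i(H^*)^t \cdot (n \cdot |H|)^{\Oh(1)}\bigr)$. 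The only genuinely nontrivial step is confining the DP to clean states; without this restriction one would only get a naive $i(H^*)^{2t}$ bound from the doubled bag size of $\cT^*$.
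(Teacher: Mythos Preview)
Your proof is correct and follows essentially the same route as the paper: reduce via \cref{prop:associated-equiv} to finding a clean homomorphism $(G^*,L^*)\to H^*$, use the twin-paired tree decomposition from \cref{obs:associated-properties}, and run the standard DP restricted to clean partial colorings so that each bag contributes at most $i(H^*)^{t+1}$ states rather than $i(H^*)^{2t}$. Your write-up is in fact a bit more explicit than the paper's about why restricting to clean states is sound.
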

\begin{proof}
Let $(G^*,L^*)$ be the associated instance of \lhomo{H^*}.
By \cref{prop:associated-equiv}, we know that $(G,L) \to H$ if and only if there is a clean homomorphism $(G^*,L^*) \to H^*$. We will focus on finding such a clean homomorphism.

First, recall that by \cref{prop:lists-general}~\cref{it:lists-general-incomparable} and \cref{obs:associated-properties}~\cref{it:associated-twins}, the instance $(G^*,L^*)$ is consistent. So, by \cref{cor:listsize}, the size of each list in $L^*$ is at most $i(H^*)$.
Moreover, by \cref{obs:associated-properties}~\cref{it:associated-twins}, for every $x \in V(G)$, the vertices in $L^*(x')$ are precisely the twins of vertices in $L^*(x'')$.
Finally, by \cref{obs:associated-properties}~\cref{it:associated-treedecomp}, in polynomial time we can obtain a tree decomposition $\cT^*$ of $G^*$ with width at most $2t$, in which vertices of $G^*$ come in pairs: whenever any bag contains $x'$, it also contains $x''$.

Consider the straightforward dynamic programming algorithm for \lhomo{H^*}, using the tree decomposition $\cT^*$ of $G^*$. We observe that since we are looking for a clean homomorphism, we do not need to remember partial solutions, in which the colors of twins do not agree. Thus, even though the size of each bag of $\cT^*$ is at most $2t$, the number of partial colorings we need to consider is bounded by $(\max_{x \in V(G^*)}|L^*(x)|)^t \leq i(H^*)^t$.
So the claim follows.
\end{proof}

Finally, let us wrap everything up and prove \cref{thm:main-algo}. The proof is analogous to the proof of \cref{thm:main-bipartite-algo}.
\begin{proof}[Proof of \cref{thm:main-algo}]
Again, we can assume that $n$ is sufficiently large, as otherwise we can solve the problem by brute-force.

Let $(G,L)$ be an instance of \lhomo{H}, where $G$ has $n$ vertices and is given along with its tree decomposition of width at most $t$.
We proceed as in the \cref{thm:main-bipartite-algo}. We consider a recursion tree $\cR$, obtained by decomposing $H$ recursively. For each node corresponding to some graph $H'$, we construct its children recursively, unless none of the following happens (i) $H'$ is a bi-arc graph, (ii) $H'$ is bipartite, (iii) $H'$ is a strong split graph, or (iv) $H'$ is undecomposable.

We compute the solutions in a bottom-up fashion.
First, consider a leaf of the recursion tree, let the corresponding target graph for this node of $\cR$ be $H'$.
If $H'$ is a bi-arc graph, we can solve the problem in polynomial time.
If $H'$ is bipartite, we solve the problem in time $\beta \cdot i^*(H') \cdot (n \cdot |H'|)^{d_1} \leq \beta \cdot i^*(H) \cdot (n \cdot |H'|)^{d_1}$ for some constants $\beta$ and $d_1$, using the algorithm from \cref{thm:main-bipartite-algo}.
If $H'$ is a strong split graph, we can solve the problem in time $\gamma \cdot i^*(H) \cdot (n \cdot |H'|)^{d_2}$, for constants $\gamma,d_2$, using the algorithm from \cref{cor:strongsplit}.

So finally consider the remaining case, i.e., that $H'$ is connected, non-bi-arc, non-bipartite, undecomposable, which is not a strong split graph. Furthermore we know that $H'$ was obtained from $H$ by a sequence of decompositions.

Recall that by \cref{lem:leaves-general}, we can solve the instances of \lhomo{H'} in time $\delta \cdot i(H'^*)^t \cdot (n \cdot |H'|)^{d_3}$, for some constants $\delta,d_3$.
Let us consider  the graph $H'^*$, by \cref{lem:induced-hstar} we know that $H'^*$ is an induced subgraph of $H^*$.
Also, $H'^*$ is either connected (if $H'$ is non-bipartite), or consists of two disjoint copies of $H'$ (if $H'$ is bipartite). Moreover, by \cref{lem:decompositions-equivalent}, we observe that $H'^*$ is undecomposable.
Thus, by the definition of $i^*(H)$, we observe that
\begin{align*}
i(H'^*) \leq & \max \{ i(H'') \colon H''
 \text{ is an undecomposable, connected,  induced subgraph of }H^*, \\
 & \text{whose complement is not a circular-arc graph}\}  =  i^*(H).
\end{align*} 
So the algorithm from \cref{lem:leaves-general} solves the instances corresponding to leaves of $\cR$ in time $\delta \cdot i^*(H)^t \cdot (n \cdot |H'|)^{d_3}$.
Define $\alpha := \max(2\beta,\gamma,\delta)$ and $d := \max(d_1,d_2,d_3,3)$.

By applying \cref{lem:general-decomposition} (in fact, \cref{lem:f-decomposition}, \cref{lem:bp-decomposition}, and \cref{lem:b-decomposition}, note that we adjusted the constants $\alpha$ and $d$ so that their assumptions are satisfied) for every non-leaf node of $\cR$ in a bottom-up fashion, we conclude that we can solve \lhomo{H} in time $\alpha \cdot i^*(H)^t \cdot (n \cdot |H|)^d = \Oh\left(i^*(H)^t \cdot (n \cdot |H|)^{\Oh(1)}\right)$, which completes the proof.
\end{proof}

\newpage
\section{Hardness for general target graphs}
We aim to show the following. Note that again we are able to show the hardness parameterized by the pathwidth. Actually, we can show that the lower bound holds even if the input graph $G$ is bipartite.

\begin{cthm}{5' b)} \label{thm:main-hard-pw}
Let $H$ be a fixed non-bi-arc graph.
Unless the SETH fails, there is no algorithm that solves the \lhomo{H} problem on bipartite instances with $n$ vertices and pathwidth $t$ in time $(i^*(H)-\epsilon)^t \cdot n^{\Oh(1)}$, for any $\epsilon >0$.
\end{cthm}

The crucial observation is the following.

\begin{proposition} \label{prop:bipartite-associted} 
Let $H$ be a graph and let $(G,L)$ be a consistent instance of \lhomo{H^*}.
Define $L' \colon V(G) \to 2^{V(H)}$ as $L'(x) := \{u \colon \{u',u''\} \cap L(x) \neq \emptyset\}$.
Then $(G,L) \to H^*$ if and only if $(G,L') \to H$.
\end{proposition}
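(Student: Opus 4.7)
The plan is to exploit the fact that consistency of $(G,L)$ severely restricts the lists: by \cref{prop:lists-bipartite}, $G$ is bipartite with classes $X_G,Y_G$, and $\bigcup_{x\in X_G}L(x)\subseteq X$, $\bigcup_{y\in Y_G}L(y)\subseteq Y$, where $X=\{v':v\in V(H)\}$ and $Y=\{v'':v\in V(H)\}$ are the bipartition classes of $H^*$. Consequently, for $x\in X_G$ one has $L'(x)=\{u:u'\in L(x)\}$ and for $y\in Y_G$ one has $L'(y)=\{u:u''\in L(y)\}$; the union in the definition of $L'$ is always effectively only over one side. This will let us pass back and forth between colorings by $V(H^*)$ and colorings by $V(H)$ by merely stripping or attaching the prime/double-prime decoration according to which bipartition class the vertex of $G$ lies in.

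For the forward direction, given a list homomorphism $h:(G,L)\to H^*$, I would define $h':V(G)\to V(H)$ by letting $h'(v)$ be the unique $u\in V(H)$ such that $h(v)\in\{u',u''\}$. The consistency constraint plus $h(x)\in L(x)$ guarantee $h(x)=h'(x)'$ for $x\in X_G$ and $h(y)=h'(y)''$ for $y\in Y_G$, so $h'(x)\in L'(x)$ by the definition of $L'$. For each edge $xy\in E(G)$ with $x\in X_G, y\in Y_G$, we have $h'(x)'\cdot h'(y)''\in E(H^*)$, which by the very definition of $H^*$ is equivalent to $h'(x)h'(y)\in E(H)$.

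For the reverse direction, given a list homomorphism $h':(G,L')\to H$, I would define $h:V(G)\to V(H^*)$ by $h(x):=h'(x)'$ for $x\in X_G$ and $h(y):=h'(y)''$ for $y\in Y_G$. Edge-preservation is again immediate from the definition of $H^*$. For lists: since $h'(x)\in L'(x)$ and $L(x)\subseteq X$, the only way $h'(x)$ can appear in $L'(x)$ is if $h'(x)'\in L(x)$; hence $h(x)\in L(x)$, and analogously for $y\in Y_G$.

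There is no real obstacle here: the whole statement is a bookkeeping translation, and the only thing one has to be careful about is that consistency is what makes the map $L\mapsto L'$ lossless (otherwise a vertex with both $u'$ and $u''$ in its list would be collapsed to a single element in $L'$, breaking the correspondence). I would state that observation explicitly at the outset so that both directions go through cleanly.
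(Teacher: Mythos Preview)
Your proposal is correct and matches the paper's proof essentially line for line: both directions proceed by stripping or attaching the prime/double-prime decoration according to the bipartition class of the vertex, using consistency to ensure that each list $L(x)$ lives on a single side of $H^*$ so that the translation $L\mapsto L'$ is lossless. The only cosmetic difference is that you front-load the consistency observation, whereas the paper invokes it inside the reverse direction.
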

\begin{proof}
First, consider $f \colon (G,L) \to H^*$. For every $x \in V(G)$, we define $f'(x)$ to be the unique vertex $u$ of $H$, such that $f(x) \in \{u',u''\}$. Note that since $\{u',u''\} \cap L(x) \neq \emptyset$, we have $u \in L'(x)$. Now consider an edge $xy$ of $G$. Since $f$ is a homomorphism, we have $f(x)f(y) \in E(H^*)$. Without loss of generality assume that $f(x) =u'$ and $f(y) = v''$ for some some edge $uv$ of $H$ (possibly $u=v$). But then $f'(x)f'(y) = uv \in E(H)$, so $f'$ is a homomorphism from $(G,L')$ to $H$.

Now consider a homomorphism $f' \colon (G,L') \to H$. Since $(G,L)$ is consistent, we know that $G$ is bipartite with bipartition classes $X_G$ and $Y_G$, where $L(X_G) \subseteq \{ u' : u \in V(H)\}$ and $L(Y_G) \subseteq \{ u'' : u \in V(H)\}$.
Let $x \in V(G)$ and $f'(x) = u$. Then we set $f(x) = u'$ if $x \in X_G$ or $f(x) = u''$ if $x \in Y_G$.
First, let us show that $f$ respects the lists $L$. Consider a vertex $x \in X_G$ (the case of a vertex in $Y_G$ is symmetric). Since $f'(x) =u$, we observe that $u \in L'(x)$, which means that $u' \in L(x)$. So $f$ respects lists $L$.
Now consider an edge $xy$ of $G$, such that $x \in X_G$ and $y \in Y_G$. Assume that $f'(x) =u$ and $f'(y) =v$, where $uv$ is an edge of $H$ (possibly $u=v$). Then $f(x)f(y)= u'v''$, which is an edge of $H^*$. So $f$ is a homomorphism from $G$ to $H^*$.
\end{proof}

Clearly \cref{thm:main-hard-pw} implies \cref{thm:main-hard-bipartite-pw}. We will show that the reverse implication also holds. Recall that we have already shown that \cref{thm:main-hard-bipartite-pw} is equivalent to \cref{thm:hardness-bipartite}, so in fact we will prove that all these three theorems are equivalent.

\paragraph{(\cref{thm:main-hard-bipartite-pw} $\to$ \cref{thm:main-hard-pw})} 
Suppose \cref{thm:main-hard-bipartite-pw} holds and \cref{thm:main-hard-pw} fails. 
So there is a non-bi-arc graph $H$ and an algorithm $A$ that solves $\lhomo{H}$ in time $(i^*(H) - \epsilon)^{\pw{G}} \cdot n^{\Oh(1)}$ for every bipartite input $(G,L)$, provided that $G$ is given along with its optimal path decomposition.

Recall that $i^*(H) = i^*(H^{*})$. Let $(G,L)$ be an arbitrary instance of \lhomo{H^*}. We can assume that $G$ is bipartite and connected, and the instance $(G,L)$ is consistent.
Consider the instance $(G,L')$ of \lhomo{H}, constructed as in \cref{prop:bipartite-associted}. The algorithm $A$ solves this instance in time $(i^*(H) - \epsilon)^{\pw{G}} \cdot n^{\Oh(1)}$. By \cref{prop:bipartite-associted}, this is equivalent to solving the instance $(G,L)$ of \lhomo{H^*} in time $(i^*(H^*) - \epsilon)^{\pw{G}} \cdot n^{\Oh(1)}$, contradicting \cref{thm:main-hard-bipartite-pw}.

\newpage
\section{Conclusion}
Let us conclude the paper with some side remarks and pointing out several open problems.

\subsection{Special cases: reflexive and irreflexive graphs}

Recall that the crucial tool for our algorithmic results were the decompositions of a connected graph $H$, introduced in \cref{sec:decomposition} (for bipartite graphs $H$) and in \cref{sec:general-decompositions} (for general graphs $H$).
Let us analyze how the general decompositions behave in two natural special cases: if $H$ is either a reflexive or an irreflexive graph.
We use the notation introduced in \cref{def:f-decomposition}, \cref{def:bp-decomposition}, and \cref{def:b-decomposition}.

First we consider the case that $H$ is reflexive, i.e., every vertex of $H$ has a loop. 
Let us point out that a $B$-decomposition cannot occur in this case, as the sets $B_1,B_2$ are empty. In case of an $F$-decomposition we obtain exactly the decomposition defined by Egri {\em et al.}~\cite[Lemma 8]{DBLP:conf/stacs/EgriMR18}. Finally, in the case of a $BP$-decomposition, note that the set $B$ is empty and therefore each vertex in $P$ has exactly the same neighborhood. Thus the total contribution of the vertices in $P$ to $i^*(H)$ is at most 1. Therefore the only type of decomposition that can be algorithmically exploited in reflexive graph is the $F$-decomposition, as observed by Egri {\em et al.}~\cite{DBLP:conf/stacs/EgriMR18}.

Now let us consider the case that $H$ is irreflexive, i.e., no vertex of $H$ has a loop. Observe that the sets $K$ and $P$ are reflexive cliques, so they are empty in our case. Thus $BP$-decompositions and $F$-decompositions do not occur in this case (recall that $H$ is connected). Therefore the only possibility left is a $B$-decomposition, in which the set $K$ is empty.
Let us point out that this decomposition is very similar to the bipartite decomposition, in particular, the graph $H_1$ is bipartite (while $H_2$ might be non-bipartite). This gives even more evidence that the case of bipartite graphs $H$ is a crucial step to understanding the complexity of the \lhomo{H} problem.
Actually, if $H$ is bipartite, then the $B$-decomposition turns out to be equivalent to the bipartite decomposition introduced in \cref{sec:decomposition}. The argument used in \cref{lem:decomposition} is significantly simpler than the one in \cref{lem:b-decomposition}, as in the bipartite case we can assume that the instance is consistent and we know which vertices will be mapped to $B_1$, and which to $B_2$. In the general case we do not know it, so we need to consider both possibilities, as we do in \cref{lem:b-decomposition}.

\subsection{Typical graphs $H$}

Knowing the precise complexity bounds for \lhomo{H}, we might be interested in a question, how hard is to find a list homomorphism to a \emph{typical} graph $H$.
We say that a property $P$ \emph{holds for almost all graphs}, if for a graph $G$, chosen with uniform probability from the set of all graphs with $n$ vertices, the probability that $G$ satisfies $P$ tends to 1 as $n \to \infty$. Properties that hold for almost all graphs can be studied in terms of the \emph{random graph $G(n,1/2)$}. Formally, it is a probability space  over all graph with vertex set $[n]$, where the edge $ij$ exists with probability $1/2$. Here we extend the usual model by allowing loops on vertices (also with probability $1/2$), but this does not change much in the reasoning. See the monograph of Alon and Spencer for more information~\cite{DBLP:books/daglib/0021015}.

It is well-known and straightforward to observe that for any fixed graph $\Ob$, almost all graphs contain $\Ob$ as an induced subgraph. By applying this for, say, $\Ob = C_6$, we obtain that almost all graphs are non-bi-arc graphs.
Also, it is known that almost all graphs are connected~\cite{Bollobas,moon1965almost}.
Finally, let us consider incomparable sets in a typical graph $H$.

\begin{lemma} \label{lem:almost-all-incomparable}
For almost all graphs $H$, every pair of distinct vertices is incomparable.
\end{lemma}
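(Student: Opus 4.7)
We work in the probability space of random graphs $G(n,1/2)$ with loops (every possible edge and every possible loop present independently with probability $1/2$). It suffices to show that the probability that some pair of distinct vertices is comparable tends to $0$ as $n \to \infty$.

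Fix two distinct vertices $u, v \in V(H)$. By symmetry and the union bound,
\[
\Pr[u \text{ and } v \text{ are comparable}] \;\leq\; 2 \Pr[N(u) \subseteq N(v)].
\]
The plan is to witness non-containment $N(u) \not\subseteq N(v)$ by exhibiting some $w \in V(H) \setminus \{u,v\}$ with $uw \in E(H)$ and $vw \notin E(H)$. For each such $w$, the pair of edges $(uw, vw)$ is distributed uniformly on $\{0,1\}^2$, so the bad event ``$uw \in E, vw \notin E$'' occurs with probability $\tfrac14$, and its complement has probability $\tfrac34$. Crucially, as $w$ ranges over $V(H)\setminus\{u,v\}$, these events depend on disjoint sets of edges and are therefore mutually independent. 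Hence
\[
\Pr\bigl[\,\forall\, w \in V(H)\setminus\{u,v\}\colon\ uw\notin E\ \text{or}\ vw\in E\,\bigr] \;=\; (3/4)^{n-2}.
\]
The containment $N(u)\subseteq N(v)$ additionally requires the two ``loop/edge'' conditions $u\in N(u)\Rightarrow u\in N(v)$ and $v\in N(u)\Rightarrow v\in N(v)$, which can only decrease the probability; in any case we obtain $\Pr[N(u)\subseteq N(v)] \le (3/4)^{n-2}$.

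Applying the union bound over all $\binom{n}{2}$ unordered pairs,
\[
\Pr[\text{some pair is comparable}] \;\leq\; 2\binom{n}{2}\,(3/4)^{n-2} \;\xrightarrow[n\to\infty]{}\; 0,
\]
since the geometric factor $(3/4)^{n-2}$ dominates the polynomial $\binom{n}{2}$. Therefore almost all graphs $H$ have the property that every pair of distinct vertices is incomparable, as claimed. The main (minor) subtlety is keeping the loop contributions to $N(u), N(v)$ consistent with the definition used in the paper, but since these only tighten the containment condition, the bound above still applies.
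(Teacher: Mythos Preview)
Your proof is correct and follows essentially the same approach as the paper: bound $\Pr[N(u)\subseteq N(v)]$ by $(3/4)^{n-2}$ via the independent per-vertex witnesses, then apply a union bound over all ordered pairs. The paper computes $(3/4)^n$ without explicitly separating out the two vertices $u,v$ themselves, but the argument and the conclusion are the same.
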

\begin{proof}
Consider a random graph with vertex set $[n]$, where each edge (including loops) appears independently from others with probability $1/2$. For each $i,j \in [n]$ (non-necessarily distinct) we introduce a random variable $I_{i,j}$, whose value is $1$ if and only if the edge $ij$ exists. Clearly $N(i) \subseteq N(j)$ if and only if $I_{j,k} \geq I_{i,k}$ for every $k \in [n]$.

For every $i \neq j$ and every $k$, the probability that $I_{j,k} \geq I_{i,k}$ is $3/4$. Thus, for fixed $i \neq j$, the probability that $N(i) \subseteq N(j)$ is $(3/4)^n$.
So the probability that there is a pair $i,j$, such that $N(i) \subseteq N(j)$, is at most $n(n-1) (3/4)^{n}$, which tends to 0 as $n \to \infty$. This means that with probability tending to $1$ all pairs of distinct vertices are incomparable.
\end{proof}

Note that this in particular implies that almost all graphs are undecomposable, as each decomposition necessarily contains pairs of comparable vertices. Similarly, almost all graphs are not strong split graphs.
Since for connected, undecomposable graphs $H$, the value of $i^*(H)$ it is equal to the size of the largest incomparable set in $H$ (recall \cref{obs:i-star-undecomp}), we conclude that for almost all graphs $H$ we have $i^*(H) = |V(H)|$.

All these observations lead to the following corollary (note that again we can strengthen the statement given in the introduction by replacing the treewidth by the pathwidth).

\begin{ccor}{6'}
For almost all graphs $H$ with possible loops the following holds.
Even if $H$ is fixed, there is no algorithm that solves  $\lhomo{H}$  for every instance $(G,L)$  on $n$ vertices in time $\Oh((|V(H)|-\epsilon)^{\pw{G}} \cdot n^{\Oh(1)})$ for any $\epsilon > 0$, unless the SETH fails.
\end{ccor}

\subsection{Generalized algorithm}

We believe that the decompositions that we discovered can be used for many problems concerning the complexity of variants of the \lhomo{H} problem, e.g., for various other parameterizations.

First, let us point out that in the proofs of \cref{lem:decomposition} and \cref{lem:general-decomposition},
we did not really require that the running time is of the form $\Oh (c^{\tw{G}} \cdot (n \cdot |H|)^d)$, for a constant $c$.
Actually, it is sufficient that the bound on the running time is convex and non-decreasing.
Let us start with a technical lemma.

\begin{lemma}\label{lem:convex}
Let $f \colon [1,\infty) \to \R$ be a non-decreasing, convex function.
Then for any $x,y \in [1,\infty)$ it holds that $f(x) + f(y) \leq f(x+y) + f(1)$.
\end{lemma}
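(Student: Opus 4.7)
The plan is to reduce the inequality to the standard \emph{equal-length chord} property of convex functions: if $f$ is convex on $[1,\infty)$, then for any $a \le c$ and any $h \ge 0$ with $c+h$ in the domain,
\[
f(a+h) - f(a) \;\le\; f(c+h) - f(c).
\]
This is equivalent to the fact that the divided difference $\frac{f(b)-f(a)}{b-a}$ is non-decreasing in each argument, which is one of the textbook characterizations of convexity. This is the only substantive fact I would invoke.

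By symmetry, I would assume without loss of generality that $x \le y$. Applying the chord inequality with $a = 1$, $c = y$, and $h = x-1$ (legitimate since $x \ge 1$ gives $h \ge 0$, and $1 \le y$ gives $a \le c$), I obtain
\[
f(x) - f(1) \;=\; f(a+h) - f(a) \;\le\; f(c+h) - f(c) \;=\; f(x+y-1) - f(y),
\]
which rearranges to $f(x) + f(y) \le f(x+y-1) + f(1)$.

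Finally, monotonicity of $f$ gives $f(x+y-1) \le f(x+y)$, which chains with the previous inequality to yield $f(x) + f(y) \le f(x+y) + f(1)$, as required.

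There is no real obstacle here; the lemma is essentially a repackaging of the equal-length chord property together with monotonicity. The only points requiring a moment of care are verifying that $h = x-1 \ge 0$ and $1 \le y$ (both immediate from the domain), so that the chord inequality is applied to valid arguments.
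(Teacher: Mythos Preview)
Your proof is correct. The equal-length chord inequality you invoke is a standard equivalent form of convexity, and your domain checks ($h=x-1\ge 0$, $1\le y$, $x+y-1\ge 1$) are all in order, so the application is valid; the final appeal to monotonicity is fine.

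The paper argues differently: it writes $x=\theta(x+y)$ and $y=(1-\theta)(x+y)$ with $\theta=x/(x+y)$, then for each term uses monotonicity to pass from $f(\theta z)$ to $f(\theta z+(1-\theta)\cdot 1)$ and convexity (as a convex combination of $z$ and $1$) to bound the latter by $\theta f(z)+(1-\theta)f(1)$; summing the two bounds gives $f(x+y)+f(1)$ directly. So the paper applies monotonicity first and convexity second, and lands exactly on $f(x+y)$, whereas you apply convexity (via the chord property) first to obtain the sharper intermediate bound $f(x)+f(y)\le f(x+y-1)+f(1)$ and then use monotonicity to relax it. Both routes are elementary and of comparable length; yours has the minor bonus of the tighter intermediate inequality, while the paper's is self-contained from the definition of convexity without naming an auxiliary property.
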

\begin{proof}
First, observe that for each $z \geq 1$ and for each $\theta \in [0,1]$, we have
\[
f(\theta z) \leq f(\theta z + (1-\theta)) \leq \theta f(z) + (1-\theta) f(1).
\]
Consider $x,y \in [1,\infty)$. Define $\theta := \frac{x}{x+y}$ and observe that $\theta \in [0,1]$. So we have the following.
\begin{align*}
f(x) + f(y) = & f(\theta (x+y)) + f((1-\theta)(x+y)) \\
\leq & \left( \theta f(x+y) + (1-\theta) f(1) \right) + \left( (1-\theta) f(x+y) + \theta f(1) \right)\\
= & f(x+y) + f(1).
\end{align*}
\end{proof}

Next, recall that in each variant of the decomposition lemma, all instances for which we computed partial results were induced subgraphs of $G$, the original instance.
This motivates the following, generalized statement, which can be used as a black-box to prove algorithmic and hardness results about \lhomo{H}.

\begin{theorem}\label{lem:very-general-decomposition}
Let $H$ be a graph.
In time $|H|^{\Oh(1)}$ we can construct a family $\cH$ of $\Oh(|H|)$ connected graphs, such that:
\begin{compactenum}[(1)]
\item $H$ is a bi-arc graph if and only if every $H' \in \cH$ is a bi-arc graph, \label{factors:hard}
\item if $H$ is bipartite, then each $H' \in \cH$ is an induced subgraph of $H$, and is either the complement of a circular-arc graphs or is undecomposable, \label{factors:bipartite}
\item otherwise, for each $H' \in \cH$, the graph $H'^*$ is an induced subgraph of $H^*$ and at least one of the following holds: \label{factors:types}
\begin{enumerate}
\item $H'$ is a bi-arc graph, or \label{factors:easy}
\item $H'$ a strong split graph and has an induced subgraph $H''$, which is not a bi-arc graph and is an induced subgraph of $H$, or \label{factors:types:strongsplit}
\item $(H')^*$ is undecomposable, \label{factors:types:undecomposable}
\end{enumerate}
\item for every instance $(G,L)$ of \lhomo{H} with $n$ vertices, the following implication holds: \label{factors:bottomup}

If there exists a non-decreasing, convex function $f_H \colon \N \to \R$,
such that for every $H' \in \cH$, for every induced subgraph $G'$ of $G$, and for every $H'$-lists $L'$ on $G'$,
we can decide whether $(G',L')\to H'$ in time $f_H(|V(G')|)$, then
we can solve the instance $(G,L)$  in time
\[
\Oh \left (|H| f_H(n) + n^2 \cdot |H|^3 \right).
\]
\end{compactenum}
\end{theorem}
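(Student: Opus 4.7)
The plan is to distil the recursion used in the proofs of \cref{thm:main-bipartite-algo} and \cref{thm:main-algo} into a standalone statement that isolates the family $\cH$ from any particular base-case algorithm.  Starting from $H$, I build a rooted binary tree $\cR$ by repeatedly decomposing the target: a node labelled by a graph $H'$ becomes a leaf if $H'$ is a bi-arc graph, a strong split graph, bipartite and either a complement of a circular-arc graph or undecomposable, or non-bipartite with $(H')^*$ undecomposable.  Otherwise, in time polynomial in $|H'|$, one finds a bipartite decomposition (using \cref{lem:walks-s-v} applied to $H'$ when $H'$ is bipartite) or an $F$-, $BP$-, or $B$-decomposition (using \cref{lem:decompositions-equivalent} together with \cref{lem:walks-s-v} applied to $(H')^*$ in the general case), and the two factors become the children.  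Since every decomposition strictly decreases $|V(H')|$, the tree $\cR$ has $\Oh(|H|)$ nodes and is built in time $|H|^{\Oh(1)}$; set $\cH$ to be the set of labels of its leaves.

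Properties \cref{factors:hard}--\cref{factors:types} follow from the construction.  For \cref{factors:hard}: bi-arcness of $H$ is equivalent to $H^*$ being the complement of a circular-arc graph (\cref{thm:bi-arcs}), and decompositions of $H$ correspond to bipartite decompositions of $H^*$ (\cref{lem:decompositions-equivalent}); hence $H$ is bi-arc iff each of its factors is, and by induction down $\cR$, iff every leaf is bi-arc.  Property \cref{factors:bipartite} is immediate since bipartite decompositions produce factors that are induced subgraphs of the current bipartite graph (see \cref{sec:decomposition}).  For \cref{factors:types}, \cref{lem:induced-hstar} guarantees that $(H')^*$ is an induced subgraph of $H^*$ for every factor, and the three sub-conditions match the three non-bipartite stopping rules; when $H'$ is a strong split leaf, the induced subgraph $H''$ of $H'$ required by the statement is taken to be an obstruction (an induced cycle of length at least~$6$ or a special edge asteroid) inside $H'$, which can be lifted to an induced subgraph of $H$ because it avoids any looped vertex introduced by operation (O2) of \cref{lem:induced-hstar}.

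For \cref{factors:bottomup}, the algorithm traverses $\cR$ bottom-up.  At each internal node the current instance is reduced to subinstances on its children via the matching decomposition lemma (\cref{lem:decomposition}, \cref{lem:f-decomposition}, \cref{lem:bp-decomposition}, or \cref{lem:b-decomposition}); each such reduction runs in $\Oh((n|H|)^2)$ time, summing across the $\Oh(|H|)$ internal nodes to the $\Oh(n^2|H|^3)$ overhead.  At each leaf $H' \in \cH$ one invokes the $f_H$ oracle on every induced subgraph of $G$ produced for $H'$.  The crucial observation bounding the oracle work is that on any root-to-leaf path in $\cR$ at most one step causes subgraphs of $G$ to overlap in vertices, namely a descent into the first factor of a $B$-decomposition (the bipartite graph $H[B_1 \cup B_2]$); beneath such a step only bipartite decompositions occur, and these partition vertices of the current instance into pairwise disjoint components of $G - Q$.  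Hence the induced subgraphs of $G$ reaching any single leaf have sizes summing to at most $2n$ and are $\Oh(n)$ in number.  Iterating \cref{lem:convex} gives that the total $f_H$-time spent at one leaf is at most $f_H(2n) + \Oh(n) \cdot f_H(1)$, and summing over the $\Oh(|H|)$ leaves and combining with the polynomial overhead yields the claimed $\Oh(|H| f_H(n) + n^2|H|^3)$.

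The main obstacle is the careful bookkeeping in the last paragraph, in particular showing that the factor-of-two overlap from a single $B$-decomposition step does not accumulate across levels of $\cR$ and that the resulting $f_H(2n)$ can be absorbed into the stated $\Oh(|H| f_H(n))$ budget.  This requires a close look at the family $\mathbb{C}$ from the proof of \cref{lem:b-decomposition} and at how its members, together with the $X_\pi$/$Y_\pi$ orientations, propagate through the bipartite subtree beneath.  Once this is set up, the convexity inequality of \cref{lem:convex} and the decomposition lemmas combine mechanically to deliver the claimed running time.
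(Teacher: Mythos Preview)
Your overall architecture matches the paper's: build the recursion tree $\cR$, take $\cH$ to be its leaves, and derive the running time by combining the decomposition lemmas with the convexity inequality of \cref{lem:convex}. Three points, however, are genuine gaps.

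\textbf{Property~\cref{factors:types:strongsplit}.} Your choice of $H''$ as an obstruction inside the strong split graph $H'$ cannot work. An obstruction (an induced $C_6$, $C_8$, or asteroidal subgraph) is a loopless bipartite graph, but every vertex of $H'$ in the reflexive clique $P$ carries a loop, so any loopless induced subgraph of $H'$ lies entirely in the independent set $B$ and hence has no edges. Thus $H'$ contains no obstruction at all. The paper instead takes $H''$ to be $H'$ with the artificially created looped vertices (those introduced by operation~(O2) of \cref{lem:induced-hstar}) removed; these vertices lie in $P$ and have no neighbours outside $P$, so deleting them preserves non-bi-arcness, and all remaining vertices are original vertices of $H$.

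\textbf{Property~\cref{factors:bottomup}.} Your observation that along any root-to-leaf path at most one $B$-decomposition-to-$H_1$ step occurs is correct, and so the instance sizes $n_1,\dots,n_p$ at a leaf satisfy $\sum n_i \leq 2n$. But the conclusion you draw from this is wrong: applying \cref{lem:convex} to a sum of size $2n$ yields $f_H(2n)$, and for a general non-decreasing convex $f_H$ (think $f_H(m)=2^{m}$) there is no bound of the form $f_H(2n)=\Oh(|H|\,f_H(n))$. The missing ingredient is that each instance is an induced subgraph of $G$, so $n_i\leq n$. Writing $n_i=\lambda_i n+(1-\lambda_i)\cdot 1$ with $\lambda_i=(n_i-1)/(n-1)$ and using convexity directly gives $f_H(n_i)\leq \lambda_i f_H(n)+(1-\lambda_i)f_H(1)$; summing, $\sum_i\lambda_i=(\sum_i n_i-p)/(n-1)\leq 2$, hence $\sum_i f_H(n_i)\leq 2f_H(n)+p\,f_H(1)$. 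This avoids $f_H(2n)$ entirely and is what makes the $\Oh(|H|f_H(n))$ bound go through.

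\textbf{Property~\cref{factors:hard}.} You assert that $H$ is bi-arc iff each factor is, but only one direction is immediate (factors correspond to induced subgraphs of $H^*$). For the converse you still need to show that if both factors of a bipartite decomposition of $H^*$ are complements of circular-arc graphs then so is $H^*$; the paper does this by explicitly splicing co-circular-arc representations (see the construction around \cref{fig:bi-arc-decomposition}).
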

\begin{proof}[Sketch of proof]
The proof is analogous to the proofs of \cref{thm:main-bipartite-algo} and \cref{thm:main-algo}, so we only provide the sketch.
Let us construct the recursion tree $\cR$ of $H$ as we did in \cref{thm:main-algo}. However, we do not terminate once we reach a bipartite graph, but in this case we continue  as in \cref{thm:main-bipartite-algo}.
Let $\cH$ the the set of graphs associated with the leaves of $\cR$.

Let us discuss property \cref{factors:hard}. Since we  never decompose strong split graphs, \cref{lem:decompositions-equivalent} implies that 
a decomposition of any graph $F$ considered in the construction of $\cR$ corresponds to a bipartite decomposition of $F^*$.
Furthermore, recall that a graph $F'$ is a bi-arc graph if and only if $F'^*$ is the complement of a circular-arc graph.
Thus in order to show \cref{factors:hard} it is sufficient to show that if $F'$ is bipartite and has a bipartite decomposition with factors $F_1$ and $F_2$, such that both $F_1,F_2$ are the complements of circular-arc graphs, then $F$ is the complement of a circular-arc graph. 
It can be verified using the co-circular-arc representations, introduced in \cref{sec:hardness-prelim}, see \cref{fig:bi-arc-decomposition}.

\begin{figure}[h]
\begin{tabular}{m{6cm}m{2cm}m{3cm}}
\begin{tikzpicture}
\def\r{1}
\def\s{2.5}
	\fill[fill=yellow!80] (\s,\s) -- (\s,\s) +(137-180:\s+0.6) arc (137-180:263-180:\s+0.6) -- (\s,\s);
	\fill[fill=yellow!80] (\s,\s) -- (\s,\s) +(137:\s+0.6) arc (137:263:\s+0.6) -- (\s,\s);
	\fill[fill=white] (\s,\s) circle (\s);
	\node at (0,5) {$F_1$};
	\fill[fill=yellow!80] (\s,\s) -- (\s,\s) +(26:\r+1) arc (26:-14:\r+1) -- (\s,\s);
	\fill[fill=yellow!80] (\s,\s) -- (\s,\s) +(167:\r+1) arc (167:207:\r+1) -- (\s,\s);
	\fill[fill=white] (\s,\s) circle (\r-0.3);
	\draw[dashed,gray!50] (\s, 2*\s+0.9) -- (\s,-.9);
    \node at (\s-0.2,-.7) {\scriptsize{$q$}};
    \node at (\s-0.2, 2*\s+0.7) {\scriptsize{$p$}};
    
    \draw[green,line width=1.2] (\s,\s) +(82:\s+0.1) arc (82:260:\s+0.1);
    \draw[green,line width=1.2] (\s,\s) +(-12:\s+0.3) arc (-12:140:\s+0.3);
    \draw[green,line width=1.2] (\s,\s) +(78:\s+0.4) arc (78:167:\s+0.4);
    \draw[green,line width=1.2] (\s,\s) +(30:\s+0.5) arc (30:150:\s+0.5);
    \draw[green,line width=1.2] (\s,\s) +(82-180:\s+0.2) arc (82-180:260-180:\s+0.2);
	\draw[green,line width=1.2] (\s,\s) +(-12-180:\s+0.3) arc (-12-180:140-180:\s+0.3);
    \draw[green,line width=1.2] (\s,\s) +(78-180:\s+0.4) arc (78-180:167-180:\s+0.4);
    \draw[green,line width=1.2] (\s,\s) +(30-180:\s+0.5) arc (30-180:150-180:\s+0.5);

	\node at (2.2,\r+1.8) {$F_2$};
    
    \draw[red,line width=1.2] (\s,\s) +(54:\r+0.5) arc (54:154:\r+0.5);
    \draw[red,line width=1.2] (\s,\s) +(26:\r+0.6) arc (26:140:\r+0.6);
    \draw[red,line width=1.2] (\s,\s) +(82:\r+0.7) arc (82:167:\r+0.7);
    \draw[red,line width=1.2] (\s,\s) +(30:\r+0.8) arc (30:150:\r+0.8);
    
    \draw[red,line width=1.2] (\s,\s) +(54-180:\r+0.5) arc (54-180:154-180:\r+0.5);
    \draw[red,line width=1.2] (\s,\s) +(26-180:\r+0.6) arc (26-180:140-180:\r+0.6);
    \draw[red,line width=1.2] (\s,\s) +(82-180:\r+0.7) arc (82-180:167-180:\r+0.7);
    \draw[red,line width=1.2] (\s,\s) +(30-180:\r+0.8) arc (30-180:150-180:\r+0.8);
    
    \draw[green,line width=1.2] (\s,\s) +(5:\r+0.9) arc (5:155:\r+0.9);
    \draw[green,line width=1.2] (\s,\s) +(5-180:\r+0.9) arc (5-180:155-180:\r+0.9);
    
    \draw[gray,line width=1.2] (\s,\s) +(40-180:\r-0.2) arc (40-180:-222-180:\r-0.2);
    \draw[gray,line width=1.2] (\s,\s) +(-12-180:\r) arc (-12-180:-240-180:\r);
    \draw[gray,line width=1.2] (\s,\s) +(-114-180:\r+0.2) arc (-114-180:67-180:\r+0.2);
    \draw[gray,line width=1.2] (\s,\s) +(42-180:\r+0.4) arc (42-180:-130-180:\r+0.4);
    
    \draw[gray,line width=1.2] (\s,\s) +(40:\r-0.1) arc (40:-222:\r-0.1);
    \draw[gray,line width=1.2] (\s,\s) +(-12:\r+0.1) arc (-12:-240:\r+0.1);
    \draw[gray,line width=1.2] (\s,\s) +(-114:\r+0.3) arc (-114:67:\r+0.3);
    \draw[gray,line width=1.2] (\s,\s) +(42:\r+0.4) arc (42:-130:\r+0.4);

\end{tikzpicture}
& \hskip 1cm $\Rightarrow$ \hskip 1cm &
\begin{tikzpicture}
\def\r{1}
	\fill[fill=yellow!80] (\r,\r) -- (\r,\r) +(26:\r+1.5) arc (26:-14:\r+1.5) -- (1,1);
	\fill[fill=yellow!80] (\r,\r) -- (\r,\r) +(167:\r+1.5) arc (167:207:\r+1.5) -- (1,1);
	\fill[fill=white] (\r,\r) circle (\r-0.3);
	\node at (-1.5,\r+1.3) {$F$};
    \node at (\r-0.2,-1.6) {\scriptsize{$q$}};
    \node at (\r-0.2,2*\r+1.6) {\scriptsize{$p$}};
    \draw[dashed,gray!50] (\r, 2*\r+1.8) -- (\r,-1.8);
    
    \draw[red,line width=1.2] (\r,\r) +(54:\r+0.6) arc (54:154:\r+0.6);
    \draw[red,line width=1.2] (\r,\r) +(26:\r+0.7) arc (26:140:\r+0.7);
    \draw[red,line width=1.2] (\r,\r) +(82:\r+0.8) arc (82:167:\r+0.8);
    \draw[red,line width=1.2] (\r,\r) +(30:\r+0.9) arc (30:150:\r+0.9);
    
    \draw[red,line width=1.2] (\r,\r) +(54-180:\r+0.6) arc (54-180:154-180:\r+0.6);
    \draw[red,line width=1.2] (\r,\r) +(26-180:\r+0.7) arc (26-180:140-180:\r+0.7);
    \draw[red,line width=1.2] (\r,\r) +(82-180:\r+0.8) arc (82-180:167-180:\r+0.8);
    \draw[red,line width=1.2] (\r,\r) +(30-180:\r+0.9) arc (30-180:150-180:\r+0.9);
    
    \draw[gray,line width=1.2] (\r,\r) +(40-180:\r-0.2) arc (40-180:-222-180:\r-0.2);
    \draw[gray,line width=1.2] (\r,\r) +(-12-180:\r-0.1) arc (-12-180:-240-180:\r-0.1);
    \draw[gray,line width=1.2] (\r,\r) +(-114-180:\r) arc (-114-180:67-180:\r);
    \draw[gray,line width=1.2] (\r,\r) +(42-180:\r+0.1) arc (42-180:-130-180:\r+0.1);
    
    \draw[gray,line width=1.2] (\r,\r) +(40:\r+0.2) arc (40:-222:\r+0.2);
    \draw[gray,line width=1.2] (\r,\r) +(-12:\r+0.3) arc (-12:-240:\r+0.3);
    \draw[gray,line width=1.2] (\r,\r) +(-114:\r+0.4) arc (-114:67:\r+0.4);
    \draw[gray,line width=1.2] (\r,\r) +(42:\r+0.5) arc (42:-130:\r+0.5);
	
	\draw[green,line width=1.2] (1,\r) +(23:\r+1) arc (23:202:\r+1);
    \draw[green,line width=1.2] (1,\r) +(1:\r+1.2) arc (1:170:\r+1.2);
    \draw[green,line width=1.2] (1,\r) +(20:\r+1.3) arc (20:178.75:\r+1.3);
    \draw[green,line width=1.2] (1,\r) +(11.5:\r+1.4) arc (11.5:171.5:\r+1.4);
    
	\draw[green,line width=1.2] (1,\r) +(23-180:\r+1.1) arc (23-180:202-180:\r+1.1);
    \draw[green,line width=1.2] (1,\r) +(1-180:\r+1.2) arc (1-180:170-180:\r+1.2);
    \draw[green,line width=1.2] (1,\r) +(20-180:\r+1.3) arc (20-180:178.75-180:\r+1.3);
    \draw[green,line width=1.2] (1,\r) +(11.5-180:\r+1.4) arc (11.5-180:171.5-180:\r+1.4);
\end{tikzpicture}
\end{tabular}
\caption{Assume that $F$ is bipartite and has a bipartite decomposition $(D,N,R)$ with factors $F_1,F_2$, which are both complements of circular-arc graphs. Fix some co-circular-arc representations of $F_1$ and $F_2$ (left). In the representation of $F_2$ (left, interior), arcs corresponding to sets $\{d_X,d_Y\},$ $N$, and $R$ are indicated respectively by green, red, and gray color. Observe that (i) since $N$ is a biclique, there must be some space between ``north'' and ``south'' red arcs (yellow), (ii) since $N$ is a separator, all gray arcs must intersect the opposite green arc, and (iii) since $N$ is bipartite-complete to $\{d_X,d_Y\}$, no green arc can intersect opposite red arcs.
So we can obtain a representation of $F$ by ``squeezing'' the non-trivial part of the representation of $F_1$ in the yellow area. } 
\label{fig:bi-arc-decomposition}
\end{figure}
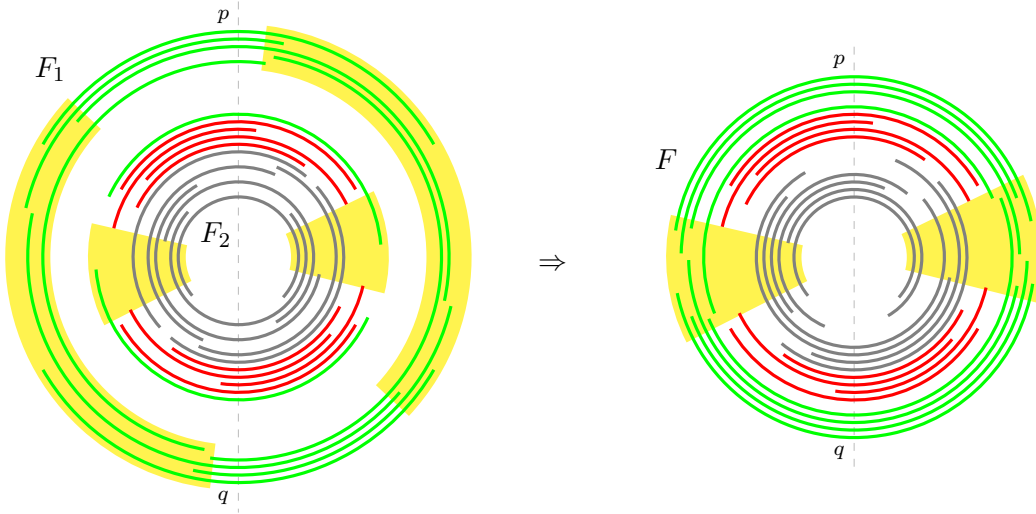

Property \cref{factors:bipartite} follows directly from the definition of the bipartite decomposition.
Property \cref{factors:types} follows from the definitions of decompositions, \cref{lem:decompositions-equivalent}, and \cref{lem:induced-hstar}.
The only possible issue is the additional condition in \cref{factors:types:strongsplit}.
Suppose after a series of decomposition applied to $H$ we obtain a strong split graph $H'$, which is not a bi-arc graph (otherwise we are in case \cref{factors:easy}).
Recall that the only possible vertices of $H'$ that are not originally in $H$ have loops, and all their neighbors also have loops (recall the $F$-decomposition). Thus such vertices belong to the reflexive clique in $H'$ and have no neighbors outside this clique.
In order to prove property \cref{factors:types:strongsplit} one needs to verify that the graph $H''$ obtained from $H'$ by removing all such vertices is still a non-bi-arc graph. This can be shown using the restricted structure of strong split graphs and the property shown in the previous paragraph: if some bipartite graph $F$ has a bipartite decomposition such that both factors are complements of circular-arc graphs, then $F$ is a complement of a circular-arc graph. The details are left to the reader.

Now let us discuss property \cref{factors:bottomup}.
Recall that we can solve the instances corresponding to the leaves of $\cR$, and then proceed in a bottom-up fashion.
Even though there might be more than one instance corresponding to a leaf node, recall that the numbers of vertices of such instances sum up to $n$.
So, consider a leaf node of $\cR$, associated with some $H' \in \cH$, and let $n_1,n_2,\ldots,n_p$ be the numbers of vertices in the instances of \lhomo{H'} created in the recursion. By \cref{lem:convex} we can bound the running time related to this leaf node by
\[
f_H(n_1) + f_H(n_2) + \ldots + f_H(n_p) \leq f_H(n_1+ n_2) + f_H(1) + f_H(n_3) + \ldots f_H(n_p) \leq \ldots \leq f_H(n) + nf_H(1).
\]

Now let us consider an internal node of $\cR$, it is associated with some subgraph $H'$ of $H$.
Recall that the computation for this node consists of the computations for child nodes and $\Oh(n^2 |H'|^2) = \Oh(n^2 |H|^2)$ additional steps.
Since the total number of nodes of $\cR$ is $\Oh(|H|)$, we can bound the total running time for the root node (i.e., the time needed to solve $(G,L)$) by
\begin{align*}
\Oh\left( \sum_{H' \in \cH} \left(f_H(n) + nf_H(1) \right) + n^2 \cdot |H|^3\right) &= \Oh\left( |H| f_H(n) + |H|nf_H(1) + n^2 \cdot |H|^3\right) \\ 
&= \Oh\left( |H| f_H(n)+ n^2 \cdot |H|^3\right),
\end{align*}
which completes the proof.
\end{proof}

Let us point out that if we apply \cref{lem:very-general-decomposition} to an irreflexive graph, then the case \cref{factors:types:strongsplit} does not occur. Indeed, the only strong split graph that is irreflexive has no edges, and, in particular,
is a bi-arc graph. Thus $H''$ with properties given in \cref{lem:very-general-decomposition}~~\cref{factors:types:strongsplit} cannot exist.

\subsection{Gadgets for general graphs}

Recall that we proved hardness for general graphs $H$ by reducing the problem to the bipartite case. In particular, all our gadgets, including the most important $\rneq(S)$-gadget, are constructed for bipartite graphs only.
However, we believe that in some other contexts it might be useful to be able to obtain such gadgets also for non-bipartite $H$.
Luckily, our constructions can be simply translated to this case too. Observe that the proof of \cref{prop:bipartite-associted} gives the following, slightly stronger version.

\begin{proposition}\label{prop:general-gadgets}
Let $H$ be an undecomposable non-bi-arc graph.
Let $G$ be a bipartite graph with bipartition classes $X,Y$ and $H^*$-lists $L$, where $L(X) \subseteq V(H)'$ and $L(Y) \subseteq V(H)''$.
Define $L' :V(G) \to 2^{V(H)}$ as $L'(v) = \{u \colon \{u',u''\} \cap L(v) \neq \emptyset\}$.
Then the following hold:
\begin{compactenum}
\item For $f \colon (G,L) \to H^*$, define $f' :V(G) \to V(H)$ by setting $f'(v)$ to be the unique vertex $u$ of $H$, such that $f(v) \in \{u',u''\}$. Then $f'$ is a list homomorphism from $(G,L')$ to $H$.
\item For $f' \colon (G,L') \to H$, define $f :V(G) \to V(H^*)$ by setting $f(v) := u'$ if $v \in X$ and $f(v) :=u''$ if $v \in Y$, where $u = f'(v)$. Then $f$ is a list homomorphism from $(G,L)$ to $H^*$.
\end{compactenum}
\end{proposition}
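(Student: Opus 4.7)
The statement is essentially a pointwise, explicit refinement of \cref{prop:bipartite-associted}: instead of just claiming an equivalence of yes/no answers, it prescribes precisely how list homomorphisms on the two sides correspond. Accordingly, my plan is to mirror the proof of \cref{prop:bipartite-associted} and simply verify, in each of the two items, that the prescribed function is well-defined, respects lists, and preserves edges. The hypotheses that $H$ is undecomposable and non-bi-arc will play no role in the proof; they are included because the proposition is meant to be applied in settings (e.g.\ gadget constructions via \cref{lem:edge-gadget}) where those hypotheses are already in force.

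For item~(1), I would first note that $f'$ is well-defined: every vertex of $H^*$ has the form $u'$ or $u''$ for a unique $u \in V(H)$, so assigning $f'(v)$ to be that $u$ is unambiguous. To see that $f'$ respects the lists $L'$, observe that $f(v) \in L(v)$ and $f(v) \in \{f'(v)', f'(v)''\}$, so by the very definition $L'(v) = \{u \colon \{u',u''\} \cap L(v) \neq \emptyset\}$ we get $f'(v) \in L'(v)$. To check that $f'$ preserves edges, take any edge $vw$ of $G$; since $G$ is bipartite we may assume $v \in X, w \in Y$, whence $f(v) \in L(v) \subseteq V(H)'$ and $f(w) \in L(w) \subseteq V(H)''$, so $f(v) = f'(v)'$ and $f(w) = f'(w)''$. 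As $f$ is a homomorphism, $f'(v)' f'(w)'' \in E(H^*)$, which by definition of $H^*$ means $f'(v) f'(w) \in E(H)$.

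For item~(2), well-definedness of $f$ is immediate. To see that $f$ respects $L$, take $v \in X$ (the case $v \in Y$ is symmetric) and set $u := f'(v) \in L'(v)$. By the definition of $L'$ we have $\{u',u''\} \cap L(v) \neq \emptyset$; but $L(X) \subseteq V(H)'$ forces $u'' \notin L(v)$, hence $u' \in L(v)$, that is, $f(v) \in L(v)$. Edge-preservation is again a one-line check: on any edge $vw$ with $v \in X, w \in Y$, we have $f(v) = f'(v)'$ and $f(w) = f'(w)''$, and since $f'(v) f'(w) \in E(H)$ the corresponding edge $f(v)f(w)$ lies in $E(H^*)$.

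There is no real obstacle here, only careful bookkeeping; the only subtle point is that item~(2) genuinely uses the bipartition assumption $L(X) \subseteq V(H)'$, $L(Y) \subseteq V(H)''$, without which the choice of ``prime'' versus ``double-prime'' copy of $f'(v)$ could fall outside $L(v)$. I would therefore make sure to flag that this assumption is the only substantive ingredient, and otherwise present the proof as a direct verification.
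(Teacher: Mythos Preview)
Your proposal is correct and follows exactly the approach the paper takes: the paper does not give a separate proof of this proposition but simply observes that the proof of \cref{prop:bipartite-associted} already establishes it, and your verification mirrors that proof line by line. Your remark that the undecomposability and non-bi-arc hypotheses play no role, and that the bipartition assumption $L(X)\subseteq V(H)'$, $L(Y)\subseteq V(H)''$ is the only substantive ingredient (replacing the ``consistent'' assumption in \cref{prop:bipartite-associted}), is also accurate.
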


In particular we can extend \cref{lem:edge-gadget} and construct and $\rneq{(S)}$-gadget for every non-bi-arc graph that might appear as a leaf of the recursion tree of $H$.
\begin{corollary}
Let $H$ be a connected non-bi-arc graph, which is either
\begin{compactenum}[a)]
\item undecomposable, or
\item a strong split graph, such that the graph $H'$, obtained from $H$ by removing all edges with both endvertices in $P$ (including loops), is undecomposable.
\end{compactenum}
Let $S \subseteq V(H)$, such that $|S| \geq 2$ and for any $a,b \in S$ we have $N(a) \not\subseteq N(b)$ and $N(a) \not\subseteq N(b)$.
Then there exists a $\rneq(S)$-gadget, i.e., a graph $F$ with $H$-lists $L$ and two special vertices $x,y \in V(F)$, such that $L(x)=L(y)=S$ and
\begin{compactitem}
\item for any list homomorphism $h:(F,L)\to H$, it holds that $h(x) \neq  h(y)$,
\item for any distinct $a,b \in S$ there is a list homomorphism $h:(F,L)\to H$, such that $h(x)=a$ and $h(y)=b$.
\end{compactitem}
\end{corollary}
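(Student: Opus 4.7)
The strategy is to reduce both cases to the bipartite $\rneq(S)$-gadget already supplied by \cref{lem:edge-gadget}. In each case I will identify a bipartite companion of $H$ in which $S$ (or a copy of it) embeds as an incomparable set contained in one bipartition class, invoke \cref{lem:edge-gadget} there, and then certify that the resulting bipartite gadget can be reinterpreted as a $\rneq(S)$-gadget for $H$.

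For case (a), I would pass to the associated bipartite graph $H^*$. If $H$ is itself bipartite then $S$ lies in a single bipartition class of $H$ (otherwise incomparability would force some vertex of a connected $H$ to have empty neighbourhood), and \cref{lem:edge-gadget} applied to $H$ directly yields the gadget. Otherwise $H^*$ is connected, and \cref{lem:decompositions-equivalent} together with the assumption that $H$ is undecomposable forces $H^*$ to be undecomposable as a bipartite graph (the corner case where $H$ is simultaneously undecomposable and a strong split graph is covered by case (b), which produces the same gadget). By \cref{thm:bi-arcs} the complement of $H^*$ is not a circular-arc graph. The set $S' := \{s' : s \in S\}$ lies in one bipartition class of $H^*$, and a direct check on neighbourhoods ($N_{H^*}(s') = \{v'' : v \in N_H(s)\}$) turns incomparability of $S$ in $H$ into incomparability of $S'$ in $H^*$. \cref{lem:edge-gadget} then gives a bipartite $\rneq(S')$-gadget $(F^*, L^*)$ with interface $x^*, y^*$. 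Defining $L(v) := \{u \in V(H) : \{u', u''\} \cap L^*(v) \neq \emptyset\}$ on $V(F^*)$, \cref{prop:bipartite-associted} translates every list homomorphism $(F^*, L^*) \to H^*$ to a list homomorphism $(F^*, L) \to H$ and vice versa, preserving the colour of the interface vertices. Hence $(F^*, L)$ is the required $\rneq(S)$-gadget for $H$.

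For case (b), $H$ is a strong split graph with partition $(B, P)$, and the bipartite graph $H'$ obtained by deleting all $P$-$P$ edges and loops is connected and undecomposable. Because $H'$ embeds as an induced subgraph of $H^*$ (via the argument used in \cref{cor:strongsplit}) and $H$ is non-bi-arc, $H'$ is not the complement of a circular-arc graph. Since every $p \in P$ satisfies $N_H(p) \supseteq P$ while every $b \in B$ has $N_H(b) \subseteq P$, any pair $(b, p)$ with $b \in B$ and $p \in P$ is comparable in $H$; hence an incomparable $S$ lies entirely in $B$ or entirely in $P$. A routine computation shows that on such an $S$ the relations $N_H(a) \subseteq N_H(b)$ and $N_{H'}(a) \subseteq N_{H'}(b)$ agree, so $S$ is incomparable in $H'$ as well, sitting in one bipartition class of $H'$. \cref{lem:edge-gadget} applied to $H'$ and $S$ produces a bipartite $\rneq(S)$-gadget $(F', L')$.

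The remaining, and only really delicate, step is to verify that $(F', L')$ is also a $\rneq(S)$-gadget for $H$ itself. The gadget constructed in \cref{lem:edge-gadget} is bipartite, and the lists $L'$ respect the bipartition of $H'$: one part of $F'$ carries lists contained in $B$, the other part carries lists contained in $P$. Consequently each edge of $F'$ has one endpoint whose list lies in $B$ and the other whose list lies in $P$, and any list homomorphism $(F', L') \to H$ must send such an edge to a $B$-$P$ edge of $H$, which is also an edge of $H'$; conversely $E(H') \subseteq E(H)$ makes every list homomorphism $(F', L') \to H'$ valid for $H$. Thus the $P$-$P$ edges and loops that $H$ has but $H'$ lacks are never available to a list homomorphism of $(F', L')$, so the two homomorphism problems agree on this instance and the gadget's defining properties transfer verbatim. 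The main obstacle in executing this plan is precisely this bipartite-list bookkeeping in both cases — ensuring that the $\rneq(S)$-gadget produced by \cref{lem:edge-gadget} preserves the bipartition structure needed to apply \cref{prop:bipartite-associted} in case (a) and to rule out the newly available edges of $H$ in case (b); all the necessary ingredients, however, are already assembled in the preceding sections.
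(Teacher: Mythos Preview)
Your overall approach matches the paper's: both cases reduce to the bipartite gadget of \cref{lem:edge-gadget} and then transfer it to $H$. For case~(a) the paper, like you, passes to $H^*$, uses \cref{lem:decompositions-equivalent} to conclude that $H^*$ is undecomposable, and then applies \cref{prop:general-gadgets} (the two-sided refinement of \cref{prop:bipartite-associted}) to pull the $\rneq(S')$-gadget back to $H$.

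Your case~(b) takes a slightly different and more elementary route. The paper views $H'$ as an induced subgraph of $H^*$, treats the $H'$-lists as $H^*$-lists, and again applies \cref{prop:general-gadgets}. You instead argue directly that the gadget's lists respect the $(B,P)$-bipartition, so list homomorphisms $(F',L')\to H$ and $(F',L')\to H'$ coincide. This is correct and avoids the detour through $H^*$; it does rely on the (true, but not explicitly stated in \cref{lem:edge-gadget}) fact that every list in the constructed gadget lies in a single bipartition class of the bipartite target.

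One small flaw: in your bipartite sub-case of~(a), the claim that incomparability forces $S$ into a single bipartition class is false --- any two vertices in opposite classes of a connected bipartite graph are incomparable, since their neighbourhoods lie in disjoint classes and are both nonempty. The paper sidesteps this by not splitting off the bipartite case at all; it simply works in $H^*$ uniformly (and implicitly reads case~(a) as excluding strong split graphs, since otherwise \cref{lem:decompositions-equivalent} would not apply).
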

\begin{proof}
In case a), by \cref{lem:decompositions-equivalent}, $H^*$ is undecomposable.
Thus the claim follows directly from applying \cref{prop:general-gadgets} to the $\rneq(S')$-gadget for $H^*$ (where $S' := \{s' \colon s \in S\}$), constructed in \cref{lem:edge-gadget}.

In case b), let $(B,P)$ be the partition of $H$. Observe that for every $b \in B$ and $p \in P$ we have $N(b) \subseteq N(p)$, so $S$ is either contained in $B$, or in $P$. In other words, $S$ is contained in one bipartition class of $H'$. Let $(F,L)$ be the $\rneq(S)$-gadget for $H'$. Recall that $H'$ is an induced subgraph of $H^*$, so $L$ are $H^*$-lists. Now the claim follows from applying \cref{prop:general-gadgets} to $(F,L)$.
\end{proof}

Note that all other gadgets, e.g. distinguishers, could be generalized in a similar way.

\subsection{Further research directions}

In this paper we have shown tight complexity bounds for the list homomorphism problem, parameterized by the treewidth of the instance graph.

A very natural question, mentioned also in~\cite{DBLP:conf/stacs/EgriMR18}, is to provide analogous results for the non-list variant of the problem, denoted by \homo{H}.
Let us point out that despite of the obvious similarity of \homo{H} and \lhomo{H}, the methods used to obtain hardness proofs for these problems are very different.
For the list variant most hardness proofs (including all proofs in this paper) are purely combinatorial~\cite{FEDER1998236,DBLP:journals/combinatorica/FederHH99,DBLP:journals/jgt/FederHH03} and are based on considering some local structures in the target graph -- note that we can ignore the remaining vertices of $H$ by not including them in any list.
On the other hand, in the \homo{H} problem, we need to be able to capture the structure of the whole graph $H$ at once. This is the reason why hardness proofs for the non-list variant  usually require using some algebraic tools~\cite{DBLP:journals/jct/HellN90,DBLP:journals/tcs/Bulatov05,Siggers}.

Very recently, Okrasa and Rz\k{a}\.zewski were able to provide tight bounds for the complexity of \homo{H}, assuming two conjectures from algebraic graph theory from early 2000s~\cite{OkrasaSODA}.
It would be very interesting to strengthen these results, either by proving the mentioned two conjectures, or by providing a different reduction.

Another interesting direction is to consider one of many other variants of the graph homomorphism problem. Let us mention one, i.e., \emph{locally surjectve homomorphism}, denoted by \textsc{LSHom}($H$). In this problem we ask for a homomorphism from an instance graph $G$ to the target graph $H$, which is surjective on each neighborhood. In other words, if we map some vertex $x \in V(G)$ to some vertex $v \in V(H)$, then every neighbor of $v$ must appear as a color of some neighbor of $x$~\cite{DBLP:conf/mfcs/FialaPT05,DBLP:journals/ejc/FialaM06,DBLP:journals/tcs/FialaP05,DBLP:journals/jcss/OkrasaR20}.
We believe that it is interesting to show tight complexity bounds for this problem. One of the reasons why this problem is challenging is that the natural dynamic programming runs in time $2^{\Oh(|H| \cdot \tw{G})} \cdot (n+|H|)^{\Oh(1)}$.
Thus in order to show that this bound is tight, it is not sufficient to design edge gadgets encoding inequality and substitute all edges of the instance of $k \coloring$ with these edge gadgets, as we did in this paper.
Since the number of colors needs to be exponential in $H$, one should also design some \emph{vertex gadgets}, which will encode the exponential number of possible states.

Finally, instead of changing the problem, we can consider changing the parameter. We believe that an exciting question is to find tight bounds for \homo{H} and \lhomo{H}, parameterized by the \emph{cutwidth} of the instance graph, denoted by $\cw{G}$. Quite recently Jansen and Nederlof showed that the chromatic number of a graph can be found in time $2^{\Oh(\cw{G})} \cdot n^{\Oh(1)}$~\cite{DBLP:journals/tcs/JansenN19}, i.e., the base of the exponential factor does not depend on the number of colors. Jansen~\cite{BartPersonal} asked whether the same is possible for \homo{H} and \lhomo{H}, if the target $H$ is not complete?
Note that while the chromatic number of a graph can be found in time $2^n \cdot n^{\Oh(1)}$~\cite{DBLP:journals/siamcomp/BjorklundHK09}, for the \homo{H} and \lhomo{H} problems the $|H|^{\Oh(n)}$-time algorithm is essentially best possible, assuming the ETH~\cite{DBLP:journals/jacm/CyganFGKMPS17}. We believe that a similar phenomenon might occur if the cutwidth is a parameter.

\paragraph*{Acknowledgment.} We are grateful to Dani\'el Marx and L\'aszlo Egri for much advice regarding the topic and to Kamil Szpojankowski for an inspiring discussion about almost all graphs.

\newpage
\bibliographystyle{plain}
\bibliography{main}

\end{document}